\documentclass[11pt, letterpaper]{article}

\usepackage[style=alphabetic,maxnames=99,minnames=99,natbib=true]{biblatex}
\usepackage{blindtext}
\usepackage[margin=1in]{geometry}
\usepackage{setspace}
\usepackage{amsmath,amssymb,amsthm}
\usepackage{thmtools,thm-restate}
\usepackage{booktabs}

\newcommand{\blue}[1]{\textcolor{blue}{#1}}

\usepackage{titlesec}
\titlespacing*{\paragraph}{0pt}{5pt}{1em}
\usepackage{float}
\usepackage{graphicx}      
\usepackage{footnote}
\usepackage[ruled,vlined,linesnumbered]{algorithm2e}

\SetKwInOut{Input}{Input}
\SetKwInOut{Output}{Output}

\usepackage[colorlinks=true, allcolors=blue]{hyperref}
\usepackage[nameinlink,noabbrev,capitalise]{cleveref}
\usepackage{xcolor}

\usepackage[suppress]{color-edits}
\definecolor{darkgreen}{rgb}{0.0,0.4,0.0}
\addauthor{mb}{red}
\addauthor{gf}{orange}

\DeclareUnicodeCharacter{0302}{}
\definecolor{darkgreen}{rgb}{0.0,0.4,0.0}

\newtheorem{theorem}{Theorem}[section]
\newtheorem*{theorem*}{Theorem}
\newtheorem{lemma}{Lemma}[section]
\newtheorem{definition}{Definition}[section]
\newtheorem{corollary}{Corollary}[section]

\newtheorem{observation}{Observation}[section]
\newtheorem{claim}{Claim}[section]
\newtheorem{remark}{Remark}[section]

\newcommand{\IME}{``\textbf{I}ndividually \textbf{MM}S-satisfying or EF\textbf{X}-satisfying (IMMX)"\xspace}
\newcommand{\IMMX}{IMMX\xspace}
\newcommand{\EIME}{``\textbf{I}ndividually $(1-\varepsilon)$-\textbf{MM}S-satisfying or EF\textbf{X}-satisfying ($(1-\varepsilon)$-IMMX)"\xspace}
\newcommand{\EIMMX}{$(1-\varepsilon)$-IMMX\xspace}
\newcommand{\EIMMXCHORES}{$(1+\varepsilon)$-IMMX\xspace}
\newcommand{\realloc}{\textbf{REALLOC}\xspace}

\title{Near-Optimal Best-of-Both-Worlds Fairness for Few Agents}
\author{Moshe Babaioff\thanks{The Hebrew University of Jerusalem (HUJI). Email: {\tt moshe.babaioff@mail.huji.ac.il}. Supported  by a Golda Meir Fellowship and the Israel Science Foundation (grant No. 301/24). }
\and Gefen Frosh\thanks{The Hebrew University of Jerusalem (HUJI). Email: {\tt  gefen.frosh@mail.huji.ac.il}. Supported  by  the Israel Science Foundation (grant No. 301/24).}
}
\date{\today}

\begin{document}
 \maketitle
 \thispagestyle{empty}
\setcounter{page}{0}

\begin{abstract}

We consider the problem of fair allocation of indivisible goods among agents with additive valuations, 
aiming for \emph{Best-of-Both-Worlds (BoBW)} fairness: 
a distribution over allocations that is ex-ante fair, and additionally, it is supported only on deterministic allocations that are ex-post fair. 
Existing BoBW algorithms are far from achieving  the best possible ex-post fairness guarantees, even in the well studied special case when there are only few agents.  
We focus on BoBW for few agents, and our main result is the design of the first 
poly-time BoBW algorithms achieving near-optimal fairness for three agents. 
We also present optimal poly-time BoBW results  for two agents.

For three agents, we prove that there exists an ex-ante proportional
distribution over at most six allocations, each of which is Epistemic EFX
(EEFX) and gives every agent at least $\tfrac{9}{10}$ of her maximin share
(MMS). Since MMS allocations need not exist, some agent may fall below her
MMS; we guarantee that any such agent is EFX-satisfied -- a new criterion we
call \IME. We complement this with an FPTAS preserving all envy-based  guarantees, and also preserving all value-based guarantees up to $(1-\varepsilon)$. 
Furthermore, we present an FPTAS which guarantees \emph{exact} ex-ante proportionality, while guaranteeing each agent receives $(1-\varepsilon)$ of her MMS or is EFX-satisfied ex-post -- notable, as
computing EFX allocations in polynomial time is open even without BoBW
constraints.
For two agents, we give an FPTAS that is ex-ante envy-free, ex-post EFX, and
guarantees each agent $(1-\varepsilon)$ of her MMS, matching the strongest
guarantees achievable in polynomial time.

 \end{abstract}

\pagebreak

\section{Introduction}

A classical problem in fair division is to divide a set $M$ of $m$ goods among a set $N$ of $n$ equally-entitled agents, each having an additive valuation {function} over the goods.
When resources are divisible, the celebrated cake-cutting literature shows that strong fairness benchmarks such as proportionality and envy-freeness can always be guaranteed \cite{azizcake2016}. However, in many modern applications -- from dividing an inheritance or household chores, to allocating computing resources, scholarships, or housing slots -- the items are \emph{indivisible}. This indivisibility fundamentally complicates the problem of fair division and an extensive literature in Theoretical Computer Science focuses on this problem \cite{ akrami2024breaking, akrami2025general, amanatidisapprox2017, amanatidis2024pushing, aziz_new, Aziz2018, aziz2020simultaneously, babaioff2021bobw, babaioff2022fairshares, barman2020approximation, bu2024bestofbothworldsfairallocationindivisible, caragiannis2019unreasonable, chaudhury2021little, chaudhury2024efx, feige2021tight, feige2022improvedmaximinfairallocation, azizfreemannew, garg2019approximating, garg2024bestofbothworldsfairnessenvycycleeliminationalgorithm, feldman2024breaking, Kurokawa2018}. 

In the context of indivisible goods, fairness can be interpreted and guaranteed in various different ways.  
Different frameworks have been proposed to capture what it means for an allocation to be ``fair":  
\emph{Share-based} approaches focus on ensuring that each agent receives an allocation of high enough value, while \emph{envy-based} approaches focus 
on trying to eliminate \emph{envy} among agents. These perspectives give rise to several formal fairness benchmarks, the most fundamental of which are proportionality and envy-freeness.

\paragraph{Fundamental fairness benchmarks.}
Two criteria serve as fundamental benchmarks for fairness. \emph{Proportionality} requires that each agent receives at least a $1/n$-fraction of her total value for the set of all goods. \emph{Envy-freeness (EF)} requires that no agent prefers another agent's bundle to her own. With divisible goods, both properties can be guaranteed simultaneously. With indivisible goods, however, simple examples show that {it may be impossible to obtain either benchmark}: 
if there is only one good and two agents, 
giving the good to either agent violates proportionality and envy-freeness for the other.

\paragraph{Relaxations of ex-post fairness.}
This impossibility has spurred the development of relaxations that capture meaningful notions of fairness in indivisible-items settings. For envy-freeness, two widely studied relaxations are \emph{envy-free up to one good (EF1)} \cite{lipton2004approximately} and \emph{envy-free up to any good (EFX)} \cite{caragiannis2019unreasonable}. EF1 requires that any envy towards another agent can be eliminated by removing a single item from the envied bundle. EFX strengthens this: envy must disappear when \emph{any} item is removed. For share-based fairness, the \emph{maximin share (MMS)} benchmark \cite{budishmms2011} requires that each agent receives at least the value she can guarantee herself by partitioning the goods into $n$ bundles, and receiving the worst one. Although in the case of two agents MMS allocations always exist, via the classic ``cut and choose" protocol \cite{bouvaret2014scale}, \citet{Kurokawa2018} show that even for the case of three additive agents, MMS allocations do not always exist. 

EFX allocations exist for three agents \cite{chaudhury2024efx}, however, despite significant research efforts, it is not known whether EFX exists for more than three agents. Recently, \citet{caragiannis2022new} introduced a relaxation of EFX, called  \emph{Epistemic EFX (EEFX)}. Intuitively, an allocation is \emph{Epistemic EFX (EEFX)} if each agent can view her own bundle as part of some EFX allocation -- that is, she can be persuaded that there exists a way 
to reallocate the remaining items such that she would not envy any other agent after the removal of any single item from their bundle.\footnote{Equivalently, one may think of the agent being able to reshuffle the items among the other agents’ bundles while keeping her own bundle fixed, until envy towards any other bundle can be eliminated by removing any single item from that bundle.}
These relaxations, however, are inherently \emph{ex-post}: they apply to realized allocations. 
Only guaranteeing ex-post fairness can still be perceived by some agents as unfair. Indeed, suppose there is only one good and two agents, Alice and Bob. Deterministically giving the good to Alice is EFX and MMS ex-post. Yet, Bob always receives nothing, which Bob might perceive as unfair, and argue that the item should be randomly assigned to one of the agents. 

\paragraph{Ex-ante fairness.}

By choosing allocations at random, we can ensure that each agent's \emph{expected} value  meets certain fairness criteria. For example, giving the single item to Alice or Bob with equal probability ensures that both agents receive their proportional share in expectation. This captures a notion of \emph{ex-ante proportionality}. Yet, guaranteeing only ex-ante proportionality can still be perceived as allowing for unnecessary unfairness. 
To illustrate this, imagine a case with 101 identical items. While randomly assigning all items either to Alice or to Bob guarantees ex-ante proportionality, 
it unnecessarily violates any reasonable notion of ex-post fairness (as opposed to giving each agent 50 items, and flipping a coin as to who receives the last item).

\paragraph{Best-of-Both-Worlds fairness.}
These tensions motivate the \emph{Best-of-Both-Worlds (BoBW)} paradigm. The goal is to combine the strengths of both approaches by allocating according to a distribution over allocations that is fair, and also guarantee fairness for each realized allocation. In other words, we seek a randomized allocation that is ex-ante fair (in the sense of envy freeness or proportionality of the expected values), while ensuring that every allocation in the support satisfies strong ex-post fairness criteria. The BoBW approach has recently received growing attention \cite{azizfreemannew, babaioff2021bobw, feldman2024breaking}. It provides a natural answer to the shortcomings of only ex-ante fairness, or only ex-post fairness. However, existing BoBW results fall short of matching the natural benchmarks of ex-ante proportionality\footnote{
One might consider \emph{ex-ante envy-freeness} to be a more natural benchmark; this benchmark is stronger and thus harder to obtain with strong ex-post guarantees. 
We leave the problem of obtaining it for future work, see \cref{sec:conclusion}.}, combined with ex-post envy-based fairness and a guarantee of a large fraction of the MMS\footnote{Beyond the relatively weak guarantees that are derived from the envy-based fairness itself.}. This holds even in the fundamental settings with only few agents.

\paragraph{Limitations of existing BoBW results.}
Two prominent lines of work illustrate the current boundaries. 

\emph{Ex-ante EF + ex-post EF1:} \citet{azizfreemannew} show that ex-ante envy freeness (EF) can be combined with ex-post EF1. 
While elegant, the ex-post guarantee has a weakness: 
consider the following example -- there are $n$ agents with identical additive valuations: there are $n-1$ valuable items of value $n$ each ("diamonds"), and $n$ non-valuable items of value $1$ each ("rocks"). 
An ex-ante EF distribution that is ex-post EF1 might be the following: give a random agent a rock, and each of the other agents receives both a rock and a diamond (Observe that every such allocation is EF1 -- when removing the diamond from the bundle of an agent who received both a diamond and a rock, the agent with a rock is no longer envious). 
Therefore, an ex-ante EF distribution that is supported on ex-post EF1 allocations might only guarantee every agent a $1/n$-fraction of the MMS.  
We note that this setting has an allocation that is ex-post EF (thus ex-post proportional and MMS): simply assign all rocks to one agent, and give each of the others a diamond.

\emph{Ex-ante Proportionality + ex-post $\frac{1}{2}$-MMS:} \citet{babaioff2021bobw} achieve a different tradeoff: ex-ante proportionality combined with the guarantee that every allocation in the support gives each agent a fraction of at least $\frac{1}{2}$ of her MMS. 
This approach ensures some share-based fairness, but the factor of $\frac{1}{2}$ 
is far from the known existential bound for MMS, which is currently $\frac{7}{9}$ 
\cite{huang2025fptas79approximationmaximinshare}.
For the special case of three agents, the guarantee of $\frac{1}{2}$ of the MMS \cite{babaioff2021bobw} is much smaller than the best known bound of  $\frac{11}{12}$ of the MMS \cite{feige2022improvedmaximinfairallocation}. Moreover , this approach may fail to provide any meaningful envy guarantee: for instance, if four identical goods are to be divided between Alice and Bob by randomly giving three goods to one agent and one good to the other, the resulting lottery is ex-ante proportional and ex-post $1/2$-MMS, yet clearly violates EF1, the weakest standard envy-based fairness notion (much weaker than EF and EFX).

Neither approach achieves the natural target of obtaining ex-ante proportionality together with the combination of both a strong ex-post envy-based fairness (ideally EFX), and a strong ex-post share based fairness (ideally a very large fraction of the MMS). 
We will seek such results for few agents, but before discussing those, we consider the central aspect of computational complexity.

\paragraph{Computational considerations.}
Any practical procedure for fair division must also be computationally feasible.  
In the case of two agents, the classical \emph{cut-and-choose} protocol guarantees both ex-ante proportionality and ex-post MMS and EFX fairness, yet computing the corresponding MMS partition is NP-hard.  
The challenge only intensifies with three agents: computing each agent's MMS value is NP-hard, and no polynomial-time algorithm is known for finding EFX allocations.  
A natural idea is to replace exact MMS computations with a $(1-\varepsilon)$ approximation using an FPTAS; however, doing so does not automatically preserve the fairness guarantees (other than the immediate approximate ex-post value guarantees).  
In particular, substituting {exact MMS partitions with} approximate MMS partitions may destroy ex-ante proportionality and the envy-based (EFX/EEFX) properties that {the construction with exact MMS partitions guarantees}.
Understanding how to maintain these guarantees under partitions that only $(1-\varepsilon)$ approximate the MMS is therefore a key technical challenge addressed by our work.

\paragraph{Guarantees for few agents.}
The two and three-agent settings are particularly meaningful, both theoretically and in practice. 
The case of two agents formalizes one of the oldest and most intuitive fair division problems -- the  "divide-and-choose" principle -- relevant to contexts such as divorce settlements. 
An early reference to this procedure appears in Hesiod's \emph{Theogony}, written nearly 2,800 years ago, where Prometheus and Zeus divide a sacrificial animal --   
Prometheus partitions the meat into two portions, and Zeus is invited to choose one \cite{hesiod}.

The three-agent case represents another level of complexity, yet it is still fundamental as it captures common  inheritance problems (ones with three heirs), and settings with three partners dissolving their common ownerships of a set of goods. 
Although the setting of three agents seems very restrictive, it is still rather challenging and has recently drawn much research attention   \cite{chaudhury2024efx, feige2021tight, feige2022improvedmaximinfairallocation, GOURVES201950}. 
For example, EFX existence for three agents was only proven recently \cite{chaudhury2024efx}, after much effort, and the case of four agents is still open.

Surprisingly, even for these fundamental settings with up to three agents, no prior Best-of-Both-Worlds (BoBW) algorithm achieves optimal fairness guarantees in polynomial time. For two agents, existing methods ensure ex-ante proportionality (and hence ex-ante envy-freeness, as valuations are additive) together with ex-post EFX, but the fraction of the maximin share (MMS) they guarantee is substantially below the best-known achievable bound \cite{garg2024bestofbothworldsfairnessenvycycleeliminationalgorithm, bu2024bestofbothworldsfairallocationindivisible}. For three agents, the situation is even more restrictive: no known BoBW algorithm simultaneously attains ex-ante proportionality and an ex-post envy-based guarantee stronger than EF1, while also achieving an MMS fraction close to the known existential limits (even in non-polynomial time).  This gap highlights the difficulty of reconciling ex-ante and ex-post fairness notions even in the simplest multi-agent settings with only two or three agents.

\newpage

\subsection{Our Results}

{We study the fair allocation of indivisible goods among three agents with additive valuations, and design algorithms that achieve \emph{near-optimal Best-of-Both-Worlds (BoBW)} fairness guarantees. We also present an optimal polynomial-time BoBW result for two additive agents.}
Before presenting our results in detail, let us first briefly overview our main contributions.

For three agents, we establish the existence of a distribution that is \emph{ex-ante proportional}, and whose every allocation in the support is \emph{EEFX} and guarantees each agent at least a \(\tfrac{9}{10}\)-fraction of her MMS.  
In each allocation, at most one agent may receive below her MMS value (but at least 90\% of it), and this agent is also guaranteed to be \emph{EFX}-satisfied.  
We refer to this new fairness notion as \IME: an allocation is \emph{\IMMX} if each agent either receives her MMS\ share or is \emph{EFX}-satisfied (see additional discussion of this concept in \cref{sec:IME} below).
The \IMMX\ criterion captures the strongest trade-off {that may be achievable} between share-based and envy-based guarantees when MMS allocations may fail to exist.
We discuss our results for three agents in \cref{sec:intro-3-agents} below.

The above construction for three agents is not polynomial-time, as it requires computing exact \(MMS\) partitions.  
However, by replacing these with \((1-\varepsilon)\)-approximate MMS partitions obtained via an FPTAS, we obtain an FPTAS\ for computing a BoBW distribution that remains \emph{EEFX} and also satisfies \emph{\EIME}, a relaxed notion of \IMMX that allows a $\varepsilon$-factor loss in the share-based value guarantee.   
In particular, ex-ante proportionality is not guaranteed, but only \((1-\varepsilon)\) of it. However, by dropping the \emph{EEFX} requirement we are able to obtain an FPTAS\ that guarantees \emph{exact} ex-ante proportionality and ex-post \((\tfrac{9}{10}-\varepsilon)-\mathrm{MMS}\), together with \EIMMX.  
This result substantially improves upon the previous BoBW algorithm of \cite{babaioff2021bobw}, which achieves ex-ante proportionality but only a \(\tfrac{1}{2}\)-\(MMS\) ex-post, without any envy-based fairness guarantees. 
{This construction leverages an optimal poly-time BoBW two-agent result that we discuss next and might be of independent interest.}
{Moreover, computing EFX allocations in polynomial time is an open problem even for three agents (even without any BoBW requirements). Within the polynomial-time regime, EF1 is therefore the strongest standard ex-post envy notion against which to compare. Our result lies on the Pareto frontier of BoBW guarantees: \citet{azizfreemannew} achieve ex-post EF1 together with ex-ante EF, whereas we achieve ex-post \EIMMX{} together with ex-ante proportionality. The two results are incomparable, since \EIMMX{} and EF1 are incomparable as ex-post notions - neither implies the other.}

For two agents, we present a polynomial-time algorithm (FPTAS) that, for any $\varepsilon>0$, simultaneously guarantees \emph{ex-ante envy-freeness} (and hence proportionality) and \emph{ex-post EFX}, while ensuring that each agent receives at least a \((1-\varepsilon)\)-fraction of her maximin share (MMS).  
This matches the strongest possible envy-based guarantees in both ex-ante and ex-post regimes, and attains the optimal polynomial-time MMS approximation factor of \((1-\varepsilon)\), unless \(P=NP\).  
We discuss our results for two agents in \cref{sec:intro-2-agents} below.

\begin{table}[H]
\centering
\small
\renewcommand{\arraystretch}{1.3}
\begin{tabular}{@{}llllll@{}}
\toprule
Reference & $n$ & Ex-ante & Ex-post share & Other ex-post & Poly-time \\
\midrule
\cite{azizfreemannew}   & any & EF   & $\tfrac{1}{n}$-MMS & EF1  & yes \\
\cite{babaioff2021bobw} & any & PROP & $\tfrac{1}{2}$-MMS & none (may violate EF1) & yes \\
\cite{garg2024bestofbothworldsfairnessenvycycleeliminationalgorithm}
                         & $2$ & EF   & $\tfrac{2}{3}$-MMS$^{\ddagger}$ & EFX  & yes \\
\cite{bu2024bestofbothworldsfairallocationindivisible}
                         & $2$ & EF   & $\tfrac{2}{3}$-MMS$^{\ddagger}$ & EFX  & yes \\
\midrule
\cref{cor:main}           & $2$ & EF   & $(1-\varepsilon)$-MMS & EFX  & FPTAS \\
\cref{thm:three-agents-main}
                          & $3$ & PROP & $\tfrac{9}{10}$-MMS & EEFX $+$ IMMX & no$^{\dagger}$ \\
\cref{thm:bobw-poly-approx}
                          & $3$ & $(1-\varepsilon)$-PROP & $(\tfrac{9}{10}-\varepsilon)$-MMS & EEFX $+$ \EIMMX & FPTAS \\
\cref{thm:bobw-poly}      & $3$ & PROP & $(\tfrac{9}{10}-\varepsilon)$-MMS & \EIMMX & FPTAS \\
\bottomrule
\end{tabular}
\caption{BoBW guarantees for additive agents. Every guarantee holds for
every allocation in the support (ex-post) or in expectation (ex-ante).
The ``other ex-post'' column lists envy-based and hybrid guarantees:
EEFX additionally implies the share-based MXS guarantee
\cite{caragiannis2022new}, and IMMX requires each agent to receive
her MMS \emph{or} be EFX-satisfied. For $n=3$, no allocation -- even
ignoring ex-ante requirements -- can guarantee more than
$\tfrac{39}{40}$-MMS \cite{feige2021tight}.
$^{\dagger}$Polynomial-time given oracles for MMS partitions
(\cref{lem:poly-time-3}).
$^{\ddagger}$Implied by ex-post EFX for two agents \cite{amanatidis2018comparing}; 
\citet{garg2024bestofbothworldsfairnessenvycycleeliminationalgorithm} do not guarantee more than $\tfrac{37}{42}$-MMS, while \cite{bu2024bestofbothworldsfairallocationindivisible} do not guarantee more than $\tfrac{17}{20}$-MMS
-- see \cref{sec:no-min-bound}.}
\label{tab:comparison}
\end{table}

{\subsubsection{The \IME\ Criterion: Motivation and Interpretation}\label{sec:IME}

Our new fairness notion, denoted by \IME, aims to capture a balanced compromise between share-based and envy-based guarantees.  
An allocation is said to be \emph{\IMMX} if every agent either receives her \(MMS\) share or is \emph{EFX}-satisfied (and thus guaranteed at least \(4/7\)-MMS~\cite{amanatidis2018comparing}).  
The motivation for this notion stems from two fundamental observations.  
First, \(MMS\) allocations are not guaranteed to exist, even for three additive agents~\cite{Kurokawa2018}; hence, 
some agents may inevitably fail to get their \(MMS\) value. We would like to ensure that any agent who fails to receive her \(MMS\) value 
be compensated with the strongest feasible envy-based guarantee -- that of \emph{EFX}-satisfaction.  
Second, we believe that insisting on \emph{EFX} for all agents may be unnecessarily restrictive: agents who already receive high-valued bundles, above their \(MMS\) share, have little justification for complaining.

To illustrate, consider three agents and five items \(\{g_1, g_2, g_3, g_4, g_5\}\), with additive valuations defined by the following item values:
\[
\begin{array}{c|ccccc}
 & g_1 & g_2 & g_3 & g_4 & g_5 \\
 \hline
 v_1(g) & 100 & 101 & 2 & 0 & 0 \\[2pt]
 v_2(g) & 10 & 4 & 4 & 2 & 2 \\[2pt]
 v_3(g) & 10 & 4 & 4 & 2 & 2
\end{array}
\]

Consider the allocation \((\{g_1\}, \{g_2,g_3\}, \{g_4,g_5\})\), where agent~1 receives \(\{g_1\}\), agent~2 receives \(\{g_2,g_3\}\), and agent~3 receives the remaining items.  Both agent 2 and agent 3 are EFX-satisfied, just like in an EFX allocation. 
Agent~3 does not get her MMS, but that is consistent with the fact that agents cannot expect to always get their MMS, as MMS allocations do not always exist \cite{Kurokawa2018}.
The allocation is not EFX as agent~1 is not EFX-satisfied -- she envies agent~2 even after removing \(g_3\).
Yet, agent~1 receives value \(100\),  above her \(MMS\) value of \(2\), so it seems like any complaint she might raise about not being treated fairly (as she is not EFX-satisfied) is  unwarranted, as she receives a good bundle, one with value above her fair share (her MMS).  

When judged solely by \(MMS\) or solely by \(EFX\), this allocation might appear unfair, but under our \IMMX criterion it is deemed fair: every agent is either \(MMS\)-satisfied or \emph{EFX}-satisfied.

{\paragraph{IMMX beyond additive valuations.} Recently, two constructions of 3-agent instances (with monotone valuations) for which EFX does not exist have been presented. Interestingly, while these instances also do not have any MMS allocations, we show that they have \IMMX allocations, highlighting the attractiveness of this new fairness concept: it exists for some instances for which there are no EFX and no MMS allocations.  Specifically ~\citet{akrami2026counterexampleefxnge} presented instances with three agents with monotone valuations and eight goods, in which an EFX allocation does not exist, while ~\citet{mackenzie2026counterexamplesefxsubmodularsubadditive} show that 
the problem that EFX does not exist  persists even for weighted coverage functions (a subclass of submodular functions).
In \cref{immx-proof}, we show that the submodular instance admits no MMS allocation under any cardinal valuation consistent with the ordinal rankings of the original example. For the \cite{akrami2026counterexampleefxnge} instance, we verify by exhaustive search that no MMS allocation exists. Nevertheless, we show that both instances admit an \IMMX allocation.
To the best of our knowledge, these are the first examples exhibiting a simultaneous failure of both EFX and MMS.
As \IMMX exists for these instances, it still has the potential of being generally feasible (while MMS and EFX are not).}

}

{
\paragraph{IMMX for chores.}
Our IMMX existence result extends beyond the setting of goods 
to the setting of \emph{chores} -- indivisible items that impose costs on the agents receiving them rather than positive value. Specifically,  in \Cref{sec:immx-chores} we show that the construction underlying \Cref{thm:three-agents-main} can be adapted to the setting of chores. That is, we establish that for three agents with additive cost functions, an IMMX allocation always exists: every agent either incurs a cost of at most her maximin share, or is EFX-satisfied. 
We note that for chores, the IMMX criterion carries added value beyond its appeal for goods.
MMS allocations for chores are known not to exist already for three agents \cite{ARSW17}. As for EFX, while EFX allocations for goods are known to exist for three agents \cite{chaudhury2024efx}, the existence of EFX allocations for chores remains open even for three agents. Moreover, it was recently shown that they need not exist for four or more agents \cite{he2026efxadditivechoresnonexistence}.
Therefore, for three agents with chores, while each of the two fairness benchmarks of MMS and EFX are currently unattainable (either infeasible or not known to exist), IMMX is guaranteed to exist.
}

We view \IMMX\ as a natural and robust middle ground between the two dominant fairness perspectives of share-based and envy-based fairness.  
It extends the applicability of fair allocations beyond strict \(MMS\) or strict \(EFX\) requirements, enlarging the set of allocations that can be justifiably regarded as fair, while preserving strong and interpretable guarantees for all agents.

\subsubsection{Near-Optimal BoBW for Three Additive Agents}\label{sec:intro-3-agents}

Our main focus is the problem of fair allocation of indivisible goods among three agents with additive valuations, for which we establish the existence of a near-optimal BoBW distribution.

For three agents,  while EFX allocations are known to exist \cite{chaudhury2024efx}, no polynomial-time algorithm is known, and beyond three agents existence is unresolved. We therefore turn to the weaker -- but still powerful -- notion of \emph{Epistemic EFX (EEFX)} \cite{caragiannis2022new}. EEFX retains envy-based structure, implies share-based fairness (MXS), and guarantees at least $2/3$-MMS for three agents \cite{amanatidis2018comparing}. Our aim is to push these guarantees higher, and as close as possible to the best possible guarantees.

\citet{garg2024bestofbothworldsfairnessenvycycleeliminationalgorithm} show that even for three agents with additive valuations,  
a natural random variant of the envy-cycle elimination algorithm may fail to guarantee ex-ante proportionality with ex-post EEFX. In contrast, we establish the existence of a distribution that is ex-ante proportional and whose every allocation in the support is simultaneously \emph{EEFX} and guarantees each agent at least a $\frac{9}{10}$-fraction of her MMS. {Moreover, we guarantee \IMMX: each agent in the allocation either receives her MMS share, or is EFX-satisfied. Since as mentioned, MMS allocations do not always exist for three agents, this is essentially the best envy-based guarantee one can offer to the agent not getting her MMS (the stronger notion of envy-freeness is clearly impossible to guarantee)}. {We summarize these guarantees formally in the following theorem:}

\begin{restatable}{theorem}{threemainthm}
\label{thm:three-agents-main}
Consider settings with three agents with additive valuations over a set $M$ of indivisible goods.

There exists a procedure that given three additive valuations $(v_1,v_2,v_3)$, constructs a probability distribution \( \mu \) over at most six deterministic allocations 
such that:
{
\begin{enumerate}
    \item The distribution \( \mu \) is \emph{ex-ante proportional}.
    \item In every deterministic allocation in the support of \( \mu \): 
    \begin{enumerate}
        \item One agent is EFX-satisfied and receives her proportional share.
        \item One agent is EFX-satisfied and receives at least 9/10 of her MMS.
        \item One agent is EEFX-satisfied and receives at least her MMS.
    \end{enumerate}

\end{enumerate}
}
\end{restatable}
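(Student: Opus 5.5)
We will build $\mu$ as a uniform mixture over the $3!=6$ orderings $\pi$ of the agents, using one deterministic allocation $A^\pi$ per ordering. In each $A^\pi$ the roles of the three agents are fixed by $\pi$:
\begin{itemize}
\item the first agent $c=\pi(1)$ is the \emph{cutter};
\item the second agent $s=\pi(2)$ is the \emph{second chooser}, who should end up EFX-satisfied with $9/10$ of her MMS;
\item the third agent $f=\pi(3)$ is the \emph{first chooser}, who picks freely and gets at least her MMS while being EEFX-satisfied.
\end{itemize}
Ex-ante proportionality will come from an averaging argument. In $A^\pi$ the first chooser gets her most valued bundle of some partition, hence at least $v_f(M)/3$. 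The cutter, who in the allocation is EFX-satisfied with her proportional share, gets a proportional share. The second chooser may get below $v_s(M)/3$. We must therefore show that, averaged over the two orderings in which agent $i$ plays each role, her expected value is at least $v_i(M)/3$. The cleaner route is to make the first-chooser value large enough to compensate. Concretely, if the chooser takes her favourite of three bundles whose total is $v_i(M)$, and the second chooser gets at least $v_i(M)-2\cdot(\text{max bundle})$ or a similar bound, the average over roles is $\ge v_i(M)/3$. I will prove the per-agent inequality $\tfrac13(\text{value as cutter}+\text{value as second}+\text{value as first})\ge v_i(M)/3$ directly.

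The construction of $A^\pi$ is as follows. Let the cutter $c$ take an MMS partition $P=(P_1,P_2,P_3)$, so each $v_c(P_j)\ge \mathrm{MMS}_c$.
\begin{enumerate}
\item Agent $f$ picks her favourite bundle of $P$. She gets at least $v_f(M)/3$, though not necessarily her MMS.
\item To secure $f$'s MMS and EEFX, we instead first let $f$ take her favourite bundle $P_j$. If $v_f(P_j)<\mathrm{MMS}_f$, we replace $P$ by a refined partition obtained by reallocating the other two bundles. This is the standard Feige-style case analysis used for $11/12$-MMS for three agents. The point is that $f$ values $P_j$ the most, so when $f$ keeps it, reshuffling $M\setminus P_j$ between the remaining bundles certifies EEFX for $f$.
\item The remaining two bundles are then split between $s$ and $c$ by a cut-and-choose-like step. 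Agent $s$ chooses her preferred bundle among those left, or, if that is below $\frac{9}{10}\mathrm{MMS}_s$, we run an EFX-style reallocation of $M\setminus(\text{f's bundle})$ between $s$ and $c$. The reallocation uses $s$'s $2$-agent EFX partition of the remainder, in which $c$ chooses. This keeps $c$ EFX-satisfied, because $c$ is indifferent enough between the original bundles that each is at least $\mathrm{MMS}_c$ and at least $v_c(M)/3$ after the first chooser's pick, since $c$ cut them to be balanced.
\end{enumerate}
The proportionality claim for $c$ uses that $c$ keeps a bundle of value at least $v_c$ of the one removed by $f$ or its complement share.

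The main obstacle is the second chooser's $9/10$-MMS bound. We need to show that when $f$ removes a bundle $B$, the remainder $M\setminus B$ has $v_s$-value at least roughly $\frac{9}{5}\mathrm{MMS}_s$, so that the two-agent EFX split gives $s$ at least $\frac{9}{10}\mathrm{MMS}_s$. This requires a careful case analysis on how many items of $B$ are large for $s$, mirroring the \cite{feige2022improvedmaximinfairallocation} technique. The argument uses $\mathrm{MMS}_s\le v_s(M)/3$, bounds $v_s(B)$ via $f$'s choice rule combined with the cutter's balance, and, when $B$ is too valuable to $s$, switches the roles so that $s$ picks before $f$, using IMMX only for the agent who falls short. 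I expect the constant $9/10$ to emerge from optimizing these cases. The final step is to verify that in each case the EFX guarantee for $s$ is preserved. This holds because $s$ either picks or cuts in the two-agent remainder, and the EFX partition of a set by the cutting agent is EFX for that agent.
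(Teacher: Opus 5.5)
Your top-level design (one allocation per role assignment, uniform over the six) matches the paper, but the construction inside each assignment has a genuine gap. In your scheme the first chooser $f$ removes a whole bundle $B$ of the cutter's partition \emph{before} $s$ acts. So $s$ only ever receives part of $M\setminus B$, for a bundle she did not choose, and $\frac{9}{10}\mathrm{MMS}_s$ cannot be recovered.

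Concretely, let $\{A,B,C\}$ be $c$'s MMS partition, each bundle worth $1$ to $c$; generic item values make this her unique MMS partition. Let $s$ and $f$ have identical valuations in which $A$ consists of ten unit-value items and $B$, $C$ of two unit-value items each. Then $\mathrm{MMS}_s=4$ and $f$ takes $A$. Whether $s$ picks one of $B,C$ or splits $B\cup C$ for $c$ to choose from, $s$ gets at most $2=\frac12\mathrm{MMS}_s$. Letting $s$ move before $f$ only shifts the same loss to $f$, and such role switching would break the fixed roles your averaging and clause assignment rely on. Separately, $v_s(M\setminus B)\ge\frac95\mathrm{MMS}_s$ does not give a two-way split with both pieces $\ge\frac{9}{10}\mathrm{MMS}_s$ for indivisible goods.

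The paper reverses the order of moves:
\begin{itemize}
\item The divider's MMS partition is made EFX for her via \textsc{Realloc}.
\item If the other two agents have distinct top bundles, they take them.
\item Otherwise both share the unique top bundle $A$. The \emph{subdivider} splits $A\cup Z$, choosing the partner $Z\in\{B,C\}$ that maximizes the smaller piece of her \textsc{Realloc}-ed two-way MMS split.
\item The chooser takes her favourite piece; a side case handles a chooser who strictly prefers the leftover bundle $L$. The divider keeps $L$.
\end{itemize}
Since the contested bundle sits inside the subdivider's split and she picks its partner, the Feige--Norkin analysis gives $\frac{9}{10}\mathrm{MMS}$ (\cref{lem:val-910-mms}). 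Both pieces are worth at least $v(L)$ to her (\cref{obs:repartition-larger}), which gives EFX. In the example, the subdivider splits $A\cup B$ into $6+6$.

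Two further points fail as well.

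\textbf{The cutter's guarantee.} The cutter cannot be your ``EFX and proportional share'' agent. She ends up with a bundle of her own MMS partition, or with her pick from a split of two such bundles. That is worth at least $\mathrm{MMS}_c$ but possibly less than $v_c(M)/3$: take three single items worth $2,1,1$ to $c$, with $f$ taking the $2$. Moreover, a raw MMS partition is neither balanced nor EFX for its cutter in general. The assignment that actually works is:
\begin{itemize}
\item divider: MMS and EEFX, certified by her \textsc{Realloc}-ed partition;
\item chooser: her top bundle of the final partition, hence PROP and EFX;
\item subdivider: $\frac{9}{10}\mathrm{MMS}$ and EFX.
\end{itemize}

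\textbf{The ex-ante accounting.} Your averaging argument lacks its key inequality. A bound of the form $v_i(M)-2\cdot(\text{max bundle})$ for the second chooser makes the six-allocation total only $2v_i(M)+2(\mathrm{MMS}_i-\text{max bundle})$, which falls short in general. What is needed is a coupling between the two allocations sharing a divider $c\neq i$, where $i$ is chooser once and subdivider once. Across these two, $i$ must collect at least $v_i(M)-\mathrm{MMS}_i$, i.e., the top two bundles of a single three-way partition (\cref{lem:guarantee-value}). In the repartition case this requires the specific inequalities of \cref{lem:total-val-case-2}, which use the maximizing choice of $Z$. Combined with at least $\mathrm{MMS}_i$ in each of her two divider allocations, her total over the six allocations is at least $2v_i(M)$.
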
 
We note this implies that every deterministic allocation in the support of $\mu$ is \IMMX, and in each allocation each agent receives at least $\frac{9}{10}$ of her MMS share, and is also EEFX-satisfied.
Our result is the first BoBW result to guarantee EEFX for more than 2 agents. Additionally, our algorithm is combinatorial (in contrast to the LP-based algorithm of \cite{babaioff2021bobw}).
Our result shows that BoBW distributions that are ex-ante proportional and are supported on $9/10$-MMS allocations exist. This is near-optimal as the best fraction of the MMS cannot be larger than $39/40$ \cite{feige2021tight}. 
Our $9/10$ approximation of the MMS substantially improves on \citet{babaioff2021bobw} guarantee of $\frac{1}{2}$-MMS.\footnote{We note that \cite{babaioff2021bobw} provide a polynomial-time algorithm guaranteeing ex-ante proportionality and ex-post $\tfrac{1}{2}$-MMS. 
We later extend our result and show that ex-ante proportionality and ex-post $(\tfrac{9}{10}-\varepsilon)$-MMS can also be achieved in polynomial time.}
Moreover, we note that \cite{caragiannis2022new} show that MMS implies EEFX, but this implication holds only when there are no items with zero utility. In this paper we allow for items of zero utility, and as MMS does not necessarily imply EEFX in that general setting\footnote{To illustrate, consider three agents with identical additive valuations over five items with values $(10,10,2,0,0)$. Allocating the item of value 2 to agent 1 ensures that she receives at least her MMS value. However, agent 1 is not EEFX-satisfied: no redistribution of the remaining four items among the other agents yields an allocation that eliminates agent 1's envy up to any item (including zero-valued ones).}, it is non-trivial to guarantee EEFX on top of MMS for our model.

This result is near optimal from several other perspectives; first, the distribution is ex-ante proportional, and it is impossible to guarantee every agent more than her proportional share ex-ante. 
Second, if an agent does not get her MMS she is guaranteed the best-possible envy-based guarantee of EFX satisfaction {-- and thus the allocation is \IMMX}. Moreover, two out of the three agents are EFX-satisfied (while the third is EEFX-satisfied). 
Finally, it is desirable to satisfy ``equal treatment of equals'' (agents with identical valuations are treated the same). Clearly this cannot be guaranteed ex-post, so the best we can hope for is satisfying this ex-ante, which our procedure does. We also note that the procedure is \emph{robust:} all fairness guarantees hold for every agent, independently of the behavior or reports of others.

\paragraph{Computation.}  
While \cref{thm:three-agents-main} provides a constructive proof of existence, the construction does not run in polynomial time.
It is important to note that the \textbf{only} non-polynomial component of the construction is the computation of MMS partitions for each of the three given valuations. 
The entire construction runs in polynomial time if for each of the three valuations it is given access to an oracle that can compute MMS partitions.  
In particular, if all values are {integers} bounded by a polynomial in the number of goods $|M|$, 
then the BoBW result is obtained as stated in \cref{thm:three-agents-main} in polynomial time - for further details, see
\cref{sec:fptaspolynomial}.

{
\paragraph{Polynomial-Time Approximation via FPTAS} We note that for any $\varepsilon>0$ we can employ an FPTAS to compute $(1-\varepsilon)$-MMS partitions, thereby making the entire algorithm polynomial-time, while preserving the EEFX guarantee, the {\EIMMX guarantee,} and also ensuring a $1-\varepsilon$ fraction of all share-based guarantees (both ex-ante and ex-post). 
See \cref{thm:bobw-poly-approx} for a formal statement of this result. 

}

In this FPTAS result (\cref{thm:bobw-poly-approx}),  ex-ante proportionality is not guaranteed, but only \((1-\varepsilon)\) of it. We show that by dropping the EEFX requirement it is  possible to restore \textbf{exact} ex-ante proportionality, while maintaining all ex-post share-based guarantees within a factor of $1-\varepsilon$  {(including guaranteeing \EIMMX)}. This result, in particular, improves upon the BoBW result obtained by \cite{babaioff2021bobw}, which guarantees ex-ante proportionality, but only $\frac{1}{2}$-MMS. We formalize this result in \cref{thm:bobw-poly}.
This result bridges the gap between existential and computational guarantees for the case of three agents. 
It establishes the existence of an FPTAS that guarantees exact ex-ante proportionality alongside strong share-based ex-post fairness guarantees (including \EIMMX and near-optimal approximations of the MMS).

We now turn to the case of two additive agents, and present a polynomial-time algorithm (FPTAS) with optimal fairness guarantees.  
This two-agent optimal BoBW result {serves} as a building block in transforming the BoBW algorithm for three additive agents, into a polynomial-time BoBW algorithm that guarantees \emph{exact} ex-ante proportionality (\cref{thm:bobw-poly}).

\subsubsection{Optimal Poly-time BoBW for Two Additive Agents}\label{sec:intro-2-agents}

For two additive agents we present a simple, polynomial-time (FPTAS) \emph{best-of-both-worlds} algorithm with multiple optimal guarantees. 
Prior polynomial-time procedures \cite{bu2024bestofbothworldsfairallocationindivisible, garg2024bestofbothworldsfairnessenvycycleeliminationalgorithm} achieve ex-post EFX together with ex-ante envy-freeness (which is equivalent to ex-ante proportionality in this setting), but we observe that any one of them may lose a constant fraction of the maximin share (MMS).\footnote{Loss of more than 11  percent, see \cref{sec:no-min-bound} for more details.}
We show that a slight modification of the algorithm presented in \cite{bu2024bestofbothworldsfairallocationindivisible} avoids this constant loss, and it obtains an optimal MMS poly-time approximation of $1-\varepsilon$, on top of these two guarantees.

Specifically we show:

\begin{restatable}{theorem}{twoagentsmain}
    \label{cor:main} 
Fix any $\varepsilon>0$. Consider settings with two agents who have additive valuations over a set of indivisible goods.
There exists a fully polynomial-time approximation scheme (FPTAS) 
that outputs a distribution over at most two allocations satisfying the following properties:
\begin{itemize}
    \item \textbf{Ex-ante envy-free:} The randomized allocation is ex-ante envy free (and thus ex-ante proportional).

    \item \textbf{Ex-post EFX:} Every allocation in the support of the distribution is EFX. 
    
    \item \textbf{Ex-post value guarantee:} Every allocation in the support of the distribution guarantees each agent \( i \) a value of at least $(1-\varepsilon)$ of $MMS_i$. 

\end{itemize}
\end{restatable}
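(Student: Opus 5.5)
The plan is to use a randomized cut-and-choose built on an approximate MMS partition, followed by an EFX-repair step in the spirit of the algorithm of \cite{bu2024bestofbothworldsfairallocationindivisible}.

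\textbf{Step 1: approximate partitions.} For each agent $i\in\{1,2\}$, run the standard FPTAS for partition (two-bin makespan / subset sum) on $v_i$. This gives a partition $(A_i,B_i)$ with $v_i(A_i)\ge v_i(B_i)\ge (1-\varepsilon)MMS_i$.

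\textbf{Step 2: the two allocations.} Allocation $X^i$ is the one in which agent $i$ cuts and the other agent $j$ chooses. The output distribution puts probability $1/2$ on each of $X^1$ and $X^2$.

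\textbf{Step 3: make the cutter's partition EFX for herself, then let the chooser pick.} Before agent $j$ chooses, agent $i$'s partition is modified so that agent $i$ is EFX-satisfied with whichever bundle she ends up holding. While there is an item $g$ in the heavier bundle $A_i$ with $v_i(A_i\setminus\{g\})>v_i(B_i)$, move the $v_i$-smallest such item from $A_i$ to $B_i$. The goal is a partition that is EFX in both directions for agent $i$.
\begin{itemize}
\item \emph{Value preserved.} The smaller side never drops below its old value. After a move, the new smaller side is either the enlarged $B_i$ or $A_i\setminus\{g\}$, which by the move condition has value above the old $v_i(B_i)$. So $\min(v_i(A_i),v_i(B_i))$ is non-decreasing, and the $(1-\varepsilon)MMS_i$ bound survives.
\item \emph{Termination.} The potential $|v_i(A_i)-v_i(B_i)|$ strictly decreases. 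I would argue this carefully, or instead use the cleaner rule: order the items by $v_i$ and move items one at a time from the heavier side to the lighter side.
\item \emph{Zero-valued items.} These need care. EFX towards the other bundle must hold after removing any item, including zero-valued ones. The natural fix is to place all of $i$'s zero-valued items into the bundle she values less, so that removing one of them never matters.
\end{itemize}
Agent $j$ then takes her preferred bundle, so she is envy-free, and agent $i$ is EFX-satisfied with the remaining bundle.

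\textbf{Step 4: the three guarantees.}
\begin{itemize}
\item \emph{Ex-post EFX and $(1-\varepsilon)$-MMS for the chooser.} Agent $j$ picks her better bundle in $X^i$, so she is envy-free there. Because she is also the cutter in $X^j$, she holds at least $(1-\varepsilon)MMS_j$ in both allocations.
\item \emph{Ex-ante envy-freeness.} For two agents this is equivalent to ex-ante proportionality, i.e.\ $\tfrac12 v_i(X^1_i)+\tfrac12 v_i(X^2_i)\ge v_i(M)/2$. In $X^j$, agent $i$ chooses and gets at least $v_i(M)/2$. In $X^i$ she only gets $\min\ge(1-\varepsilon)MMS_i$, so the average may fall short of $v_i(M)/2$.
\end{itemize}

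The ex-ante step is the main obstacle, and I would handle it by shifting probability. Put weight $p$ on $X^1$ and $1-p$ on $X^2$. Agent 1 needs $p\,c_1+(1-p)\,h_1\ge v_1(M)/2$, where $c_1<v_1(M)/2\le h_1$ are her values as cutter and as chooser. Agent 2 needs the analogous inequality with the roles of $p$ and $1-p$ exchanged. These two constraints define an interval of feasible $p$. I would show it is nonempty using the observation of \cite{bu2024bestofbothworldsfairallocationindivisible}: if each chooser's bundle is worth at least half to her, it suffices that each agent's surplus as chooser covers her deficit as cutter.

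If that fails, the fallback is to choose, as the cutter's partition, the one from the other agent's perspective, so that the chooser's gain equals the cutter's loss. Concretely, use the single partition $P$ that minimizes agent $1$'s imbalance, and randomize only over who takes which side of $P$. Agent 2 then picks her preferred side; if both agents prefer the same side, flipping a fair coin over which side each gets is ex-ante envy-free automatically, since each agent gets each side with probability $1/2$. This keeps the distribution on at most two allocations, each EFX, with all value guarantees intact.
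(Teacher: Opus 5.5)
\textbf{There is a genuine gap: ex-ante envy-freeness is never secured.} Your Steps 1--3 are fine, provided you use a provably polynomial repair such as \textsc{Realloc} or \textsc{LocalSearch}. Cut-and-choose on EFX-repaired FPTAS cuts does give ex-post EFX and $(1-\varepsilon)$-MMS. The failure is in the ex-ante step. Write $d_i=v_i(M)/2-c_i$ and $s_i=h_i-v_i(M)/2$. Your two constraints then read $p\le s_1/(s_1+d_1)$ and $p\ge d_2/(s_2+d_2)$, and they are compatible iff $d_1d_2\le s_1s_2$. Nothing forces this when the two cuts are computed independently.

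For example, take $\varepsilon=1/5$, $v_1=(8,14,8,1,8,2)$ and $v_2=(4,4,4,4,3,2)$ on $g_1,\dots,g_6$. The FPTAS may return the following cuts.
\begin{itemize}
\item $P^1=\{\{g_1,g_3,g_5\},\{g_2,g_4,g_6\}\}$, with values $24,17$ for agent 1 and $MMS_1=19$.
\item $P^2=\{\{g_2,g_3,g_4\},\{g_1,g_5,g_6\}\}$, with values $12,9$ for agent 2 and $MMS_2=10$.
\end{itemize}
Both are valid $(1-\varepsilon)$-MMS outputs and already EFX for their owners, so your repair changes nothing. Agent 2 takes $\{g_1,g_3,g_5\}$ ($11$ vs.\ $10$) and agent 1 takes $\{g_2,g_3,g_4\}$ ($23$ vs.\ $18$). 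So agent 1 needs $p\le 5/12$ while agent 2 needs $p\ge 3/4$.

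Your fallback fails on the same instance. Among $P^1,P^2$, agent 1's imbalance is $7$ versus $5$, so $P=P^2$. Both agents prefer $\{g_2,g_3,g_4\}$. When agent 1 loses the coin she holds $18<22=v_1(\{g_2,g_3,g_4\}\setminus\{g_4\})$, so she is not EFX-satisfied. In general, a single shared partition is EFX only for the agent whose valuation shaped it. Repairing it under the other valuation can in turn break the first agent's EFX.

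The missing idea is the iterative exchange in Algorithm~3 of Bu et al., which is what the paper uses.
\begin{itemize}
\item Seed each agent's candidate with \textsc{LocalSearch} applied to her FPTAS partition.
\item While some agent $3-i$ strictly prefers the minimum bundle of $\mathcal{A}^i$ to her own, replace $\mathcal{A}^{3-i}$ by \textsc{LocalSearch}$(\mathcal{A}^i,v_{3-i})$.
\item As the paper adds, let an agent copy the other's partition only if this strictly raises her minimum.
\end{itemize}
By Bu et al.'s analysis the loop ends in polynomial time, in one of three legitimate states:
\begin{itemize}
\item some partition is envy-free for both agents;
\item the newly produced partition is EFX for both agents, so your shared-partition coin flip becomes valid;
\item neither agent prefers the other's minimum, i.e.\ $s_i\ge d_i$ for both, so the uniform lottery is ex-ante envy-free.
\end{itemize}
Every update strictly raises the updating agent's minimum, and \textsc{LocalSearch} never lowers it. Hence each agent's worst bundle stays at least her FPTAS value, and so at least $(1-\varepsilon)MMS_i$.

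In your example, letting agent 1 adopt $P^2$ and run \textsc{LocalSearch} under $v_1$ yields $\{\{g_2,g_3\},\{g_1,g_4,g_5,g_6\}\}$. The agents prefer different sides of this partition, which gives an envy-free outcome. The surplus-covers-deficit condition you cite is thus an outcome of this loop, not a property of two independently computed cuts.
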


Note that the classical cut-and-choose protocol with a random cutter achieves ex-ante envy-freeness (proportionality), ex-post EFX, and exact MMS, but is not polynomial-time computable (unless $P=NP$), as computing the MMS is NP-hard. 
Moreover, while we can simulate, in polynomial time, the randomized cut-and-choose protocol using any given FPTAS to compute $(1-\varepsilon)$-MMS partitions, this may only guarantee an expected value of $(1-\varepsilon)$ fraction of the proportional share, and it \textbf{does not} guarantee ex-ante envy freeness, nor ex-post EFX\footnote{See \cref{subsec:ptas-fail} for an example.}. 
In contrast, our algorithm runs in polynomial time and achieves the same guarantees of ex-ante envy-freeness and ex-post EFX, and only suffers an inevitable small loss in the MMS approximation factor. 

{{These} guarantees are essentially optimal: we simultaneously achieve ex-ante envy-freeness and ex-post \emph{EFX} -- the strongest widely studied feasible notions of fairness in their respective regimes -- while ensuring each agent receives at least a \((1-\varepsilon)\) fraction of the MMS value ex-post.  
No algorithm can guarantee more than the proportional share ex-ante (under identical valuations), and our algorithm attains this bound; moreover, its running time essentially matches that of the best available FPTAS, up to an additional polynomial term independent of~\(\varepsilon\).}

\subsection{Our Techniques}

{We focus on our techniques for the case of three agents}. In a previous attempt to combine ex-ante proportionality with ex-post epistemic EFX, \citet{garg2024bestofbothworldsfairnessenvycycleeliminationalgorithm} showed that a natural randomized variant of the \emph{Envy-Cycle Elimination (ECE)} algorithm -- the tool used to establish the existence of EEFX allocations and adopted by \citet{feldman2024breaking} for their BoBW guarantees -- fails to ensure EEFX even for three agents with additive valuations. On the share-based side, \citet{babaioff2021bobw} showed that the techniques behind the envy-centered BoBW algorithm of \cite{azizfreemannew} guarantee only a $1/n$ fraction of the MMS, ruling them out for stronger ex-post share-based guarantees.

To overcome these limitations, we introduce a structural technique that combines the reasoning of EEFX with the robustness of MMS partitions. We first build, for one agent, an MMS partition into three bundles that is also EFX under her own valuation; whichever of the three she receives, she attains her MMS value and is EEFX-satisfied, regardless of how the remaining items are split between the other two agents. The key insight is that the epistemic nature of EEFX lets us \emph{repartition} the remaining items without violating EEFX: once one agent's MMS and EEFX constraints are met, the other bundles can be redivided arbitrarily to satisfy the remaining agents. Concretely, from an MMS-and-EEFX partition for one agent we repartition the other two bundles so that one remaining agent receives her proportional share and the other receives at least \(\tfrac{9}{10}MMS\) while being EFX-satisfied. To achieve ex-ante proportionality, we construct six such allocations in three pairs -- each pair MMS and EEFX for one agent and desirable for the other two. Averaging over the pairs yields an ex-ante proportional distribution in which every deterministic allocation is EEFX and near-optimal for MMS, while any agent below her MMS value remains \emph{EFX}-satisfied.

Our construction is related to the ``coarse atomic partial search'' algorithm of \cite{feige2022improvedmaximinfairallocation, amanatidisapprox2017, GOURVES201950}, but those algorithms guarantee fairness only for a realized allocation and are thus inherently \emph{ex-post}. Seeking guarantees over the distribution of allocations (\emph{ex-ante}) demands different structural invariants, yielding both a conceptually different \textbf{BoBW} framework and non-trivial modifications of the algorithm itself. 

To the best of our knowledge, this is the first technique to achieve a near-optimal envy-based guarantee (EEFX) together with a strong MMS guarantee\footnote{EEFX already implies a mild $\frac{2}{3}$ MMS approximation for three agents, but we aim much closer to the upper bound of $39/40$.}, while giving each agent either her MMS share or EFX satisfaction, all within a unified BoBW framework that attains ex-ante proportionality.

We note that our algorithm shares similarities with the three-agent cake-cutting procedure introduced by Selfridge and Conway (see \cite{robertson1998cake}), mainly in the idea of role assignment. 
However, their algorithm quickly introduces concepts that are specific to the setting of divisible goods, for which we find no clear analog in the context of indivisible items; we therefore depart from their approach and develop new techniques tailored to the indivisible setting.

{
\paragraph{Achieving Exact Ex-Ante Proportionality in Polynomial Time.}
As demonstrated in \cref{subsec:ptas-fail}, replacing exact \(MMS\) partitions with \((1-\varepsilon)\)-approximations does not necessarily guarantee ex-ante proportionality.  
To address this challenge, {we utilize our two-agent algorithm} (\cref{cor:main}) to overcome cut-and-choose errors, and additionally, introduce {\emph{adoption procedures}}, which allow agents to ``adopt'' a three-way partition generated throughout the algorithm and yield a better value of the least-valuable bundle than what their own FPTAS generated. The bundles of that adopted partition are carefully reassigned to the three agents, and we {meticulously} show that such reassignments ensure ex-ante proportionality while maintaining all other share-based guarantees up to a factor of \((1-\varepsilon)\), and \emph{EFX} satisfaction for the agent who receives {less than her} \((1-\varepsilon)\cdot MMS\) value.
}

\subsection{Additional Related Work}\label{sec:related}

The body of research on fair allocation of resources is very extensive, so we only discuss the most related literature. For comprehensive overviews, the reader is referred to the books \cite{Brams_Taylor_1996}, \cite{moulin2003} and the survey \cite{survey}. In the remainder of this section, we confine our discussion to the works most closely related to the present study.

\paragraph{EF1 and EFX}
Envy-free up to one good (EF1) originated in the work of \citet{lipton2004approximately}, who proved that such allocations always exist for any number of agents with monotone valuations, and was later formalized by Budish \cite{budishmms2011}. 
The stronger notion of envy-free-up-to-any-good (EFX) was introduced by \citet{caragiannis2019unreasonable}. 
\citet{PlautRoughgarden2020} established that EFX allocations always exist for two agents with monotone valuations, while \citet{chaudhury2024efx}\ recently showed existence for three additive agents. 
However, no polynomial-time algorithm is known even for this case, and existence remains open for four or more agents. 
Several relaxations have been explored, including EFX with charity \cite{chaudhury2021little}, restricted valuation domains \cite{ghosal2025}, and near-EFX variants \cite{amanatidis2024pushing}.

\paragraph{MMS approximation.}
The \emph{maximin share (MMS)} criterion, introduced by Budish \cite{budishmms2011}, captures the value an agent can secure by partitioning goods into $n$ bundles and receiving the least valuable one. 
Exact MMS allocations do not always exist, even for three additive agents \cite{Kurokawa2018}, motivating a rich line of work on approximate MMS guarantees. 
Early results established $2/3$-MMS allocations \cite{Kurokawa2018}, later improved through a sequence of refinements \cite{ garg2019approximating, barman2020approximation, akrami2023simplification, heidari2025improvedmaximinshareguarantee}. 
The current best existential guarantee for general $n$ is $\frac{7}{9}$-MMS \cite{huang2025fptas79approximationmaximinshare}, while for $n=3$, Feige and Norkin \cite{feige2022improvedmaximinfairallocation} obtained an even stronger bound -- $\frac{11}{12}$-MMS (we also note that Feige and Norkin analyzed a slightly weaker algorithm of \cite{amanatidisapprox2017,GOURVES201950}, and showed that it constructs $\frac{9}{10}$-MMS allocations for three agents. We use their analysis in order to prove our guarantees). 
\citet{feige2021tight} established complementary upper bounds, showing that for $n=3$ no allocation can guarantee more than $39/40$-MMS, and for $n \ge 4$, no allocation exceeds $(1 - n^{-4})$-MMS. 
Additionally, \citet{amanatidis2018comparing} proved that EFX implies $2/3$-MMS for $n=3$ and $4/7$-MMS for $n \ge 4$, illuminating the quantitative connection between envy-based and share-based fairness.

\paragraph{Best-of-Both-Worlds fairness.}
The study of \emph{Best-of-Both-Worlds (BoBW)} fairness seeks randomized allocations that reconcile two complementary goals: fairness of the distribution (ex-ante) and fairness of realized outcomes (ex-post). \citet{azizfreemannew} formally introduced the term ``Best-of-Both-Worlds'' in the context of indivisible goods and developed a polynomial-time algorithm for additive agents that guarantees ex-ante envy-freeness (EF) and ex-post EF1. 
Beyond envy-based fairness, \citet{babaioff2021bobw} extended the framework to share-based fairness, providing a polynomial-time algorithm that ensures ex-ante proportionality and ex-post $1/2$-MMS guarantees. 
This approach captures certain share-based fairness benchmarks but sacrifices envy-based structure, 
and its $1/2$-MMS bound remains far from tight. 
Subsequent work has explored BoBW algorithms for specific valuation domains, such as subadditive \cite{feldman2024breaking}, the case of indivisible and mixed goods \cite{bu2024bestofbothworldsfairallocationindivisible}, matroid rank valuations \cite{Babaioff_Ezra_Feige_2021}, and the case of agents with arbitrary entitlements \cite{hoefer2024best}. 
Our work continues this line by tightening the tradeoff between ex-ante proportionality and ex-post envy-based guarantees, and by aligning these with stronger MMS-type bounds.

\paragraph{Epistemic EFX and related notions.}
Epistemic EFX (EEFX), introduced by \citet{caragiannis2022new}, relaxes EFX by requiring that each agent be ``EFX-satisfied'' with respect to some reshuffling of the other agents’ bundles. Intuitively, EEFX {demands} that, while keeping her own bundle fixed, agent~$i$ could rearrange the remaining items so that she would no longer envy any other bundle up to any good. 
Moreover, although the existence of EFX allocations for more than three agents remains open, \citet{akrami2025general} recently showed that EEFX allocations always exist for all monotone valuation functions.
\citet{caragiannis2022new} also introduced the closely related concept of the \emph{Minimum EFX Share (MXS)}, proving that every EEFX allocation is also MXS. 

\paragraph{Computational complexity.}
Computing the value of the MMS (or an MMS allocation) is known to be  NP-hard \cite{Woeginger1997APA}, though fully polynomial-time approximation schemes (FPTAS) exist \cite{Mittalsantaclause, woegingerFPTAS}. 
Regarding EFX allocations, though existence is guaranteed for three agents \cite{chaudhury2024efx} their algorithm to compute such allocations is not polynomial, and there is no known polynomial-time algorithm for this problem. Our work addresses these computational barriers by designing FPTAS algorithms that achieve near-optimal approximations to these fairness criteria in the BoBW framework.

\subsection{Organization}

\cref{sec:prelim} introduces notation and fairness notions. \cref{sec:three-agents-existence} proves near-optimal BoBW existence results for three agents, \cref{sec:two-agents} develops our optimal poly-time BoBW result for two agents,  \cref{sec:three-agents-poly} presents our polynomial-time BoBW construction for three agents, and \cref{sec:conclusion} concludes with a discussion, and lists some open questions.

\section{Preliminaries}\label{sec:prelim}
\subsection{The Model and Basic Notations}

We consider the problem of fair allocation of a set $M$ of $m$ indivisible goods to a set $N$ of $n$ agents, each with a normalized additive valuation function $v_i:2^M\rightarrow \mathbb{R}_{\geq 0}$ over $M$.
That is, $v_i(\emptyset)=0$ (normalization), $v_i({g})\geq 0$ for every $g\in M$ (items are goods),  
and for any $S\subseteq M$ it holds that $v_i(S)=\sum_{g\in S} v_i(\{g\})$ (additivity). 
A \emph{partition} $\mathcal{P}$ (of a set of goods $M$ to $n$ agents) is a collection $\{P_1,P_2,..,P_n\}$  of $n$ disjoint subsets of $M$, such that $\cup_{i=1}^n P_i=M$. An \emph{allocation} $\mathcal{A}$ (of a set of goods $M$ to $n$ agents) is an \textbf{ordered} partition $(A_1,A_2,..,A_n)$ of $M$, where $A_i\subseteq M$ is the bundle given to agent $i$. We refer to an allocation as \emph{partial} if only a subset $S\subseteq M$ of the goods are allocated to the set of agents $N' \subseteq N$. 
A \emph{randomized allocation} is a distribution over allocations. 

\subsection{Fairness Criteria}
 
We next describe several fairness notions we will consider.
The \emph{proportional share} of agent $i\in N$ with an additive valuation $v_i$ when there are $n$ agents, is defined to be $PROP(v_i,n)=\frac{v_i(M)}{n}$. When $n$ is clear from the context we denote $PROP_i= PROP(v_i,n)$. 
An \emph{allocation is proportional} if every agent gets a bundle that she values at least as much as her proportional share. For $\alpha >0$, an allocation is \emph{$\alpha$-proportional} if every agent $i$ receives a bundle she values at least as  much as $\alpha\cdot PROP(v_i,n)$. A randomized allocation is \emph{ex-ante proportional} if the \emph{expected value} of every agent is at least her proportional share. 

Budish \cite{budishmms2011}  has suggested the maximin share (MMS) as an appropriate relaxation of the proportional share for indivisible goods:
\begin{definition}[Maximin share (MMS)]\label{def:mms}
The \emph{maximin share (MMS)} of agent $i$ with a valuation $v_i$ over the set of items $M$, and a given number of agents $n$ is defined to be:
\[
\text{MMS}(v_i,n) 
:= \max_{(Z_1,\dots,Z_n) \in \Gamma_n} \min_{j \in [n]} v_i(Z_j),
\]
where $\Gamma_n$ is the set of all partitions 
of the set of goods $M$ into $n$ bundles. When $n$ is clear from the context we denote $MMS_i= MMS(v_i,n)$.

An allocation is an \emph{MMS allocation} if every agent $i$ receives a bundle of value at least $\text{MMS}(v_i,n)$. 
An allocation is an \emph{$\alpha$-MMS allocation} if every agent $i$ receives a bundle of value that is at least $\alpha\cdot \text{MMS}(v_i,n)$.
\end{definition}

We say that a partition $\mathcal{P}=\{P_1,\ldots,P_n\}$ provides an agent with valuation $v$ a \emph{better worst-case-value guarantee} than partition $\mathcal{Q}=\{Q_1,\ldots,Q_n\}$ if $$min_{P\in \mathcal{P}}\ v(P) > min_{Q\in \mathcal{Q}}\ v(Q).$$ 
Clearly, if $\mathcal{P}$ is an MMS partition with regard to $v$ then it provides a better worst-case-value guarantee than any partition which is not an MMS partition.

\citet{caragiannis2019unreasonable} have introduced the notion of envy-free-up-to-any-item (which they called EFX). Their definition requires envy to be eliminated whenever  a positive-valued item is removed (but not when a zero-valued item is removed). We will use $EFX_+$ for this notion, to differentiate between this notion and the stronger notion of $EFX$ introduced by \citet{PlautRoughgarden2020}. This stronger notion requires envy to be eliminated even when removing zero-valued goods. We will use $EFX$ for this stronger notion.  As we prove positive results for this stronger notion, this only strengthens our result. 
\begin{definition}[Envy-freeness-up-to-any-item (EFX) {\cite[Definition 2.3]{PlautRoughgarden2020}}] \label{def:EFX}
Agent $i$ with a valuation $v_i$ is \emph{EFX-satisfied} by allocation $\mathcal{X} = (X_1, \dots, X_n)$ if for any other agent $k$, and any item $g \in X_k$, it holds that 
\[
v_i(X_i) \geq v_i(X_k \setminus \{g\}).
\]
The allocation $\mathcal{X}$ is an \emph{EFX allocation} if every agent is EFX-satisfied by $\mathcal{X}$.

\end{definition}

We say that a bundle $X$ \emph{EFX-dominates} a bundle $Y$ for agent $i$ with valuation $v_i$ if for every $g\in Y$ it holds that $v_i(X) \geq v_i(Y \setminus \{g\})$. Note that agent $i$ is EFX-satisfied by allocation $\mathcal{X} = (X_1, \dots, X_n)$ if her bundle $X_i$ EFX-dominates every other bundle in the partition $\mathcal{X}$.

\begin{observation}\label{obs:top-is-EFX}
Given an allocation $(X_1, X_2, \ldots, X_n)$, if agent~$i$ values her own bundle as the highest among all bundles, i.e.,
\[
v_i(X_i) \ge \max_{k \in [n]} v_i(X_k),
\]
then agent~$i$ is EFX-satisfied.
\end{observation}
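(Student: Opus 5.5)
This is immediate from \cref{def:EFX} together with monotonicity of additive valuations with nonnegative item values. The plan is to fix an arbitrary other agent $k \neq i$ and an arbitrary item $g \in X_k$, and chain two inequalities to reach the EFX condition $v_i(X_i) \ge v_i(X_k \setminus \{g\})$.

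First, by the hypothesis of the observation, $v_i(X_i) \ge \max_{k'\in[n]} v_i(X_{k'}) \ge v_i(X_k)$. Second, additivity and $v_i(\{g\}) \ge 0$ give
\[
v_i(X_k) = v_i(X_k\setminus\{g\}) + v_i(\{g\}) \ge v_i(X_k \setminus \{g\}).
\]
Combining the two yields $v_i(X_i) \ge v_i(X_k\setminus\{g\})$. Since $k$ and $g$ were arbitrary, $X_i$ EFX-dominates every other bundle, and so agent $i$ is EFX-satisfied.

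There is no real obstacle here. The only point to flag is that the argument already covers the stronger notion of EFX that removes zero-valued goods as well: nonnegativity of item values suffices, and no strict positivity of $v_i(\{g\})$ is needed. If $X_k$ is empty, the condition holds vacuously.
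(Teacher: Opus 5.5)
Your proof is correct and is the intended argument. The paper states this as an observation without writing out a proof, and the implicit justification is exactly your chain $v_i(X_i) \ge v_i(X_k) \ge v_i(X_k\setminus\{g\})$ via nonnegative item values, which covers the stronger (zero-valued-removal) EFX notion the paper uses.
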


The notions of Epistemic EFX (EEFX) and of $MXS$ were defined in \cite{caragiannis2022new} using $EFX_+$ (the weaker EFX notion). 
To differentiate them from the stronger notions that we use, we denote them as $EEFX_+$ and $MXS_+$. 
We use the notions of $EEFX$ and $MXS$ with respect to the $EFX$ definition (\cref{def:EFX}) that requires envy to be eliminated even when zero-valued goods are removed. 

\begin{definition}[Epistemic EFX (EEFX) and EEFX certificates]
For a fair division instance, an allocation $\mathcal{X} = (X_1, X_2, \dots, X_n)$ (of $M$ to $n$ agents) is called \emph{Epistemic EFX (EEFX)} if for every agent $i \in [n]$, there exists an allocation $\mathcal{Y} = (Y_1, Y_2, \dots, Y_n)$   (of $M$ to $n$ agents) such that $Y_i = X_i$ and agent $i$ is EFX-satisfied by $\mathcal{Y}$. 

We refer to such an allocation $\mathcal{Y}$ as an \emph{EEFX certificate} of agent $i$ for bundle $X_i$ (with respect to the set of items $M$ and $n$ agents).

We say that an agent is \emph{EEFX-satisfied by bundle $X_i$} 
if there exists an EEFX-certificate for that bundle (with respect to the set $M$ and $n$ agents). A bundle $X_i$ is \emph{EEFX-satisfactory} for an agent $i$ if the agent is EEFX-satisfied by that bundle (with respect to the set of items $M$ and $n$ agents).

\end{definition}

Note that if agent $i$ is EFX-satisfied by an allocation $\mathcal{X} $, then she is EEFX-satisfied by the bundle $X_i$ she receives in that allocation, with $\mathcal{X}$ being the certificate for $X_i$. 

\begin{definition}[Minimum EFX share and MXS allocations]
For a fair division instance with $n$ agents, define the set $\mathbb{EFX}(v_i,n)$ to be the set of allocations in
which agent i is EFX-satisfied:
\[
\mathbb{EFX}(v_i,n) := \left\{(Z_1, \dots, Z_n) \in \Gamma_n \;:\; agent\ i\ is\ EFX-satisfied\ by\ Z_i \right\}.
\]

The \emph{minimum EFX share} of agent $i$, denoted by $\text{MXS}(v_i,n)$, is defined to be 
\[
\text{MXS}(v_i, n) := \min_{(Z_1,...,Z_n) \in \mathbb{EFX}(v_i,n)} v_i(Z_i).
\]
When $n$ is clear from the context we denote $MXS_i =MXS(v_i,n)$.
An allocation $(Z_1, \dots, Z_n)$ is called an \emph{MXS allocation} if $v_i(Z_i) \geq \text{MXS}_i$ for every agent $i \in [n]$.
\end{definition}

\citet{caragiannis2022new} have shown that an \emph{$EEFX_+$ allocation} is also an \emph{$MXS_+$ allocation}.
\begin{theorem}[$EEFX_+$ $\Rightarrow$ $MXS_+$ (\cite{caragiannis2022new} Theorem 1)]
An \emph{$EEFX_+$ allocation} is also an \emph{$MXS_+$ allocation}. 
\end{theorem}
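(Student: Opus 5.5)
We follow the argument of \citet{caragiannis2022new}: fix an arbitrary $EEFX_+$ allocation $\mathcal{X}=(X_1,\dots,X_n)$ and an agent $i$, and show $v_i(X_i)\ge \text{MXS}_+(v_i,n)$. Since $\text{MXS}_+$ is a minimum over all allocations in which agent $i$ is $EFX_+$-satisfied, it suffices to exhibit one such allocation that gives agent $i$ exactly the bundle $X_i$.

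\textbf{Step 1: the certificate.} By the definition of $EEFX_+$, agent $i$ has a certificate $\mathcal{Y}=(Y_1,\dots,Y_n)$ with $Y_i=X_i$ such that agent $i$ is $EFX_+$-satisfied by $\mathcal{Y}$. That is, for every $k\neq i$ and every $g\in Y_k$ with $v_i(g)>0$,
\[
v_i(X_i)\ge v_i(Y_k\setminus\{g\}).
\]

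\textbf{Step 2: membership in the feasible set.} The allocation $\mathcal{Y}$ is a partition of $M$ into $n$ bundles in which agent $i$ is $EFX_+$-satisfied by her own bundle $Y_i$. Hence $\mathcal{Y}$ belongs to the $EFX_+$-analogue of $\mathbb{EFX}(v_i,n)$, the set over which $\text{MXS}_+(v_i,n)$ is minimized.

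\textbf{Step 3: conclusion.} Taking the minimum over that set gives
\[
\text{MXS}_+(v_i,n)\le v_i(Y_i)=v_i(X_i).
\]
Since $i$ was arbitrary, every agent receives at least her $\text{MXS}_+$, so $\mathcal{X}$ is an $MXS_+$ allocation.

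The only point needing care is to check that the $EFX_+$ version of the minimum-EFX-share set is defined exactly as $\mathbb{EFX}$, with $EFX_+$-satisfaction in place of EFX-satisfaction. The same argument then applies verbatim to the strong notions, giving $EEFX\Rightarrow MXS$. There is no real obstacle beyond matching the definitions, since the certificate is literally a witness in the minimization.
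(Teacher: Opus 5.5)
Your proposal is correct: the $EEFX_+$ certificate $\mathcal{Y}$ with $Y_i=X_i$ is itself a feasible allocation in the minimization defining $\text{MXS}_+(v_i,n)$, so $\text{MXS}_+(v_i,n)\le v_i(X_i)$ for every agent $i$. The paper does not reprove this result; it cites \citet{caragiannis2022new}, whose proof is this same witness argument, and it notes, as you do, that the argument carries over verbatim to give $EEFX\Rightarrow MXS$.
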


It is simple to observe that the same argument implies the following observation:
\begin{observation}\label{obs:eefx-mxs}
    [$EEFX$ $\Rightarrow$ $MXS$]
An \emph{EEFX allocation} is also an \emph{MXS allocation}. 
\end{observation}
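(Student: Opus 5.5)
The plan is to rerun the argument of \cite{caragiannis2022new} for Theorem 1, noting that it never uses the restriction to positively valued items. Fix an EEFX allocation $\mathcal{X}=(X_1,\dots,X_n)$ and an agent $i$. We must show $v_i(X_i)\ge \text{MXS}(v_i,n)$. The claim is per agent, so it suffices to treat agent $i$ in isolation.

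The key step is to unpack the definitions. Since $\mathcal{X}$ is EEFX, agent $i$ has an EEFX certificate $\mathcal{Y}=(Y_1,\dots,Y_n)$ for $X_i$. This is an allocation of $M$ to $n$ agents with $Y_i=X_i$ in which agent $i$ is EFX-satisfied in the sense of \cref{def:EFX}; that is, for every $k\neq i$ and every $g\in Y_k$, including zero-valued goods, $v_i(Y_i)\ge v_i(Y_k\setminus\{g\})$. This is exactly the membership condition $\mathcal{Y}\in\mathbb{EFX}(v_i,n)$, where EFX-satisfaction there is read in the same strong sense.

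It then follows that
\[
\text{MXS}(v_i,n)=\min_{\mathcal{Z}\in\mathbb{EFX}(v_i,n)} v_i(Z_i)\le v_i(Y_i)=v_i(X_i),
\]
since $\text{MXS}(v_i,n)$ is a minimum over a set containing $\mathcal{Y}$. Applying this to every agent shows that $\mathcal{X}$ is an MXS allocation.

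I expect essentially no obstacle. The only point needing care is consistency: both EEFX and MXS must be defined via the same strong EFX notion, so that the certificate lies in the set over which MXS minimizes. I would also remark that $\mathbb{EFX}(v_i,n)$ is nonempty, which guarantees the minimum is well defined. In our setting this is immediate, because the certificate $\mathcal{Y}$ itself is a member of the set.
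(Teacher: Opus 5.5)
Your proof is correct and is essentially the paper's argument. The paper gives no separate proof and only says the argument of Theorem~1 of \cite{caragiannis2022new} carries over, and that argument is exactly your observation: the EEFX certificate $\mathcal{Y}$ lies in $\mathbb{EFX}(v_i,n)$, so $\text{MXS}(v_i,n)\le v_i(Y_i)=v_i(X_i)$. Your point that EEFX and MXS must both be defined with the same strong EFX notion is precisely why the implication survives in the paper's setting with zero-valued goods.
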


By definition, it is immediate that an \emph{EFX allocation} is also an \emph{$EFX_+$} allocation, and that an EEFX allocation is also an \emph{$EEFX_+$} allocation.
\begin{observation}
[$EFX \Rightarrow EFX_+$]
An \emph{EFX allocation} is also an \emph{$EFX_+$} allocation.
\end{observation}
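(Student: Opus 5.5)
The statement is that EFX implies $EFX_+$, so the plan is simply to unfold \cref{def:EFX} and compare it with the definition of $EFX_+$ due to \citet{caragiannis2019unreasonable}. Fix an allocation $\mathcal{X}=(X_1,\dots,X_n)$ that is EFX, and fix an agent $i$, another agent $k\neq i$, and an item $g\in X_k$ with $v_i(\{g\})>0$. This triple $(i,k,g)$ is exactly the kind of instance that the $EFX_+$ condition quantifies over.

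Since $\mathcal{X}$ is EFX, agent $i$ is EFX-satisfied by $\mathcal{X}$. By \cref{def:EFX}, the inequality $v_i(X_i)\ge v_i(X_k\setminus\{g\})$ holds for \emph{every} item $g\in X_k$. In particular, it holds for the positive-valued item fixed above.

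The $EFX_+$ condition asks for exactly this inequality, but only over the subset of items in $X_k$ that agent $i$ values positively. A universal statement over all of $X_k$ implies the same statement over any subset of $X_k$. So agent $i$ satisfies the $EFX_+$ requirement toward every $k$. Because $i$ was arbitrary, $\mathcal{X}$ is an $EFX_+$ allocation.

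The only care needed is that the two definitions differ solely in the range of the quantifier over $g$; both use the same inequality and the same bundles. There is no genuine obstacle, and the argument needs no additivity or other properties of the valuations.
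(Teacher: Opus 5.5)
Your proof is correct and matches the paper, which states that the implication is immediate from the definitions. The reason is exactly the one you give: $EFX_+$ imposes the same inequality as EFX but only for removed items of positive value, so it quantifies over a subset of the items that EFX already covers.
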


\begin{corollary}
    [$EEFX \Rightarrow EEFX_+$] An \emph{EEFX allocation} is also an \emph{$EEFX_+$} allocation.
\end{corollary}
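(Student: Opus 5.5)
The plan is to derive the corollary from the two preceding statements: EFX implies $EFX_+$, and every EEFX certificate is an $EEFX_+$ certificate. Let $\mathcal{X}=(X_1,\dots,X_n)$ be an EEFX allocation and fix an agent $i$. By definition there is an allocation $\mathcal{Y}=(Y_1,\dots,Y_n)$ with $Y_i=X_i$ such that agent $i$ is EFX-satisfied by $\mathcal{Y}$. That is, for every $k\neq i$ and every $g\in Y_k$ we have $v_i(Y_i)\ge v_i(Y_k\setminus\{g\})$.

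The key step is to show that agent $i$ is also $EFX_+$-satisfied by $\mathcal{Y}$. The $EFX_+$ condition of \citet{caragiannis2019unreasonable} asks for the same inequality, but only over goods $g\in Y_k$ with $v_i(g)>0$. This is a subset of the goods quantified over in the EFX condition, so the required inequalities are a subset of ones we already have. This is exactly the observation that EFX implies $EFX_+$, applied to agent $i$ alone rather than to the whole allocation.

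Therefore $\mathcal{Y}$ is an $EEFX_+$ certificate for agent $i$ and bundle $X_i$. Since $i$ was arbitrary, $\mathcal{X}$ is an $EEFX_+$ allocation. No step is a real obstacle. The only thing to check is that the $EEFX_+$ notion of \citet{caragiannis2022new} is defined the same way as EEFX, with $EFX_+$-satisfaction replacing EFX-satisfaction inside the certificate. This holds by the paper's convention for the subscript $+$ notions.
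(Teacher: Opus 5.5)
Your proof is correct and follows the paper's own reasoning. The paper gives no separate argument and treats the corollary as immediate from the definitions: an EEFX certificate for agent $i$, in which she is EFX-satisfied, is also an $EEFX_+$ certificate for the same bundle, because $EFX_+$ only requires the envy inequality for goods that are positively valued.
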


\section{Near-Optimal BoBW for Three Additive Agents}\label{sec:three-agents-existence}
In this section, {we study BoBW fairness when there are three agents with additive valuations over goods.}

{Our main computational result is a polynomial-time algorithm, built on an existential} theorem establishing a Best-of-Both-Worlds (BoBW) distribution for three agents with additive valuations: a distribution that is ex-ante proportional and has the following ex-post guarantees: EEFX and $\frac{9}{10}$-MMS. Moreover, every allocation in its support is \IMMX: each agent either receives a bundle worth at least her MMS value, or is EFX-satisfied by the allocation.

\threemainthm*

Since MMS allocations might not exist, at least one agent may fail to obtain her MMS value. 
In our construction, we ensure that in any allocation, \textbf{at most one} agent does not receive her MMS share\footnote{We note that it is not always possible to "spread the pain" and instead of one agent suffering a ``big drop'' in her MMS fraction, have multiple agents each suffering a ``smaller drop'' in their MMS fraction. Indeed, 
\cite{feige2021tight} presents an example with three agents in which one agent does not get her MMS (but only a $\tfrac{39}{40}$ fraction of it), while the other agents receive their full MMS. In this example, any improvement to that agent's MMS fraction always results in another agent suffering the same drop in her MMS fraction -- i.e, there is no way to ``spread the pain".}, and to compensate, we guarantee that she is EFX-satisfied. 
This constitutes the strongest possible envy-based guarantee for her, as achieving envy-freeness would require she receives at least her proportional share (and therefore her MMS).
{Moreover, across the six allocations we generate, each agent receives less than her MMS value with probability at most $\tfrac{1}{3}$, which is essentially the best possible; MMS allocations do not exist and in every allocation at least one agent must fall short of her MMS. 
Each agent also receives her proportional share (ex-post) with probability at least $\tfrac{1}{3}$, which is the best attainable bound (for example -- the case where there is only one item).
}

Additionally, we note that as every supporting allocation is EEFX, then every  supporting allocation guarantees each agent her \emph{MXS} share (this follows directly from \Cref{obs:eefx-mxs}).
We further show that our algorithm also guarantees each agent her \emph{RMMS} share (a share introduced by \citet{feige2025residualmaximinshare}), which may exceed \(\tfrac{9}{10}\) of her MMS share (thus strengthening the ex-post value guarantee). See \cref{sec:RMMS} for the RMMS definition and our RMMS-related results.

We now turn to the computational perspective. 
While the result provides a constructive proof of existence, the construction does not run in polynomial time, since it relies on computing MMS partitions -- which is NP-hard.
It is important to note that the \textbf{only} non-polynomial component of the construction is the computation of MMS partitions for each of the three given valuations.  We thus think of our algorithm as one that interacts with three oracles, one for each valuation, where each  \emph{oracle} is able to output MMS partitions for one of these valuations. 
We show that with such oracles we can compute our distribution in polynomial time.
This approach is particularly meaningful in settings where MMS partitions can be computed efficiently (e.g., when the number of goods is not too large).
Additionally, NP-hardness is a notion that captures the difficulty of  computation ``in the worst-case''. Yet, for some specific valuations, the problem of computing  MMS partitions  might be ``easy''. 
For example, when the value of the set $M$ is bounded by $poly(|M|)$ (another simple example is when the agent views all items as identical), {see \cref{sec:fptaspolynomial}}.

The next lemma formalizes that, given these MMS oracles, the corresponding BoBW distribution can be computed in polynomial time.
\begin{restatable}{lemma}{polytimethree}\label{lem:poly-time-3}
Suppose that for each agent $i$ there exists an oracle capable of computing the MMS for $v_i$ when
\begin{itemize}
    \item the set of items is $M$ and $n=3$, and  
    \item the set of items is a given $S\subseteq M$, and $n=2$.
\end{itemize}
    Then the probability distribution $\mu$ can be computed in polynomial time.
\end{restatable}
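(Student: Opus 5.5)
The plan is to go through the construction behind \cref{thm:three-agents-main} one step at a time. For each step I will show that it needs either polynomially many elementary operations, such as sorting items, comparing bundle values under additive valuations, or moving single items, or a polynomial number of oracle calls. Since valuations are additive, every bundle value $v_i(S)$ costs $O(m)$ to evaluate. So it suffices to bound the number of such primitive operations and oracle queries. The distribution has support at most six, and each of the six allocations is produced by a constant number of ``phases''. The target claim is therefore that each phase is polynomial given the oracles.

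\textbf{Step 1: MMS values to MMS partitions.} The oracle as stated returns an MMS \emph{value}, while the construction needs MMS \emph{partitions}. The three-way partitions are used in the first step of each pair, where one agent's MMS partition is made EFX under her own valuation. The two-way partitions are used when the remaining items $S$ are repartitioned between the other two agents by cut-and-choose style arguments. I would recover a partition from the value oracle by standard self-reducibility. Process the items one by one and tentatively commit item $g$ to a bundle. To test whether a commitment preserves the MMS value, I would query the oracle on a modified instance. For $n=2$ on $S$ this can be done by merging $g$ with an already-committed item: replace the two items by a single item of value equal to their sum, which is a subset-like modification and still additive. That uses $O(m^2)$ oracle queries. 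If the oracle can only be queried on genuine subsets, the alternative is to interpret ``computing the MMS'' as returning a witnessing partition, which is how the paper describes the oracles in the text before the lemma. I would adopt this interpretation and note the self-reduction as a remark.

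\textbf{Step 2: making the partition EFX under the agent's own valuation, and the item-moving phases.} The construction turns an MMS partition into one that is both MMS and EFX for its owner. I expect this to be done by a local-improvement procedure: repeatedly move an item from a bundle that violates EFX-domination to the least valuable bundle, keeping the minimum at least $MMS_i$. This step is the main obstacle, because termination must be \emph{polynomial}, not merely finite. I would try a lexicographic potential, for instance the sorted vector of bundle values or the number of items in the largest bundle, and argue that each move strictly improves it. An alternative I would try is running envy-cycle/leximin-style reasoning restricted to one valuation. The fallback is to exploit the structure of the paper's actual procedure: if it moves items only from a bundle to a strictly poorer one, the number of moves per item can be bounded by the number of bundles, giving $O(m)$ moves.

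\textbf{Step 3: the repartition phases and assembly.} The remaining phases repartition the two non-owner bundles. Each is a coarse-atomic style search between the two other agents, which gives the $\tfrac{9}{10}$-MMS and EFX guarantees, together with 2-agent MMS computations on subsets $S$ (the second oracle). Each involves a constant number of candidate splits, each checked in $O(m)$ time, plus sorting. Finally, the probabilities in $\mu$ are fixed constants (each pair weighted $\tfrac13$, with a fixed split within pairs), so outputting $\mu$ is trivial. Summing over the six allocations gives a polynomial bound overall.
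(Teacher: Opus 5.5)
Your overall accounting matches the paper: a constant number of oracle calls, polynomial ``glue'' work, and a uniform distribution over six allocations. Reading the oracle as returning a witnessing MMS partition is also exactly what the paper does. Your item-merging self-reduction would not be available anyway, since the oracle is only promised on subsets $S\subseteq M$ under the original $v_i$.

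The genuine gap is your Step~2. You never show that turning an MMS partition into an $\mathcal{MMS\text{-}EFX}$ partition takes polynomial time, and the arguments you list do not deliver it. A potential that strictly improves lexicographically (e.g., the sorted vector of bundle values) only proves \emph{finite} termination. There are exponentially many partitions, so it gives no polynomial bound. Your ``fallback'' claim is also unjustified: moving items only to strictly poorer bundles does not bound the number of moves per item by the number of bundles. Other moves change which bundles are poorer, so the same item may be moved again and again. As written, the polynomial running time of this step is assumed rather than proved.

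The missing ingredient is the specific single-pass procedure \textsc{Realloc} (\cref{alg:realloc}, \cref{lem:new_realloc}). Sort the items by non-increasing value and, for $i=1,\dots,m$, remove $w_i$ from its current bundle and insert it into a currently minimum-value bundle. Every item is processed exactly once, so there are exactly $m$ iterations of polynomial cost. \cref{lem:monotone-min} and \cref{lem:efx-output} guarantee that the minimum bundle value never drops and that the output is EFX.

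The same subroutine is run (at most twice) inside every call to \textsc{MMS-EFX-Improved-RePartition}. Your Step~3 relies on those calls without accounting for this EFX-repair. Note also that the two-way oracle calls are made on the sets $A\cup Z$ with $Z\in\{B,C\}$, not on ``the remaining items''.

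With \textsc{Realloc} in hand, the proof is pure counting, as in the paper:
\begin{itemize}
\item three oracle calls for the three-way partitions;
\item at most four two-way calls per divider (two non-dividers times two choices of $Z$), hence at most $15$ oracle calls in total;
\item a constant number of \textsc{Realloc} runs and $O(1)$ value comparisons.
\end{itemize}
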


We prove \cref{lem:poly-time-3}, and provide a computational analysis of our algorithm in  \cref{sec:timecomplex}. 

The rest of this section is structured as follows.

We begin by introducing our main {existential} result (\cref{thm:three-agents-main}) using \cref{lem:main-lemma}, which establishes the ex-post properties of each supporting allocation.  
We then present the main algorithm (\cref{alg:three-agent-construction}) used in the proof of \cref{lem:main-lemma} and provide the corresponding analysis. Finally, we present a poly-time variant (an \emph{FPTAS}) of our main theorem in \cref{subsec:FPTAS}.

\subsection{Proof of \cref{thm:three-agents-main}} 

In this section, we present our main {existential} result, \cref{thm:three-agents-main}. 

We begin with some central definitions that relate to the combination of EFX-satisfaction and of $(1-\varepsilon)$-MMS. For our main existence result we focus on the $\varepsilon=0$ case, while our polynomial time algorithms (in \cref{subsec:FPTAS} and \cref{sec:three-agents-poly}) will use $\varepsilon>0$, so we present these definitions for a general \(\varepsilon\geq  0\).

\begin{definition}[$\mathcal{MMS^{(\varepsilon)}-EFX}$ {Partition}] 
We say that a partition $\{P_1,...,P_n\}$ of $M$ to $n$ bundles is an \emph{$MMS^{(\varepsilon)}-EFX$ partition} for agent $i$ with a valuation $v_i$ if for every bundle $P\in \{P_1,...,P_n\}$ it holds that $$v_i(P) \geq  (1-\varepsilon) \cdot MMS{(v_i, n)}$$ and moreover, agent $i$ is EFX-satisfied with bundle $P$.
We use \emph{$\mathcal{MMS^{(\varepsilon)}-EFX}^i(n)$} to denote such a partition, and denote  \emph{$\mathcal{MMS-EFX}^i=\mathcal{MMS}^{(0)}\mathcal{-EFX}^i(n)$} when $n$ is clear from the context and $\varepsilon=0$ .
\end{definition}

{An $\mathcal{MMS^{(\varepsilon)}-EFX}$ partition always exists for every normalized and monotone valuation (as \cite{PlautRoughgarden2020} has shown that the leximin++ allocation for $v_i$ is an $MMS-EFX$ allocation for $i$).
In \Cref{sec:efx-reallocation}, we present a result based on the work of \cite{PlautRoughgarden2020} showing that, given an additive valuation and an MMS partition, there exists a polynomial-time algorithm that constructs an allocation forming an $MMS-EFX$ partition for the agent. 

The notion of $\mathcal{MMS^{(\varepsilon)}-EFX}$ partition defined above, presents central properties (EFX and MMS approximation) of a partition with respect to one valuation. 
We next move to define a new fairness concept of an \emph{allocation} with respect to valuations for all agents, a notion that combines share-based and envy-based guarantees, and requires that an agent that does not get a ``high enough value'' (a $(1-\varepsilon)$ fraction of her MMS) is EFX-satisfied by her bundle.

\begin{definition}[\textbf{\EIME Allocation}] 
Given valuations $(v_1, v_2,\ldots, v_n)$ over a set \(M\) of indivisible goods, we say that allocation \(\mathcal{X} = (X_1, \ldots, X_n)\) is an \emph{\EIME allocation} if for every agent \(i \in [n]\),  
either
\[
v_i(X_i) \;\geq\; (1-\varepsilon) \cdot MMS(v_i,n),
\]
or agent~\(i\) is \emph{EFX-satisfied} with her bundle \(X_i\).

When $\varepsilon = 0$, we say that such an allocation is \IME.
\end{definition}

{
Our main {existential} result for three agents (\Cref{thm:three-agents-main}) follows from the following main lemma that captures the ex-post properties of each of the supporting six allocations. Each of these six allocations corresponds to a different assignment of the following three roles to the three agents:
\emph{divider}, \emph{subdivider}, and \emph{chooser}. 
The six allocations are grouped in three pairs, each with a fixed agent that {takes on the role of the divider}.} 
{In the following lemma, we \textbf{fix} one agent (agent $i$) as the divider, 
and describe the ex-post guarantees of the pair of allocations constructed under this fixed role.}

\begin{lemma}\label{lem:main-lemma}

Consider settings with three agents with additive valuations over a set $M$ of indivisible goods.
For each agent \( i \in N=[3] \), denoted as the \emph{divider}, {\cref{alg:three-agent-construction} constructs} two (not necessarily distinct) allocations 
\(\mathcal{X}^i = (X^i_1, X^i_2, X^i_3)\) and \(\mathcal{Y}^i = (Y^i_1, Y^i_2, Y^i_3)\) such that:
\begin{enumerate}
    \item \textbf{Divider’s value guarantee.}
    The divider (agent $i$),  
    receives a bundle of value at least \(MMS_i \) in both allocations; that is,
    \[
    v_i(X^i_i) \ge MMS_i 
    \quad \text{and} \quad 
    v_i(Y^i_i) \ge MMS_i.
    \]
    \item \textbf{Subdivider and chooser value guarantees.}
    For each agent \( l \neq i \) it holds that: 
    
    \[
    v_l(X^i_l) + v_l(Y^i_l) \ge   v_l(M) - MMS_l.
    \]
    and additionally:
    \begin{enumerate}
        \item \textbf{Subdivider guarantees}: the subdivider agent, denoted as agent $j$, is guaranteed $\frac{9}{10}MMS_j$, and is EFX-satisfied. 
        \item \textbf{Chooser guarantees}: the chooser agent, denoted as agent $k$, is guaranteed at least her proportional share. 
    \end{enumerate}
    \item \textbf{Envy-Based guarantees.}
    For each agent \( r \in N \), the bundle she receives in both allocations is EEFX-satisfactory for her. 
    Moreover, the subdivider and at least one of the other agents are EFX-satisfied. 
    
\end{enumerate}
\end{lemma}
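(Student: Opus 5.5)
The plan is to fix the divider $i$ and let $j,k$ be the other two agents. We first compute an $\mathcal{MMS-EFX}^i$ partition $\{P_1,P_2,P_3\}$ of $M$, using the reallocation result of \cref{sec:efx-reallocation} applied to an MMS partition of $v_i$.

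\textbf{Divider guarantees.} In both allocations $\mathcal{X}^i$ and $\mathcal{Y}^i$ the divider will receive some bundle $P_a$ of this partition (possibly a different one in each). Property~1 is then immediate. Her EEFX guarantee also holds regardless of how $M\setminus P_a$ is split: the partition $\{P_1,P_2,P_3\}$ itself, with $i$ holding $P_a$, is an EEFX certificate for her. This is the repartitioning freedom emphasized in the techniques section.

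\textbf{Roles of $j$ and $k$.} The two allocations of the pair swap the roles of the other two agents. In $\mathcal{X}^i$, $j$ is the subdivider and $k$ the chooser; in $\mathcal{Y}^i$ the roles are reversed. Within one allocation the steps are:
\begin{enumerate}
\item Choose the divider's bundle $P_a$ in a way that depends on how $j,k$ rank $P_1,P_2,P_3$. The default is a bundle that both value at most their MMS.
\item Let $R=M\setminus P_a$. The subdivider computes a two-agent MMS partition of $R$ and turns it into an EFX-for-herself partition, again via \cref{sec:efx-reallocation}.
\item The chooser takes her preferred half, and the subdivider gets the other half.
\end{enumerate}

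\textbf{Subdivider and chooser guarantees.} The subdivider is EFX-satisfied. Indeed, her half EFX-dominates the other half by construction. If, in addition, she values $P_a$ no more than her own half, the domination extends to $P_a$ by \cref{obs:top-is-EFX}-type reasoning. The chooser gets at least $v_k(R)/2$, and she is EFX-satisfied whenever she weakly prefers her bundle to $P_a$. For the sum condition of item~2, note that agent $l$ gets at least $v_l(R)/2$ as chooser and at least $\mathrm{MMS}(v_l|_R,2)$ as subdivider. If $v_l(P_a)\le MMS_l$, then both $v_l(R)/2\ge (v_l(M)-MMS_l)/2$ and, after a short argument, $\mathrm{MMS}(v_l|_R,2)\ge (v_l(M)-MMS_l)/2$. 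Summing these two bounds gives item~2, and chooser proportionality follows since $v_l(R)\ge \tfrac{2}{3}v_l(M)$ whenever $v_l(P_a)\le v_l(M)/3$.

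\textbf{Case analysis.} The real work is choosing $P_a$ when no bundle is simultaneously low for both $j$ and $k$. This is where I would import the ``coarse atomic partial search'' analysis of Feige and Norkin. Each of $j,k$ has some bundle she values at most $v_l(M)/3$; I would split into cases.
\begin{itemize}
\item If these low bundles coincide, we are in the default case above.
\item If they differ, there are two possibilities. One is that $j$ and $k$ have distinct favorite bundles among the remaining ones; then give each her favorite directly, so both are EFX-satisfied by top-value. The other is that they compete for the same bundle; then the non-subdivider takes it and the subdivider receives a repacked bundle.
\end{itemize}
In the repacked case the subdivider's value is at least $\tfrac{9}{10}MMS_j$ by the Feige--Norkin bound. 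Each case has to be checked under both role orderings so that the sum inequality still holds; when the two orderings lead to different cases, I expect that in the repacked case the allocations can simply coincide.

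\textbf{EEFX for the chooser.} When the chooser is not EFX-satisfied, I would build her certificate by keeping her bundle and redistributing $P_a\cup(\text{other half})$ EFX-wise for her. This is possible because her bundle is worth at least half of $R$ and she values $P_a$ below her MMS.

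The main obstacle is the case where both $j$ and $k$ value every bundle of the divider's partition above their MMS. There, achieving $\tfrac{9}{10}$-MMS with EFX for the subdivider and, simultaneously, the cross-allocation sum bound $v_l(X_l)+v_l(Y_l)\ge v_l(M)-MMS_l$ will need the delicate Feige--Norkin counting, adapted so that it holds under both role orderings.
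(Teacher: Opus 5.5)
There is a genuine gap, and it sits where the $\tfrac{9}{10}$ guarantee must come from. Your default case relies on $\mathrm{MMS}(v_l|_R,2)\ge (v_l(M)-MMS_l)/2$, which would give the subdivider her full MMS, and this inequality is false. Let $j$ and $k$ share the valuation with two items of value $7$, one item of value $8.5$ and fifteen items of value $0.5$, so $v(M)=30$ and $MMS=10$. Let the divider value the three bundles equally and hold the MMS-EFX partition $A$ (the two $7$'s), $B$ (the $8.5$), $C$ (the small items). Both $B$ and $C$ are worth less than the MMS to both agents, so either can be your common low bundle; $C$ is even their least bundle. Yet with $P_a=C$ the best split of $R=A\cup B$ is $\{8.5\}$ versus $\{7,7\}$, and the subdivider gets only $8.5<\tfrac{9}{10}\cdot 10$. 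Your sum bound alone can be saved: the chooser gets at least $v_l(R)-\mathrm{MMS}(v_l|_R,2)$, because the other half of any split of $R$ is worth at most $\mathrm{MMS}(v_l|_R,2)$, so the two roles together give at least $v_l(R)$. The subdivider's guarantee cannot be saved this way. The missing idea is that the subdivider herself chooses which non-top bundle to merge with the common top bundle, maximizing her two-way max--min. Here she would merge $A$ with $C$, leave $B$ to the divider, and get $10.5$. That maximum is exactly what the Feige--Norkin analysis bounds, and it is what Case~2 of \cref{alg:three-agent-construction} does via \cref{alg:partition}. It also yields $\min\{v_l(P^l_1),v_l(P^l_2)\}\ge v_l(L_l)$ (\cref{obs:repartition-larger}). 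That inequality is what makes the subdivider EFX toward the divider's bundle, which your argument assumes but never establishes.

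Your distinct-favourites case is the paper's Case~1 and is fine, but the contested case does not work as written. If the chooser takes the common favourite whole, the divider must keep one of the two remaining original bundles, so the subdivider simply receives the third; nothing can be repacked. She may then get far below $\tfrac{9}{10}MMS_j$, e.g.\ if she values the three bundles at $20,5,5$ with the first finely divisible, so that $MMS_j=10$. Once the subdivider chooses the leftover bundle, that bundle depends on who subdivides, and your plan leaves two issues open.
\begin{itemize}
\item The chooser may value the leftover above her proportional share, so cut-and-choose no longer guarantees her $\mathrm{PROP}_k$. This is why the paper needs Subcase~2.A: if the other agent strictly prefers $L_r$ to both halves of $\mathcal{P}^r$, she receives $L_r$, agent $r$ receives $A$ whole, and EEFX follows from \cref{lem:bundle-at-least-prop-is-EEFX}. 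In Subcase~2.B the chooser's best half is worth at least $v_k(L_j)$, which gives her $\mathrm{PROP}_k$.
\item When the two subdividers merge $A$ with different bundles, the bound $v_l(X^i_l)+v_l(Y^i_l)\ge v_l(M)-MMS_l$ requires the separate argument of \cref{lem:total-val-case-2} (Case~2). You cannot simply make the two allocations coincide.
\end{itemize}
Finally, the case you single out as the main obstacle cannot occur. The least bundle of any three-way partition is worth at most the MMS, so no agent values all three of the divider's bundles above her MMS. The real difficulty is the shared unique top bundle.
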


Before proving the lemma we show that \Cref{thm:three-agents-main} follows from it.

\paragraph{Proof of \cref{thm:three-agents-main}}
\begin{proof}

By \Cref{lem:main-lemma}, there exist six allocations \[
\{ \mathcal{X}^1, \mathcal{Y}^1, \mathcal{X}^2, \mathcal{Y}^2, \mathcal{X}^3, \mathcal{Y}^3 \}
\]
such that for each \( i \in \{1,2,3\} \):

\begin{enumerate}
       \item Agent \( i \) receives a bundle of value at least \( MMS_i \) in the allocations $\mathcal{X}^i, \mathcal{Y}^i$; that is,
    \[
    v_i(X^i_i) \geq MMS_i \quad \text{and} \quad v_i(Y^i_i) \geq MMS_i.
    \] 
    \item For each $l \in N \setminus \{i\}$, 
     the combined value received by agent $i$ in the allocations $\mathcal{X}^l, \mathcal{Y}^l$ is at least the sum of the two highest value bundles (by $v_i$) in $\mathcal{MMS-EFX}^i$, i.e.,
    \[
     v_i(X^l_i) + v_i(Y^l_i) \geq v_i(M)-MMS_i \quad \text{for each } l \in N \setminus \{i\}.
    \]

\end{enumerate}

Let \( \mu \) be the uniform distribution over the six allocations
\[
\{ \mathcal{X}^1, \mathcal{Y}^1, \mathcal{X}^2, \mathcal{Y}^2, \mathcal{X}^3, \mathcal{Y}^3 \}.
\]

The above implies that the expected value of agent $i$ under \( \mu \) is at least her proportional share:

\begin{multline*}
\begin{split}
    \mathbb{E}_{(Z_1,Z_2,Z_3)\sim \mu} \left[ v_i(Z_i) \right]
& =\frac{1}{6}\left(v_i({X}_i^1)+v_i(Y_i^1)+v_i({X}_i^2)+v_i(Y_i^2)+v_i({X}_i^3)+v_i({Y}_i^3)\right)\;
\\ & \geq \frac{2}{6} \cdot MMS_i + \frac{2}{6} \cdot \left(v_i(M)-MMS_i\right)
\;=\;  \frac{v_i(M)}{3}\;=\;\mathrm{PROP}_i.
\end{split}
\end{multline*}

We conclude that \( \mu \) is ex-ante proportional.
Additionally, every allocation in the support of $\mu$ upholds the ex-post guarantees of \cref{lem:main-lemma}.

This completes the proof of \Cref{thm:three-agents-main}.
\end{proof}

 \subsection{Proof of \Cref{lem:main-lemma}}\label{subsec:support-construction}

We recall that our BoBW distribution in \cref{thm:three-agents-main} is supported on six allocations, where each of these six allocations corresponds to a different assignment of the following three roles to the three agents:
\emph{divider}, \emph{subdivider}, and \emph{chooser}. 
The six allocations are constructed in three pairs, one for each assignment of the role of the divider to one of the agents. \Cref{lem:main-lemma} focuses on one such pair. 
We next outline the algorithm 
to construct the two allocations that correspond to a fixed divider, 
and the proof that they satisfy the guaranteed properties.

Fix an agent $i\in \{1,2,3\}$, and, in order to simplify notation, we assume without loss of generality, that \(i=1\). We denote this agent as the \emph{divider}. 
Once fixed, we use \cref{alg:three-agent-construction} to construct two EEFX allocations that satisfy the desired properties as follows:

\begin{enumerate}
    \item The divider partitions her items into an MMS-EFX partition according to her valuation. We denote it as  $\{A,B,C\}$.
    \item We carefully choose two bundles (from the divider's partition) for the subdivider to re-partition into two bundles according to her MMS.
    \item The chooser chooses her favorite bundle out of the two repartitioned bundles and the remaining third bundle.
    \item Based on the choosers's choice, we then allocate a bundle from the original partition to the divider, and the remaining items to the subdivider. 
\end{enumerate}
The allocations satisfy the following properties: 
\begin{enumerate}
    \item Agent 1 (the \emph{divider}) receives one of the bundles $\{A,B,C\}$, and thus a bundle of value at least \(MMS_1\) -- and therefore is also guaranteed her MMS share;
    \item Each of the other agents \(l \in \{2,3\}\) receives, over the two allocations combined, bundles whose total value is at least \(v_l(M) - MMS_l\); 
    \item Across the two allocations, agents~2 and 3 exchange the roles of \emph{subdivider} and \emph{chooser}: in each allocation, the chooser receives at least her proportional share and is EEFX-satisfied, and the subdivider receives at least a $\tfrac{9}{10}$-MMS value and is EFX-satisfied.
\end{enumerate}

We present our algorithm, whose pseudocode is given in \cref{alg:three-agent-construction}.

\begin{algorithm}[H]
\SetAlgoNlRelativeSize{-1}   
\setstretch{0.95}           
\small                        
\caption{Constructing Allocations $\mathcal{X}^1$ and $\mathcal{Y}^1$ for Agent~1}
\label{alg:three-agent-construction}
\DontPrintSemicolon

\KwIn{Three additive valuations $v_1,v_2,v_3$ over a set of goods $M$.}
\KwOut{Two EEFX allocations \(\mathcal{X}^1 = (X^1_1, X^1_2, X^1_3)\) and \(\mathcal{Y}^1 = (Y^1_1, Y^1_2, Y^1_3)\), 
such that agent~1 receives her \(MMS\) share. 
In \(\mathcal{X}^1\), one agent \(k \in \{2,3\}\) receives her proportional share, while the remaining agent \(j = \{2,3\} \setminus \{k\}\) receives at least a \(\tfrac{9}{10}MMS\) value and is \emph{EFX}-satisfied. 
In \(\mathcal{Y}^1\), the roles of agents \(j\) and \(k\) are reversed. 
Moreover, for each \(l \in \{2,3\}\),
\[
v_l(X^1_l) + v_l(Y^1_l) \;\geq\; v_l(M) - MMS_l.
\]
Additionally, for each agent \(t \in \{1,2,3\}\), we output a certificate \(C_t^1\), 
which allows the agent to efficiently verify that she is \emph{EEFX}-satisfied in any allocation where she is not \emph{EFX}-satisfied.}

Let $\{M_1,M_2,M_3\}$ be an MMS partition of $M$ according to $v_1$ (for $n=3$)\;
Let $\{A,B,C\} = \mathcal{C}^1_1= \realloc(\{M_1,M_2,M_3\},v_1)$ \tcp*{Make allocation EFX-satisfying for $v_1$}
For $l \in \{2,3\}$, denote $\mathcal{T}_l = \operatorname{argmax}_{S \in \{A,B,C\}}v_l(S)$\;

\tcp{Case distinction based on agents 2 and 3’s top bundles:}
\uIf(\tcp*[f]{Case 1: Both can get top-valued bundle}){$\mathcal{T}_2 \neq \mathcal{T}_3$ or $|\mathcal{T}_2| > 1$}{
    Arbitrarily assign each of agents 2 and 3 one of her top-valued bundles. $X^1_j\in \mathcal{T}_j$ for $j\in \{2,3\}$\;
    Set $X^1_1$ to be $\{A,B,C\} \setminus \{X^1_2, X_3^1\}$\;
    Set $\mathcal{Y}^1 =\mathcal{C}^1_2=\mathcal{C}^1_3 = \mathcal{X}^1$\;
    \KwRet $\mathcal{X}^1, \mathcal{Y}^1, (C^1_1,C^1_2,C^1_3)$ \tcp*{Each agent receives MMS and is EFX-satisfied}
}

\tcp{Case 2: Both value the same top bundle $A$}
\For{each $l \in \{2,3\}$}{
    $\mathcal{P}^l_B = \textbf{MMS-EFX-Improved-RePartition}(A,B,v_l)$\;
    $\mathcal{P}^l_C =  \textbf{MMS-EFX-Improved-RePartition}(A,C,v_l)$\;
    Choose $Z_l \in \operatorname{argmax}_{Z \in \{B,C\}} \min_{S\in  \mathcal{P}^l_Z} v_l(S)$\;
    Let $\mathcal{P}^l =\{P^l_1,P^l_2\} = \mathcal{P}^l_{Z_l}$\;
    Let $L_l = M \setminus (A \cup Z_l)$ \tcp*{Remaining bundle from partition $\{A,B,C\}$}
}

\tcp{Case 2.A}
\uIf{there exists $r \in \{2,3\}$ such that for $l \in \{2,3\}\setminus \{r\}$ it holds $\max\{v_l(P_1^r),v_l(P_2^r)\} < v_l(L_r)$}{
    Let $\mathcal{C}^1_l=\{P_1^r, P_2^r, L_r\}$\;
    Construct $\mathcal{X}^1$: agent $r \gets A$, agent $l \gets L_r$, agent $1 \gets Z_r$\;
    \eIf{$\max\{v_r(P_1^l),v_r(P_2^l)\} < v_r(L_l)$}{
        Let $\mathcal{C}^1_r=\{P_1^l, P_2^l, L_l\}$\;
        Construct $\mathcal{Y}^1$: agent $l \gets A$, agent $r \gets L_l$, agent $1 \gets Z_l$\;
    }{
        $\mathcal{Y}^1 =\mathcal{C}^1_r= \mathcal{X}^1$\;
    }
    \KwRet $\mathcal{X}^1, \mathcal{Y}^1, (C^1_1,C^1_2,C^1_3)$ \tcp*{MMS + EFX/EEFX-satisfied}
}
\Else(\tcp*[f]{Case 2.B}){
    \For{each $j \in \{2,3\}$ \tcp*[f]{agent $j$ is the \emph{subdivider}}}{
        Denote $k = [3] \setminus \{1,j\}$ as the \emph{chooser}\;
        Create an allocation $\mathcal{R}^j$ as follows:\;
        Agent $k$ chooses her top valued bundle in $\mathcal{P}^j$\;
        Agent $j$ takes the remaining bundle in $\mathcal{P}^j$\;
        Agent 1 (the \emph{divider}) receives $L_j$\;
    }
    Assign $\mathcal{X}^1=\mathcal{C}_3^1=\mathcal{R}^2, \mathcal{Y}^1=\mathcal{C}_2^1=\mathcal{R}^3$\;
    \KwRet $\mathcal{X}^1, \mathcal{Y}^1, (C^1_1,C^1_2,C^1_3)$\;
}
\end{algorithm}

\begin{algorithm}[H]
\caption{\textsc{MMS-EFX-Improved-RePartition}$(A, B, v)$}
\label{alg:partition}
\DontPrintSemicolon

\KwIn{Two bundles $A, B$ and an additive valuation $v$}
\KwOut{A partition $\{A',B'\}$ of the items in $A\cup B$ to two bundles, such that $\min_{S \in \{A',B'\}}v(S) \geq \min_{S \in \{A,B\}}v(S)$, and such that both $A'$ and $B'$ are EFX satisfying for $v$.}

Let $\{X',Y'\}$ be an MMS partition of $A \cup B$ according to $v$ (for $n=2$)\;
$\{A',B'\}  = \realloc(\{X',Y'\}, v)$ \tcp*{\cref{alg:realloc}}

\If{$\min_{Z \in \{A',B'\}}v(Z) < \min_{Z \in \{A,B\}}v(Z)$}{
    $\{A',B'\}  = \realloc(\{A,B\}, v)$\;
}

\KwRet $\{A',B'\}$
\end{algorithm}

{We now describe the construction step by step, explaining the purpose of each stage and proving that all stages collectively uphold the guarantees stated {in \cref{lem:main-lemma}}. 
\paragraph{The Main 3-Agent Algorithm (\cref{alg:three-agent-construction}).}
{In Lines~1–2, we construct an \(\mathcal{MMS\text{-}EFX}^1\) partition for agent~1 as follows: we first compute an MMS partition of $M$ to \(n=3\) bundles, and then apply the \realloc\ procedure (\cref{alg:realloc}), which guarantees the resulting partition is $\mathcal{MMS\text{-}EFX}$ for agent 1 (see \Cref{sec:efx-reallocation}).  
We denote the resulting partition by \(\{A, B, C\}\).} \\
For each agent
\(l\in\{2,3\}\), we use $\mathcal{T}_l$ to denote the set of all bundles that agent $l$ views as top-valued bundles:

\[
\mathcal{T}_l := \arg\max_{S\in\{A,B,C\}} v_l(S)
\]

We now distinguish two exhaustive and mutually exclusive sub-cases, based on
how agents~2 and~3 evaluate {the bundles in $\{A,B,C\}$}. In
each case, we construct two allocations that satisfy the requirements stated
above. 

\begin{enumerate}
    \item \textbf{Both can concurrently get a top-valued bundle (Case 1).}
    We have $\mathcal{T}_2\neq \mathcal{T}_3$ or $|\mathcal{T}_2| > 1$, and we can  
    concurrently allocate to each of agents 2 and 3 one of her top-valued bundles.
    \item \textbf{Case 1 does not hold}. Each agent has a unique  top-valued bundle, and it is the same bundle for both agents. Without loss of generality we assume that bundle $A$ is that bundle.

\end{enumerate}

We now proceed to construct two allocations, \( \mathcal{X}^1 = (X^1_1, X^1_2, X^1_3) \) and \( \mathcal{Y}^1 = (Y^1_1, Y^1_2, Y^1_3) \), for each of the cases described above.
These allocations are constructed such that in each of the two allocations agent 1 receives one of the bundles in $\{A,B,C\}$, the computed $\mathcal{MMS-EFX}^1$ partition. Therefore, in every allocation, she is EEFX-satisfied (by definition of an $\mathcal{MMS-EFX}^1$ partition), and receives a value of at least $MMS_1$ (so she is guaranteed her MMS). The two other bundles might be repartitioned, but even if so, we give agent~1 the partition $\mathcal{C}^1_1=\{A,B,C\}$ as an EEFX-Certificate, and it includes the bundle she receives in any of the two final allocations. Therefore, she can use $\mathcal{C}^1_1$ to efficiently verify she is indeed EEFX-satisfied.

    \paragraph{\textbf{Case 1: Both can concurrently get a top-valued bundle.  (Lines 4 - 8)}}
    \begin{itemize}

        \item By assumption, we can give each of agents 2 and 3 a different bundle, which they each considers to be a top-valued bundle. We define $\mathcal{X}^1$ as follows -- we arbitrarily assign the two different top bundles to their corresponding agents, and define $X^1_1$ to be the remaining bundle in the original partition. We define the allocation $\mathcal{Y}^1$ to be equal to $\mathcal{X}^1$. 
        
    \item \textbf{Value guarantees for agents 2 and 3.} 
    Agents~2 and~3 each receive their top-valued bundle in both allocations. 
    By \cref{lem:guarantee-value}, each agent $l\in \{2,3\}$ receives, over the two allocations combined, bundles whose total value is at least $v_l(M)-MMS_l$.
     Moreover, since each of the two agents receives a top-valued bundle, she gets at least her proportional share, so she receives at least her MMS share.
     Moreover, we notice that as the divider's bundle is a bundle from her  
         original MMS-EFX partition, the divider (agent 1) is  also guaranteed her MMS.
     
     Thus, all three agents get their MMS.
        \item \textbf{EFX Guarantee:} 
         Since each of agents $2$ and $3$ receives one of her top-valued bundle in the allocation -- each is EFX-satisfied (\cref{obs:top-is-EFX}). 
         Moreover, we notice that as the divider's bundle is a bundle from her  
         original MMS-EFX partition, the divider (agent 1) is EFX-satisfied. 
         
         Thus,  all three agents  are EFX-satisfied. {We note that in this case, no EEFX-certificates are needed (as the allocations themselves serve as certificates).}  
    \end{itemize}
We conclude that in this case every agent gets her MMS and is EFX-satisfied.

The only case in which it is not possible to concurrently give both agent $2$ and agent $3$ their top-valued bundles is when $\mathcal{T}_2=\mathcal{T}_3=\{A\}$. Below we always assume that this holds. 

Before moving to the second case we first present a subroutine that will be used within it. 
{\paragraph{The MMS-EFX-Improved-RePartition Algorithm (\cref{alg:partition}).}  
This procedure receives a partition of two bundles, \(A\) and \(B\), and a valuation $v$  
and returns a new partition $\{A', B'\}$ such that \(A'\) and \(B'\) are an  \(MMS\) partition of \(A \cup B\) according to $v$.  
Additionally, 
it guarantees that
\[
\min_{Z \in \{A, B\}} v(Z) \;\leq\; \min_{Z \in \{A', B'\}} v(Z),
\]
and that \(A'\) and \(B'\) are \emph{EFX}-dominating with respect to one another.  
We further note that even when replacing exact \(MMS\) with approximate \((1-\varepsilon)\)-MMS partitions,  
these guarantees continue to hold due to the condition enforced in Line~3 of the algorithm, and application of the \realloc algorithm (\cref{alg:realloc}).
}

{We now proceed to the second case.}

\paragraph{\textbf{Case 2: $A$ is the unique top-valued bundle for both agents. (Lines 9 - 32).}}

   For each agent $l \in \{2,3\}$,  and $Z
   \in \{B,C\}$, we denote $$\mathcal{P}^l_Z=\textbf{MMS-EFX-Improved-RePartition}(A,Z,v_l).$$

   We then choose 
   \[
Z_l \in 
\arg\max_{Z \in \{B,C\}} \min_{S\in  \mathcal{P}^l_Z} v_l(S)
\] 
and denote $\mathcal{P}^l =\{P^l_1,P^l_2\}= \mathcal{P}^l_{Z_l}$ (i.e - we choose $Z \in \{B,C\}$ such that it maximizes the minimal-valued bundle in the partition returned by the call to the function \textbf{MMS-EFX-Improved-RePartition}$(A,Z,v_l)$).

Let $L_l = M \setminus (A \cup Z_l)$ be the remaining items.
    
    \paragraph{\textbf{Subcase 2.A (Lines 15 - 23).}}
    If there exists an agent \( r \in \{2,3\} \) such that for the other agent \( l \in \{2,3\}\setminus \{r\} \),

    \[
    \max \{ v_l(P_1^r),\, v_l(P_2^r) \} < v_l(L_r),
    \]    
    
    that is, agent $l$ strictly prefers the third bundle in the partition to any of the bundles in $\mathcal{P}^r$ (which, in turn, implies that bundle \( L_r=M \setminus (A \cup Z_r) \), the remaining bundle in the original three-way partition, must be valued by $l$ \textbf{above} her proportional share),  
    we define both \( \mathcal{X}^1 \) and \( \mathcal{Y}^1 \) as follows:  
    agent \( r \) receives bundle \( A \);  
    agent \( l \) receives \( L_r= M \setminus (A \cup Z_r) \);  
    and agent \( 1 \) receives \( Z_r \). We denote $\mathcal{C}^1_l=\{P_1^r,P_2^r,L_r\}$, and notice that it is an EEFX-certificate for agent $l$.

    We note that if both agents satisfy this condition, we instead construct $\mathcal{Y}^1$ by swapping $r$ and $l$, thereby preserving all guarantees and ensuring that each agent has an equal chance to receive her top-valued bundle in the allocation.
    
    \begin{enumerate}
    \item \textbf{Value guarantees for agents 2 and 3.} 
    Agents~2 and~3 each receive their top-valued bundle in some partition, twice. 
    By \cref{lem:guarantee-value}, each agent $t \in \{2,3\}$ receives, over the two allocations combined, bundles whose total value is at least $v_t(M)-MMS_t$. 
     Moreover, since each agent values her bundle above her proportional share, she receives at least her MMS share.
        \item \textbf{EEFX/EFX Guarantees:} 
         Agent $r\in \{2,3\}$ receives her top-valued bundle in the allocation -- and therefore she is  EFX-satisfied (\cref{obs:top-is-EFX}). Agent $l\in \{2,3\}$ receives a bundle valued at least her proportional share, and therefore, by $\cref{lem:bundle-at-least-prop-is-EEFX}$, she is EEFX-satisfied. Finally, we notice that as the divider's bundle is a bundle from her  
         original MMS-EFX partition, the divider (agent 1) is EFX-satisfied.
    \end{enumerate}
We note that in this case, every agent receives her MMS share. 

From this point onward, we assume that no such agent \( r \) exists. 

    \paragraph{\textbf{Subcase 2.B (Lines 24 - 32).}}
    Among the two non-divider agents (agents 2 and 3), 
    one acts as the \emph{subdivider} and the other as the \emph{chooser}. 
    The chooser receives at least her proportional share (and in fact, a bundle that is a top-valued bundle in the partition), the subdivider is guaranteed $\tfrac{9}{10}$-MMS and is EFX-satisfied, and the divider (agent 1) receives a bundle from her original $\mathcal{MMS-EFX}^1$  partition (thus guaranteeing her MMS and EEFX-satisfaction). 
    A second allocation is produced by swapping the roles of the subdivider and chooser.

\begin{itemize}

    \item \textbf{Allocation:} 
    Define $\mathcal{Y}^1$ as follows: Agent 2 acts as the chooser, and agent 3 as the subdivider.  
    The chooser (agent 2) takes her top bundle (above her proportional share) in $\mathcal{P}^3$, the subdivider (agent 3) the other bundle in $\mathcal{P}^3$, and the divider gets the remaining bundle $L_3$. 
    The allocation $\mathcal{X}^1$ is similarly obtained after swapping the assignment of roles of subdivider and chooser between agents 2 and 3.

    \item \textbf{Value Guarantees for agents 2 and 3:} 
    We notice the \emph{chooser} receives above her proportional share (she receives her top bundle in the partition, since we are not in case 2.A).
    The more challenging claim, proving that  
    the \emph{subdivider} receives at least $\frac{9}{10}$ of her MMS, is deferred to \cref{lem:val-910-mms} below. 
    We also defer to \Cref{lem:total-val-case-2} the proof  that each agent $t \in \{2,3\}$ receives, over the two allocations combined, bundles whose total value is at least $v_t(M) - MMS_t$.

    \item \textbf{EFX Guarantee for agents 2 and 3:} The chooser is EFX-satisfied by \Cref{obs:top-is-EFX}. 
    The more challenging claim, proving that  
    the \emph{subdivider} is EFX-satisfied, is deferred to \Cref{lem:reallocation-is-still-EFX} below.

\end{itemize}

To complete the proof, we only need to prove the following four lemmas: 

\begin{restatable}{lemma}{lemtopval}
\label{lem:guarantee-value}
Let \( \{S_1, S_2, S_3\} \)  be a partition of the item set \( M \), and let there be an agent with an additive valuation $v$. Assume without loss of generality that
\[
v(S_1) \geq \max [v(S_2),v(S_3)].
\]

Then it holds that
    $$2 \cdot v(S_1) \geq v(S_1) + \max [v(S_2),v(S_3)] \geq  v(M)-MMS(v,3)$$

\end{restatable}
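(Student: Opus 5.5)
The first inequality, $2v(S_1)\ge v(S_1)+\max[v(S_2),v(S_3)]$, follows directly from the assumption $v(S_1)\ge \max[v(S_2),v(S_3)]$. So the work lies entirely in the second inequality:
\[
v(S_1)+\max[v(S_2),v(S_3)] \ge v(M)-MMS(v,3).
\]
By additivity, $v(S_1)+v(S_2)+v(S_3)=v(M)$, so the left-hand side equals $v(M)-\min[v(S_2),v(S_3)]$, and in fact $v(M)-\min_{t}v(S_t)$ since $S_1$ is the largest bundle. The claim therefore reduces to showing $\min_{t\in\{1,2,3\}} v(S_t)\le MMS(v,3)$.

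This is immediate from \cref{def:mms}. The partition $\{S_1,S_2,S_3\}$ belongs to $\Gamma_3$, and $MMS(v,3)$ is the maximum over all such partitions of the minimum bundle value. Hence every partition's minimum bundle value is at most $MMS(v,3)$, and in particular $\min_t v(S_t)\le MMS(v,3)$. Substituting this bound gives the second inequality.

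I expect no real obstacle here. The only care needed is in the bookkeeping: one should note that $\max[v(S_2),v(S_3)]=v(S_2)+v(S_3)-\min[v(S_2),v(S_3)]$, and that the minimum over $\{S_2,S_3\}$ coincides with the overall minimum because $S_1$ is the maximal bundle. Once that is recorded, the MMS definition closes the argument.
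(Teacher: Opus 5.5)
Your proof is correct and is essentially the paper's argument: both rewrite $v(S_1)+\max[v(S_2),v(S_3)]$ as $v(M)$ minus the smallest bundle value, and bound that smallest value by $MMS(v,3)$ using the definition of the maximin share. The only difference is presentation: the paper argues by contradiction (otherwise the smallest bundle would exceed $MMS(v,3)$), while you argue directly.
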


\begin{restatable}{lemma}{totalvaluelemma}\label{lem:total-val-case-2}
    In Subcase 2.B, The total value that each agent $t \in \{2,3\}$ receives over the two allocations $\mathcal{X}^1, \mathcal{Y}^1$ is
    
    at least $v_t(M) - MMS_t$. That is, for $t \in \{2,3\}$ it holds that $$v_t(X^1_t)+v_t(Y^1_t)\geq v_t(M) - MMS_t$$ 
\end{restatable}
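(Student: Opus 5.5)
The plan is to compare, for a fixed $t\in\{2,3\}$, the two values agent $t$ receives in Subcase 2.B. In one of $\mathcal{X}^1,\mathcal{Y}^1$ agent $t$ is the \emph{chooser} in $\mathcal{R}^j$ (with $j\neq t$ the subdivider), and gets a bundle of value $c:=\max\{v_t(P^j_1),v_t(P^j_2)\}$. In the other she is the \emph{subdivider} in $\mathcal{R}^t$, and gets the bundle of $\mathcal{P}^t$ not taken by the chooser, of value $s\ge \min_{S\in\mathcal{P}^t}v_t(S)$. The goal is $c+s\ge v_t(M)-MMS_t$. We obtain it by applying \cref{lem:guarantee-value} to the three-way partition $\{P^j_1,P^j_2,L_j\}$ of $M$, rather than to $\{A,B,C\}$.

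\textbf{Step 1: $c$ is top-valued in $\{P^j_1,P^j_2,L_j\}$.} Since we are in Subcase 2.B, the condition of Subcase 2.A fails with $r=j$ and $l=t$. Hence $c=\max\{v_t(P^j_1),v_t(P^j_2)\}\ge v_t(L_j)$. By \cref{lem:guarantee-value},
\[
c+\max\Bigl\{\min_{S\in\mathcal{P}^j}v_t(S),\; v_t(L_j)\Bigr\}\;\ge\; v_t(M)-MMS_t .
\]
It therefore suffices to show that $s$ dominates both terms inside the maximum.

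\textbf{Step 2: $s\ge v_t(L_j)$.} By the output guarantee of \cref{alg:partition}, $\min_{S\in\mathcal{P}^t_Z}v_t(S)\ge\min\{v_t(A),v_t(Z)\}=v_t(Z)$ for each $Z\in\{B,C\}$; the equality holds because $A$ is agent $t$'s top bundle in Case 2. Since $Z_t$ maximizes this minimum, $s\ge\min_{S\in\mathcal{P}^t}v_t(S)\ge\max\{v_t(B),v_t(C)\}\ge v_t(L_j)$, as $L_j\in\{B,C\}$.

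\textbf{Step 3: $s\ge \min_{S\in\mathcal{P}^j}v_t(S)$.} I expect this to be the main obstacle, since $\mathcal{P}^j$ is built from agent $j$'s valuation and need not be favorable to $t$. The intended argument uses the argmax choice of $Z_t$ together with the exact two-agent MMS step inside \cref{alg:partition}:
\[
s\;\ge\;\min_{S\in\mathcal{P}^t}v_t(S)\;\ge\;\min_{S\in\mathcal{P}^t_{Z_j}}v_t(S)\;\ge\; MMS(v_t|_{A\cup Z_j},2)\;\ge\;\min_{S\in\mathcal{P}^j}v_t(S).
\]
The last inequality holds because $\mathcal{P}^j$ is itself some two-way partition of $A\cup Z_j$. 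The third inequality holds because \cref{alg:partition} first computes an MMS partition of $A\cup Z_j$ under $v_t$. \realloc\ does not decrease the minimum, and the fallback in Line~3 is used only when it is even better. I will need to verify this property of \realloc\ from \cref{sec:efx-reallocation}.

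Combining Steps 1 to 3 gives $v_t(X^1_t)+v_t(Y^1_t)=c+s\ge v_t(M)-MMS_t$ for both $t\in\{2,3\}$, since each plays each role exactly once.
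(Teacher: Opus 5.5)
Your proof is correct, and it takes a more uniform route than the paper's.

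\textbf{How the paper argues.} The paper splits Subcase~2.B according to whether agents~2 and~3 repartition the same pair or different pairs.
\begin{itemize}
\item In the same-pair case it applies \cref{lem:guarantee-value} to agent $t$'s \emph{own} partition $\{P^t_1,P^t_2,C\}$, using \cref{obs:repartition-larger} and a cut-and-choose comparison.
\item In the different-pair case it argues by contradiction from $v_3(X^1_3)\ge v_3(Y^1_1)$, $v_3(Y^1_3)\ge v_3(Y^1_2)$ and $v_3(X^1_3)\ge v_3(Y^1_2)$. It re-derives the MMS bound inline for the chooser's partition.
\end{itemize}

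\textbf{How yours differs.} The first and third of those inequalities are exactly your Steps~2 and~3, so the ingredients are the same. You state Steps~2 and~3 for arbitrary $Z_t,Z_j$ and always apply \cref{lem:guarantee-value} to the chooser's partition $\{P^j_1,P^j_2,L_j\}$. This covers both of the paper's cases at once and removes the proof by contradiction.

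\textbf{Two small remarks.}
\begin{itemize}
\item The Subcase~2.B condition you invoke in Step~1 is not needed. Since $c\ge\min_{S\in\mathcal{P}^j}v_t(S)$ trivially, the quantity $c+\max\{\min_{S\in\mathcal{P}^j}v_t(S),\,v_t(L_j)\}$ is the sum of the two largest values among $\{P^j_1,P^j_2,L_j\}$ whether or not $c\ge v_t(L_j)$. That is all \cref{lem:guarantee-value} requires. The paper's different-pair argument likewise uses only the chooser's preference inside $\mathcal{P}^j$.
\item The property of \textsc{Realloc} you wanted to verify is \cref{lem:monotone-min} (see also \cref{cor:preservs}). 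With an exact two-way MMS partition, the fallback in Line~3 of \cref{alg:partition} never fires. So the inequality $\min_{S\in\mathcal{P}^t_{Z_j}}v_t(S)\ge MMS(v_t|_{A\cup Z_j},2)$ in your Step~3 holds as stated.
\end{itemize}
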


{
\begin{restatable}{lemma}{lemreallocefxl}
\label{lem:reallocation-is-still-EFX} 
    Consider agent $l$'s  $\mathcal{P}^l$ partition (Line 13), denoted by $\{P_1^l, P_2^l\}$. 
    Then each of $P_1^l$ and of $P_2^l$ is {EFX-satisfactory for agent $l$ with regards to the partition $\{P_1^l, P_2^l, L_l\}$, where $L_l=M\setminus (P_1^l \cup P_2^l)$}.
    
\end{restatable}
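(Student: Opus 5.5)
The plan is to check the EFX-domination conditions pairwise. For agent $l$ to be EFX-satisfied with a bundle $P\in\{P_1^l,P_2^l\}$ in the partition $\{P_1^l,P_2^l,L_l\}$, two things must hold. First, $P$ must EFX-dominate the other bundle of $\mathcal{P}^l$. Second, $P$ must EFX-dominate $L_l$. I will handle these separately, using the output guarantees of \cref{alg:partition} and the Case~2 hypothesis that $A$ is the unique top-valued bundle for agent $l$.

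\textbf{Domination between $P_1^l$ and $P_2^l$.} By construction, $\mathcal{P}^l=\mathcal{P}^l_{Z_l}$ is the output of \textbf{MMS-EFX-Improved-RePartition}$(A,Z_l,v_l)$. On either branch of \cref{alg:partition}, the returned pair is the output of \realloc\ applied to a two-bundle partition of $A\cup Z_l$. By the guarantee of \realloc\ (\cref{sec:efx-reallocation}), each of the two bundles EFX-dominates the other under $v_l$. This settles the first condition.

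\textbf{Domination of $L_l$.} The key quantity is $\mu:=\min_{S\in\mathcal{P}^l} v_l(S)$. I will show that $\mu\ge v_l(L_l)$. Once this holds, for every $g\in L_l$ and each $P\in\mathcal{P}^l$, monotonicity (nonnegative item values) gives
\[
v_l(P)\ \ge\ \mu\ \ge\ v_l(L_l)\ \ge\ v_l(L_l\setminus\{g\}),
\]
so $P$ EFX-dominates $L_l$.

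To prove $\mu\ge v_l(L_l)$, note that $\{Z_l,L_l\}=\{B,C\}$. Hence the competing candidate in Line~12 is $\mathcal{P}^l_{L_l}=\textbf{MMS-EFX-Improved-RePartition}(A,L_l,v_l)$. By the output guarantee of \cref{alg:partition} (enforced by the check in Line~3), its minimum bundle value is at least $\min\{v_l(A),v_l(L_l)\}$. In Case~2, $A$ is the unique top-valued bundle of agent $l$ among $\{A,B,C\}$, so this minimum equals $v_l(L_l)$. Since $Z_l$ was chosen to maximize the minimum bundle value over $Z\in\{B,C\}$, we get $\mu\ge \min_{S\in\mathcal{P}^l_{L_l}}v_l(S)\ge v_l(L_l)$.

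\textbf{Main obstacle.} The main subtlety is the choice of $Z_l$. For the chosen partition itself, \cref{alg:partition} only gives $\mu\ge v_l(Z_l)$, and $v_l(Z_l)$ may be smaller than $v_l(L_l)$. The argmax in Line~12 is exactly what lets us compare against the alternative partition $\mathcal{P}^l_{L_l}$. I must also verify that the Line~3 fallback of \cref{alg:partition} still returns a \realloc\ output, so that mutual EFX-domination holds on both branches. This needs the property of \realloc\ that it does not decrease the minimum bundle value.
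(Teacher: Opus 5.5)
Your proof is correct and follows essentially the paper's route. Mutual EFX-domination of $P_1^l,P_2^l$ comes from \realloc, and domination of $L_l$ comes from $\min\{v_l(P_1^l),v_l(P_2^l)\}\ge v_l(L_l)$. The paper isolates the latter as \cref{obs:repartition-larger} and proves it by contradiction using the same comparison with $\mathcal{P}^l_{L_l}$, whereas you argue it directly.

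The only other difference is cosmetic. The paper handles the larger bundle via \cref{obs:top-is-EFX}, while you apply the chain $v_l(P)\ge\mu\ge v_l(L_l)\ge v_l(L_l\setminus\{g\})$ uniformly to both bundles.
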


\begin{restatable}{lemma}{lemninemms}
\label{lem:val-910-mms}
    Consider agent $l$'s  $\mathcal{P}^l$ partition (Line 13), denoted by $\{P_1^l, P_2^l\}$.
    Then $\min \{ v_l(P_1^l),v_l(P_2^l)\} \geq \frac{9}{10} MMS_l$ (where $MMS_l = MMS_l(v_l,3)$ is the MMS value over 
    the set of goods $M$ with respect to valuation $v_l$ and $n=3$ agents).

\end{restatable}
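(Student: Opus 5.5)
We normalize $MMS_l = MMS(v_l,3)=1$ and write $a=v_l(A)$, $b=v_l(B)$, $c=v_l(C)$. In Case~2 we know $a>\max\{b,c\}$, and $a+b+c=v_l(M)\ge 3$. By the output guarantee of \cref{alg:partition} (enforced by its Line~3), every $\mathcal{P}^l_Z$ satisfies $\min_{S\in\mathcal{P}^l_Z}v_l(S)\ge \min\{a,v_l(Z)\}=v_l(Z)$. Hence the choice of $Z_l$ in Line~12 gives
\[
\min\{v_l(P^l_1),v_l(P^l_2)\}\ \ge\ \max\{b,\ c,\ MMS(v_l|_{A\cup B},2),\ MMS(v_l|_{A\cup C},2)\},
\]
where the two-agent MMS terms hold because, up to the Line~3 correction, the returned partition is an exact $2$-MMS partition of $A\cup Z$. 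It therefore suffices to show that one of these four quantities is at least $\tfrac{9}{10}$. If $\max\{b,c\}\ge \tfrac{9}{10}$ we are done immediately. So the substantive case is $b,c<\tfrac{9}{10}$, which forces $a>3-\tfrac{9}{5}=\tfrac{6}{5}$.

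In that case we fix an MMS partition $\{Q_1,Q_2,Q_3\}$ of $M$ for $v_l$, with $v_l(Q_i)\ge 1$, and try to build, for $Z\in\{B,C\}$, a two-way partition of $A\cup Z$ from the restrictions $Q_i\setminus L$, where $L$ is the third bundle. Grouping them as $\{Q_i\setminus L,\ (Q_j\cup Q_k)\setminus L\}$, the merged part has value at least $2-v_l(L)>\tfrac{11}{10}$. The singleton part is at least $\tfrac{9}{10}$ whenever $v_l(Q_i\cap L)\le\tfrac{1}{10}$. So the first easy subcase is that for $L=C$ or for $L=B$ some $Q_i$ loses at most $\tfrac{1}{10}$. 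If this fails, every $Q_i$ meets both $B$ and $C$ in value more than $\tfrac1{10}$. We then aim for a contradiction or a direct construction, in the spirit of the Feige--Norkin analysis \cite{feige2022improvedmaximinfairallocation} of the coarse atomic partial-search algorithm (which is exactly this ``give the top bundle's complement to be repartitioned'' setting, proven there to yield $\tfrac{9}{10}$).

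For this hard subcase the plan is an item-size case analysis on $A\cup Z$. Items of value at least $\tfrac{9}{10}$ can each be isolated. If $A$ contains such a large item $g$, then the bundles $\{g\}$ and $(A\cup Z)\setminus\{g\}$ work provided the remainder is at least $\tfrac{9}{10}$; this should follow from $a+v_l(Z)\ge 3-v_l(L)>2.1$ together with the fact that $g$ lies in a single $Q_i$, so the other two $Q$'s supply value. Otherwise all items are below $\tfrac{9}{10}$. Here we run a greedy balancing argument on $A\cup Z$: order items and move them from the heavier side to the lighter side. Since the total exceeds $2.1$, a balanced split fails to reach $\tfrac{9}{10}$ on each side only if some single item exceeds $0.3$ and blocks the balancing. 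We then count such ``big'' items against the three $Q_i$, using the constraint that each $Q_i$ places more than $\tfrac1{10}$ in each of $B$ and $C$ and that $b,c<\tfrac{9}{10}$. The choice between $Z=B$ and $Z=C$ is what gives the slack, since we only need one of them to succeed.

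The main obstacle is this last subcase: the interplay of a few medium-sized items (value in roughly $(0.3,0.9)$) spread across $A$, $B$, $C$ and the $Q_i$. There the naive loss-counting gives only about $\tfrac{4}{5}$, and the extra $\tfrac1{10}$ must come from exploiting both choices of $Z$ simultaneously. I would first try to import Feige--Norkin's lemma directly, checking that their hypotheses (the divider's bundles come from her MMS partition, $A$ is the unique top bundle of agent $l$, and the subdivider picks the better of the two merges) match our Case~2. Only if their statement does not transfer verbatim would I redo their counting with the explicit thresholds above.
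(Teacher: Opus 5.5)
Your proposal follows the paper's proof. You bound the subdivider's minimum below by the best two-way split of $M\setminus W$ over $W\in\{B,C\}$, using that the repartition routine never lowers the minimum. You then dispose of the case where $B$ or $C$ is already large and invoke Feige--Norkin's $\tfrac{9}{10}$ analysis of coarse atomic partial search for the remaining case; your extra partial case analysis is correct but does not close the hard subcase on its own, so, as in the paper, the Feige--Norkin citation carries the real weight.
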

  }
The proofs of these lemmas are deferred to~\Cref{subsec:technial-lemma}.

We also remark that the analysis of the  MMS approximation of our algorithm is tight, and prove this in \cref{tightmms}.

\subsection{{Verification of \cref{thm:three-agents-main}}}\label{subsec:cert}

{Our BoBW construction is designed to address key practical aspects of fairness and verifiability.  
First, each agent can independently verify that she was treated fairly according to the guarantees, without needing to trust the algorithm or know others’ valuations (which may be private).  
To ensure this, our distribution has a support of only six allocations. This small support allows every agent to explicitly and efficiently compute her expected value, {and} verify that it meets her proportional share. The agents can even jointly randomize over the allocations themselves (e.g., by rolling a fair die), rather than relying on the algorithm’s internal randomization.  
In contrast, prior BoBW algorithms such as \cite{azizfreemannew} require a support of size \(m \cdot n + 1\), which grows with the number of items \(m\); our construction, like that of \cite{babaioff2021bobw}, has constant-size support for \(n=3\) agents. 

Second, verifying ex-post envy-based guarantees is straightforward for EFX or EF1 allocations, but is a non-trivial problem for EEFX, as checking EEFX-satisfaction is not known to be polynomial-time computable.  
To address this, our algorithm outputs, for each agent, a \emph{certificate} of EEFX satisfaction: a partition that certifies her bundle as EFX-satisfying, allowing verification in polynomial time.  
Without such a certificate, an agent would have no efficient way to confirm or dispute her fairness guarantee.  

Finally, each agent is assured of strong value guarantees: she receives at least a \(\tfrac{9}{10}MMS\) value (and else she must be  EFX-satisfied).  
Since each agent can approximate her own \(MMS\) value using any FPTAS or PTAS, she can efficiently verify that she receives at least \((1-\varepsilon)\) of her calculated approximation of the MMS, or $\frac{9}{10}-\varepsilon$ of that value while meeting the corresponding EFX guarantee.}

We prove the following:

\begin{restatable}{theorem}{polycert}[Polynomial-Time Verifiability of Fairness Guarantees]

\label{lem:verification}
{For any fixed $\varepsilon > 0$, there exists an FPTAS that enables polynomial-time verification of following fairness guarantees for each agent $i \in \{1,2,3\}$}

\begin{enumerate}
    \item {Given the six supporting allocations, she can verify that a uniform randomization over the six allocations guarantees her proportional share ex-ante}.
    
    \item She is EEFX-satisfied with any allocation produced by the algorithm.
    \item Ex-post, she either receives at least a $(1-\varepsilon)$ fraction of the calculated approximate to her MMS value, or is EFX-satisfied while receiving at least a $(\tfrac{9}{10}-\varepsilon)$ fraction of the calculated approximate to her MMS.
    
\end{enumerate}
\end{restatable}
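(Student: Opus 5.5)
The plan is to run \cref{alg:three-agent-construction} three times (once per divider), but with every call to an exact MMS oracle replaced by an FPTAS producing a $(1-\varepsilon')$-MMS partition, for a suitably small $\varepsilon'$ (say $\varepsilon'=\varepsilon/2$). Each agent then receives, as public output, the six allocations together with her certificates $C^l_t$ for $l\in\{1,2,3\}$. She also runs the FPTAS herself on her own valuation to obtain a partition $\mathcal{Q}_i$ and the number $\widehat{MMS}_i=\min_{S\in\mathcal{Q}_i}v_i(S)$. Since $(1-\varepsilon')MMS_i\le \widehat{MMS}_i\le MMS_i$, this is the ``calculated approximate'' against which all three items are checked. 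Verification is then a sequence of polynomial checks.

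\textbf{Item 1} is immediate. Agent $i$ evaluates $v_i$ on her six bundles, averages them, and compares the result with $v_i(M)/3$. This takes $O(m)$ additions. The check must actually succeed, so I will rely on the ex-ante analysis of the approximate construction (\cref{thm:bobw-poly-approx} or \cref{thm:bobw-poly}, depending on which guarantee is being verified). The statement only requires that she \emph{can verify}, so computing the expectation suffices.

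\textbf{Item 2.} For each allocation the agent is given a certificate partition $C$ that contains her bundle $X_i$. She checks three things:
\begin{enumerate}
\item $C$ is a partition of $M$;
\item $X_i\in C$;
\item for every other bundle $Y\in C$, $v_i(X_i)\ge v_i(Y)-\min_{g\in Y}v_i(g)$.
\end{enumerate}
Each check is polynomial, and the last is equivalent to EFX-domination because valuations are additive: the binding item to remove is the least valuable one. For correctness I must confirm that the certificates produced really are EFX-satisfying for the approximate version. There are three cases.
\begin{itemize}
\item For the divider, $\mathcal{C}^l_l$ is the output of \realloc, which yields an EFX partition (\cref{sec:efx-reallocation}) regardless of whether the input MMS partition was exact.
\item For the subdivider, the certificate is $\{P^l_1,P^l_2,L_l\}$, which is covered by \cref{lem:reallocation-is-still-EFX}. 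That lemma's proof uses only the \realloc{} output and the improvement check in Line 3 of \cref{alg:partition}, both of which survive the approximation.
\item For a chooser or an agent in Case 2.A, the certificate is a partition in which she holds a top bundle, so \cref{obs:top-is-EFX} applies.
\end{itemize}

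\textbf{Item 3} is the delicate step. The agent compares $v_i(X_i)$ with $(1-\varepsilon)\widehat{MMS}_i$. If that test fails, she checks EFX-satisfaction directly in the realized allocation, again using min-item removal, and checks $v_i(X_i)\ge(\tfrac{9}{10}-\varepsilon)\widehat{MMS}_i$. Since $\widehat{MMS}_i\le MMS_i$, any true guarantee stated relative to $MMS_i$ implies the same guarantee relative to $\widehat{MMS}_i$. It therefore remains to show that the approximate algorithm guarantees the following: each agent gets at least $(1-\varepsilon)MMS_i$, or is the subdivider, in which case she is EFX-satisfied by \cref{lem:reallocation-is-still-EFX} and gets at least $(\tfrac{9}{10}-\varepsilon)MMS_i$.

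The main obstacle is this last bound. It requires rerunning the argument of \cref{lem:val-910-mms} with an approximate 3-partition from the divider and approximate 2-partitions from the subdivider, and tracking that the losses compose to an additive $O(\varepsilon')MMS_i$ rather than a multiplicative loss that blows up. I would first try to show that the proof of \cref{lem:val-910-mms} uses only two facts:
\begin{itemize}
\item the divider's bundles are EFX under $v_1$;
\item the subdivider's 2-split of $A\cup Z$ is at least the best exact split's value times $(1-\varepsilon')$.
\end{itemize}
If so, the bound $\tfrac{9}{10}(1-\varepsilon')MMS_l\ge(\tfrac{9}{10}-\varepsilon)MMS_l$ follows. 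Finally, the chooser and divider values are at least $(1-\varepsilon')$ of the exact-case guarantees, since the divider's bundle is from a $(1-\varepsilon')$-MMS partition and the chooser takes a top bundle, which is at least proportional. Choosing $\varepsilon'\le\varepsilon$ then closes item 3.
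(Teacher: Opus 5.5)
Your proposal is correct and matches the paper's proof. Like the paper, you reduce the statement to three polynomial-time checks (averaging over the six allocations, checking EFX-domination inside a supplied certificate partition, and comparing against a self-computed FPTAS estimate of the MMS with a fallback direct EFX check), and you inherit the truth of the underlying guarantees from \cref{thm:bobw-poly-approx} and \cref{remark1}. One caveat on your aside that the item~1 check ``must actually succeed'': no single FPTAS in the paper makes it succeed together with item~2, because \cref{thm:bobw-poly-approx} gives only $(1-\varepsilon)$-proportionality ex-ante and \cref{thm:bobw-poly} drops EEFX, but since the statement only asks for verifiability this does not affect your argument.
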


\begin{proof}
We verify each guarantee separately. 
\begin{enumerate}
    \item \textbf{Ex-ante proportionality.}  The proportional share is simply the value for $M$ divided by 3, so each agent can easily compute her proportional share in poly-time.
    Since the algorithm outputs only six allocations, 
    each agent can explicitly compute her expected value and verify in polynomial time that it meets her proportional share. 

    \item \textbf{EEFX satisfaction.} 
    {For $i \in \{1,2,3\}$, and every two allocations $\mathcal{X}^i,\mathcal{Y}^i$, a single certificate is presented to each agent. If the agent is not \emph{EFX}-satisfied in one of the two allocations, the certificate she receives serves as an \emph{EEFX-certificate} for the bundle assigned to her in that allocation. {We note that each certificate corresponds to a \emph{partition} that includes the bundle an agent receives in that allocation, together with a repartition of the remaining items into two bundles such that agent~\(i\)’s bundle \emph{EFX}-dominates both.}
    } 
    
    {Computing that the bundle EFX-dominates each of the two other bundles can clearly be done in polynomial time.}

    \item \textbf{Ex-post MMS guarantee.}  
    As each agent can compute a $(1-\varepsilon)$-approximation to her MMS value in polynomial time~\cite{Mittalsantaclause}, she can verify whether she receives at least $(1-\varepsilon)$-MMS. 
    Otherwise, the agent can verify in polynomial time that her allocation satisfies EFX and yields at least $(\tfrac{9}{10}-\varepsilon)$-MMS.
\end{enumerate}
\end{proof}
 
\subsection{FPTAS with Approximate Value Guarantees}\label{subsec:FPTAS}

{ As previously discussed, computing an exact MMS partition is known to be NP-hard \cite{Kurokawa2018}. 
Consequently, without access to the MMS partitions of the agents, our algorithm cannot be implemented in polynomial time. 
However, \citet{Mittalsantaclause} presented a fully polynomial-time approximation scheme (FPTAS) for the \emph{Santa Claus Problem} with a fixed number of agents. 
In the \emph{Santa Claus Problem}, there are $n$ agents and $m$ indivisible resources to be distributed among them. 
Each resource is indivisible, i.e., it can be allocated to exactly one agent, and each agent has a utility associated with every resource. 
The objective is to allocate the resources so as to maximize the minimum utility among all agents. 
We note that this problem is essentially equivalent to computing an MMS partition.

This FPTAS implies that, for any $\varepsilon > 0$, for the setting of three (or two) agents,
one can efficiently compute a $(1-\varepsilon)$-approximate MMS partition for each agent $i$. 
In \cref{sec:efx-reallocation}, we prove that such a partition can then be transformed into an $\mathcal{MMS}^{(\varepsilon)}-\mathcal{EFX}^i$ partition
allocation for agent~$i$ in polynomial time (\cref{lem:new_realloc}). 
We use this construction to prove the following:

\begin{restatable}{theorem}{threeagentsapprox}\label{thm:bobw-poly-approx}
Fix any $\varepsilon>0$.
For $n = 3$ agents with additive valuations, 
there exists an FPTAS that computes a distribution $\mu$ such that:

\begin{enumerate}
    \item $\mu$ is $(1-\varepsilon)$ proportional ex-ante. 
    \item In every deterministic allocation in the support of \( \mu \): 
    \begin{enumerate}
        \item One agent is EFX-satisfied and receives her proportional share.
        \item One agent is EFX-satisfied and receives at least  a $\frac{9}{10}-\varepsilon$ fraction of her MMS value.
        \item One agent is EEFX-satisfied and receives at least $(1-\varepsilon)$ of her MMS value.
    \end{enumerate}

\end{enumerate}

\end{restatable}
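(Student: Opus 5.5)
The plan is to run \cref{alg:three-agent-construction} (once for each choice of divider) with every exact MMS oracle replaced by the FPTAS of \citet{Mittalsantaclause}. Concretely, in Line~1 and inside \cref{alg:partition} we compute a $(1-\varepsilon')$-approximate MMS partition, for a suitably small $\varepsilon'=\Theta(\varepsilon)$. We then apply \realloc, using the polynomial-time version from \cref{lem:new_realloc}, to obtain an $\mathcal{MMS}^{(\varepsilon')}-\mathcal{EFX}^i$ partition. The output distribution is again uniform over the six allocations.

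Running time should be immediate. There are a constant number of FPTAS calls and \realloc\ calls per divider, and all remaining steps (case tests, choices, comparisons) are polynomial, as in \cref{lem:poly-time-3}.

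Next we re-verify \cref{lem:main-lemma} step by step, tracking which guarantees are purely envy-based and which are value-based.

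\textbf{Envy-based properties.} These are unaffected by the approximation, because they never used optimality of the partitions.
\begin{itemize}
\item The divider receives a bundle of an EFX-for-her partition, so she is EEFX-satisfied.
\item Choosers and agents receiving top bundles are EFX-satisfied by \cref{obs:top-is-EFX}.
\item A bundle worth at least the proportional share is EEFX by \cref{lem:bundle-at-least-prop-is-EEFX}.
\item The subdivider is EFX-satisfied by \cref{lem:reallocation-is-still-EFX}. That proof relies only on the \realloc\ output and on Line~3 of \cref{alg:partition}, which guarantees $\min v(P^l_\cdot)\ge \min\{v(A),v(Z_l)\}$ no matter which approximate 2-partition was used.
\end{itemize}
Hence item (a) (EFX plus proportional share for the agent taking a top bundle) and the EFX/EEFX parts of (b) and (c) carry over verbatim.

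\textbf{Value-based properties.}
\begin{itemize}
\item The divider now gets at least $(1-\varepsilon')MMS_i$.
\item For the subdivider, we redo \cref{lem:val-910-mms}. The Feige–Norkin style analysis compares the two-way repartition with the exact two-way MMS of $A\cup Z$. Using an FPTAS loses only a $(1-\varepsilon')$ factor there, and the divider's partition values $v_l(\cdot)$ enter only through the structure of $\{A,B,C\}$. So the bound should become $(1-\varepsilon')\tfrac{9}{10}MMS_l\ge(\tfrac{9}{10}-\varepsilon)MMS_l$.
\item For the ex-ante part, \cref{lem:guarantee-value} and \cref{lem:total-val-case-2} give $v_l(X^i_l)+v_l(Y^i_l)\ge v_l(M)-MMS_l$. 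The first is a statement about an arbitrary partition and needs no change. The second we expect to degrade to $v_l(M)-MMS_l$ minus an $O(\varepsilon')MMS_l$ term, since it used that the repartition's smaller side is at least the exact two-way MMS.
\end{itemize}
Averaging as in the proof of \cref{thm:three-agents-main} then gives
\[
\tfrac{2}{6}(1-\varepsilon')MMS_i+\tfrac{2}{6}\bigl(v_i(M)-(1+O(\varepsilon'))MMS_i\bigr)\ge (1-\varepsilon)PROP_i,
\]
using $MMS_i\le PROP_i$.

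\textbf{Main obstacle.} The hardest step is re-deriving \cref{lem:total-val-case-2} and \cref{lem:val-910-mms} with approximate partitions. Both case analyses may invoke exact MMS optimality in ways that are hidden, for instance through the identity $v(M)-MMS$ equalling the sum of the top two bundles, or through the comparison of the choice of $Z_l$ against the true optimum. I would go through each inequality in those proofs and replace every use of ``$\ge MMS$'' with ``$\ge(1-\varepsilon')MMS$''. Every loss must be additive in $\varepsilon' MMS_l\le\varepsilon' PROP_l$; choosing $\varepsilon'=\varepsilon/c$ for a suitable constant $c$ then absorbs all the losses.
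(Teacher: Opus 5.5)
Your proposal is correct and follows essentially the same route as the paper: replace the exact MMS computations in Line~1 of \cref{alg:three-agent-construction} and of \cref{alg:partition} by an FPTAS, note that the envy-based guarantees (including \cref{lem:reallocation-is-still-EFX}, which rests on \textsc{Realloc} and the guard in Line~3 of \cref{alg:partition}) never use optimality of the partitions, and absorb the constantly many optimality-dependent inequalities into a $(1-c\varepsilon')$ loss with $\varepsilon'=\varepsilon/c$, exactly as in \cref{remark1}. One small correction: the losses that actually arise in \cref{lem:total-val-case-2} are $\varepsilon'$ times the two-agent MMS of $A\cup Z_l$ under $v_l$, or $\varepsilon' v_l(Y^1_2)$, which are $O(\varepsilon')PROP_l$ but not obviously $O(\varepsilon')MMS_l$; this still suffices, since you only need $(1-\varepsilon)PROP_l$ ex ante.
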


We note this implies that every deterministic allocation in the support of $\mu$ is \EIMMX, and in each allocation each agent receives at least $\frac{9}{10}-\varepsilon$ of her MMS share, and is also EEFX-satisfied.

\begin{proof}

Our modified algorithm proceeds as follows:

{In \cref{alg:three-agent-construction}, we replace Line~1 (the MMS computation) with an FPTAS computation of an $(1-\varepsilon)-$MMS partition. 
Similarly, in \cref{alg:partition}, we replace Line~1 (the MMS computation) with an FPTAS computation of an $(1-\varepsilon)$-MMS partition.
We note that in \cref{alg:partition}, this modification preserves the following properties:
\begin{enumerate}
    \item Each of the output bundles is $EFX$ with respect to the other; and
    \item The minimal value of the partition does not deteriorate.
\end{enumerate}
Therefore, \cref{lem:reallocation-is-still-EFX} remains valid, implying that all envy-based guarantees are likewise maintained.
}

For share-based guarantees, each agent in every allocation retains the same guarantees up to a factor of $(1-c\cdot \varepsilon)$ for some constant $c > 0$ (see \Cref{remark1}). 
Therefore, all ex-ante and ex-post guarantees hold up to this factor.

\end{proof}

\begin{remark}\label{remark1}
Each of the finitely many inequalities in the proofs of \Cref{lem:guarantee-value}, \Cref{lem:total-val-case-2}, \Cref{lem:reallocation-is-still-EFX}, \Cref{lem:val-910-mms} that relies on the optimality of an exact MMS partition degrades by at most a multiplicative factor of $(1 - \varepsilon)$ when the partition is replaced by a $(1 - \varepsilon)$-approximate one, while all remaining inequalities -- including all EFX/EEFX guarantees -- hold verbatim. As each derivation contains a constant number $c$ of such inequalities, all value guarantees hold up to a factor of $(1 - c\cdot \varepsilon)$; running the algorithm with $\varepsilon' =  \varepsilon / c$ yields \cref{thm:bobw-poly-approx} as stated, and the running time remains $poly(|input|, 1/\varepsilon)$.
\end{remark}

\section{Optimal Poly-time BoBW Result for Two Additive Agents}\label{sec:two-agents}

In this section, we present our optimal poly-time BoBW fairness result for two  agents with additive valuations over indivisible goods. 
Previous polynomial-time BoBW algorithms, such as those of \citet{garg2024bestofbothworldsfairnessenvycycleeliminationalgorithm} and \citet{bu2024bestofbothworldsfairallocationindivisible}, both achieve ex-ante EF and ex-post EFX, but each only ensures a constant fraction of the maximin share (MMS) (and we observe that each may lose a constant fraction as well (\cref{{sec:no-min-bound}})). 
In contrast, {by applying a small modification to the algorithm of \cite{bu2024bestofbothworldsfairallocationindivisible}}, we guarantee the same envy-based fairness benchmarks while improving the share-based guarantee to $(1-\varepsilon)$-MMS, in polynomial time. 
We discuss the techniques and limitations of these prior approaches in greater detail in \cref{sec:no-min-bound}.

We first present our main proposition for two-agent settings:
\begin{restatable}{proposition}{twoagentsmainprop}
    \label{the:main-2agents} Consider settings with two agents who have additive valuations over a set 
of indivisible goods. 

There exists a polynomial-time algorithm that given two {partitions} $\mathcal{A}^i=\{A^i_1,A^i_2\}$ for $i\in \{1,2\}$, outputs a distribution over at most two allocations satisfying the following properties:

\begin{itemize}
    \item \textbf{Ex-ante envy-free:} The randomized allocation is ex-ante envy free (and thus ex-ante proportional).

    \item \textbf{Ex-post EFX:} Every allocation in the support of the distribution is EFX. 
    
    \item \textbf{Ex-post value guarantee:} Every allocation in the support of the distribution guarantees each agent \( i \) a value of at least $$\min\{v_i(A^i_1),v_i(A^i_2)\}$$
\end{itemize}
\end{restatable}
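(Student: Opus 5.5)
The plan is to reduce everything to a careful version of cut-and-choose, with each agent cutting according to her given partition. The main care goes into the case where naive uniform mixing fails to be envy-free.

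\textbf{Step 1 (preprocessing).} Apply \realloc\ (\cref{alg:realloc}) to each $\mathcal{A}^i$ with respect to $v_i$. As used in \cref{alg:partition}, this yields a partition whose two bundles EFX-dominate each other under $v_i$. I will check in \cref{sec:efx-reallocation} that the worst-case value $\min\{v_i(A^i_1),v_i(A^i_2)\}$ does not decrease. If it could decrease, I would fall back to keeping the better of the two, exactly as Line~3 of \cref{alg:partition} does. So we may assume each $\mathcal{A}^i$ is EFX for $v_i$. Write $\alpha_i=\min_t v_i(A^i_t)\le v_i(M)/2$.

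\textbf{Step 2 (easy cases: a single EF allocation).} With two additive agents, an allocation is envy-free iff each agent gets at least half her value of $M$. Envy-freeness implies EFX, and it also gives value $v_i(M)/2\ge\alpha_i$. So whenever some allocation built from the bundles of $\mathcal{A}^1$ or $\mathcal{A}^2$ gives each agent one of her top bundles, we output it deterministically. This happens when, in $\mathcal{A}^1$ or in $\mathcal{A}^2$, the two agents do not have the same unique favourite bundle.

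\textbf{Step 3 (main case: cut-and-choose with reweighting).} Otherwise both agents strictly prefer the same bundle $X\in\mathcal{A}^1$ and the same bundle $Y\in\mathcal{A}^2$. Consider the two cut-and-choose allocations:
\begin{itemize}
\item $\mathcal{R}^1=(M\setminus X,\,X)$, where agent~1 cuts and agent~2 chooses;
\item $\mathcal{R}^2=(Y,\,M\setminus Y)$, where agent~2 cuts and agent~1 chooses.
\end{itemize}
In $\mathcal{R}^1$, agent~1 gets a bundle of her EFX partition, so she gets at least $\alpha_1$ and is EFX-satisfied. Agent~2 gets her top bundle, so by \cref{obs:top-is-EFX} she is EFX and above $v_2(M)/2$. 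The situation in $\mathcal{R}^2$ is symmetric, so both allocations satisfy all ex-post requirements.

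For ex-ante envy-freeness we need a probability $p$ with
\begin{align*}
p\,v_1(M\setminus X)+(1-p)\,v_1(Y) &\ge v_1(M)/2,\\
p\,v_2(X)+(1-p)\,v_2(M\setminus Y) &\ge v_2(M)/2.
\end{align*}
Since $v_1(M\setminus X)<v_1(M)/2<v_1(Y)$, the first constraint is an upper bound on $p$. Likewise, the second is a lower bound on $p$. A feasible $p$ exists iff
\[
\bigl(v_1(Y)-\tfrac{v_1(M)}2\bigr)\bigl(v_2(X)-\tfrac{v_2(M)}2\bigr)\ \ge\ \bigl(\tfrac{v_1(M)}2-v_1(M\setminus X)\bigr)\bigl(\tfrac{v_2(M)}2-v_2(M\setminus Y)\bigr).
\]
In words, each chooser's surplus must cover the other cutter's deficit.

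\textbf{Main obstacle.} The hard step is the case where this inequality fails. By symmetry, suppose agent~1's deficit in her own partition is large: $v_1(X)-v_1(M)/2$ exceeds $v_1(Y)-v_1(M)/2$, so $v_1(X)>v_1(Y)$. In that case agent~1 values partition $\mathcal{A}^2$ as more balanced than her own. Concretely, $\min_t v_1(A^2_t)=v_1(M\setminus Y)>v_1(M\setminus X)=\alpha_1$.

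My plan is then to \emph{replace} $\mathcal{A}^1$ by $\mathcal{A}^2$, re-EFXed for $v_1$ by \realloc. This is the adoption idea mentioned in the introduction. The replacement strictly increases agent~1's worst-case value, so the ex-post guarantee $\ge\alpha_1$ is preserved, and we restart from Step~2 with the new pair of partitions.

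Two things remain to verify. First, termination: the pair $(\alpha_1,\alpha_2)$ is monotone, and since the candidate partitions are finitely many re-EFXed versions, we expect at most a constant number of adoptions. Second, and more delicate, once no agent can profitably adopt the other's partition, the displayed inequality must hold. I would prove this by comparing each chooser's surplus to the other cutter's deficit. The tool is the non-adoption conditions $v_1(M\setminus Y)\le v_1(M\setminus X)$ and $v_2(M\setminus X)\le v_2(M\setminus Y)$, which after rearranging make each surplus at least the matching deficit.

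The output is then the mixture of $\mathcal{R}^1$ and $\mathcal{R}^2$ with this $p$, a distribution over at most two allocations. Everything runs in polynomial time given the two input partitions.
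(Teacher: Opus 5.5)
Your overall structure is the paper's. The paper also seeds each agent with her given partition made EFX for her (via \textsc{LocalSearch} instead of \textsc{Realloc}; neither lowers the worst bundle) and returns a single EF allocation when the agents do not share a unique favourite bundle. Otherwise it mixes the two cut-and-choose allocations, and lets an agent replace her partition by a re-EFXed copy of the other's only when this strictly raises her worst bundle. Your feasibility analysis is correct. The two non-adoption conditions say that each agent's surplus as chooser is at least her deficit as cutter, so at termination $p=\tfrac12$ already works; that is what the paper uses. Your general $p$ is harmless but unnecessary.

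The gap is the running time, which the proposition explicitly requires and your proposal leaves as ``we expect at most a constant number of adoptions''. Nothing supports that expectation, because your loop can ping-pong:
\begin{itemize}
\item agent~1 adopts $Q=\textsc{Realloc}(\mathcal{A}^2,v_1)$, which is generally a \emph{new} partition;
\item agent~2 may strictly prefer the worst bundle of $Q$ to that of $\mathcal{A}^2$, and adopt $\textsc{Realloc}(Q,v_2)$, yet another new partition;
\item and so on.
\end{itemize}
Monotonicity of $(\alpha_1,\alpha_2)$ together with finiteness of the set of partitions gives only termination. The number of rounds is then bounded by the number of distinct worst-bundle values, which can be exponential in $m$.

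The paper does not prove such a bound from scratch either. It realises exactly this adoption loop as Algorithm~3 of Bu et al.: re-EFXing is done with \textsc{LocalSearch}, and the algorithm returns immediately once the re-EFXed partition is EFX under both valuations. It then inherits the polynomial running time established there, arguing that its two changes (the seeded initialization and the strict-improvement guard) leave that analysis intact. To complete your argument, either switch to that loop and appeal to their analysis, or give your own polynomial bound on the number of adoption rounds for the \textsc{Realloc}-based version.
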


{We present} the algorithm and the proof of \cref{the:main-2agents} in \cref{sec:2agents}. We next show that
we can use this proposition, in combination with any FPTAS algorithm for computing the MMS  (e.g., the FPTAS of \cite{woegingerFPTAS}, or of \cite{Ibarra1975}, which can be adapted to compute a \((1-\varepsilon)\)-MMS partition into two bundles), to prove the following theorem:

\twoagentsmain*

\begin{proof}
    Consider a setting with two agents, 
    where each agent $i\in \{1,2\}$ has an additive valuation $v_i:2^M\rightarrow \mathbb{R}_{\geq0}$  over the set $M$ of $m$ items. 
    
   We first run any FPTAS algorithm that computes the MMS separately for each agent's valuation on $M$. 
    For $i\in \{1,2\}$, let $\{A^i_1,A^i_2\}$ denote the partition of the FPTAS  on valuation $v_i$. 
 
    We then apply \cref{the:main-2agents} on these two partitions. 
    We observe that each of the resulting allocations satisfy the needed value guarantee,  implying \Cref{cor:main}. 
\end{proof}

Finally, we note that \cite{babaioff2022fairshares} introduced a \textbf{PTAS} for computing the MMS partition in the two-agent case (described in \cref{sub:PTAS}), guaranteeing each agent a value of at least \((1-\varepsilon)\)-\(\mathrm{MMS}\). 
In fact, the analysis of \cite{babaioff2022fairshares} implicitly yields a slightly stronger result. 
By refining their proof, we show that the algorithm satisfies the following bound:
\[
\min\{v_i(A^i_1),v_i(A^i_2)\}
\;\geq\;
\min\left\{\mathrm{MMS}_i,\,(1-\varepsilon)\cdot\mathrm{PROP}_i\right\}
\;\geq\;
(1-\varepsilon)\cdot \mathrm{MMS}_i.
\]
Consequently, we can combine this result with \cref{the:main-2agents} to obtain slightly stronger ex-post share-based guarantee of $\min\left\{\mathrm{MMS}_i,\,(1-\varepsilon)\cdot\mathrm{PROP}_i\right\}$. 
A detailed proof of this refined bound appears in the appendix (\Cref{lem:min-ptas}).

\section{Near-Optimal Poly-time BoBW for Three  Additive Agents}\label{sec:three-agents-poly}

{We now return to the case of three agents with additive valuations and show that our two-agent procedure (\cref{alg:two-agents-efx}) can be leveraged to guarantee \textbf{exact} ex-ante proportionality, all ex-post share-based guarantees up to a factor of \((1-\varepsilon)\), and \EIMMX.}

\cref{thm:bobw-poly-approx} (of \cref{sec:three-agents-existence}) captures the variant of our algorithm which runs in polynomial time, but its ex-ante guarantee is not \emph{exactly} proportional (see example in \cref{subsec:ptas-fail}).
Getting only approximate proportionality rather than an exact one is particularly problematic if the agents realize in retrospect that they have identical valuations -- in that case achieving ex-ante proportionality is the same as ex-ante envy-freeness, so {it} is particularly desirable. 
Additionally, when valuations are identical, getting ex-ante envy-freeness with $(1-\varepsilon)$-MMS approximation is simple\footnote{To get ex-ante envy-freeness with $(1-\varepsilon)$-MMS approximation for identical valuations, simply randomly assign bundles to agents in a partition that is a $(1-\varepsilon)$-MMS approximation (which can be computed efficiently).}, so in such a setting the agents would be particularly disappointed if the allocation is not ex-ante envy-free (see \cref{subsec:approx-ptas-fail}  for an example that if such identical agents use different algorithms to compute partitions that {are} $(1-\varepsilon)$-approximate to the MMS, the algorithm we present in \cref{thm:bobw-poly-approx} might result in such ex-ante envy).    
Moreover, from {an} ex-ante perspective, proportionality is the best value guarantee agents can hope for, and thus we would like to improve the result and get exact proportionality. 
Crucially, this will allow us to make our poly-time algorithm comparable to the result of \cite{babaioff2021bobw} which obtains exact proportionality but only $1/2$-MMS. Below we show that it is possible to get a BoBW result with ex-ante proportionality and with a substantially improved MMS approximation of $9/10-\varepsilon$. Moreover, the only agent that is not getting $(1-\varepsilon)$ of her MMS must be EFX-satisfied.   

{There are two main reasons why \cref{alg:three-agent-construction} fails to guarantee exact ex-ante proportionality when executed as in \cref{thm:bobw-poly-approx}.  
The first arises in the simple case where two agents perform a \emph{cut-and-choose} procedure over two bundles.  
As shown in \cref{subsec:ptas-fail}, this procedure may not only fail to guarantee that each agent receives the average value of the two bundles, but may also violate \emph{EFX} satisfaction.  

The second limitation stems from the fact that, since we cannot compute each agent’s \(MMS\) partition 
exactly, some \textbf{other} partition encountered during the algorithm may offer the agent a better worst-case guarantee than her initial partition.  
We note that, to guarantee proportionality in \cref{thm:three-agents-main}, we rely on the guarantees of \cref{lem:main-lemma}, which ensure that in allocations where an agent is not the divider, she receives the sum of the top two bundles whose total value is at least as high as those in her MMS partition.  
However, if the initial partition is no longer the optimal one -- i.e., if some other partition offers a better worst-case guarantee -- this property may fail to hold.

Consequently, the agent may lose a fraction of her proportional share.  

We address these two issues as follows:
{
(1) First, we enable an agent to "replace" her initial MMS partition if she finds out that another agent's initial partition gives her a strictly \emph{better worst-case value guarantee},
(2) second, we invoke our two-agent procedure, \cref{alg:two-agents-efx}, to resolve the cut-and-choose issue, and  
(3) third, we introduce \emph{adoption stages}, in which agents may adopt a \emph{better worst-case-value guarantee} partition encountered throughout the execution.
}

We thus present a non-trivial modification of our poly-time BoBW algorithm for three agents (the FPTAS presented in \cref{thm:bobw-poly-approx}), which, although no longer guaranteeing EEFX, does result in a polynomial-time Best-of-Both-Worlds (BoBW) algorithm that guarantees \textbf{exact} ex-ante proportionality.

\begin{restatable}{theorem}{threeagentspoly}[Polynomial-time BoBW for three agents]\label{thm:bobw-poly}
  Fix any $\varepsilon>0$.
For $n = 3$ agents with additive valuations, 
there exists a fully polynomial-time approximation scheme (FPTAS) that computes a distribution $\mu$ such that:

\begin{enumerate}
    \item $\mu$ is proportional ex-ante. 
    \item {In every allocation in the support:}
    \begin{enumerate}
        \item One agent is EFX-satisfied and receives her proportional share.
        \item One agent receives at least a $\frac{9}{10}-\varepsilon$ fraction of her MMS value and is EFX-satisfied, or receives at least $(1-\varepsilon)$ of her MMS.
        \item One agent receives at least $(1-\varepsilon)$ of her MMS.
    \end{enumerate}

\end{enumerate}

\end{restatable}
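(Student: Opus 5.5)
We would prove \cref{thm:bobw-poly} by running a modified version of \cref{alg:three-agent-construction} with FPTAS-computed partitions. Throughout, we maintain for each agent $i$ a \emph{reference partition} $\mathcal{Q}^i$ of $M$ into three bundles. Ex-ante proportionality will follow exactly as in the proof of \cref{thm:three-agents-main} if the following hold with respect to the \emph{same} partition $\mathcal{Q}^i$:
\begin{enumerate}
\item In the two allocations where $i$ is the divider, $i$ receives at least $\min_{S\in\mathcal{Q}^i} v_i(S)$.
\item In each pair where $i$ is not the divider, her total value is at least $v_i(M)-\min_{S\in\mathcal{Q}^i} v_i(S)$.
\end{enumerate}
Averaging these gives $v_i(M)/3$ with no $\varepsilon$ loss, because the $\min$ terms cancel. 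So the plan is to build $\mathcal{Q}^i$ so that these two bounds refer to one common quantity, and to show that $\min_{S\in\mathcal{Q}^i}v_i(S)\ge(1-\varepsilon)MMS_i$ for the ex-post claims.

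The first step is initialization. Each agent computes a $(1-\varepsilon)$-MMS partition via the FPTAS and applies \realloc\ (\cref{lem:new_realloc}). We then run a fixed-point \emph{adoption} loop: whenever some partition that has appeared so far gives agent $i$ a strictly better worst-case-value guarantee than her current $\mathcal{Q}^i$, she replaces $\mathcal{Q}^i$ by it (after \realloc). Since there are only polynomially many candidate partitions (the three initial ones and the two-bundle repartitions produced in Case~2), this loop terminates in polynomial time.

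The point of adoption is the following. \cref{lem:guarantee-value} was proved using optimality of the MMS partition: for any partition $\{S_1,S_2,S_3\}$, $v(M)-\max_j v(S_j)\ge\cdots$. In the approximate setting we only need it for partitions actually used by the algorithm, namely the other dividers' partitions. There, after adoption, $\min_{S\in\mathcal{Q}^i}v_i(S)\ge\min_j v_i(S_j)$. This gives $v_i(S_1)+\max(v_i(S_2),v_i(S_3))\ge v_i(M)-\min_j v_i(S_j)\ge v_i(M)-\min_{S\in\mathcal{Q}^i}v_i(S)$, which yields property~2 in Case~1 and Case~2.A.

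The second step is Case~2.B, where the subdivider repartitions $A\cup Z_j$ and the chooser picks. We replace this cut-and-choose with the two-agent procedure of \cref{the:main-2agents}, applied to $A\cup Z_j$ with the two agents' two-way partitions as inputs. Its ex-ante envy-freeness gives each of agents 2 and 3 exactly half of $A\cup Z$ in expectation, and each allocation is EFX between them. I would then redo \cref{lem:total-val-case-2}: a non-divider's total over the pair is at least $v(A\cup Z_j)$ restricted appropriately, and this must be compared to $v(M)-\min\mathcal{Q}^i$. I expect this to require a further adoption step. If the three-way partition $\{P_1,P_2,L\}$ seen in 2.B beats $\mathcal{Q}^i$ for an agent, she adopts it, and the algorithm restarts with updated references. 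Monotonic improvement of the worst-case guarantee bounds the number of restarts.

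The third step is the ex-post guarantees. The divider gets a bundle of her reference partition, so she receives at least $(1-\varepsilon)MMS$. The chooser in 2.B gets at least half of a set she values above $2/3$ of $M$, hence her proportional share, and is EFX-satisfied by \cref{obs:top-is-EFX} together with two-agent EFX. The subdivider gets at least $\min$ of her two-way partition, which is at least $(9/10-\varepsilon)MMS$ by \cref{lem:val-910-mms} with \cref{remark1}. She is EFX-satisfied by \cref{lem:reallocation-is-still-EFX}, or else she gets $(1-\varepsilon)MMS$ when adoption changed her role. EEFX for the divider is dropped, which is why the theorem omits it.

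The main obstacle is the interaction between adoption and the role structure. After an agent adopts another's partition, the proof must show that her divider allocations still hand her a bundle of $\mathcal{Q}^i$ while the other pairs still give $v_i(M)-\min\mathcal{Q}^i$. The argument must also confirm that adoption never destroys the subdivider's EFX/$9/10$ guarantee. I would first handle this by defining adoption so that an adopting agent's divider-pair is recomputed from the adopted partition, and then verifying the potential-function termination argument.
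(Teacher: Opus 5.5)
Your high-level plan matches the paper's: tie the divider-side and non-divider-side bounds to one common quantity by letting agents adopt partitions with a better worst-case guarantee, and repair the approximate cut-and-choose with the two-agent procedure. The gap is in how you implement adoption. It gives no polynomial running-time bound, and this is exactly the ``main obstacle'' you leave open.

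You have an adopting agent apply \textsc{Realloc}, re-run \cref{alg:three-agent-construction} on the adopted partition, and then restart. Each restart makes fresh FPTAS calls and repartitions, so new three-way partitions keep appearing, and each is a potential adoption target for another agent. Your count of ``polynomially many candidate partitions'' covers only the first pass. Strict increase of a real-valued worst-case guarantee gives no polynomial bound on the number of restarts, and an improvement threshold would only give approximate proportionality.

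The missing idea is an adoption step that creates no new partitions. In the paper, agent $k$ may adopt only one of her certificates $\mathcal{C}^r_k$, and it is used verbatim: $k$ takes the leftover bundle while the other two agents pick from it in both orders. The Stage-1 partitions $\mathcal{M}^t$ are used unmodified (no \textsc{Realloc}), so they are never worth adopting. Hence an adopted certificate is always one in which $k$ was the chooser. In such a partition the original divider has two bundles worth at least $(1-\varepsilon)MMS$: her own bundle, and the better half of a repartition of two of her bundles. The original subdivider has two bundles worth at least $(\tfrac{9}{10}-\varepsilon)MMS$ that are EFX for her. So whoever picks second is still covered, with no recomputation (\cref{lem:copy-stage-guarantees}). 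Since adoption only permutes partitions already present after Stage~3, two rounds reach the fixed point (\cref{obs:adoption-no-new}).

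Second, your Case~2.B repair is only defined when both subdividers split the same pair. If $Z_2\neq Z_3$ there is no common ground set. The natural instantiation is to run the procedure on $A\cup Z_j$ with $\mathcal{P}^k_{Z_j}$ as the chooser's input, and this can fail. Your claim that the chooser values $A\cup Z_j$ at least $\tfrac23 v_k(M)$ follows from \cref{obs:repartition-larger} only when $L_k=L_j$. Otherwise $k$ may be left with about half of a set worth less than $\tfrac23 v_k(M)$ while valuing $L_j$ much more, losing PROP, the $(\tfrac{9}{10}-\varepsilon)$ bound, and EFX.

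The paper instead calls \cref{alg:two-agents-efx} only when subdivider $r$ strictly prefers the minimum of the other subdivider's bipartition $\{Y^i_r,Y^i_l\}$. It then feeds that same bipartition to both agents and gives the divider $T_i=Y^i_i$ in both allocations. Each of $r,l$ then values both $Y^i_r$ and $Y^i_l$ at least as much as $T_i$, so the first picker still gets a top bundle and the second picker stays EFX towards $T_i$. When the condition fails, the original cut-and-choose is kept. The needed bound for that pair then holds because the agent's own bipartition has minimum at least both the other subdivider's minimum and her value for the divider's leftover bundle. This is combined with her having considered her chooser-certificate for adoption (case 3(b) in the proof of \cref{lem:two-copy-prop}).
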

We note this implies that every deterministic allocation in the support of $\mu$ is \EIMMX, and in each allocation each agent receives at least $\frac{9}{10}-\varepsilon$ of her MMS share.

We now prove \Cref{thm:bobw-poly}.

The algorithm works as follows:

\paragraph{Stage 1: Picking $ (1-\varepsilon) \cdot MMS$ Partitions.} 

{
First, for each agent $t \in \{1,2,3\}$, we generate a $(1-\varepsilon) \cdot MMS$ partition. 
{As previously mentioned, such a partition 
can be computed using any FPTAS or PTAS (e.g, the FPTAS presented by \citet{Mittalsantaclause} for the \emph{Santa Claus problem} with a fixed number of agents).}

Each agent~$t$ then evaluates all three partitions and selects the one whose minimal-valued bundle (according to her own valuation) is maximal (the {best} {worst-case-value guarantee} partition). We denote this partition as $\mathcal{M}^t$. 
After this selection step, it follows that no agent prefers another agent's partition in terms of the minimal-valued bundle. 
}
\paragraph{Stage 2 : Run \cref{alg:three-agent-construction}.}{We then execute the three-agent algorithm \cref{alg:three-agent-construction} as described in \cref{thm:bobw-poly-approx}, 
modifying Line~2 (the initialization of $\{A,B,C\}$) so that $\{A,B,C\}$ equals $\mathcal{M}^t$ for each agent $t$. We do not run \cref{alg:realloc} in this case. 
In other words, instead of initializing Line~2 independently for each agent with the output of \cref{alg:realloc} on an approximate MMS allocation generated by the FPTAS, we instead set $\{A,B,C\}$ in Line 2 to equal the agent's chosen partition $\mathcal{M}^t$. 

The algorithm outputs six allocations, consisting of $\mathcal{X}^i$ and $\mathcal{Y}^i$, and 9 certificates $\mathcal{C}^j_i$  for every $i,j\in [3]$.

\paragraph{Stage 3 -- Using \cref{alg:two-agents-efx} to fix cut-and-choose errors.}

As previously noted, when partitioning bundles between two agents using the approximate cut-and-choose method, we may fail to guarantee \emph{exact} \emph{ex-ante proportionality}.  
To address this, we \emph{fix} any such problematic partitions using our two-agent subroutine (\cref{alg:two-agents-efx}).  
We note that, by construction, the cut-and-choose procedure is only applied in \emph{Subcase~2.B}.  
Accordingly, we apply the two-agent procedure as follows: 

For each agent $r \in \{1,2,3\}$, if there exists an agent $i \neq r$ such that $\mathcal{X}^i$ and $\mathcal{Y}^i$ were allocated according to Subcase~2.B:
{

Let $l=[3]\setminus \{i,r\}$ be the other non-dividing agent.
W.l.o.g., agent~$r$ is the \textbf{subdivider} in $\mathcal{X}^i$, and agent~$l$ is the \textbf{subdivider} in $\mathcal{Y}^i$.
If $$\min_{Z\in \{X_r^i, X_l^i\}}v_r(Z) < \min_{Z\in \{Y_r^i, Y_l^i\}}v_r(Z)$$
(i.e -- agent $r$ prefers the minimum-valued bundle in the two-bundle partition $\{Y_r^i, Y_l^i\}$, created when agent $l$ was the subdivider), we proceed to construct new $\mathcal{X}^i$ and $\mathcal{Y}^i$ allocations as below. (Else, we continue to Stage 4). 
We invoke the two-agent procedure (\cref{alg:two-agents-efx}) with the following input:
\begin{enumerate}
    \item $v_1 = v_r, v_2=v_l$
    \item $\mathcal{X}^1=\mathcal{X}^2=\{Y^i_r,Y^i_l\}$
\end{enumerate}
We note that this implies (by \cref{obs:repartition-larger}) that for both agents $r$ and $l$, bundles $Y^i_r,Y^i_l$ are the top valued bundles in $\mathcal{Y}^i$.
We denote $T_i=Y^i_i$ and $\overline{T}_i = M\setminus T_i = Y^i_r \cup Y^i_l$. 

We recall that \cref{alg:two-agents-efx} outputs
two partitions $\{A_1^l,A_2^l\}, \{A_1^r,A_2^r\}$ {over the set of items $\overline{T}_i=Y^i_r \cup Y^i_l$} (it is possible that the two partitions are identical), where agent $l$ chooses a bundle first from $\mathcal{A}^r$, and vice versa. 

We construct $\mathcal{X}^i$ and $\mathcal{Y}^i$ using the two resulting allocations, as follows:
\begin{enumerate}
    \item $\mathcal{X}^i$: Agent $l$ chooses her favorite bundle in $\{A_1^r,A_2^r\}$, agent $r$ receives the remaining bundle in $\{A_1^r,A_2^r\}$, and agent $i$ receives $T_i$. We denote $\mathcal{C}^i_l=\mathcal{X}^i$.
     \item $\mathcal{Y}^i$: Agent $r$ chooses her favorite bundle in $\{A_1^l,A_2^l\}$, agent $l$ receives the remaining bundle in $\{A_1^l,A_2^l\}$, and agent $i$ receives $T_i$. We denote $\mathcal{C}^i_r=\mathcal{Y}^i$.
\end{enumerate}
}

In \cref{lem:stage-3-guarantees}, we show that all ex-post guarantees from \cref{thm:bobw-poly-approx}, except the EEFX guarantee, hold for the allocations constructed up to stage~3.

\paragraph{Stage 4 -- Agents Adopting a Permuted Partition of Another  (First time).}

Next, each agent \( k \in \{1,2,3\} \)  considers a set of candidate partitions, composed of the two certificates $\mathcal{C}_k^r$ for $r \in  [3]\setminus \{k\}$; 
Among these, she selects a candidate partition whose minimal-valued bundle has the highest value -- i.e, it has the best worst-case-value guarantee. We denote this partition as $\mathcal{S}^k$.
\\
If this minimal value exceeds the current bundle received by agent $k$ in $\mathcal{X}^k$ and $\mathcal{Y}^k$, i.e., 
\[ 
\min_{Z \in \mathcal{S}^k} v_k(Z) \;>\; v_k(X^k_k)
\ \ \ or\ \ \  
\min_{Z \in \mathcal{S}^k} v_k(Z) \;>\;  v_k(Y^k_k),
\]
then, the agent $k$, which we now also call \emph{the copier}, replaces her two derived allocations \( \mathcal{X}^k \) and \( \mathcal{Y}^k \), by a permutation of $\mathcal{S}^k$ as follows:

\begin{enumerate}
    \item In \( \mathcal{X}^k \), we pick any agent \( j \neq k \) (\emph{first-chooser}) and she selects her favorite bundle first, the remaining agent $i$ (not $k,j$) (\emph{second-chooser}) chooses second, and $k$ get the third bundle from $\mathcal{S}^k$. We denote $\mathcal{C}^k_j=\mathcal{X}^k$. 
    \item In \( \mathcal{Y}^k \), the roles of agents \( i \) and \( j \) are reversed, so the two switch their order of picking bundles from $\mathcal{S}^k$, and $k$ still gets the remaining bundle. We denote $\mathcal{C}^k_i=\mathcal{Y}^k$.
\end{enumerate}

\paragraph{Stage 5:  Agents Adopting a Permuted Partition of Another  (Second time).} Again, each agent \( k \in \{1,2,3\} \)  considers {a set of candidate partitions, composed of} the two certificate (partitions) $\mathcal{C}_k^r$ for $r \in [3]\setminus \{k\}$ {(which we note may have been updated during Stage 4)};  {We then proceed in an identical manner to Stage 4.}

In \cref{sec:3agents_poly_proof}, we prove that this algorithm guarantees \emph{ex-ante proportionality} for every agent (\cref{lem:two-copy-prop}),  
and that each allocation in the resulting distribution upholds all previously established guarantees as in \cref{thm:bobw-poly-approx} (\cref{lem:copy-stage-guarantees}).

\section{Conclusion}\label{sec:conclusion}

{
We have established that Best-of-Both-Worlds fairness guarantees can achieve substantially stronger outcomes than previously demonstrated, particularly for settings with few agents. 

For three agents, we made substantial progress on multiple fronts. We proved the existence of a distribution that is ex-ante proportional while guaranteeing ex-post EEFX and $\frac{9}{10}$-MMS, coming close to known upper bounds. Crucially, we introduced and achieved the \IME criterion, ensuring that any agent who falls short of her MMS receives instead the strongest possible envy-based guarantee. We complemented these existential results with an FPTAS that preserves nearly all guarantees while achieving exact ex-ante proportionality (and \EIMMX) -- a significant improvement over previous polynomial-time BoBW constructions, that guarantee only $\frac{1}{2}$-MMS \cite{babaioff2021bobw}.

{As a building block for our three-agent FPTAS, we presented the first Fully Polynomial-Time Approximation Scheme (FPTAS) that, for two agents, simultaneously achieves ex-ante envy-freeness, ex-post EFX, and \((1-\varepsilon)\)-MMS guarantees.  
This result is essentially tight for poly-time algorithms, and it matches the strongest feasible fairness notions across all three dimensions.}

Beyond these algorithmic contributions, our work advances the practical applicability of BoBW algorithms. First, our constructions maintain small support sizes (at most six allocations), enabling agents to verify ex-ante fairness explicitly and even conduct their own randomization if desired. Second, we provide polynomial-time verifiable certificates for computationally complex properties like EEFX, allowing agents to independently confirm that fairness guarantees hold without trusting the algorithm or accessing other agents' valuations.

Several directions remain open for future work. The most pressing challenge is extending our results beyond three agents while maintaining strong guarantees -- particularly achieving BoBW distributions that are ex-ante proportional and ex-post EEFX with near-optimal MMS approximations. For three agents, strengthening the ex-ante guarantee to envy-freeness (rather than just proportionality) while preserving strong ex-post guarantees remains an interesting open problem. Additionally, it would be valuable to close the gap between our $\frac{9}{10}$-MMS guarantee and the upper bound of $\frac{39}{40}$. {Beyond three agents, it will be interesting} to determine whether our \IMMX criterion can be guaranteed for $n \geq 4$ agents, since the existence of EFX  allocations remains unresolved for four or more agents.

Our results demonstrate that the apparent tension between ex-ante and ex-post fairness is less severe than the gaps left by prior BoBW results, at least for small numbers of agents. By carefully exploiting the structure of additive valuations and developing new algorithmic techniques, we have shown that it is possible to achieve near-optimal fairness guarantees across multiple dimensions simultaneously, while maintaining computational efficiency and practical verifiability.

}

\printbibliography

\appendix
\crefalias{section}{appendix}
\crefname{section}{appendix}{appendices}
\Crefname{section}{Appendix}{Appendices}

\section{EFX Re-allocation Algorithm for $n$ Agents} \label{sec:efx-reallocation}

In this section, we describe \textsc{Realloc}, a minor adaptation of Algorithm~6.1 introduced by \citet{PlautRoughgarden2020} and, for completeness, prove the following lemma:
\begin{lemma}\label{lem:new_realloc}
Fix an additive valuation $v$, any number of agents $n$, and a partition $(B_1,B_2,\ldots,B_n)$. 
Then there is a poly-time algorithm that transforms $(B_1,B_2,\ldots,B_n)$ to 
a partition 

$(B_1',B_2',\ldots,B_n')$ that is EFX with respect to $v$, and for which
$$min (v(B'_1),v(B'_2),\ldots,v(B_n'))\geq min (v(B_1),v(B_2),\ldots,v(B_n))$$    
\end{lemma}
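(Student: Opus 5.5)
We plan to follow the Plaut--Roughgarden idea of repeatedly shrinking a bundle that causes EFX-envy and handing the removed item to the currently poorest bundle. Since all bundles are valued under the single valuation $v$, EFX for the partition means that for every pair of bundles $B_a, B_b$ and every $g\in B_b$ we have $v(B_a)\ge v(B_b\setminus\{g\})$.

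Concretely, the partition fails EFX exactly when some bundle $B_b$ and item $g\in B_b$ satisfy $v(B_b\setminus\{g\}) > \min_a v(B_a)$. The procedure is: let $B_{a^*}$ be a minimum-value bundle. If there is a bundle $B_b$ and an item $g\in B_b$ with $v(B_b\setminus\{g\})>v(B_{a^*})$, choose $g$ to be a least-valued item of $B_b$ (this item is the binding one). Replace $B_b$ by $B_b\setminus\{g\}$ and $B_{a^*}$ by $B_{a^*}\cup\{g\}$. Repeat until no violation remains. When the loop stops, every bundle EFX-dominates every other, so the partition is EFX with respect to $v$.

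The minimum does not decrease. The new value of $B_b$ is $v(B_b\setminus\{g\})>v(B_{a^*})$, and the new value of $B_{a^*}$ is $v(B_{a^*})+v(g)\ge v(B_{a^*})$. All other bundles are unchanged. Hence after each step the minimum is at least the previous minimum, and the final minimum is at least $\min_i v(B_i)$.

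The main obstacle is termination in polynomially many steps, especially because zero-valued items are allowed and the stronger EFX notion also requires envy to vanish when such items are removed. A plain potential based on the minimum value is not strictly increasing, for example when $v(g)=0$ or when there are several tied minima. Our first attempt would be a lexicographic potential: the sorted vector of bundle values, then the number of items in minimum-value bundles. In each step one minimum bundle gains an item. If $v(g)>0$, that bundle's value strictly increases while $B_b$ stays strictly above the old minimum, so the multiset of values at the old minimum level shrinks. If $v(g)=0$, the values are unchanged, but an item moves from a bundle of value above the minimum into a minimum bundle, so the number of items in minimum bundles strictly increases.

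To obtain a polynomial bound (and not just termination), we would instead follow Plaut--Roughgarden's Algorithm~6.1 more closely. In that version, whenever a violation involves $B_b$, we take all items of $B_b$ in increasing order of value and move them one by one to the poorest bundle while the violation persists. We would argue that each item moves at most $O(n)$ times, since after each move it lands in a bundle that is then at least as large as the previous minimum level. Alternatively, if bounding the move count proves hard, we would take the simpler route: compute the leximin++ partition of the items restricted to a structure with at most $n$ bundles, noting that this yields existence, and rely on the step count of Algorithm~6.1 for the polynomial bound.
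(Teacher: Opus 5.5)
Your loop is sound as far as it goes. Testing a least-valued item of $B_b$ is the right violation test, and it covers zero-valued items. Your argument that the minimum never drops is correct. Your lexicographic potential (sorted value vector, then the number of items in minimum-value bundles) does show that the loop halts, at an EFX partition.

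The gap is the part of the lemma beyond existence: a \emph{polynomial} bound on the running time. Your potential only proves finiteness, because the sorted value vector ranges over subset sums and can take exponentially many values. Your rule also lets any violating bundle be processed, and an already-moved item can move again: a bundle tied at the minimum can receive it and later a larger item, and then it is the smallest item of a violating bundle. You give no bound on the number of moves.

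The fallbacks do not supply one:
\begin{itemize}
\item The procedure you attribute to Plaut--Roughgarden's Algorithm~6.1 is not that algorithm; theirs allocates each item once, in non-increasing order of value.
\item The claim that each item moves $O(n)$ times rests only on the remark that it lands in a bundle worth at least the previous minimum, which does not limit the number of moves.
\item Leximin++ gives existence only. Computing it requires a max-min partition under $v$, which is NP-hard already for $n=2$.
\end{itemize}

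The missing idea is to fix the order of moves in advance so that each item moves at most once. Sort the items so that $v(w_1)\ge\cdots\ge v(w_m)$. For $i=1,\dots,m$, remove $w_i$ from its current bundle and insert it into a bundle of minimum value after the removal (possibly its own). This is $m$ iterations, and it is the paper's \textsc{Realloc}.

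The minimum never decreases. If $w_i$ returns to its own bundle, nothing changes. Otherwise the receiving bundle was untouched by the removal, so it is worth at least the old minimum. The donor after the removal is worth at least the receiver, and the receiver only grows.

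For EFX, take a nonempty final bundle $B'_i$ and let $g$ be the last item inserted into it. Every item is inserted exactly once, in non-increasing order of value, and never moves afterwards. Hence $g$ is a least-valued item of $B'_i$. Also $B'_i\setminus\{g\}\subseteq S$, where $S$ is the content of bundle $i$ just before $g$ was inserted: any other item of $S$ processed later must leave, since otherwise it would be inserted after $g$. At that moment $S$ had minimum value, so right after that iteration every bundle is worth at least $v(S)$. By monotonicity of the minimum, $v(B'_j)\ge v(S)\ge v(B'_i\setminus\{g\})\ge v(B'_i\setminus\{g'\})$ for every $j$ and every $g'\in B'_i$, including zero-valued $g'$. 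This single pass gives the EFX property and the polynomial running time together.
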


Before we prove \Cref{lem:new_realloc}, we derive the following corollary:

\begin{corollary}\label{cor:poly_mms}
There exists a polynomial-time algorithm that given 
an additive valuation $v$ over $M$, a number of agents $n$, and any $MMS(v,n)$ partition of the items $M$ for the agent, 
outputs a partition that is $\mathcal{MMS-EFX}$ for the agent.

\end{corollary}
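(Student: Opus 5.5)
The corollary follows by feeding the given MMS partition into the procedure of \Cref{lem:new_realloc}. Let $(B_1,\ldots,B_n)$ be the input $MMS(v,n)$ partition of $M$, so that $\min_j v(B_j) = MMS(v,n)$. Running the poly-time algorithm of \Cref{lem:new_realloc} on $(B_1,\ldots,B_n)$ with valuation $v$ produces a partition $(B'_1,\ldots,B'_n)$ of $M$. The lemma gives two properties of this output: the partition is EFX with respect to $v$, and
\[
\min_j v(B'_j) \;\geq\; \min_j v(B_j) \;=\; MMS(v,n).
\]

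Next we check both defining conditions of an $\mathcal{MMS-EFX}$ partition (the $\varepsilon=0$ case of the definition). First, every bundle $B'_j$ has value at least $MMS(v,n)$, by the displayed inequality. Second, we need that an agent with valuation $v$ is EFX-satisfied with every bundle $P\in\{B'_1,\ldots,B'_n\}$. This follows from what ``EFX with respect to $v$'' means for a partition evaluated under a single valuation: each bundle $B'_j$ EFX-dominates every other bundle $B'_k$. Concretely, $v(B'_j)\ge v(B'_k\setminus\{g\})$ holds for all $k\neq j$ and all $g\in B'_k$. So whichever bundle the agent is given, she is EFX-satisfied in the sense of \cref{def:EFX}.

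The running time is polynomial, since the only work is one call to the procedure of \Cref{lem:new_realloc}. The MMS partition is given as input, so it does not need to be computed.

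There is no real obstacle here, since all the content sits in \Cref{lem:new_realloc}. The one point to be careful about is the meaning of ``EFX with respect to $v$'' for a partition. It must be read as ``every bundle EFX-dominates every other bundle under $v$,'' i.e.\ the allocation is EFX when all $n$ agents share the valuation $v$. Under that reading, \Cref{lem:new_realloc} directly supplies the per-bundle EFX-satisfaction needed for an $\mathcal{MMS-EFX}$ partition.
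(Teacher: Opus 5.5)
Your proposal is correct and is the same argument as the paper's: apply the \textsc{Realloc} procedure of \Cref{lem:new\_realloc} to the given MMS partition, and use its two guarantees (EFX under $v$, and the minimum bundle value does not decrease) to conclude the output is $\mathcal{MMS-EFX}$. Your explicit reading of ``EFX with respect to $v$'' as ``every bundle EFX-dominates every other bundle under $v$'' is exactly what the paper's proof uses implicitly.
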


\begin{proof} 
Starting from the agent's MMS partition $MMS(v,n)$, apply the algorithm of \Cref{lem:new_realloc} to the bundles in this partition with respect to $v$.  
By \Cref{lem:new_realloc}, the resulting partition is EFX for $v$ and guarantees that the minimum bundle value does not decrease.  
Thus, each bundle in the transformed partition EFX-satisfies the agent, and guarantees her MMS share, i.e., the partition is an $\mathcal{MMS-EFX}$ partition for the agent.  
\end{proof}

\paragraph{Proof of \Cref{lem:new_realloc}}

\begin{proof}  
Let $(B_1,B_2,\ldots,B_n)$ be a partition of a set $M$ of $m$ goods, and $v$ an additive valuation over $M$. 
Assume w.l.o.g. that the items are sorted in non-increasing order of their values, that is,  $v(\{w_1\})\geq v(\{w_2\})\geq \ldots \geq v(\{w_m\})$. 

We now describe \textsc{Realloc}, a minor adaptation of Algorithm~6.1 in \cite{PlautRoughgarden2020}.  
Our procedure differs in two respects: it takes as input a partition of the goods (rather than starting from $n$ empty bundles), and it is restricted to the case of identical valuations, whereas Algorithm~6.1 handles additive but non-identical valuations under the assumption of identical preference orderings.

\begin{algorithm}[H]
\caption{\textsc{Realloc} -- EFX Re-allocation}
\label{alg:realloc}
\DontPrintSemicolon

\KwIn{
    \begin{itemize}
        \item A partition $\mathcal B=(B_{1},\dots,B_{n})$ of a set $M$ of $m$ goods to $n$ bundles,
        \item an additive valuation $v:2^{M}\!\to\mathbb{R}_{\ge 0}$.
    \end{itemize}
}
\KwOut{A partition that weakly improves the minimum bundle value, and is also EFX for $v$.}

Let $(w_{1}, w_{2}, \dots, w_{m})$ be a non-increasing ordering of the set of goods $M$ ($v(\{w_1\}) \geq \dots \geq v(\{w_m\})$)\;

\For{$i\gets 1$ \textbf{to} $m$}{
    \textbf{identify the bundle $B_{k}$ currently holding $w_i$:}\;
    $k \gets \text{index such that } w_{i}\in B_{k}$\;
    
    \textbf{remove item $w_{i}$ from bundle $B_{k}$:}\;
    $B_{k} \gets B_{k}\setminus\{w_{i}\}$\;
    
    \textbf{find the bundle $B_{j}$ with smallest value:}\;
    $j \gets \arg\min_{t\in \{1,\dots,n\}} v(B_{t})$ \tcp*[f]{break ties arbitrarily}
    
    \textbf{insert item $w_{i}$ to $B_{j}$: }\;
    $B_{j} \gets B_{j}\cup\{w_{i}\}$\;
}

\KwRet $(B_{1},\dots,B_{n})$
\end{algorithm}

We establish two structural lemmas and an
immediate corollary.

\begin{lemma}[Monotone worst–bundle value]\label{lem:monotone-min}
Let 
\(\mathcal B^{(i)}=(B^{(i)}_{1},\dots,B^{(i)}_{n})\) be the partition after
the \(i\)-th iteration of \textsc{Realloc}, where
\(i\in\{0,\dots,m\}\) (where $ B^{(0)}=(B^{(0)}_{1},\dots,B^{(0)}_{n})$ denotes the input to the algorithm)  and \(m=|M|\).
Then
\[
   \min_{t\in[n]} v\!\bigl(B^{(i)}_{t}\bigr)
   \;\;\le\;\;
   \min_{t\in[n]} v\!\bigl(B^{(i+1)}_{t}\bigr)
   \quad\text{for every } i=0,\dots,m-1.
\]
In other words, the value of the lowest-valued bundle never decreases as the
algorithm proceeds.
\end{lemma}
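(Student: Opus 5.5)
The plan is to analyze a single iteration $i \to i+1$ of \textsc{Realloc}. Item $w_{i+1}$ is first removed from its bundle $B_k$, and then inserted into a bundle $B_j$ that has minimum value \emph{after} the removal. Write $\beta = \min_t v(B^{(i)}_t)$ for the minimum before the iteration. We will show that every bundle of $\mathcal B^{(i+1)}$ has value at least $\beta$, and we do this by tracking the intermediate partition obtained just after the removal.

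First, after the removal, every bundle other than $B_k$ is unchanged and so still has value at least $\beta$. The bundle $B_k$ has value $v(B^{(i)}_k) - v(w_{i+1})$, which may drop below $\beta$.

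Second, consider the insertion, and split into two cases according to where the minimum of the intermediate partition is attained.

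\emph{Case 1: the intermediate minimum is at least $\beta$.} Then every bundle is at least $\beta$ after the removal. Adding $w_{i+1}$ to $B_j$ cannot decrease any value, since $v \ge 0$ and $v$ is additive. Hence the minimum of $\mathcal B^{(i+1)}$ is at least $\beta$.

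\emph{Case 2: the intermediate minimum is below $\beta$.} The only bundle that can be below $\beta$ is the reduced $B_k$. So the argmin is unique and equals $k$ (any tie would involve a bundle of value at least $\beta$, which is impossible here). Thus $j = k$, and $w_{i+1}$ is returned to $B_k$, restoring it to $v(B^{(i)}_k) \ge \beta$. All other bundles are unchanged, so the minimum of $\mathcal B^{(i+1)}$ is again at least $\beta$.

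In both cases $\min_t v(B^{(i+1)}_t) \ge \beta$, which is the claimed monotonicity. The proof uses only nonnegativity of item values and additivity; the ordering of the items plays no role in this lemma.

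The only delicate point is the tie-breaking in Case 2. We must argue that when the reduced $B_k$ is strictly below $\beta$, it is the strict unique minimizer. Otherwise the arbitrary tie-breaking could send $w_{i+1}$ elsewhere and leave $B_k$ below $\beta$. Strictness resolves this: any competing bundle has value at least $\beta$, which is greater than $v(B_k)$. The degenerate situation where the reduced value equals $\beta$ exactly belongs to Case 1, and there all bundles remain at least $\beta$ no matter where the item is inserted.
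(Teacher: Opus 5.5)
Your proof is correct and is essentially the paper's one-iteration argument: every bundle after the step has value at least the old minimum $\beta$. The only difference is the case split. The paper distinguishes whether the destination bundle equals the source bundle $B_k$: if it does not, the destination is an untouched bundle, so it is at least $\beta$, and the reduced $B_k$ is at least its value. This makes any discussion of tie-breaking unnecessary, whereas you split on whether the post-removal minimum falls below $\beta$ and then need to argue that the minimizer is unique.
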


\begin{corollary}[Preservation of the initial minimum]\label{cor:preservs}
Let \(\mathcal B=(B_{1},\dots,B_{n})\) be the input partition and
\(\mathcal B'=(B'_{1},\dots,B'_{n})\) the partition returned by
\textsc{Realloc}.  Then
\[
   \min_{t} v(B'_{t})
   \;\ge\;
   \min_{t} v(B_{t}),
\]
\end{corollary}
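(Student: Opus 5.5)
The corollary follows by chaining \cref{lem:monotone-min} over all iterations. Let $\mathcal B^{(0)}=\mathcal B$ be the input partition. By the loop structure, the returned partition is exactly $\mathcal B'=\mathcal B^{(m)}$. Define
\[
\phi(i)=\min_{t\in[n]} v\bigl(B^{(i)}_t\bigr), \qquad i=0,\dots,m.
\]
\cref{lem:monotone-min} says $\phi(i)\le\phi(i+1)$ for every $i=0,\dots,m-1$. Chaining these $m$ inequalities by induction on $i$ gives $\phi(0)\le\phi(m)$, which is precisely $\min_t v(B'_t)\ge\min_t v(B_t)$. No further ingredient is needed.

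If \cref{lem:monotone-min} is not yet proved at this point, I would prove it first, and this is the only real step. Consider iteration $i+1$. Item $w=w_{i+1}$ is removed from its bundle $B_k$, and then inserted into a bundle $B_j$ that minimizes $v$ after the removal. Two cases arise.

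\textbf{Case $j=k$.} The partition is unchanged, so $\phi(i+1)=\phi(i)$.

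\textbf{Case $j\ne k$.} Only $B_k$ and $B_j$ change. $B_j$ gains $w$, so its value does not decrease. Every bundle other than $B_k$ keeps or increases its value, so each is still at least $\phi(i)$. It remains to handle the new $B_k$, which is $B^{(i)}_k\setminus\{w\}$. Since $j$ was chosen as an argmin after the removal, $v(B^{(i)}_k\setminus\{w\})\ge v(B^{(i)}_j)\ge\phi(i)$. Hence every bundle in $\mathcal B^{(i+1)}$ has value at least $\phi(i)$, and so $\phi(i+1)\ge\phi(i)$.

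The argmin comparison in the case $j\ne k$ is the only nontrivial point, and it is immediate from the algorithm's choice of $j$. Tie-breaking plays no role, since any minimizer $j$ satisfies the inequality used.
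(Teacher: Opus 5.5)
Your proposal is correct and follows the paper's route: the paper states the corollary as an immediate consequence of \cref{lem:monotone-min}, obtained by chaining the per-iteration inequalities from $\mathcal B^{(0)}$ to $\mathcal B^{(m)}$. Your proof of that lemma uses the same two cases as the paper's proof: either $w$ returns to its own bundle, or it moves to a minimum bundle $B_j$, in which case $v(B_k\setminus\{w\})\ge v(B_j)\ge\phi(i)$. Your version is also slightly more careful than the paper's on one point: you take $B_k$ to be the bundle currently holding $w$, whereas the paper's wording refers to the bundle holding it in the initial allocation.
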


\begin{lemma}[EFX guarantee of the output]\label{lem:efx-output}
Let $B'$ be the partition produced by \textsc{Realloc}. Then every bundle in $B'$ is
\emph{envy-free-up-to-any-item} (EFX) with respect to the valuation~\(v\).
\end{lemma}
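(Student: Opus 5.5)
The plan is to prove, by induction over the iterations of \textsc{Realloc}, a structural invariant that persists to termination. Process items in the non-increasing order $w_1,\dots,w_m$. After iteration $i$, let $R_t^{(i)}$ be the set of items in $B_t^{(i)}$ not yet processed, i.e.\ items in $\{w_{i+1},\dots,w_m\}$. Let $F_t^{(i)}$ be the set of already processed items in $B_t^{(i)}$. I intend to show that the partial partition $(F_1^{(i)},\dots,F_n^{(i)})$ of processed items is such that, for every $t,s$, the bundle $B^{(i)}_t$ EFX-dominates $F^{(i)}_s$ for $v$. That is, for every $g\in F_s^{(i)}$,
\[
v(B^{(i)}_t)\ \ge\ v(F^{(i)}_s\setminus\{g\}).
\]
At $i=m$ every item is processed, so $F_s=B'_s$, and the invariant is exactly the statement that the output is EFX.

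The base case $i=0$ is trivial, since every $F_s$ is empty. For the inductive step, $w_{i+1}$ is removed from some bundle $B_k$ and inserted into a bundle $B_j$ of minimum value after the removal. I must check the invariant for the new pair values. There are two kinds of pairs to handle.

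For pairs $(t,s)$ with $s\neq j$, the set $F_s$ is unchanged. The only bundle value that decreases is that of $B_k$ (unless $k=j$). I expect this to be the main obstacle: removing $w_{i+1}$ from $B_k$ may break the domination of some $F_s$ by $B_k$. To handle it, I would use the ordering. Each $g\in F_s$ was processed earlier, so $v(g)\ge v(w_{i+1})$. More importantly, every processed item was inserted into a then-minimum bundle.

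The cleanest route is to strengthen the invariant to: \emph{for every $s$ with $F_s\neq\emptyset$, letting $g_s$ be the last item inserted into $F_s$, we have $v(F_s\setminus\{g_s\})\le \min_t v(B_t)$}. Because $g_s$ is the least valuable element of $F_s$ (items arrive in non-increasing order), this stronger statement implies EFX domination for every $g\in F_s$.

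Then, when $g_s$ was inserted into $B_s$, bundle $B_s$ had minimum value and contained $F_s\setminus\{g_s\}$ plus possibly unprocessed items. So at that moment $v(F_s\setminus\{g_s\})\le$ the minimum. The remaining work is to show the quantity $\min_t v(B_t)$ never drops below this threshold later. Here I would invoke \cref{lem:monotone-min}: the minimum bundle value never decreases across iterations. Given that lemma, the strengthened invariant for $s\neq j$ follows immediately.

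For $s=j$, the new last item is $w_{i+1}$, and $F_j\setminus\{w_{i+1}\}$ is the old $F_j$. This set sits inside $B_j$ after removal, and $B_j$ was a minimum bundle, so its value is at most the current minimum before insertion. After insertion the minimum can only grow, again by \cref{lem:monotone-min}.

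The key step therefore reduces to \cref{lem:monotone-min}, which I would prove directly. Consider the step that removes $w_{i+1}$ from $B_k$ and inserts it into $B_j = \arg\min$. Every bundle other than $B_j$ has value at least the post-removal minimum $\mu'$. The new $B_j$ has value $\mu'+v(w_{i+1})\ge\mu'$, so the new minimum equals $\mu'$. Showing $\mu'\ge\mu$, the pre-iteration minimum, is delicate when $B_k$ drops below $\mu$. In that case, however, $B_k$ itself is the argmin, so the item returns to $B_k$ and the partition is unchanged. Hence the minimum never decreases. This fact, together with the last-item argument, completes the proof of \cref{lem:efx-output}.
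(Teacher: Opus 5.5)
Your proof is correct and follows essentially the same route as the paper. Both arguments take the last-inserted item of a bundle, which is its least valuable item because items are processed in non-increasing order. Both note that the bundle was a minimum-value bundle when that item was inserted, and both conclude via the monotonicity of the minimum bundle value (\cref{lem:monotone-min}). Your invariant over the processed items $F_s$ makes rigorous a containment the paper uses only implicitly, namely that $F_s\setminus\{g_s\}$ lies inside the bundle's contents at the moment $g_s$ was inserted.

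One small slip: after $w_{i+1}$ is inserted into $B_j$, the new minimum is at least $\mu'$, not necessarily equal to it. The inequality is all your argument needs.
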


\paragraph{Proof of Lemma \ref{lem:monotone-min}}
\begin{proof}
Let \( w_i \) denote the item moved in round \( i \), and let \( B_k \) be the bundle containing \( w_i \) in the initial allocation. Let us remove $w_i$ from $B_k$ and define:
$$B^\star=\arg\min_{B_j}\{ v(B_j)\}$$

If this is not unique, choose one of the bundles at random.

First, observe that in round \( i \), only the bundles \( B_k \) and \( B^\star \) could potentially change, while all other bundles remain unchanged. Therefore, their values remain the same.

Consider two cases:
\begin{enumerate}
    \item If \( B_k = B^\star \), then by the algorithm's definition, \( w_i \) remains in \( B_k \), and no bundle's value changes. Therefore, \( \min_{B_j} v(B_j) \) remains unchanged.
    \item Otherwise, \( w_i \) is moved from bundle \( B_k \) to bundle \( B^\star \). By definition, we have:
\[
v(B^\star \cup \{w_i\}) \geq v(B^\star) = \min_{B_j} v(B_j).
\]

Additionally, from the selection of \( B^\star \), we know:
\[
\min_{B_j} v(B_j) = v(B^\star) \leq v(B_k \setminus \{w_i\}).
\]

Therefore, after reallocating \( w_i \), every bundle has a value at least as large as the original minimum \( \min_{B_j} v(B_j) \). Thus, the minimum bundle value cannot decrease; it can only stay the same or increase.
\end{enumerate}

This completes the proof.
\end{proof}

\paragraph{Proof of Lemma ~\ref{lem:efx-output}}

\begin{proof}
Let the final allocation produced by the algorithm be $(B'_1, B'_2, \dots, B'_n)$. For contradiction, assume the resulting allocation is not EFX. Then there exist two bundles \( B'_i, B'_j \) such that:
\[
v(B'_i) < v(B'_j) - \min_{w_j \in B'_j} v(\{w_j\}).
\]

Let \( w_j \) be the item of minimum value last inserted in \( B'_j \). Consider the round in the algorithm when \( w_j \) was placed into bundle \( B'_j \). At this step, \( w_j \) was either:

\begin{enumerate}
    \item Added to the bundle with the minimum current value, or
    \item Remained in its original bundle because removing it would have decreased that bundle's value below the minimal bundle value.
\end{enumerate}

Specifically, at the time \( w_j \) was placed in \( B'_j \), we had:
$$B_j'=\arg\min_{B_j}\{ v(B_j)\}$$

At this moment, by construction, the bundle \( B'_j \) satisfies the EFX condition relative to every other bundle:
\[
\forall k \in [n],\quad v(B'_j \setminus \{w_j\}) \leq v(B_k).
\]

Once \( w_j \) is placed, there are only two ways the EFX condition for \( B'_j \) could later fail:

\begin{itemize}
    \item \textbf{Case 1 (increase of value in \( B'_j \)):}  
    The value of \( B'_j \) could become strictly larger. However, since \( w_j \) is by assumption the smallest-value item in \( B'_j \), no new smaller-valued item can later be added to increase the discrepancy that violates EFX (according to the running of the algorithm). 

    \item \textbf{Case 2 (decrease of value in another bundle):}  
    The value of another bundle \( B_k \) could become smaller than \( v(B'_j \setminus \{w_j\}) \). However, at the time \( w_j \) was placed, \( v(B'_j \setminus \{w_j\}) \) was less than or equal to the minimal bundle value. By Lemma~\ref{lem:monotone-min}, we know the minimum bundle value cannot decrease over the course of the algorithm. Thus, no bundle can decrease below \( v(B'_j \setminus \{w_j\}) \).
\end{itemize}

Since both cases are impossible, no EFX violation can occur at any later stage. Thus, our assumption leads to a contradiction, and the final allocation must indeed be EFX.
\end{proof}

\end{proof}

\section{Omitted Proofs for Three Agents}

\subsection{Proof of Technical Lemmas}\label{subsec:technial-lemma}

In this section we prove \Cref{lem:guarantee-value}, \Cref{lem:reallocation-is-still-EFX}, \cref{lem:total-val-case-2} and \Cref{lem:val-910-mms}.

Let \( \{A, B, C\} \) be a partition of the item set \( M \), and let agent \( j \in \{1,2,3\} \) be fixed. Assume without loss of generality that
\[
v_j(A) \geq \max [v_j(B),v_j(C)].
\]

We first state and prove a useful lemma:

\begin{lemma}\label{lem:bundle-at-least-prop-is-EEFX}
Let $X \subseteq M$. If $v_j(X)\geq PROP_j$, then agent $j$ is EEFX-satisfied with $X$. 
\end{lemma}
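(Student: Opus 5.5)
Fix $X\subseteq M$ with $v_j(X)\ge PROP_j=v_j(M)/3$. To show agent $j$ is EEFX-satisfied with $X$, I will build an EEFX certificate: a split of $R=M\setminus X$ into two bundles $Y,Z$ such that $X$ EFX-dominates both. The natural choice is to split $R$ between two identical copies of agent $j$ so that the split is EFX for $v_j$. By \cref{lem:new_realloc} (running \realloc\ with $n=2$ on any starting partition of $R$, say $\{R,\emptyset\}$), such a split exists. So we get $\{Y,Z\}$ with $v_j(Y)\ge v_j(Z\setminus\{g\})$ for all $g\in Z$, and symmetrically for $Y$.

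It then suffices to show $v_j(X)\ge v_j(Y\setminus\{g\})$ for every $g\in Y$, and the same for $Z$. Assume without loss of generality $v_j(Y)\ge v_j(Z)$; the bound for $Z$ is then implied by the bound for $Y$, because $v_j(Z\setminus\{g\})\le v_j(Z)\le v_j(Y)$. It is also immediate when $v_j(Y)\le v_j(X)$. So the only real case is $v_j(Y)>v_j(X)\ge v_j(M)/3$.

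In that case $v_j(Z)=v_j(M)-v_j(X)-v_j(Y)<v_j(M)/3\le v_j(X)$. For any $g\in Y$, EFX of the split gives $v_j(Y\setminus\{g\})\le v_j(Z)<v_j(X)$, which is exactly what we need. Hence $(X,Y,Z)$, with $X$ assigned to $j$, is an EEFX certificate.

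The main subtlety is the strong EFX notion: removing zero-valued goods must also eliminate envy. This is why I insist on an EFX split of $R$ that holds even for zero-valued items. \realloc\ supplies such a split, since \cref{lem:efx-output} is stated for the strong notion. The split also covers degenerate cases, such as $R$ empty or one of the bundles empty, where the inequalities hold trivially.
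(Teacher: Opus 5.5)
Your argument is correct, but it obtains the split of $M\setminus X$ differently from the paper.

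The paper builds the certificate by hand. If $v_j(M\setminus X)\le v_j(X)$, it takes $(M\setminus X,\emptyset)$. Otherwise it sorts $M\setminus X$ in non-increasing order of value and lets $Y$ be the shortest prefix with $v_j(Y)>v_j(X)$. Minimality of the prefix, whose last item is the cheapest in $Y$, gives $v_j(Y\setminus\{g\})\le v_j(X)$ for every $g\in Y$ directly. The counting bound $v_j(W)\le v_j(M)-2\,PROP_j=PROP_j\le v_j(X)$ then handles the remainder $W$.

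You instead take an arbitrary $v_j$-EFX two-partition of $M\setminus X$ produced by \textsc{Realloc}. You control the heavier bundle through the lighter one, and the same counting fact links the lighter one to $X$.

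The paper's route is elementary and self-contained, and its split need not be EFX internally; only $X$ has to dominate both parts. Yours relies on the strong-EFX guarantee of \cref{lem:efx-output}, which does hold, including for zero-valued goods. In exchange it shows that \emph{every} $v_j$-EFX split of $M\setminus X$ completes $X$ to a certificate. It also extends verbatim to $n$ agents: split $M\setminus X$ into $n-1$ bundles that are EFX for $v_j$. The lightest has value at most $(v_j(M)-v_j(X))/(n-1)\le v_j(M)/n\le v_j(X)$, and EFX of the split bounds every bundle minus any good by that lightest bundle.

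One sentence needs repair. The bound for $Z$ is not ``implied by the bound for $Y$'' via $v_j(Z\setminus\{g\})\le v_j(Z)\le v_j(Y)$. The bound for $Y$ only controls $v_j(Y\setminus\{h\})$, and $v_j(Y)$ itself may exceed $v_j(X)$. The conclusion still holds in each of your cases: in the easy case $v_j(Z)\le v_j(Y)\le v_j(X)$, and in the main case you prove $v_j(Z)<v_j(X)$. A cleaner argument removes the case split entirely. Note that $\min\{v_j(Y),v_j(Z)\}\le \bigl(v_j(M)-v_j(X)\bigr)/2\le PROP_j\le v_j(X)$. EFX of the split then bounds $v_j(Q\setminus\{g\})$ by this minimum for both bundles $Q\in\{Y,Z\}$.
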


\begin{proof}
We construct an EEFX certificate for bundle $X$ and agent $j$, by presenting a partition of the set $M \setminus X$ to two bundles $Y$ and $W$, such that agent $j$ is EFX-satisfied with any bundle in $\{X,Y,W \}$.

Let $S = M \setminus X$ be the set of remaining goods. 
First, we notice that if $v_j(S) \leq v_j(X)$, then by defining $W=S,Y=\emptyset$, $(W,Y,X)$ is an EFX-certificate for $X$. Therefore, from now on we assume $$v_j(S) > v_j(X)$$

Order the goods in $S$ as $s_1, s_2, \dots, s_t$ such that $v_j(s_1) \geq v_j(s_2) \geq \dots \geq v_j(s_t)$. 

We initialize $Y = \emptyset$ and add goods from $S$ one by one according to this order until the total value of $Y$ exceeds $v_j(X)$.  That is, let $k$ be the minimal index such that $v_j(\{s_1, \dots, s_k\}) > v_j(X)$, and define:
\[
Y = \{s_1, \dots, s_k\}, \quad W = S \setminus Y.
\]

Let $g = s_k$ be the least-valued item in $Y$. Since $v_j(Y \setminus \{g\}) \leq v_j(X) < v_j(Y)$  by construction, it follows that $X$ EFX-dominates $Y$, i.e., $v_j(X) \geq v_j(Y \setminus \{g\})$.

We now observe that

$$v_j(X) \geq PROP_j$$  
and
$$v_j(Y) > v_j(X) \geq PROP_j$$
Since the total value of all goods is $v_j(M)$ and $PROP_j = \frac{v_j(M)}{3}$ (in the 3-agent case), we must have:
\[
v_j(W) = v_j(M) - v_j(Y) - v_j(X) \leq v_j(M) - 2\cdot PROP_j = PROP_j.
\]
Thus, $v_j(W) \leq PROP_j \leq v_j(X)$.

It follows that $(Y, W, X)$ is EFX for $v_j$, so it serves as an EEFX certificate for agent $j$, therefore agent $j$ is EEFX-satisfied with $X$.
\end{proof}

\paragraph{Proof of \Cref{lem:guarantee-value}}
\lemtopval*

\begin{proof}
Let us assume w.l.o.g that $S_2=argmax\{v(S_2),v(S_3)\}$.
Assume in contradiction 
that $$v(S_1)+v(S_2)<v(M) - MMS(v,3).$$
Since $$v(M)=v(S_1)+v(S_2)+v(S_3)$$
this would force \( v(S_3)>MMS(v,3)\), contradicting the definition of MMS,
which maximizes the minimum bundle value.  Therefore the claimed inequality must hold.
\end{proof}

{

\paragraph{Proof of \cref{lem:total-val-case-2}}

We first prove a useful observation. 
\begin{observation}\label{obs:repartition-larger} 
    Let $\mathcal{P}^l= \{P^l_1,P^l_2\}$ be a partition chosen by agent $l$ in Line 13 in \cref{alg:three-agent-construction}, and recall that $L_l=M\setminus (P^l_1 \cup P^l_2)$.
    Then $$\min \{v_l(P^l_1), v_l(P^l_2) \} \geq v_l(L_l)$$
\end{observation}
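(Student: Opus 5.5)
The plan is to compare $\mathcal{P}^l$ with the two-bundle partition $\{A, Z_l\}$ of $A\cup Z_l$, using the guarantee of \cref{alg:partition} that the minimum bundle value never drops. We are in Case~2, so $A$ is agent $l$'s unique top-valued bundle in $\{A,B,C\}$. Hence $v_l(A) > v_l(Z_l)$ and $v_l(A) > v_l(L_l)$, which gives
\[
\min\{v_l(A), v_l(Z_l)\} = v_l(Z_l).
\]
By the output guarantee of \textbf{MMS-EFX-Improved-RePartition}, we have $\min\{v_l(P^l_1),v_l(P^l_2)\}\geq v_l(Z_l)$. This follows from the check in Line~3: whenever the reallocated MMS partition has a smaller minimum, we fall back to $\realloc(\{A,Z_l\},v_l)$, and by \cref{cor:preservs} that fallback preserves the minimum $v_l(Z_l)$.

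It therefore suffices to show $v_l(Z_l)\geq v_l(L_l)$, or else to handle the case $v_l(Z_l)<v_l(L_l)$ directly. I would use the maximizing choice of $Z_l$ in Line~11. Since $\{Z_l, L_l\}=\{B,C\}$, the alternative $L_l$ was also a candidate, so
\[
\min_{S\in\mathcal{P}^l} v_l(S) \;\geq\; \min_{S\in\mathcal{P}^l_{L_l}} v_l(S) \;\geq\; \min\{v_l(A), v_l(L_l)\} = v_l(L_l).
\]
The second inequality is again the non-decreasing-minimum guarantee of \cref{alg:partition}, now applied to the call on $(A, L_l)$. The final equality uses that $v_l(A)>v_l(L_l)$ by uniqueness of the top bundle.

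So in fact the cleanest route skips the comparison with $Z_l$ entirely and uses only the alternative candidate $L_l$. That alone yields $\min\{v_l(P^l_1),v_l(P^l_2)\}\geq v_l(L_l)$ regardless of which of $B,C$ was chosen.

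The main point needing care is that the guarantee "minimum does not decrease" is really ensured for the exact-MMS version and survives in the approximate version. For the approximate version, this is exactly the role of Line~3 together with \cref{cor:preservs}. I expect no further obstacle: the argument is a two-line chain once that property of \cref{alg:partition} is cited.
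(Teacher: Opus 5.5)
Your argument is correct and is essentially the paper's proof. The paper also compares $\mathcal{P}^l$ with the alternative repartition of $A\cup L_l$, uses that \textbf{MMS-EFX-Improved-RePartition} never drops the minimum below $\min\{v_l(A),v_l(L_l)\}=v_l(L_l)$, and concludes from the maximizing choice of $Z_l$; it merely phrases this as a contradiction rather than your direct chain of inequalities, and your preliminary detour through $v_l(Z_l)$ is unnecessary, as you noticed yourself.
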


\begin{proof} 
    Assume by contradiction that $$\min \{v_l(P^l_1), v_l(P^l_2) \} < v_l(L_l).$$ W.l.o.g, let us assume that $\mathcal{P}^l =\mathcal{P}^l_B $ (i.e -- it is a partition of $A \cup B$, and $L_l = C$). 
    Let us look at the partition $\mathcal{P}^l_C$. Since $v_l(A) \geq v_l(C)$, by using the \textbf{MMS-EFX-Improved-RePartition} algorithm we have $\min_{P \in \mathcal{P}^l_C}v_l(P)\geq v_l(C)$. But this leads to $$\min_{P \in \mathcal{P}^l_C}v_l(P)\geq v_l(C) > \min_{P \in \mathcal{P}_B^l}v_l(P),$$ contradicting the fact that  $$\min \{v_l(P^l_1), v_l(P^l_2) \} = \min_{P \in P^l_B} \{v_l(P)\} \geq \min_{P \in P^l_C} \{v_l(P)\}$$
\end{proof}
}

\totalvaluelemma*

\begin{proof}
We show that across allocations $\mathcal{X}^1$ and $\mathcal{Y}^1$, each agent $t \in \{2,3\}$ obtains a total value 
$v_t(X^1_t) + v_t(Y^1_t)$ which is  at least $v_t(M) - MMS_t$.

\begin{itemize}
   \item \textbf{Case 1: Both agents partition the same pair.} 
Suppose both agents choose to partition the same pair of bundles, say $A \cup B$, yielding a partition $\mathcal{P}^t=\{P^t_1,P^t_2\}$ for each $t \in \{2,3\}$. 
By \cref{obs:repartition-larger}
\[
\min\{v_t(P^t_1),v_t(P^t_2)\} \;\geq\; v_t(C) \quad \text{for each } t \in \{2,3\},
\]
Therefore $\{P^t_1,P^t_2,C\}$ is a partition of $M$, such that for each agent $t$, $P^t_1,P^t_2$ are top two valued bundles in the partition.    
The agents then perform in each allocation a cut-and-choose on $A \cup B$ (with reversed roles in each allocation), so each obtains total value across the two allocations of at least $v_t(P^t_1)+v_t(P^t_2)$.  
Therefore, by \Cref{lem:guarantee-value}, the claim follows.

    \item \textbf{Case 2: The agents partition different pairs.} 
    Without loss of generality, suppose that when agent 3 is the subdivider she 
    partitions $A \cup C$, while when agent 2 is the subdivider she 
    partitions $A \cup B$. Let $\mathcal{X}^1$ be the allocation where agent 3 is the subdivider, and $\mathcal{Y}^1$ be the allocation where agent 3 is the chooser.
    Consider agent~3. 
    The following inequalities hold:
    \begin{enumerate}
        \item $v_3(X^1_3)\geq  v_3(C) = v_3(Y^1_1)$: \\
        First, since agent~2 partitioned $A \cup B$ and we are in Subcase~2.B, agent~1 receives bundle~$C$ in $\mathcal{Y}^1$, so $ v_3(Y^1_1)=v_3(C) $. 
        Second, since agent~3 partitioned $A \cup C$, by the definition of the $\mathcal{MMS\text{-}EFX}$ partition, the least-valued bundle in that partition is worth at least $v_3(C)< v_3(A)$ (as $A$ is agent~3’s unique top bundle), and thus $v_3(X^1_3)\geq  v_3(C)$.

        \item $v_3(Y^1_3) \geq v_3(Y_2^1)$: \\
        we recall that agent 3 was the chooser in allocation $\mathcal{Y}^1$, and since we are in Subcase 2.B -- the chooser receives her favorite bundle in the partition.  
        \item $v_3(X^1_3) \geq v_3(Y^1_2)$: \\
        By definition, $\min \{ v_3(X^1_2), v_3(X^1_3)) \geq \min\{v_3(Y_2^1),v_3(Y^1_3)\}$ -- if this was not the case, agent $3$ could have created the partition $(Y_2^1,Y^1_3)$ instead.
        Recall that agent 3 was the chooser in $\mathcal{Y}^1$, and therefore $v_3(Y^1_3) \geq v_3(Y^1_2)$. i.e -- $v_3(Y^1_2)=\min\{v_3(Y_2^1),v_3(Y^1_3)\}$. Therefore -- $ v_3(X^1_3) \geq \min \{ v_3(X^1_2), v_3(X^1_3)\} \geq \min\{v_3(Y_2^1),v_3(Y^1_3)\}= v_3(Y^1_2)$
    \end{enumerate}
    Assume, for contradiction, that
    \[
    (\star) \qquad v_3(X^1_3) + v_3(Y^1_3) \;<\; v_3(M) - MMS_3.
    \]
    Since $v_3(X^1_3)\geq v_3(Y^1_1)$  [inequality (1)], this leads to 
    \[
    v_3(M) - v_3(Y^1_2) 
      \;=\; v_3(Y^1_1) + v_3(Y^1_3) 
      \;\leq\; v_3(X^1_3) + v_3(Y^1_3) 
      \;<\; v_3(M) - MMS_3,
    \]

    which implies that $v_3(Y^1_2) > MMS_3$.

    Moreover, combining $v_3(Y^1_2) > MMS_3$ and $v_3(Y^1_3) \geq \max \{v_3(Y_2^1),v_3(Y^1_3)\}\geq v_3(Y^1_2) > MMS_3$ [inequality (2)], it follows that $v_3(Y^1_1) \leq MMS_3$ (by definition of the MMS). Therefore,
    \[
    v_3(M) - v_3(Y^1_1)  
      \;\geq\; v_3(M) - MMS_3.
    \]
    But since $v_3(M) - v_3(Y^1_1) = v_3(Y^1_3) + v_3(Y^1_2)$, this leads to 
    \[
    v_3(Y^1_3) + v_3(Y^1_2) \;\geq\; v_3(M) - MMS_3.
    \]
    Adding in $v_3(X^1_3) \geq v_3(Y^1_2)$ [inequality (3)], we therefore have 
    \[
    v_3(Y^1_3) + v_3(X^1_3) \;\geq\; v_3(Y^1_3) + v_3(Y^1_2) \;\geq\; v_3(M) - MMS_3,
    \]
    contradicting $(\star)$.
\end{itemize}

Thus, in both cases each agent $t \in \{2,3\}$ receives total value at least $ v_t(M)-MMS_t$, as claimed.
\end{proof}

{
\paragraph{Proof of \Cref{lem:reallocation-is-still-EFX}.}

\lemreallocefxl*

\begin{proof}
Let us assume w.l.o.g that $v_l(P_1^l) \ge v_l(P_2^l)$. We note that due to the \realloc process in \cref{alg:partition}, the  bundle $P_2^l$ EFX-dominates $P_1^l$. 
Moreover, by \cref{obs:repartition-larger} we must also have $v_l(P_2^l) \geq v_l(L_l)$. 
It follows that $P_2^l$ EFX-dominates $L_l$. From \cref{obs:top-is-EFX}, $P_1^l$ EFX dominates both $P_2^l$ and $L_l$. Therefore both $P_2^l$ and $P_1^l$ are EFX-satisfactory for agent $l$ with regards to the set of items $M$ and $n=3$ agents. 
\end{proof}

\paragraph{Proof of \Cref{lem:val-910-mms}.}

\lemninemms*

\begin{proof}
Without loss of generality, assume that \( v_l(P_1^l) \ge v_l(P_2^l) \). 
We show that \( v_l(P_2^l) \) is bounded below by the guarantee provided by the \emph{coarse atomic partial search} algorithm of \cite{amanatidisapprox2017,GOURVES201950}, whose analysis was later strengthened in \cite{feige2022improvedmaximinfairallocation} to achieve a $\tfrac{9}{10}$-MMS allocation. 
\citet{feige2022improvedmaximinfairallocation} demonstrate that if there exists only a single bundle in $\{A,B,C\}$ of value to agent $l$ at least as high as her MMS share (bundle $A$), then when considering some partitions of \( (A,B,C) \) into \( (X,Y,W) \) where \( W \in \{B,C\} \) and maximizing the value of \( \min\{v_l(X),v_l(Y)\} \) -- agent~\( l \) can be guaranteed at least a $\tfrac{9}{10}$-MMS value. 
If there are two such bundles, \citet{feige2022improvedmaximinfairallocation} guarantee agent $l$ her MMS share by giving her one of those bundles directly.

In our algorithm, we consider the two partitions generated by the algorithm \textbf{MMS-EFX-Improved-RePartition} (\cref{alg:partition}) with either the input $\{A, B\}$ or the input $\{A,C\}$ (and valuation $v_l$), and select the partition maximizing the minimal-bundle value. Therefore if there is only one bundle guaranteeing the MMS share (bundle $A$), we guarantee $\frac{9}{10}$ of the MMS share, and if there are two such bundles -- we guarantee at least the MMS share, using the fact that the \textbf{MMS-EFX-Improved-RePartition} does not diminish the minimum-valued bundle in the partition. 

Therefore, the value of the minimal-valued bundle in our construction is at least as high as the value of the bundle guaranteed by the coarse atomic partial search of \cite{feige2022improvedmaximinfairallocation}.

\end{proof}
}

\subsection{Time Complexity}\label{sec:timecomplex}

We prove \Cref{lem:poly-time-3}:

\polytimethree*

\begin{proof}

\smallskip\noindent\textbf{Step 1: Oracle calls (MMS partitions).}
For each agent $i$, call the oracle once to obtain an MMS partition of $M$ into $3$ bundles. We then apply $\textsc{Realloc}$ (which runs in polynomial time) (\Cref{lem:new_realloc}) in order to obtain the $\mathcal{MMS-EFX}^i$ partition for each agent $i\in \{1,2,3\}$. 

\smallskip\noindent\textbf{Step 2: Construct the candidate allocations.}
For each agent $i$'s $\mathcal{MMS-EFX}^i$ allocation, our algorithm constructs at most two deterministic allocations, $\mathcal{X}^i$ and $\mathcal{Y}^i$ (and in some cases a single allocation). 
We recall that if each agent $j\neq i$ can receive a different top-valued bundle (which can be checked in polynomial time), no further calculations are needed to construct  $\mathcal{X}^i$ and $\mathcal{Y}^i$. Therefore, we are left with Case 2, the case when both agent's top-valued bundle is the same bundle in the partition, say bundle $A$. 
Each construction requires only:
\begin{enumerate}
  \item Computing, for each relevant pair of bundles $S\subseteq\{A\cup B, A\cup C\}$ and each agent $j\in\{2,3\}$, an {$\textbf{MMS-EFX-Improved-RePartition}$ with input $S$ into two bundles according to $v_j$}. This is done by:
  \begin{itemize}
    \item one oracle call to obtain an MMS partition of $S$ for $j$ (size-$2$ partition for $n=2$), and
    \item applying the $\textsc{Realloc}$ transformation of \Cref{lem:new_realloc} to make it EFX for $v_j$ without decreasing the minimum bundle value.
  \end{itemize}
  The oracle call is assumed; by \Cref{lem:new_realloc}, $\textsc{Realloc}$ runs in time polynomial in $m$.
  \item Simple value comparisons and $\arg\max$ over a constant set ($\{A\cup B, A\cup C\}$) to choose the partition that maximizes the minimum bundle value for each $j$.
  \item Assembling $\mathcal{X}$ and $\mathcal{Y}$ from these partitions.
\end{enumerate}
Thus, aside from the oracle calls, constructing $\mathcal{X}$ and $\mathcal{Y}$ is polynomial in $m$.

\smallskip\noindent\textbf{Complexity bound.}
Overall, we make at most 15 calls for computing {an agents MMS partition (3 calls, one for each agent to compute her MMS  partition of $M$ to 3 bundles, and for each $i\in [3]$, at most 4 calls for computation of MMS partitions of the other two agents to 2 bundles, in the process of constructing $\mathcal{X}^i$ and $\mathcal{Y}^i$).  As we assume that MMS partitions are handled by the oracle, then given such oracles the distribution~$\mu$ can be computed in  polynomial time.}

\smallskip
This establishes the claim.
\end{proof}

\subsection{Tightness of Our Analysis}\label{tightmms}

We remark that the analysis of the  MMS approximation of our algorithm is tight. 
\begin{remark}
There exists an instance with three agents and additive valuations in which, under our algorithm, some agent receives exactly $\tfrac{9}{10}$ of her maximin share (MMS).
\end{remark}
The instance and analysis are identical to the instance given in \cite{feige2022improvedmaximinfairallocation}, Section 3.7.
This result highlights that our analysis is tight for the proposed construction. 
Determining the best possible upper bound on the MMS fraction achievable simultaneously with EEFX and ex-ante proportionality remains an open question. 
For context, \citet{feige2021tight} showed that there exist instances in which no allocation can guarantee more than a $\tfrac{39}{40}$-fraction of MMS. 
Thus, even with no ex-ante requirement it is impossible to guarantee a fraction larger than $97.5\%$ of the MMS, and we guarantee $90\%$ (with ex-ante proportionality), leaving a gap of only $7.5\%$.

\subsection{FPTAS enables MMS computation
}\label{sec:fptaspolynomial}
If all item values are integers bounded by $\mathrm{poly}(|M|)$, then there exists a polynomial-time algorithm that outputs an allocation satisfying the BoBW guarantee of \cref{thm:three-agents-main}, as exact $MMS$ partitions can be computed in polynomial time.

We next sketch the proof for completeness.
Let $v$ be a valuation and denote $W = v(M) = \mathrm{poly}(|M|)$.  
Set $\varepsilon = \frac{1}{10W}$. By the definition of an FPTAS, we can compute a $(1-\varepsilon)$-approximation of the $MMS$ value in time polynomial in $|M|$ and $1/\varepsilon = 10W$.

Since $MMS \le W$,
\[
MMS - (1-\varepsilon){MMS} \le \varepsilon \cdot MMS \le \varepsilon W = \tfrac{1}{10}.
\]

Because all valuations are integral, the $MMS$ value is an integer. Therefore any estimate within additive error strictly smaller than $1$ must equal the exact MMS value.
Consequently, an exact $MMS$ value -- and therefore an exact $MMS$ partition -- can be computed in polynomial time.

\section{Omitted Proofs for Two Agents}

\subsection{Proof of \cref{the:main-2agents}}\label{sec:2agents}
\twoagentsmainprop*

We first present an overview of the proof of \cref{the:main-2agents}.
To obtain a polynomial-time algorithm which outputs a distribution over allocations that is ex-ante proportional, ex-post EFX, and ex-post $(1-\varepsilon)\cdot \mathrm{MMS}$, we
 \emph{slightly modify Algorithm~3}  of \cite{bu2024bestofbothworldsfairallocationindivisible} and {\emph{combine this with} any FPTAS (or PTAS) for the MMS partition problem (e.g., the FPTAS of \cite{woegingerFPTAS}, or of \cite{Ibarra1975}).

 Concretely, we 
 {\begin{enumerate}
     \item seed Algorithm~3 with agent-specific bipartitions \(\mathcal{A}^i=\{A^i_1,A^i_2\}\) (instead of \(\{\emptyset,M\}\)) with both bundles having value at least $(1-\varepsilon)\cdot MMS_i$, and
     \item add an improvement guard that updates agent $i's$ candidate split only if it remains EFX for agent $i$ and strictly increases her minimum bundle value.
 \end{enumerate}}
 The resulting procedure outputs a lottery supported on at most two allocations that is \emph{ex-ante envy-free} (hence proportional) and \emph{ex-post EFX}, and every realized allocation guarantees each agent \(i\) at least \(\min\{v_i(A^i_1),v_i(A^i_2)\}\). 
  }

{We next present our algorithm for two agents, and the proof of \cref{the:main-2agents}.}

Our algorithm slightly modifies Algorithm~3 of \cite{bu2024bestofbothworldsfairallocationindivisible}. Before presenting our modifications,
we first describe Algorithm~3 of \cite{bu2024bestofbothworldsfairallocationindivisible}. 

\paragraph{Algorithm~3 of \cite{bu2024bestofbothworldsfairallocationindivisible}:}
The algorithm begins by first creating two candidate partitions \(\mathcal{A}^1=\{A^1_1,A^1_2\}\) and \(\mathcal{A}^2=\{A^2_1,A^2_2\}\), where \(\mathcal{A}^i\) is EFX with respect to agent \(i\)'s valuation \(v_i\).
Each allocation $\mathcal{A}^i$ is computed by applying a subroutine called \textsc{LocalSearch} (\cref{alg:local-search}) on the partition $\{\emptyset, M\}$ and valuation $v_i$. The subroutine \textsc{LocalSearch} is one that given any partition \(\{A,B\}\)  and valuation \(v\), returns, in polynomial time, a partition \(\{A',B'\}\) that is EFX with regards to valuation $v$, and has a weakly smaller \emph{value gap}, i.e.,
\(|v(A')-v(B')| \le |v(A)-v(B)|\). This invariant is maintained throughout the algorithm via the subroutine \textsc{LocalSearch}\((A,B,v)\). 

The algorithm proceeds by iteratively refining the two candidate partitions. 
In each step, it attempts to improve the value that agent~$i$ assigns to her 
least-preferred bundle in $\mathcal{A}^i$, thereby increasing her guaranteed utility, 
while at the same time preserving the invariant that $\mathcal{A}^i$ remains 
\emph{EFX} with respect to $v_i$. 
At termination, the algorithm outputs either one EFX allocation (which is EF) or a lottery 
over two such allocations, which guarantees that each agent~$i$ receives 
her proportional share in expectation, thereby ensuring ex-ante proportionality (and ex-ante EF). Specifically:
if at any point throughout the algorithm, either \(\mathcal{A}^1\) or \(\mathcal{A}^2\) is (ex-post) envy-free under the profile \((v_1,v_2)\), the algorithm returns that partition immediately (getting EF ex-post).
Otherwise, compare each of the two candidate partitions under the \emph{other} agent's valuation:
if there exists \(i\in\{1,2\}\) such that the value gap of \(\mathcal{A}^i\) measured by \(v_{3-i}\) is \emph{strictly} smaller than the value gap of \(\mathcal{A}^{3-i}\) under \(v_{3-i}\), update \(\mathcal{A}^{3-i}\) by calling
\textsc{LocalSearch} on \(\mathcal{A}^i\) with valuation \(v_{3-i}\), thereby obtaining a new EFX candidate with no larger value gap. 
If no such improving update exists, output the lottery \(\{(1/2,\mathcal{A}^1),(1/2,\mathcal{A}^2)\}\).

As shown in \cite{bu2024bestofbothworldsfairallocationindivisible}, this procedure runs in polynomial time; its output distribution is ex-ante envy-free, and every supporting allocation is EFX.

In \Cref{subsec:worse-bu} we present an example showing that Algorithm~3 of \cite{bu2024bestofbothworldsfairallocationindivisible} might give an agent no more than {85\%} of her MMS. We next discuss our modified version that overcomes 
this failure to guarantee $1-\varepsilon$ fraction of the MMS -- with our modified version we are able to guarantee $1-\varepsilon$ fraction of the MMS in polynomial time, a fraction of the MMS that is the best that we can hope to guarantee with efficient computation.

\paragraph{Our Algorithmic Modification:}

We introduce two modifications to Algorithm~3 of \cite{bu2024bestofbothworldsfairallocationindivisible}.
\begin{enumerate}
  \item \textbf{Initialization.} Instead of seeding \(\mathcal{A}^1\) and \(\mathcal{A}^2\) via \textsc{LocalSearch} on \(\{\emptyset,M\}\), we accept as input two arbitrary bipartitions of \(M\), denoted \(\mathcal{X}^1=\{X^1_1,X^1_2\}\) and \(\mathcal{X}^2=\{X^2_1,X^2_2\}\), and set $\mathcal{A}^i =(A_1^i, A_2^i) \gets \textsc{LocalSearch}({\mathbf{X^i_1, X^i_2}}, v_i)$ for \(i\in\{1,2\}\).
  This initialization allows starting with a partition for each agent in which both bundles have high value (which will never drop later).
  \item \textbf{Minimal-value update rule.} In Line 11, we replace \(\mathcal{A}^i\) by the other agent’s candidate partition \(\mathcal{A}^{3-i}\) only if it \emph{strictly} improves agent \(i\)’s minimum-bundle value:
  \[
    \min\{v_i(A^{3-i}_1),\,v_i(A^{3-i}_2)\}\;>\;\min\{v_i(A^i_1),\,v_i(A^i_2)\}.
  \]
  Equivalently, agent \(i\) adopts the other partition if and only if she prefers its lowest value bundle over the lowest value bundle in her own partition. In the original algorithm, the replacement took place even when there was no minimum value improvement. 
\end{enumerate}

Both changes are made in order to guarantee a minimum value. 

The pseudocode below describes Algorithm 3 by \cite{bu2024bestofbothworldsfairallocationindivisible} with our minor changes.
In \Cref{alg:3} we also give the original pseudocode, with our modifications 
in \blue{blue}.

\begin{minipage}{\textwidth}

\renewcommand*\footnoterule{}

\begin{savenotes}
\begin{algorithm}[H]
\caption{An ex-ante EF and ex-post EFX randomized allocation for two additive agents, with minimum value improvement}
\label{alg:two-agents-efx}
\DontPrintSemicolon

\KwIn{Two additive valuations $v_1,v_2$ over a set of indivisible goods $M$. 
\blue{Two initial partitions of $M$, $\mathcal{X}^1=\{X^1_1,X^1_2\}$ and $\mathcal{X}^2=\{X^2_1,X^2_2\}$}}
\KwOut{A uniform distribution over (at most two) partitions, that is ex-ante EF. Additionally, for each partition, each agent $i$ is EFX-satisfied and receives a value of at least $\min \{v_i(X^i_1), v_i(X^i_2)\}$.}

$\mathcal{A}^1 =(A_1^1, A_2^1) \gets \textsc{LocalSearch}(\blue{\mathbf{X^1_1, X^1_2}}, v_1)$ \tcp*{$v_1(A_1^1) \leq v_1(A_2^1)$}
$\mathcal{A}^2 =(A_1^2, A_2^2) \gets \textsc{LocalSearch}(\blue{\mathbf{X^2_1, X^2_2}}, v_2)$ \tcp*{$v_2(A_1^2) \leq v_2(A_2^2)$}

\If{$\exists\, i \in \{1,2\}$ such that $v_i(A^i_1)=v_i(A^i_2)$ \textbf{ or } $v_{3-i}(A^i_1)\ge v_{3-i}(A^i_2)$}{
    \KwRet $\{(1, \mathcal{A}^i)\}$ where agent $3-i$ picks her preferred bundle\footnote{Although the usual notation assigns bundle $A^i_j$ to agent $j$, we slightly abuse notation here by treating $A^i_1$ and $A^i_2$ as a \emph{set of two bundles} rather than an allocation (with assignment to agents). Thus, agent $3-i$ selects her preferred bundle among the two, and the remaining bundle is assigned to agent $i$.} from $\{A^i_1,A^i_2\}$\;
}

\While{$\exists\, i \in \{1,2\}$ such that $v_{3-i}(A^i_1) \;>\;v_{3-i}(A^{3-i}_1)$}{
    $\mathcal{A}^{3-i} =(A_1^{3-i}, A_2^{3-i}) \gets \textsc{LocalSearch}(A^i_1, A^i_2, v_{3-i})$\;
    \If{$\exists\, j \in \{1,2\}$ such that $v_j(A^j_1)=v_j(A^j_2)$ \textbf{ or } $v_{3-j}(A^j_1)\ge v_{3-j}(A^j_2)$}{
        \KwRet $\{(1, \mathcal{A}^j)\}$ where agent $3-j$ picks her preferred bundle from $\{A^j_1,A^j_2\}$\;
    }
    \If{$\mathcal{A}^{3-i}$ is also EFX under $v_i$}{
        \If{\blue{$v_{i}(A^i_1) \;<\; v_{i}(A^{3-i}_1)$}}{
            $A^i \gets A^{3-i}$\;
        }
        \KwRet $\{(0.5, \mathcal{A}^1), (0.5, \mathcal{A}^2)\}$ where for each $i \in \{1,2\}$, agent $3-i$ picks her preferred bundle from $\{A^i_1,A^i_2\}$\;
    }
}
\KwRet $\{(0.5, \mathcal{A}^1), (0.5, \mathcal{A}^2)\}$ where for each $i \in \{1,2\}$, agent $3-i$ picks her preferred bundle from $\{A^i_1,A^i_2\}$\;

\end{algorithm}
\end{savenotes}

\end{minipage}

\begin{algorithm}[H]
\caption{\textsc{LocalSearch}$(A, B, v)$}
\label{alg:local-search}
\DontPrintSemicolon

\KwIn{an additive valuation $v$ and an allocation $(A,B)$}
\KwOut{an allocation $(A',B')$ that is EFX for $v$, and such that $v(B')\geq v(A')\geq \min \{v(A),v(B)\}$}

\If{$v(A) > v(B)$}{
    \textbf{Swap} $A$ and $B$\;
}

\While{$\exists\, g \in B$ such that $v(A \cup \{g\}) < v(B)$}{
    $g \gets \arg\max\limits_{g' \in B \;\wedge\; v(A\cup\{g'\}) < v(B)} \; v(g')$\;
    $A \gets A \cup \{g\}$,\quad $B \gets B \setminus \{g\}$\;
    \If{$v(A) > v(B)$}{
        \textbf{Swap} $A$ and $B$\;
    }
}
\KwRet $(A'=A,B'=B)$
\end{algorithm}

We now prove \cref{the:main-2agents}.

\begin{proof}

In \Cref{sec:two-agents-proof}, we prove the following lemma:

\begin{lemma}[Minimal-Value Improvement]\label{lem:min_value-lemma}
Let $v_1,v_2$ be additive valuations and let $\mathcal{X}^i=\{X^i_1,X^i_2\}$ be the given initial
partitions for $i\in\{1,2\}$.
Then the output of \cref{alg:two-agents-efx} contains at most two allocations, and for every realized
allocation $\mathcal{A}=(A_1,A_2)$ in the support, and every agent $i\in\{1,2\}$,
\[
v_i(A_i)\;\ge\; \min\{v_i(X^i_1),\,v_i(X^i_2)\}.
\]
\end{lemma}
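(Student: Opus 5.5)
The argument tracks a single invariant through the run of \cref{alg:two-agents-efx}. For each agent $i$ set $\beta_i := \min\{v_i(X^i_1),v_i(X^i_2)\}$. The invariant is
\[
v_i(A^i_1) \ge \beta_i \quad\text{for } i\in\{1,2\},
\]
where $A^i_1$ denotes the $v_i$-smaller bundle of the current candidate $\mathcal{A}^i$; this holds by the ordering guaranteed by \textsc{LocalSearch}.

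The first step is a property of \textsc{LocalSearch} (\cref{alg:local-search}) itself: on input $(A,B,v)$ it returns $(A',B')$ with
\[
v(B')\ge v(A')\ge \min\{v(A),v(B)\}.
\]
To see this, look at one iteration. It moves $g$ from $B$ (the larger side) to $A$ only when $v(A\cup\{g\})<v(B)$. After the move both sides are worth at least $v(A)$: the receiving side grows, and $v(B\setminus\{g\}) = v(B)-v(g) > v(A)$. So the minimum never decreases, and after the conditional swap the labelling stays correct.

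Given this monotonicity, the invariant holds right after Lines 1--2 of \cref{alg:two-agents-efx}. I then check every place where some $\mathcal{A}^j$ is modified.

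In the while loop, $\mathcal{A}^{3-i}$ is replaced by $\textsc{LocalSearch}(A^i_1,A^i_2,v_{3-i})$ only when $v_{3-i}(A^i_1)>v_{3-i}(A^{3-i}_1)$. I expect this to be the main obstacle, because the new partition's $v_{3-i}$-minimum is $\min\{v_{3-i}(A^i_1),v_{3-i}(A^i_2)\}$, and the loop condition only controls $A^i_1$. I would argue the other bundle is also large. We pass the early-return check, so $v_{3-i}(A^i_1)<v_{3-i}(A^i_2)$, and therefore
\[
\min\{v_{3-i}(A^i_1),v_{3-i}(A^i_2)\} = v_{3-i}(A^i_1) > v_{3-i}(A^{3-i}_1) \ge \beta_{3-i}.
\]
Applying the \textsc{LocalSearch} monotonicity to this input preserves the invariant for agent $3-i$.

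The blue replacement $A^i\gets A^{3-i}$ happens only under the strict minimum-improvement guard, which again preserves the invariant. Partition $\mathcal{A}^{3-i}$ is unchanged there.

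The algorithm always returns one of the current $\mathcal{A}^j$, or a uniform lottery over $\mathcal{A}^1,\mathcal{A}^2$. Each of these has at most two allocations, which gives the support bound.

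It remains to bound the realized bundles. Fix the output allocation built from $\mathcal{A}^j$. Agent $j$ receives a bundle of $\mathcal{A}^j$, which is worth at least $v_j(A^j_1)\ge\beta_j$. Agent $3-j$ picks her preferred bundle of $\mathcal{A}^j$. In the early-return branches, picking the better bundle of $\mathcal{A}^j$ at least matches agent $3-j$'s own minimum: either $v_{3-j}(A^j_1)\ge v_{3-j}(A^j_2)$, which makes the two bundles comparable to her, or the relevant reasoning shows her better bundle in $\mathcal{A}^j$ is at least her value of $A^{3-j}_1$. In the lottery branches, the while loop has exited, or we are at the explicit return. Exiting the loop gives $v_{3-j}(A^j_1)\le v_{3-j}(A^{3-j}_1)$, but that does not directly bound her max. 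Instead I use the averaging fact
\[
\max_{S\in\mathcal{A}^j} v_{3-j}(S)\ \ge\ \tfrac{1}{2}v_{3-j}(M)\ \ge\ v_{3-j}(A^{3-j}_1)\ \ge\ \beta_{3-j}.
\]
This averaging inequality is the clean way to handle the chooser in every branch, so I would apply it uniformly for every chooser and avoid case analysis there. The only delicate point left is the invariant under the loop update, which was handled above.
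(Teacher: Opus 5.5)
Your proof is correct and follows the same route as the paper. The argument has two parts. First, an invariant that each agent's own candidate partition never has $v_i$-minimum below $\min\{v_i(X^i_1),v_i(X^i_2)\}$, maintained by the monotonicity of \textsc{LocalSearch} and the strict-improvement update rules. Second, the observation that the picking agent gets her preferred bundle, worth at least half her value for $M$.

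Two of your steps are more careful than the paper's. You use the failed early-return test to certify that $A^i_1$ is also the $v_{3-i}$-smaller bundle, which makes explicit a step the paper only asserts. Your uniform averaging bound for the chooser also sidesteps the paper's slightly loose claim that every agent receives a top-valued bundle in the single-allocation returns.
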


In addition, all claims proven in \cite{bu2024bestofbothworldsfairallocationindivisible} remain valid for this variant of the algorithm. In particular (as shown in \Cref{sec:two-agents-proof}):

\begin{theorem}[Essentially Theorem~3.2 of \cite{bu2024bestofbothworldsfairallocationindivisible}]
\label{thm:alg3-ef-efx}
In the indivisible-goods setting with two agents and additive utilities,
Algorithm~\ref{alg:two-agents-efx} runs in polynomial time and returns an allocation
that is ex-ante envy-free and ex-post EFX.
\end{theorem}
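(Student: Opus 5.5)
The plan is to follow the structure of the proof of Theorem~3.2 in \cite{bu2024bestofbothworldsfairallocationindivisible} and check, line by line, that the two blue modifications (the seeded initialization and the minimal-value update guard) break none of its invariants. The argument has three parts: polynomial running time, ex-post EFX of every returned allocation, and ex-ante envy-freeness of the returned lottery.

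\textbf{Running time.}
\textsc{LocalSearch} is polynomial for any input partition, not just $\{\emptyset,M\}$. Each iteration moves a good from the larger bundle to the smaller one while keeping the smaller bundle strictly below the larger. Hence the value gap strictly decreases, and the standard potential argument of \cite{bu2024bestofbothworldsfairallocationindivisible} bounds the number of moves polynomially.

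The outer while loop in \cref{alg:two-agents-efx} is unchanged from the original. Each update $\mathcal{A}^{3-i}\gets\textsc{LocalSearch}(\mathcal{A}^i,v_{3-i})$ strictly increases $v_{3-i}(A^{3-i}_1)$, the smaller bundle under $v_{3-i}$, since \textsc{LocalSearch} never lowers the minimum. I would reuse their bound on the number of such updates. Their argument depends only on the loop and return conditions, not on the starting partition, though this is the point I would double check. The blue guard runs only once, immediately before a return, so it adds constant work.

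\textbf{Ex-post EFX.}
Every allocation is returned by one of two kinds of line.
\begin{enumerate}
\item A single-allocation return: either $v_i(A^i_1)=v_i(A^i_2)$, or agent $3-i$ weakly prefers $A^i_1$. Agent $3-i$ picks her preferred bundle, so she is EF. Agent $i$ is EFX because $\mathcal{A}^i$ is EFX under $v_i$, being an output of \textsc{LocalSearch}.
\item A lottery return: in $\mathcal{A}^i$, agent $3-i$ picks her top bundle, so she is EF. Agent $i$ gets a bundle of a partition that is EFX under $v_i$.
\end{enumerate}
The new subtlety is the guard: when $\mathcal{A}^i$ is replaced by $\mathcal{A}^{3-i}$, we need $\mathcal{A}^{3-i}$ to be EFX under $v_i$. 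This is exactly the enclosing \texttt{if} condition, so the replaced partition is still EFX for its owner. If the guard does not fire, $\mathcal{A}^i$ is unchanged and still EFX.

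\textbf{Ex-ante envy-freeness.}
This is the main obstacle. In a lottery $\{(1/2,\mathcal{A}^1),(1/2,\mathcal{A}^2)\}$, agent $1$ gets her top bundle in $\mathcal{A}^2$ and the leftover $A$-bundle of $\mathcal{A}^1$. Ex-ante EF for agent $1$ is equivalent to
\[
\max_S v_1(S\in\mathcal{A}^2)+\text{(her share in }\mathcal{A}^1) \ge v_1(M),
\]
and this follows from the loop termination condition $v_1(A^2_1)\le v_1(A^1_1)$, since then her top bundle in $\mathcal{A}^2$ has value at least $v_1(M)-v_1(A^1_1)$. Here I am assuming, as in their proof, that when both single-return conditions fail each agent $3-i$ strictly prefers $A^i_2$, so agent $i$ receives $A^i_1$, and that the labeling of $A^{i}_1$ relative to $v_{3-i}$ is handled by those return conditions.

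At the early-exit lottery return inside the loop, the original algorithm always sets $\mathcal{A}^i\gets\mathcal{A}^{3-i}$. Both agents then face the same EFX partition, and the lottery is EF by symmetry of cut-and-choose with a random cutter. With the guard, the replacement can be skipped when $v_i(A^i_1)\ge v_i(A^{3-i}_1)$. In that case I would check directly that the termination-type inequality still holds for agent $i$, namely
\[
v_i(A^{3-i}_1)\le v_i(A^i_1),
\]
which is exactly the guard's negation. The same inequality for agent $3-i$ holds trivially, because $\mathcal{A}^{3-i}$ was just produced by \textsc{LocalSearch} from $\mathcal{A}^i$ under $v_{3-i}$, which does not decrease the minimum. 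The previous paragraph's argument then applies, with a careful check that the identities of the bundles $A^{3-i}_1$ and $A^{3-i}_2$ under $v_i$ match the roles that argument needs. I expect this relabeling check to be where care is required.
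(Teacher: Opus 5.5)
Your proposal is correct and follows essentially the paper's route: running time and the unaffected returns are inherited from \textsc{LocalSearch} and Bu et al.'s analysis, and in the no-update case of the guarded return the guard's negation gives $\min\{v_i(A^{3-i}_1),v_i(A^{3-i}_2)\}\le v_i(A^i_1)$ for agent $i$. For agent $3-i$, \textsc{LocalSearch}'s monotonicity gives the analogous inequality, which you state more explicitly than the paper does, so $\mathbb{E}[v_i]\ge v_i(M)/2$ for both agents. The relabeling check you flag is harmless: agent $i$'s pick from $\mathcal{A}^{3-i}$ is worth at least $v_i(M)-v_i(A^{3-i}_1)$, and her leftover in $\mathcal{A}^i$ is worth at least $v_i(A^i_1)$, regardless of how the bundles are ordered under the other agent's valuation.
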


Combining \cref{thm:alg3-ef-efx} and \cref{lem:min_value-lemma} finishes the proof.

\end{proof}
 
\subsection{Guarantees of Theorem~3.2 of \cite{bu2024bestofbothworldsfairallocationindivisible} that are Preserved}\label{sec:two-agents-proof}

We now prove \cref{thm:alg3-ef-efx}, showing that Theorem~3.2 of \cite{bu2024bestofbothworldsfairallocationindivisible} holds also for our variant of the algorithm.

First, we notice that subroutine \textsc{LocalSearch} (\cref{alg:local-search}) is unchanged; hence Lemma~3.1 of
\cite{bu2024bestofbothworldsfairallocationindivisible} continues to hold:

\begin{lemma}[Lemma~3.1 of \cite{bu2024bestofbothworldsfairallocationindivisible}]
\label{lem:locals-efx} Let $v$ be an additive valuation function.  The subroutine
\textsc{LocalSearch} 
runs in polynomial time, and given bundles $A,B$ and a valuation $v$, returns an EFX allocation
$(A',B')$ under $v$ such that
$
\bigl|v(B')-v(A')\bigr| \;\le\; \bigl|v(B)-v(A)\bigr|
$, which is equivalent to $min\{v_i(A'),v_i(B')\} \geq min\{v_i(A),v_i(B)\}$\footnote{Since $v(A')+v(B')=v(A)+v(B)$}.
\end{lemma}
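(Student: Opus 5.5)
The plan is to track two quantities through the \textbf{while} loop of \textsc{LocalSearch} (\cref{alg:local-search}): the ordering invariant $v(A)\le v(B)$, and the value gap $v(B)-v(A)$. The EFX property will come from the loop's exit condition, and the gap bound from a per-iteration monotonicity argument. The stated equivalence with the minimum-value formulation is immediate: $v(A')+v(B')=v(A)+v(B)$, so $\min\{v(A'),v(B')\}=\tfrac12\bigl(v(A)+v(B)-|v(B')-v(A')|\bigr)$, and a smaller gap means a larger minimum.

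\textbf{Invariant and gap.} The initial swap and the swap inside the loop keep $v(A)\le v(B)$ at the start of every iteration. Suppose an iteration moves $g$ with $v(A)+v(g)<v(B)$. If no swap follows, the new gap is $v(B)-v(A)-2v(g)$, which is at most the old gap. If a swap follows, the new gap is $v(A)+2v(g)-v(B)$. This is strictly below $v(B)-v(A)$ exactly when $v(g)<v(B)-v(A)$, which is the loop condition. So the gap never increases. It strictly decreases whenever $v(g)>0$, and also in the swap branch.

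\textbf{EFX at termination.} On exit, every $g\in B$ satisfies $v(A\cup\{g\})\ge v(B)$. Hence $v(A)\ge v(B)-v(g)=v(B\setminus\{g\})$, so $A$ EFX-dominates $B$. Since $v(B)\ge v(A)$, bundle $B$ EFX-dominates $A$ by \cref{obs:top-is-EFX}. Therefore $(A',B')$ is EFX under $v$. This argument also covers zero-valued goods, as required by our stronger \cref{def:EFX}.

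\textbf{Termination and polynomial running time.} I expect this step to be the main obstacle. Zero-valued goods can move without changing the gap. The number of distinct gap values reached is not obviously polynomial, because item values are arbitrary reals. I would first try a lexicographic potential, taking the gap first and then the number of zero-valued items in $B$. A zero-valued move keeps $v(A)<v(B)$, so it triggers no swap, and it empties $B$ of zeros monotonically until a positive move occurs. To get a polynomial bound I would then exploit the greedy choice of the \emph{most valuable} admissible good. The idea is to show that between two consecutive swaps, items are moved in non-increasing value order, and that after a swap any good moved back was not admissible before. The goal is to bound the number of swaps by $O(m)$, giving $O(m^2)$ iterations in total. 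If this charging argument becomes delicate, I would instead rely on the fact that the subroutine is unchanged from \cite{bu2024bestofbothworldsfairallocationindivisible}. Their Lemma~3.1 establishes exactly this polynomial bound for arbitrary input partitions, so their proof transfers verbatim; our modification only changes the seed partition given to the subroutine.
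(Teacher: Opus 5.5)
Your proof is correct, and it is more self-contained than the paper's. The paper gives no argument for this lemma at all: it notes that \textsc{LocalSearch} is unchanged from \cite{bu2024bestofbothworldsfairallocationindivisible} and imports their Lemma~3.1. You instead verify two of the three claims directly.

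Your gap-monotonicity argument is sound, including the swap branch, where the loop condition $v(g)<v(B)-v(A)$ is exactly what is needed. Your EFX argument via the exit condition is also sound. It has the added benefit of checking explicitly that the output is EFX in the stronger sense of \cref{def:EFX}. A zero-valued good left in $B$ while $v(A)<v(B)$ would still be admissible, so at exit either $v(A)=v(B)$ or $B$ contains no zero-valued goods.

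For the running time, your fallback to the citation is what the paper does, so nothing is missing relative to the paper. However, your charging sketch does not yet name the fact that makes it work, and the self-contained argument is short:
\begin{itemize}
\item Suppose an iteration moves $g$ and triggers a swap, with old gap $d$. The new gap is $d'=2v(g)-d<v(g)$, because $v(g)<d$.
\item Since the gap never increases, $g$ is never admissible again. So each good triggers at most one swap, and there are at most $m$ swaps.
\item Between consecutive swaps the labels $A,B$ are fixed and goods only move from $B$ to $A$. So each good moves at most once per phase.
\end{itemize}
Hence there are at most $m(m+1)$ iterations, each clearly polynomial. With this, you need neither the lexicographic potential nor the citation.
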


We next notice that in the initialization, \cref{thm:alg3-ef-efx} in \cite{bu2024bestofbothworldsfairallocationindivisible} requires that the two initial allocation are EFX, which our variation guarantees. Therefore the guarantees are upheld.

Since the only return clause influenced by our modification is the return clause in Line 12, the distribution of allocations returned in all other return clauses (lines 4, 8, 13) remains ex-ante EF and ex-post EFX, as guaranteed by \cite{bu2024bestofbothworldsfairallocationindivisible}.

We therefore analyze only our modification in lines 10-11: When $\mathcal{A}^{3-i}$ is EFX under $v_i$, 
the algorithm updates $\mathcal{A}^i \leftarrow \mathcal{A}^{3-i}$
\emph{only if} agent $i$’s minimum strictly improves.
Formally, writing for any two bundles $(X,Y)$ and valuation $v$,

$$min_i(X,Y): =\min\{v_i(X),v_i(Y)\} $$

the update condition is $min_i(A^{3-i}_1,A^{3-i}_2) > min_i(A^i_1,A^i_2)$, which is equivalent
to \[
\bigl|v_i(A^{3-i}_1) - v_i(A^{3-i}_2)\bigr| \;<\; \bigl|v_i(A^{i}_1) - v_i(A^{i}_2)\bigr|.
\]

The algorithm then returns a lottery over
$\{(0.5,\mathcal{A}^1),(0.5,\mathcal{A}^2)\}$.

We notice that if the algorithm chooses to update $A^i$, then the lottery returned is identical to the one returned in the original allocation, and all guarantees hold.
\\
Else, we note:

\begin{enumerate}
    \item In each realization $\mathcal{A}^i$, agent $3-i$ chooses first (hence is ex-post EFX-satisfied), and agent $i$ receives the remaining bundle from her own allocation; by construction this bundle is EFX for $v_i$.
    \item Ex-ante, each agent $i$ chooses from $\mathcal{A}^{3-i}$ with probability $1/2$ and receives the other bundle from $\mathcal{A}^i$ with probability $1/2$. Using $v_i(A^i_1)+v_i(A^i_2)=v_i(A^{3-i}_1)+v_i(A^{3-i}_2)=v_i(M)$, we have
    \begin{multline*}
        \begin{split}
          \mathbb{E}[v_i] & = \tfrac12\bigl(v_i(M)-min_i(A^{3-i}_1,A^{3-i}_2)\bigr) \;+\; \tfrac12\,min_i(A^i_1,A^i_2) \\
      & = \frac{v_i(M)}{2} \;+\; \frac{min_i(A^i_1,A^i_2)-min_i(A^{3-i}_1,A^{3-i}_2)}{2}
      \\ &\ge \frac{v_i(M)}{2}.
        \end{split}
    \end{multline*}
    The final inequality uses $min_i(A^{3-i}_1,A^{3-i}_2)\le min_i(A^i_1,A^i_2)$ (the condition under which we do not meet the \emph{if clause} requirements). Thus, each agent attains at least her proportional share ex-ante.
\end{enumerate}

\subsection{Proof of \cref{lem:min_value-lemma}}

\begin{proof}
    Fix an agent $i\in\{1,2\}$. 

\emph{Initialization.} The algorithm sets $\mathcal{A}^i\gets \textsc{LocalSearch}(X^i_1,X^i_2,v_i)$. 
By \cref{lem:locals-efx}, 
for $v_i$, the lowest value bundle in $\mathcal{A}^i$ has value at least 
$\min\{v_i(X^i_1),v_i(X^i_2)\}$. 

\emph{Updates to $\mathcal{A}^i$.} 
The only way agent $i$'s candidate can change later is when she either runs \textsc{LocalSearch} on $\mathcal{A}^{3-i}$ under valuation $v_i$ (Line 6), or when she  copies the other partition $\mathcal{A}^{3-i}$ (Line 11).
By the update rule, this happens only if the least valued bundle in $\mathcal{A}^{3-i}$ in agent $i$'s opinion is \emph{valued strictly more} than that in $\mathcal{A}^i$. 
Since \cref{lem:locals-efx} ensures that the worse-bundle value cannot decrease under \textsc{LocalSearch}, every update to $\mathcal{A}^i$ strictly increases the value of agent~$i$'s minimum-bundle value.
Therefore, throughout the algorithm 
$$\min \{v_i(A_1^i),v_i(A_2^i)\} \ge \min\{v_i(X^i_1),v_i(X^i_2)\}.$$

\emph{Termination.} The algorithm returns either:
\begin{enumerate}
\item a single allocation $\mathcal{A}$ in which every agent receives her \emph{top valued} bundle (lines 5 and 10), therefore a value greater than her proportional share, and therefore more than $ \min\{v_i(X^i_1),v_i(X^i_2)\}$, or
\item a lottery over $\mathcal{A}^1$ and $\mathcal{A}^2$ in which agent $i$ chooses her top valued bundle in $\mathcal{A}^{3-i}$, and receives (in the worst case) her least favorite bundle in $\mathcal{A}^i$. In realization $\mathcal{A}^i$, agent $i$ may receive her \emph{worse} bundle; by the discussion above, its value is at least $\min\{v_i(X^i_1),v_i(X^i_2)\}$. 
\end{enumerate}
In both cases, agent $i$'s realized value is at least $\min\{v_i(X^i_1),v_i(X^i_2)\}$.
Since the support consists of at most the two allocations $\mathcal{A}^1$ and $\mathcal{A}^2$, the claim follows.

\end{proof}

\subsection{Algorithm 3 of \cite{bu2024bestofbothworldsfairallocationindivisible}}\label{alg:3}

We now give the original algorithm of \cite{bu2024bestofbothworldsfairallocationindivisible}, with our modifications highlighted in \textcolor{blue}{blue}.
For clarity, comments beside each modification indicate the corresponding original content from the algorithm.

\begin{algorithm}[H]
\caption{An ex-ante EF and ex-post EFX randomized allocation for two additive agents, with minimum value improvement} 
\label{alg:two-agents-efx-orig}
\DontPrintSemicolon

\KwIn{Two additive valuations $u_1,u_2$ over a set of indivisible goods $M$. 
\textbf{\blue{Two initial partitions of $M$, $X^1=(X^1_1,X^1_2)$ and $X^2=(X^2_1,X^2_2)$}} \tcp*[f]{{\color{gray}Original: No initial partitions}}}
\KwOut{A uniform distribution over (at most two) partitions, that is ex-ante EF. Additionally, for each partition, each agent $i$ is EFX-satisfied and receives a value of at least $\min \{v_i(X^i_1), v_i(X^i_2)\}$.}

\For{$i \in \{1,2\}$}{
    $A^i \gets \textsc{LocalSearch}(\blue{\mathbf{X^i_1, X^i_2}}, u_i)$ \tcp*[r]{{\color{gray}Original: $A^i \gets \textsc{LocalSearch}(M,\emptyset, u_i)$}}
}

\If{$\exists\, i \in \{1,2\}$ such that $u_i(A^i_1)=u_i(A^i_2)$ \textbf{ or } $u_{3-i}(A^i_1)\ge u_{3-i}(A^i_2)$}{
    \KwRet $\{(1, A^i)\}$ where agent $3-i$ picks her preferred bundle first\;
}

\While{$\exists\, i \in \{1,2\}$ such that $u_{3-i}(A^i_2)-u_{3-i}(A^i_1) \;<\; u_{3-i}(A^{3-i}_2)-u_{3-i}(A^{3-i}_1)$}{
    $A^{3-i} \gets \textsc{LocalSearch}(A^i_1, A^i_2, u_{3-i})$\;
    \If{$\exists\, j \in \{1,2\}$ such that $u_j(A^j_1)=u_j(A^j_2)$ \textbf{ or } $u_{3-j}(A^j_1)\ge u_{3-j}(A^j_2)$}{
        \KwRet $\{(1, A^j)\}$ where agent $3-j$ picks her preferred bundle first\;
    }
    \If{$A^{3-i}$ is also EFX under $u_i$}{
        \If{\blue{$\mathbf{u_{i}(A^i_1) \;<\; u_{i}(A^{3-i}_1)}$}}{
            $A^i \gets A^{3-i}$ \tcp*[r]{{\color{gray}Original: if True:}}
        }
        \KwRet $\{(0.5, A^1), (0.5, A^2)\}$ where agent $3-i$ picks first in realized $A^i$ for each $i \in \{1,2\}$\;
    }
}
\KwRet $\{(0.5, A^1), (0.5, A^2)\}$ where agent $3-i$ picks first in realized $A^i$ for each $i \in \{1,2\}$\;

\end{algorithm}

\subsection{The 2-Agents {\textbf{PTAS$_{ALG}$} 
Algorithm (\cite{babaioff2022fairshares})}\label{sub:PTAS}}

For a fixed $\varepsilon>0$, 
the algorithm $PTAS_{ALG}$ introduced by \cite{babaioff2022fairshares} receives an additive valuation $v_i$ and $n$, and computes, in polynomial time, an allocation with the worse bundle having a value at least $(1-\varepsilon)$-fraction of the MMS with respect to $v_i$ and $n$, as follows:
Let $k = \left\lceil \frac{3}{2\varepsilon}\right\rceil$.

Sort all items in non-increasing order of values, and let $e_m$ denote the last item in this order. 
Partition the set $M$ into a set $Z$ of the $k$ first items in the order (the most valuable items), and a set of remaining items $S=M\setminus Z$. Consider the following family $F_{m,2}$ of at most $2^km$ partitions of $M$. A partition of $M$ into two bundles $(B_1,B_2)$ is a member of $F_{m,2}$ if and only if $B_1\cap S$ is a suffix of $S$. That is, bundle $B_1$ contains an arbitrary subset of $Z$, but the set of items that $B_1$ contains from $S$ must be a (possibly empty) set of consecutive items, ending in $e_m$. The value of the $PTAS$ share for agent $i$ is $PTAS(\varepsilon)_i = max_{(B_1,B_2)\in F_{m,2}} min[v_i(B_1),v_i(B_2)]$. That is, the definition of $PTAS(\varepsilon)_i$ is similar to $MMS_i$, except that instead of maximizing the minimum value of a bundle over all possible partitions, one maximizes only over the partitions in $F_{m,2}$.

\begin{theorem}\label{lem:min-ptas}
In the two–agent additive setting, fix \(\varepsilon>0\) and let \(k=\lceil 3/(2\varepsilon)\rceil\).
The share guarantees \(\mathrm{PTAS}(\varepsilon)_i\) is computable in time polynomial in \(m\) (for fixed \(\varepsilon\)) and satisfies
\[
\mathrm{PTAS}(\varepsilon)_i \;\ge\; 
\min\!\left\{\mathrm{MMS}_i,\; \mathrm{PROP}_i\cdot \left(1-\frac{1}{\bigl(k+1\bigr)}\right)\right\}\geq
\min\!\left\{\mathrm{MMS}_i,\; \mathrm{PROP}_i\cdot \left(1-\varepsilon\right)\right\}
\]
Moreover, an allocation guaranteeing each agent a value of $PTAS(\varepsilon)_i$ can be calculated in polynomial time. 
\end{theorem}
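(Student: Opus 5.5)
We need to prove: $\mathrm{PTAS}(\varepsilon)_i \ge \min\{\mathrm{MMS}_i, \mathrm{PROP}_i(1-\frac{1}{k+1})\}$. The plan is to fix an exact MMS partition and transform it into a member of $F_{m,2}$ whose worse bundle loses at most one ``small'' item's worth of value.

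\textbf{Step 1 (polynomial time).} $|F_{m,2}|\le 2^k(m+1)$, and $k$ depends only on $\varepsilon$. Evaluating each candidate partition is linear in $m$. Hence the maximizing partition is computable in polynomial time, which also yields the allocation in the ``moreover'' clause.

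\textbf{Step 2 (main construction).} Let $(P_1,P_2)$ be an MMS partition for $v_i$, and write $v=v_i$. Build a member of $F_{m,2}$ as follows:
\begin{itemize}
\item keep the intersection with $Z$ as in the MMS partition, so $B_1\cap Z=P_1\cap Z$;
\item fill $B_1$ greedily with a suffix of $S$, taking items from $e_m$ backwards, stopping at the first point where $v(B_1)\ge v(P_1)$ (or exhausting $S$);
\item set $B_2=M\setminus B_1$.
\end{itemize}

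\textbf{Step 3 (case analysis).} We split on how the greedy fill terminates.

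If $S$ is exhausted before reaching $v(P_1)$, then $B_1\supseteq P_1$ as sets of values. In that case $v(B_1)\le v(P_1)$ already holds, and then $B_1$ contains all of $S$, so $v(B_1)\ge v(P_1\cap Z)+v(P_1\cap S)=v(P_1)$. This gives equality, so $\min\ge\mathrm{MMS}$.

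Otherwise, let $x$ be the last item added. Then $v(P_1)\le v(B_1)< v(P_1)+v(x)$, so $v(B_2)> v(P_2)-v(x)$. Hence $\min\{v(B_1),v(B_2)\}\ge \mathrm{MMS}-v(x)$ if we only use this bound. That is too weak, so we refine by symmetry: run the same construction with the roles of $P_1$ and $P_2$ swapped, and also choose the stopping point as the suffix that best balances the two bundles. The balanced stopping point gives $|v(B_1)-v(B_2)|\le v(x)$, hence
\[
\min\{v(B_1),v(B_2)\}\ge \tfrac{v(M)-v(x)}{2}=\mathrm{PROP}_i-\tfrac{v(x)}{2}.
\]

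\textbf{Step 4 (bounding the small item).} Every $x\in S$ has value at most that of each of the $k$ items in $Z$. Therefore
\[
(k+1)v(x)\le v(Z)+v(x)\le v(M)=2\,\mathrm{PROP}_i,
\]
so $v(x)/2\le \mathrm{PROP}_i/(k+1)$. This yields the bound $\mathrm{PROP}_i(1-\frac1{k+1})$.

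The main obstacle is reconciling the two regimes. If the balanced suffix keeps $B_1\cap Z=P_1\cap Z$ but the balance point overshoots, we must argue that either equality-type reasoning gives MMS, or the balance argument gives PROP minus half an item. I would first try to prove that if no suffix with $Z$-part fixed to $P_1\cap Z$ achieves $\min\ge\mathrm{MMS}$, then the crossing point exists in $S$. This follows by monotonicity of $v(B_1)$ along suffixes, going from $v(P_1\cap Z)\le v(P_1)$ up to $v(P_1\cap Z)+v(S)\ge v(P_1)$.

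\textbf{Step 5 (final inequality).} Since $k\ge 3/(2\varepsilon)$, we have $\frac1{k+1}<\varepsilon$, which gives the final inequality. Combined with $\mathrm{MMS}_i\le\mathrm{PROP}_i$, the last expression is at least $(1-\varepsilon)\mathrm{MMS}_i$.
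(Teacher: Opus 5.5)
Your overall route is sound, and it is more self-contained than the paper's. The paper does not prove the dichotomy ``either some member of $F_{m,2}$ is an MMS partition, or some member has worse bundle at least $\frac{v_i(M)}{2}-\frac12\max_{e\in S}v_i(e)$''; it imports it from \cite{babaioff2022fairshares}. It then does exactly your Steps 4--5: it bounds $v_i(e)\le v_i(M)/(k+1)$ for $e\in S$ and uses $\frac{1}{k+1}\le\varepsilon$. Freezing the $Z$-part of an MMS partition $(P_1,P_2)$ and sliding the suffix of $S$ is a correct way to prove that dichotomy directly.

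However, the step you flag as the main obstacle is not justified by what you wrote. Consider the monotone chain $g(j)=v\bigl((P_1\cap Z)\cup\{\text{last } j \text{ items of } S\}\bigr)$. The balance bound $|v(B_1)-v(B_2)|\le v(x)$ needs this chain to cross $v(M)/2$. Your monotonicity argument only shows that it crosses $v(P_1)$. These thresholds differ in general, and the chain can stay entirely below $v(M)/2$. For example, take $Z$ to be three items of value $10$, let $S$ be two items of value $1$, and let $P_1=\{z_1\}\cup S$. Then $g$ goes from $10$ to $12$, while $v(M)/2=16$. Your claim ``if no suffix achieves $\min\ge\mathrm{MMS}_i$ then the crossing exists'' is true, but your justification never uses its hypothesis.

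The missing piece is an endpoint argument. It uses the containments $P_1\cap Z\subseteq P_1\subseteq (P_1\cap Z)\cup S$ and the fact $\mathrm{MMS}_i\le v(M)/2$. There are three cases.

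\textbf{Case $g(0)\ge v(M)/2$.} The empty suffix gives $B_1=P_1\cap Z$, of value at least $v(M)/2\ge\mathrm{MMS}_i$, and $B_2=(P_2\cap Z)\cup S\supseteq P_2$. So the worse bundle is at least $\mathrm{MMS}_i$.

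\textbf{Case $g(|S|)\le v(M)/2$.} The full suffix gives $B_1\supseteq P_1$ and $v(B_2)\ge v(M)/2$. Again the worse bundle is at least $\mathrm{MMS}_i$.

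\textbf{Otherwise.} There is a $j$ with $g(j)\le v(M)/2\le g(j+1)=g(j)+v(x)$ for some $x\in S$. Since $g(j)+\bigl(v(M)-g(j+1)\bigr)=v(M)-v(x)$, one of these two consecutive members of $F_{m,2}$ has worse bundle at least $\frac{v(M)-v(x)}{2}$.

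With this three-case argument you no longer need the greedy ``first reach $v(P_1)$'' construction, the degenerate exhaustion case, or the $P_1\leftrightarrow P_2$ swap. The rest of your proof (Steps 1, 4 and 5) then goes through exactly as in the paper.
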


\begin{proof}
If \(m<3/\varepsilon\), an \(\mathrm{MMS}\) allocation can be found in polynomial time, so \(\mathrm{PTAS}(\varepsilon)_i\ge \mathrm{MMS}_i\) and the bound holds. Hence assume \(m\ge 3/\varepsilon\), so \(k\le m\).

By \cite{babaioff2022fairshares}, either the maximizer over \(F_{m,2}\) is an \(\mathrm{MMS}\) partition (yielding \(\mathrm{PTAS}(\varepsilon)_i\ge \mathrm{MMS}_i\)), or there exists \((B_1,B_2)\in F_{m,2}\) such that
\[
\min\{v_i(B_1),v_i(B_2)\}\;\ge\; \frac{v_i(M)}{2}\;-\;\frac{1}{2}\max_{e\in S} v_i(e).
\]
Since the items are sorted nonincreasingly and \(S\) excludes the top \(k\) items, every \(e\in S\) has
$$v_i(e)\le \frac{v_i(Z)+v_i(e)}{k+1}\leq \frac{v_i(M)}{k+1}$$ 
Thus
\[
\min\{v_i(B_1),v_i(B_2)\}
\;\ge\;
\frac{v_i(M)}{2}\;-\;\frac{1}{2}\cdot \frac{v_i(M)}{k+1}
\;=\;
\left(1-\frac{1}{(k+1)}\right)\cdot \mathrm{PROP}_i 
\]
Since $k=\lceil 3/(2\varepsilon)\rceil$, we have 
$$ 1-\frac{1}{k+1}=1-\frac{1}{\lceil 3/(2\varepsilon)\rceil+1} \geq 1-\frac{1}{ 3/(2\varepsilon)}\geq 1-\frac{2\varepsilon}{3}\geq 1-\varepsilon$$

Maximizing over \(F_{m,2}\) gives \(\mathrm{PTAS}(\varepsilon)_i\ge \min\{\mathrm{MMS}_i,\; (1-\varepsilon) \cdot \mathrm{PROP}_i\}\).
Since \(k=\lceil 3/(2\varepsilon)\rceil\) is a constant for fixed \(\varepsilon\), enumerating \(F_{m,2}\) and evaluating \(\min\{v_i(B_1),v_i(B_2)\}\) for each partition takes time polynomial in \(m\).
\end{proof}

\subsection{Prior 2-Agent BoBW Poly-Time Algorithms Lose a Constant Fraction of the MMS} 

\label{sec:no-min-bound}

Previously, Garg and Sharma \cite{garg2024bestofbothworldsfairnessenvycycleeliminationalgorithm} employed the envy-cycle elimination routine 
of \cite{lipton2004approximately} to obtain ex-ante EF together with ex-post EFX, in polynomial time. They do this by maintaining two allocations throughout the Envy-Cycle-Elimination algorithm, $\mathcal{X}$ and $\mathcal{Y}$, which are both initially empty. The algorithm proceeds in rounds. In each round, they first resolve envy cycles in both $\mathcal{X}$ and $\mathcal{Y}$. Next, they find an unenvied agent $i_X$ in $\mathcal{X}$ and an unenvied agent $i_Y$ in $\mathcal{Y}$ such that $i_X\neq i_Y$ (these agents always exists -- as proven in \cite{garg2024bestofbothworldsfairnessenvycycleeliminationalgorithm}).
They then allocate a good to $i_X$ in $X$ and the same good to $i_Y$ in $Y$. Finally, they choose one of the allocations at random, thereby guaranteeing ex-ante EF. 

Using a different approach,\citet{bu2024bestofbothworldsfairallocationindivisible} gave a polynomial-time algorithm that guarantees ex-ante EF and ex-post EFX by invoking a local-search subroutine \textsc{LocalSearch}. Given an allocation \((A,B)\) and a valuation \(v\), \textsc{LocalSearch} returns an EFX allocation (with regards to $v$) and weakly decreases the bundle-value gap \(|v(A)-v(B)|\) relative to the input. The algorithm outputs a distribution supported on at most two allocations. Each of these allocations guarantees \emph{EFX} for both agents, while the distribution as a whole ensures that every agent i receives an expected value of at least her proportional share $PROP_i$.

In this section we observe that for small enough $\varepsilon>0$ (say, $\varepsilon<0.11$), each of the two algorithms -- \textsc{ECEG2} of
\cite{garg2024bestofbothworldsfairnessenvycycleeliminationalgorithm} and
Algorithm~3 of \cite{bu2024bestofbothworldsfairallocationindivisible}, discussed above -- 
does not guarantee that every agent gets at least $(1-\varepsilon)$ fraction of the MMS.

Equivalently, for each of these
procedures there exists a constant $\alpha<0.89<1$ and an instance such that some
agent $i$ receives utility at most $\alpha\cdot \mathrm{MMS}_i$, so the fraction of MMS lost is at least 11\%. 

\subsubsection{Algorithm~3 of \cite{bu2024bestofbothworldsfairallocationindivisible} Loses a Constant Fraction of the MMS}
\label{subsec:worse-bu}

We observe that Algorithm 3 of \cite{bu2024bestofbothworldsfairallocationindivisible} can lose a constant fraction of the MMS, even in a simple case with 4 items and identical valuations. 

\begin{lemma}[A {$0.85$–MMS} upper-bound instance for Algorithm~3 of \cite{bu2024bestofbothworldsfairallocationindivisible}]\label{lem:integer-approx}
There exists a setting with two agents with identical additive valuations over a set of $4$ goods, such that Algorithm 3 of \cite{bu2024bestofbothworldsfairallocationindivisible} gives one of the agents a fraction of only {$17/20=0.85$} of her MMS.
\end{lemma}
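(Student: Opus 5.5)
The plan is to exhibit an explicit instance and trace Algorithm~3 of \cite{bu2024bestofbothworldsfairallocationindivisible} on it by hand. With identical valuations $v_1=v_2=v$ the algorithm's behaviour collapses almost entirely to a single run of \textsc{LocalSearch}, which makes the trace short.

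\textbf{Step 1 (reduce to one \textsc{LocalSearch} run).} Both agents start from $\textsc{LocalSearch}(M,\emptyset,v)$, which is deterministic up to tie-breaking. So (with consistent tie-breaking) $\mathcal{A}^1=\mathcal{A}^2=\{A_1,A_2\}$ with $v(A_1)\le v(A_2)$. Assuming $v(A_1)<v(A_2)$, the early-return test in the first \textbf{if} fails, since $v_{3-i}(A^i_1)\ge v_{3-i}(A^i_2)$ would require $v(A_1)\ge v(A_2)$. The \textbf{while} guard compares equal gaps, so it is never entered. The algorithm therefore returns the lottery over $\mathcal{A}^1,\mathcal{A}^2$, and in each realization the non-choosing agent receives $A_1$. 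It thus suffices to find four goods with $MMS=\tfrac12 v(M)$, attained by a perfect split, such that \textsc{LocalSearch} from $(\emptyset,M)$ gets stuck at an EFX split with $v(A_1)\le \tfrac{17}{20}\,MMS$.

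\textbf{Step 2 (choose values).} \textsc{LocalSearch} greedily moves into the lighter side the most valuable good $g$ of the heavier side satisfying $v(A\cup\{g\})<v(B)$, and swaps sides when needed. It stops once no such $g$ exists, even though a balanced split may require exchanging goods. I would design the values so that the greedy locks two large goods together. Write the goods as $x_1\ge x_2\ge x_3\ge x_4$ with $x_1+x_4=x_2+x_3$.

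The trace then goes as follows. First $x_1$ moves across. Next the largest $g$ with $x_1+g<v(M)-x_1$ is $x_2$, and the sides are swapped. Finally neither $x_1$ nor $x_2$ can move back, because $x_3+x_4+x_2\ge x_1+x_2$ would be needed and fails.

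A quick check with $(13,12,8,7)$ already gives a final split $\{8,7\}$ versus $\{13,12\}$, so $v(A_1)=15$ while $MMS=20$ via $\{13,7\},\{12,8\}$. This is a ratio of $3/4\le 17/20$, which suffices since the statement is an upper bound. Alternatively I would tune the values, e.g.\ by a small search over integer 4-tuples summing to $40$, to hit exactly $17$ and match the stated constant.

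\textbf{Step 3 (verify).} I would write out each iteration of Algorithm~\ref{alg:local-search}: which goods satisfy the \textbf{while} condition, which one is selected as the $\arg\max$, and when swaps occur. I would then confirm the termination condition, confirm that $MMS=20$ via the explicit perfect split, and conclude from Step~1 that one agent (the non-chooser in the realized allocation) receives $v(A_1)$.

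The main obstacle is Step~1's claim about the outer algorithm. I must check carefully against the original pseudocode (the original version of Algorithm~\ref{alg:two-agents-efx-orig}, not our modified one) that no branch rescues the bad split with identical valuations. In particular, I must check that the \textbf{while} condition uses strict inequality of gaps, so it is not entered, and how ties in \textsc{LocalSearch} are broken. Choosing values with no ties sidesteps the latter.
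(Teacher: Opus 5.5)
Your proposal is correct and follows the paper's route. Both exhibit a four-good identical-valuation instance and trace \textsc{LocalSearch} from $(\emptyset,M)$. Both then observe that identical candidate partitions with unequal bundle values pass the early-return test and fail the strict while-guard, so in either realization the second picker gets the lighter bundle.

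The only difference is the instance. The paper uses $(16,12,8,5)$, where \textsc{LocalSearch} pairs $16$ with $8$ and leaves $\{12,5\}$ worth $17$ against $\mathrm{MMS}=20$. Your $(13,12,8,7)$ instead locks the two largest goods together and yields $15/20=3/4$, which is an even stronger bound. If you want the constant exactly, your own family gives $17/20$ with $(12,11,9,8)$.

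One fix: your termination sentence has the inequality inverted. A move back would need $x_3+x_4+x_2<x_1+x_2$, and this fails because $x_1<x_3+x_4$.
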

\begin{proof}
Consider two agents with identical additive valuations ($v=v_1=v_2$) over $M=\{g_1,g_2,g_3,g_4\}$, given by
\[
\begin{array}{c|cccc}
 & g_1 & g_2 & g_3 & g_4 \\
 \hline
 v(g) & 16 & 12 & 8 & 5 
\end{array}
\]
To prove the claim we will first show that $\mathrm{MMS}=20$, and also show that Algorithm~3 (run as in \cite{bu2024bestofbothworldsfairallocationindivisible})
gives a random agent the set $\{g_1,g_3\}$ and the other agent the set $\{g_2,g_4\}$. 
The agent that ends up with the bundle $\{g_2,g_4\}$ gets a value of only $v(g_2)+v(g_4)=12+5=17$, thus only $17/20=0.85$ of her MMS.

\textbf{MMS computation.}
The total value is $v(M)=41$. The partition
$\{g_1,g_4\}$,\ $\{g_2,g_3\}$ gives a minimum bundle value of $20$, and since $PROP = 20.5$, this means that this is an MMS allocation (since $PROP \geq MMS$, and all the items are valued as integers). 
Therefore $\mathrm{MMS}=20$, attained by $\bigl(\{g_1,g_4\},\{g_2,g_3\}\bigr)$.

\smallskip
\textbf{Behavior of \textsc{LocalSearch}.}
Run \textsc{LocalSearch}$(\emptyset,M,v)$:
(i) move item $g_1$ (value $16$) to $A$ since $0+16<41$; state $(A,B)=(\{g_1\},\{g_2,g_3,g_4\})$ with values $(16,25)$;
(ii) item $g_2$ would make $16+12=28\not<25$, so skip; item $g_3$ makes $16+8=24<25$, move it:
$(A,B)=(\{g_1,g_3\},\{g_2,g_4\})$ with values $(24,17)$; (iii) swap to maintain $v(A)\le v(B)$, yielding
$(A,B)=(\{g_2,g_4\},\{g_1,g_3\})$ with values $(17,24)$; (iv) no $g\in B$ satisfies $v(A\cup\{g\})<v(B)$,
so the subroutine returns $\bigl(\{g_2,g_4\},\{g_1,g_3\}\bigr)$. Since both agents have the same valuation,
the two calls in Algorithm~3 (one per agent) produce identical pairs $A^1=A^2=\bigl(\{g_2,g_4\},\{g_1,g_3\}\bigr)$.

\smallskip
\textbf{Algorithm~3’s return.}
The early-return condition (equal values for some $i$ or the other agent weakly prefers the smaller bundle)
does not trigger: $17\neq 24$ and $17 \not\ge 24$. The while-condition compares the two gaps, which are equal,
so the loop does not run. Hence Algorithm~3 returns the $50$–$50$ lottery over $A^1$ and $A^2$, which are the same
bipartition $\bigl(\{g_2,g_4\},\{g_1,g_3\}\bigr)$. In each realization, the first picker takes $\{g_1,g_3\}$ (value $24$),
and the other agent receives $\{g_2,g_4\}$ (value $17$). Since $\mathrm{MMS}=20$, the guaranteed (worst-case) value
is $0.85\cdot \mathrm{MMS}$, as claimed.
\end{proof}
\subsubsection{\textsc{ECEG2} of \cite{garg2024bestofbothworldsfairnessenvycycleeliminationalgorithm} Loses a Constant Fraction of the MMS}
\label{subsec:worse-eceg2}

We next show that the ECEG2 algorithm of \cite{garg2024bestofbothworldsfairnessenvycycleeliminationalgorithm} can lose a constant fraction of the MMS. 

\begin{lemma}[A $0.881$–MMS upper-bound instance for \textsc{ECEG2}]
\label{lem:eceg2-integer-upper-bound}
There exists a setting with two agents with additive valuations over a set of $7$ goods, such that algorithm \textsc{ECEG2} sometimes gives one of the agents a fraction of only $37/42$ of her MMS (and $37/42 \approx 0.881 < 0.89$).
\end{lemma}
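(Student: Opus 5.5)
The plan is to give an explicit instance and trace \textsc{ECEG2} on it by hand, as in the proof of \cref{lem:integer-approx}. To keep the trace short and the MMS easy to certify, I will use two agents with \emph{identical} additive valuations. The target is an instance whose seven item values sum to $84$ and can be split into two bundles of value exactly $42$. Then $\mathrm{MMS}=\mathrm{PROP}=42$, which is certified by exhibiting that split. It remains to show that the run of \textsc{ECEG2} ends in an allocation where one agent holds only $37$.

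The key observation is this. With identical valuations there is no envy cycle to resolve, and the unenvied agent in an allocation is exactly the agent with the weakly smaller current value. So if goods are processed in non-increasing order of value, each of the two maintained allocations $\mathcal{X}$ and $\mathcal{Y}$ evolves as the greedy LPT (``give the next good to the poorer agent'') rule. The requirement $i_X\neq i_Y$ only forces the two allocations to be mirror images of each other: the same bipartition with the agents' names swapped. So whichever allocation is realized, some agent receives the smaller LPT bundle. The ``sometimes'' in the statement covers ties in the choice of unenvied agent and the arbitrary order of goods, which I will resolve adversarially.

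The search problem is therefore to find seven integers summing to $84$ that admit a perfect $42/42$ split, but on which LPT ends at $47/37$. As a sanity check, the values $(15,14,13,12,11,10,9)$ have the perfect split $\{15,14,13\}$ versus the rest. LPT gives $15\mid14$, then $15\mid27$, $27\mid27$, $38\mid27$, $38\mid37$, $38\mid46$, ending at $38/42$. So the scheme already loses more than $9\%$, and I only need to perturb the values so that the last two greedy steps land at $37$ rather than $38$. Concretely, I will adjust the values so that the state before the final good is $(37+x)\mid(37)$ with the final good of value $x$ going to the $37$-side. More precisely, the final good must go to the side that is not $37$: the poorer side receives it and overtakes, leaving the other side at exactly $37$. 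The perfect split must be preserved throughout. This is a small integer search, by hand or by a quick enumeration.

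The main obstacle is fidelity to the actual \textsc{ECEG2} pseudocode. Goods may be given in an arbitrary order rather than sorted, and the tie-breaking among unenvied agents matters. I also need to check that a good allocated to an agent who is unenvied at the time of allocation is exactly the LPT step even at ties. If ties or the pairing constraint alter the trace, I will instead exploit the freedom in the good ordering, choosing an order that still drives the poorer bundle to $37$. Finally, I will verify that $37/42<0.89$.
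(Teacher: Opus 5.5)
Your route is correct and differs from the paper's, but as written it stops before the one thing the lemma needs: an explicit instance. The search you describe does succeed, and your near-miss is one perturbation away. Replace the tail $(11,10,9)$ by $(10,10,10)$, i.e.\ take identical valuations $v_1=v_2=v$ with $v=(15,14,13,12,10,10,10)$. Then $v(M)=84$, and the split $\{g_1,g_2,g_3\}$ versus $\{g_4,\dots,g_7\}$ certifies $\mathrm{MMS}=42$. In the listed (non-increasing) order the greedy trace is $15\mid 14$, $15\mid 27$, $27\mid 27$, $37\mid 27$, $37\mid 37$, $47\mid 37$. The two interior ties are symmetric, so tie-breaking does not matter. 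Your structural argument is right and only needs to be written out:
\begin{itemize}
\item under identical valuations no two-agent envy cycle can exist;
\item agent $i$ is unenvied iff her bundle is weakly poorer;
\item by induction, if $\mathcal{X},\mathcal{Y}$ are mirror images then so are their unenvied sets, so $i_X\neq i_Y$ sends each good to the poorer side in both and preserves the mirror property.
\end{itemize}
Your worry about fidelity to the pseudocode is moot for this instance. The good order is the one the paper itself uses, and under identical valuations it coincides with any ``pick a favorite remaining good'' rule.

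The paper uses exactly this $v$ as agent~1's valuation, but pairs it with a different $v_2=(13,12,9,6,5,3,3)$. It certifies the outcome with a round-by-round table of the unenvied sets $S_A,S_B$, in which agent~2's valuation steers the goods so that $B_1=\{g_2,g_3,g_6\}$. Your identical-valuation version replaces that table by a one-line LPT computation, and it gives a slightly stronger conclusion. Because the two maintained allocations are mirror images of the same $47/37$ split, \emph{every} realization leaves some agent with $37=\tfrac{37}{42}\mathrm{MMS}$. In the paper's instance only the realization $B$ does; in $A$ the worse-off agent still gets $24/25$ of her MMS. Your version also parallels the paper's identical-valuation counterexample in \cref{lem:integer-approx}. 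The paper's choice exhibits the loss with non-identical valuations, which the statement does not require.
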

\begin{proof}

Consider two agents with additive valuations over $M=\{g_1,\dots,g_7\}$ given by
\[
\begin{array}{c|ccccccc}
 & g_1 & g_2 & g_3 & g_4 & g_5 & g_6 & g_7 \\
 \hline
 v_1(g) & 15 & 14 & 13 & 12 & 10 & 10 & 10 \\[2pt]
 v_2(g) & 13 & 12 & 9 & 6 & 5 & 3 & 3
\end{array}
\]
Let $B=(B_1,B_2)$ denote the \emph{second} allocation maintained by \textsc{ECEG2}.
Then $\mathrm{MMS}_1=42$, while
\[
B_1=\{g_2,g_3,g_6\}
\quad\text{and}\quad
v_1(B_1)=14+13+10=37,
\]
so
\(
v_1(B_1)/\mathrm{MMS}_1 = 37/42 \approx 0.881 < 0.9.
\)

\textbf{Simulation of \textsc{ECEG2}.}
Run \textsc{ECEG2} on the ordered instance $(v_1,v_2)$, processing goods
$g=g_1,g_2,\dots,g_7$ in the displayed order. The algorithm maintains two partial allocations
$A=(A_1,A_2)$ and $B=(B_1,B_2)$ and applies envy-cycle eliminations (which never trigger in this two-agent run).

For $C\in\{A,B\}$ and any round, define the set of \emph{unenvied} agents
\[
S_C \;:=\;\bigl\{\, i\in\{1,2\} : v_{3-i}(C_{3-i}) \ge v_{3-i}(C_i) \,\bigr\}.
\]
(In words, $i\in S_C$ iff agent $3-i$ does not strictly prefer $C_i$ to her own bundle $C_{3-i}$.)
In the two-agent specialization used here, the incoming good $g$ is assigned as follows:
if $1\in S_A$ and $2\in S_B$, place $g$ into $(A_1,B_2)$; otherwise place $g$ into $(A_2,B_1)$.
Under these choices, no envy-cycle swap is triggered.

\smallskip
\noindent\textbf{Round-by-round evolution of $B$.}
\[
\begin{array}{c|c|c|c}
\text{good } g & S_A & S_B & \text{update to } B
\\\hline
g_1 & \{1,2\} & \{1,2\} & B_2 \gets \{g_1\} \\[2pt]
g_2 & \{2\}   & \{1\}   & B_1 \gets \{g_2\} \\[2pt]
g_3 & \{2\} & \{1\}   & B_1 \gets \{g_2,g_3\} \\[2pt]
g_4 & \{1\}   & \{2\}   & B_2 \gets \{g_1,g_4\} \\[2pt]
g_5 & \{1,2\} & \{2\}   & B_2 \gets \{g_1,g_4,g_5\} \\[2pt]
g_6 & \{2\} & \{1\}   & B_1 \gets \{g_2,g_3,g_6\} \\[2pt]
g_7 & \{1,2\} & \{1,2\} & B_2 \gets \{g_1,g_4,g_5,g_7\}
\end{array}
\]
Thus, at termination,
\(
B_1=\{g_2,g_3,g_6\}
\)
and
\(
B_2=\{g_1,g_4,g_5,g_7\}.
\)

\smallskip
\noindent\textbf{Computing $\mathrm{MMS}_1$.}
We have $v_1(M)=84$. Let $X=\{g_1,g_2,g_3\}$; then
\[
v_1(X)=15+14+13=42
\quad\text{and}\quad
v_1(M\setminus X)=12+10+10+10=42,
\]
so $\mathrm{MMS}_1=42$.

\smallskip
\noindent\textbf{Conclusion.}
Agent~1’s value for $B_1$ is $v_1(B_1)=14+13+10=37$, and therefore if allocation $B$ is chosen, agent 1 receives only 
\[
\frac{v_1(B_1)}{\mathrm{MMS}_1} \;=\; \frac{37}{42} \;\approx\; 0.881 \;<\; 0.89,
\]
as claimed.
\end{proof}

\section{Omitted Proofs for Three Agents - Polynomial Case}\label{sec:3agents_poly_proof}

In this section, we prove \cref{lem:stage-3-guarantees}, \cref{lem:copy-stage-guarantees} and \cref{lem:two-copy-prop}.

\begin{lemma}[Ex-Post Guarantees at the end of Stage~3]\label{lem:stage-3-guarantees}
All allocations constructed at the end of Stage~3 satisfy the ex-post guarantees established in \cref{thm:bobw-poly-approx}, except the EEFX guarantee.
Specifically, for every agent $t \in \{1,2,3\}$, every allocation guarantees at least $(\frac{9}{10}-\varepsilon)\text{-MMS}_t$, while guaranteeing \EIME.
\end{lemma}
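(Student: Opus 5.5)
We plan to verify the ex-post guarantees separately for each pair $(\mathcal{X}^i,\mathcal{Y}^i)$, following the case analysis of \cref{alg:three-agent-construction}. There are two differences from \cref{thm:bobw-poly-approx}. First, Stage~2 uses $\{A,B,C\}=\mathcal{M}^i$ without \realloc. Second, Stage~3 may replace a Subcase~2.B pair by the output of \cref{alg:two-agents-efx}.

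\textbf{The divider.} The divider $i$ always receives a bundle of $\mathcal{M}^i$. The partition $\mathcal{M}^i$ was chosen in Stage~1 as the best worst-case partition among the three FPTAS partitions, and one of these is agent $i$'s own $(1-\varepsilon)$-MMS partition. Hence every bundle of $\mathcal{M}^i$ is worth at least $(1-\varepsilon)MMS_i$. The divider therefore always meets the $(1-\varepsilon)$-MMS branch of \EIMMX, and no EFX claim is needed for her. This is exactly why dropping \realloc\ costs only the EEFX property.

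\textbf{The non-dividers before Stage~3.} In Case~1 and Subcase~2.A, each non-divider either receives a top bundle of $\mathcal{M}^i$ or a bundle she values above her top repartitioned bundle. In both situations her value is at least $PROP\ge MMS$, and the agent taking her top bundle is EFX-satisfied by \cref{obs:top-is-EFX}. In Subcase~2.B, the chooser takes her favorite bundle, so she is EFX-satisfied and gets at least her proportional share. For the subdivider $j$, I would reuse \cref{lem:reallocation-is-still-EFX} and \cref{obs:repartition-larger} to get EFX-satisfaction. These rely only on the \realloc\ inside \cref{alg:partition} and on its Line~3 guard, both of which are still executed. For her value, I would redo \cref{lem:val-910-mms}. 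The coarse atomic argument of \cite{feige2022improvedmaximinfairallocation} needs that $A$ is her unique top bundle and a comparison with her MMS partition. With $(1-\varepsilon)$-approximate partitions in \cref{alg:partition}, the argument degrades by a constant factor, as in \Cref{remark1}, giving $(\frac{9}{10}-\varepsilon)MMS_j$ after rescaling $\varepsilon$.

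\textbf{Stage~3.} If the pair is rebuilt by \cref{alg:two-agents-efx} on $\overline{T}_i=Y^i_r\cup Y^i_l$, I would invoke \cref{the:main-2agents} with both input partitions equal to $\{Y^i_r,Y^i_l\}$. This gives that each of $r,l$ is EFX-satisfied \emph{within the two bundles} of $\overline{T}_i$. Each of them also receives at least $\min\{v(Y^i_r),v(Y^i_l)\}$, measured by her own valuation. To extend EFX to the full three-bundle allocation, I would use \cref{obs:repartition-larger}: for both $r$ and $l$, every bundle of $\{Y^i_r,Y^i_l\}$ is worth at least $v(T_i)$. The minimum bundle of the new partition is at least that old minimum by \cref{lem:min_value-lemma}, so each agent's bundle is worth at least $v(T_i)$ and EFX-dominates $T_i$.

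For the value guarantee, agent $l$ gets at least her old subdivider minimum, hence $(\frac{9}{10}-\varepsilon)MMS_l$. Agent $r$ gets at least $\min_{Z\in\{Y^i_r,Y^i_l\}}v_r(Z)$, which by the triggering condition strictly exceeds her own subdivider minimum, so she also gets at least $(\frac{9}{10}-\varepsilon)MMS_r$. The remaining requirement is the role structure in (a)--(c), namely that one of $r,l$ obtains her proportional share. Here I use that \cref{alg:two-agents-efx} always lets one agent pick first, so the picker receives her top bundle of a bipartition of $\overline{T}_i$. Since $v(\overline{T}_i)\ge \frac23 v(M)$ by the same observation, the picker receives at least $PROP$.

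\textbf{Main obstacle.} The hard step is the non-picking agent in Stage~3. For the distribution to be \EIMMX, whenever she falls below $(1-\varepsilon)MMS$ she must be EFX-satisfied in the whole allocation, not only against the other bundle of $\overline{T}_i$. The argument above (bundle value $\ge v(T_i)$, so $T_i$ is dominated) handles this, but I need to check carefully that \cref{lem:min_value-lemma} applies to the output of \cref{alg:two-agents-efx} in every return branch.
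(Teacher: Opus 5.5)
Your proposal is correct and follows the paper's proof: Stage-2 allocations inherit the guarantees, and in a Stage-3 rebuild the first picker gets her top bundle, the divider keeps $T_i$, and the second picker gets at least $\min\{v(Y^i_r),v(Y^i_l)\}\ge v(T_i)$ by \cref{the:main-2agents}, which gives both her value bound and EFX against $T_i$. Your explicit treatment of the divider via $\mathcal{M}^i$ (since Stage~2 skips \realloc) is more careful than the paper, which simply asserts that Stage-2 allocations inherit \cref{thm:bobw-poly-approx}.

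One citation needs fixing. \cref{obs:repartition-larger} gives $\min\{v_l(Y^i_r),v_l(Y^i_l)\}\ge v_l(T_i)$ only for the subdivider $l$. For agent $r$ you must chain the triggering inequality with $\min_{S\in\mathcal{P}^r}v_r(S)\ge\max\{v_r(B),v_r(C)\}\ge v_r(T_i)$, as the paper does. This bound follows from the choice of $Z_r$ and the guard in \cref{alg:partition}, and it is also what you need for $v_r(\overline{T}_i)\ge\tfrac{2}{3}v_r(M)$.
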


\begin{proof}
Allocations constructed in Stage~2 are obtained directly from the algorithm described in \cref{thm:bobw-poly-approx}, and therefore each of them satisfies all these ex-post guarantees.

We now turn to allocations \textbf{updated} during Stage~3. 
Recall that such an update occurs when in case 2.B, for some given allocations $\mathcal{X}^i$ and $\mathcal{Y}^i$, there exists an agent~$r$ (the \textbf{subdivider}) in $\mathcal{X}^i$ and an agent~$l$ (the \textbf{subdivider}) in $\mathcal{Y}^i$ such that
\[
\min_{Z\in \{X_r^i, X_l^i\}} v_r(Z)
    < 
\min_{Z\in \{Y_r^i, Y_l^i\}} v_r(Z).
\]
Intuitively, agent~$r$ prefers the minimal-valued bundle in the two-bundle partition $\{ Y_r^i, Y_l^i\}$ created when agent~$l$ acted as the subdivider. 

We recall from \cref{obs:repartition-larger} that, for agent~\(l\), the bundles \(Y^i_r\) and \(Y^i_l\) are the highest-valued bundles in \(\mathcal{Y}^i\).  
We claim that the same holds for agent~\(r\): since agent~\(r\) selected \(Y^i_r\) in \(\mathcal{Y}^i\), this bundle must be her top-valued one in that partition.  
Moreover, since the pair \(\{X_r^i, X_l^i\}\) is produced by the \textbf{MMS-EFX-Improved-Repartition} procedure and maximizes the algorithm’s output on either \((A,B)\) or \((A,C)\), it follows by definition that
\[
\min_{Z \in \{X_r^i, X_l^i\}} v_r(Z) \ge v_r(Y^i_i),
\]
as \(Y^i_i\) corresponds to either \(B\) or \(C\).  
Hence,
\[
\min_{Z \in \{Y_r^i, Y_l^i\}} v_r(Z) > \min_{Z \in \{X_r^i, X_l^i\}} v_r(Z) \ge v_r(Y^i_i).
\]
Therefore, in the partition \(\mathcal{Y}^i\), both agents~\(l\) and~\(r\) regard the bundles \(\{Y_r^i, Y_l^i\}\) as their top-valued bundles.

We invoke the two-agent procedure (\cref{alg:two-agents-efx}) with the following input:
\begin{enumerate}
    \item $v_1 = v_r, v_2=v_l$
    \item $\mathcal{X}^1=\mathcal{X}^2=(Y^i_r,Y^i_l)$
\end{enumerate}

We denote by $T_i = Y^i_i=M\setminus (Y_r^i \cup Y_l^i)$  the bundle allocated to agent~$i$ in the new allocation. 
We recall that \cref{alg:two-agents-efx} outputs two partitions $\{A_1^l,A_2^l\}, \{A_1^r,A_2^r\}$ {over the set of items $\overline{T}_i=Y^i_r \cup Y^i_l$} (it is possible that the two partitions are identical), where agent $l$ chooses a bundle first from $\mathcal{A}^r$, and vice versa.

We construct $\mathcal{X}^i$ and $\mathcal{Y}^i$ using the two resulting allocations, as follows:
\begin{enumerate}
    \item $\mathcal{X}^i$: Agent $l$ chooses her favorite bundle in $\{A_1^r,A_2^r\}$, agent $r$ receives the remaining bundle in $\{A_1^r,A_2^r\}$, and agent $i$ receives $T_i$. We denote $\mathcal{C}^i_l=\mathcal{X}^i$.
     \item $\mathcal{Y}^i$: Agent $r$ chooses her favorite bundle in $\{A_1^l,A_2^l\}$, agent $l$ receives the remaining bundle in $\{A_1^l,A_2^l\}$, and agent $i$ receives $T_i$. We denote $\mathcal{C}^i_r=\mathcal{Y}^i$.
\end{enumerate}
We verify that all fairness guarantees are preserved:
First, observe that when agents~\(l\) and~\(r\) choose first, they each receive their top valued bundle in the partition, and are therefore EFX-satisfied and receive above their proportional share. 

Second, Agent~$i$ receives the same bundle $T_i$ in both $\mathcal{X}^i$ and $\mathcal{Y}^i$.  
    Since $T_i$ is the bundle she received in the \textbf{original} allocation $\mathcal{Y}^i$, she remains both MMS and EEFX-satisfied.

We therefore only need to prove for the cases where agents~\(l\) and~\(r\) choose second:
\begin{enumerate}
    \item \textbf{Agent $l$.}  
    In the original allocation $\mathcal{Y}^i$, agent~$l$ was guaranteed the bundle 
    $argmin\{v_l(Y_r^i), v_l(Y_l^i)\}$, and received at least $\frac{9}{10}-\varepsilon$ of her MMS (by \cref{thm:bobw-poly-approx}). 
    By \cref{the:main-2agents}, in each allocation constructed from the output of the two-agent procedure, she receives at least the same value and is EFX-satisfied with respect to both bundles.
    Moreover, by \cref{obs:repartition-larger}, $\min\{v_l(Y_r^i), v_l(Y_l^i)\}$ is strictly larger than $v_l(T_i)=v_l(Y^i_i)$. Therefore she is EFX-satisfied with $Y^i_l$.
    
    \item \textbf{Agent $r$.}  
    The two-agent procedure is performed only if agent~$r$ prefers $\min\{v_r(Y_r^i), v_r(Y_l^i)\}$ to $\min_{Z\in\{X_r^i, X_l^i\}}v_r(Z)$.  
    Therefore, by \cref{the:main-2agents}, her guarantees can only improve compared to $(X_r^i, X_l^i)$ (where she was guaranteed $\frac{9}{10}-\varepsilon$ of her MMS share), and as previously mentioned - $\min\{v_r(Y_r^i), v_r(Y_l^i)\}$ is valued the same or higher than $v_r(T_i)$, and therefore she is EFX-satisfied with $X^i_r$. 
\end{enumerate}

Therefore, all ex-post guarantees are upheld as in \cref{thm:bobw-poly-approx}.
\end{proof}

\begin{observation}[Adoption creates no new partitions]\label{obs:adoption-no-new}
For an allocation or certificate $\mathcal{Z}$, let $\pi(\mathcal{Z})$ denote its
underlying unordered partition of $M$. When an agent $k$ adopts a partition
$\mathcal{S}^k$ in Stage~4 or Stage~5, her new allocations
$\mathcal{X}^k, \mathcal{Y}^k$ and the two updated certificates
$C^k_j, C^k_i$ are all permutations of $\mathcal{S}^k$; no bundle is modified.
Hence the set of unordered partitions underlying the allocations and
certificates,
\[
  \Pi \;:=\; \bigl\{\, \pi(\mathcal{X}^t),\, \pi(\mathcal{Y}^t),\, \pi(C^r_t)
  \;:\; t, r \in [3] \,\bigr\},
\]
does not grow during Stages~4--5: every partition available for adoption in
Stage~5 already belongs to the set $\Pi_3$ of partitions present at the end of
Stage~3.

Consequently:
\begin{enumerate}
  \item any partition adopted in Stage~5 was itself adopted (by another agent)
        in Stage~4, so the guarantees established for Stage-4 adoptions apply
        to it;
  \item after Stage~5, every agent $k$ satisfies
        $\mathrm{val}_k \ge \min_{Z \in \mathcal{T}} v_k(Z)$ for every
        partition $\mathcal{T} \in \Pi_3$ appearing among her certificates, so
        a further adoption round would trigger no update -- two rounds
        suffice; and
  \item since adoption only permutes members of $\Pi_3$, all value guarantees
        are independent of the order in which agents are processed within a
        stage.
\end{enumerate}
\end{observation}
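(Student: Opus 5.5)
The plan is to argue directly from how Stages~4 and~5 update the data structures, by induction over the individual adoption steps. I would first record the invariant that, at the end of Stage~3, every allocation $\mathcal{X}^t,\mathcal{Y}^t$ and every certificate $C^r_t$ is an ordered version of an unordered partition of $M$. Call the set of these unordered partitions $\Pi_3$. Then consider a single adoption step by a copier $k$ in Stage~4. By definition, $\mathcal{S}^k$ is chosen among the candidates $\mathcal{C}^r_k$ with $r\neq k$, and each of these is currently an allocation or certificate, so $\pi(\mathcal{S}^k)\in\Pi$. The new $\mathcal{X}^k,\mathcal{Y}^k$ are produced only by letting the first- and second-choosers pick bundles of $\mathcal{S}^k$ in some order, with $k$ taking the remainder. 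The new certificates $C^k_j=\mathcal{X}^k$ and $C^k_i=\mathcal{Y}^k$ are these same orderings. So every object written in this step has underlying partition $\pi(\mathcal{S}^k)$, which is already in $\Pi$. By induction on the sequence of adoption steps (Stage~4, then Stage~5), $\Pi$ never acquires a new element and stays contained in $\Pi_3$. This gives the main claim.

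For item~1, note that the partitions available to $k$ in Stage~5 are her certificates $C^r_k$. Each of these either is unchanged since Stage~3, or was overwritten by an adoption in Stage~4. I would split into these two cases. In the second case, $C^r_k$ was written when agent $r$ adopted it, so it is a Stage-4 adoption and its guarantees carry over. The first case is the step I expect to be the obstacle. Here the literal claim ``adopted by another agent in Stage~4'' need not hold, and I would weaken it to ``belongs to $\Pi_3$, so the Stage-4 analysis of adopting a member of $\Pi_3$ applies verbatim.'' The guarantees for an adoption depend only on the unordered partition being adopted and on the copier's value comparison, not on the time of adoption, so this weakening is enough.

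For item~2, I would observe that after $k$ adopts the best candidate $\mathcal{S}^k$, her own bundle is worth at least $\min_{Z\in\mathcal{S}^k}v_k(Z)$, which is at least the minimum over each of her candidate certificates at that moment. If she does not adopt, the adoption condition fails, which gives the same inequality directly. A certificate of hers may be changed later only through another agent's adoption, which replaces it with a member of $\Pi_3$. Since Stage~5 re-examines exactly her certificates, after Stage~5 the inequality holds for every certificate she holds. Therefore a third round would trigger no update, and two rounds suffice.

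For item~3, I would note that each agent's final value is determined by $\min_Z v_k(Z)$ over partitions in $\Pi_3$, together with the picking order inside a fixed partition. The bounds derived are stated purely in terms of $\Pi_3$, so they hold whatever order the agents are processed in.
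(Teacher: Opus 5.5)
You prove the main claim correctly, and your argument is the paper's own. An adoption writes only orderings of $\mathcal{S}^k$, and $\pi(\mathcal{S}^k)$ is already the partition of one of $k$'s current certificates, so by induction over adoption steps $\Pi\subseteq\Pi_3$ throughout. The gaps are in the consequences.

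\textbf{Item~2.} The step ``Stage~5 re-examines exactly her certificates, so after Stage~5 the inequality holds for every certificate she holds'' does not follow.
\begin{itemize}
\item Another agent $r$ can adopt in Stage~5 \emph{after} $k$'s Stage-5 turn. This overwrites $C^r_k$ with a partition $k$ never compared against.
\item Membership in $\Pi_3$ gives no bound on such a partition. $\Pi_3$ contains partitions that were never among $k$'s candidates (e.g.\ other agents' Stage-3 certificates). Because $\mathcal{M}^k$ is only an approximate MMS partition, such a partition can have a worst bundle worth more to $k$ than $\mathrm{val}_k$.
\end{itemize}
The missing idea is a tracing argument.
\begin{itemize}
\item Adoption requires $\min_{\mathcal{S}^r}v_r>\mathrm{val}_r$, and $\mathrm{val}_r$ strictly increases at each adoption. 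So after each of $r$'s turns, $\mathrm{val}_r$ dominates every current candidate of $r$. Hence $r$ adopts in Stage~5 only a certificate that changed after her own Stage-4 turn.
\item Such a certificate is either $C^k_r$, which then carries $k$'s own final adopted partition, so $\mathrm{val}_k\ge\min$ holds automatically.
\item Or it is $C^q_r$, written by the third agent $q$'s adoption, which at the same moment wrote the same partition into $C^q_k$.
\item A short case check on the processing order then shows the partition is either $k$'s own or one that sat in one of $k$'s certificates at her Stage-5 turn. The only remaining configuration would have $r$ re-adopting her own Stage-4 partition, which the strict inequality forbids.
\end{itemize}
Your item~3 inherits the same flaw. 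Item~2 bounds $k$'s certificates, not all of $\Pi_3$, and it is exactly this tracing that makes the result independent of processing order.

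\textbf{Item~1.} You assert the literal claim ``need not hold,'' but it does hold, and the weaker statement you substitute is not enough.
\begin{itemize}
\item By the same domination fact, a candidate unchanged since $k$'s Stage-4 turn can never be adopted in Stage~5. So your ``obstacle'' case is vacuous. Every Stage-5 adoption is of a partition previously adopted by another agent.
\item Your case split also omits certificates overwritten earlier within Stage~5.
\item The literal claim is exactly what the Stage-5 part of \cref{lem:copy-stage-guarantees} needs. The earlier adopter is now a picker who is $(1-\varepsilon)$-MMS-satisfied with every bundle, and the other picker has two acceptable bundles by the Stage-4 analysis.
\item Your principle that the guarantees ``depend only on the unordered partition and the copier's value comparison'' is false. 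The comparison protects only the copier; the pickers' guarantees come from the roles they played in the partition's history.
\item Concretely, in Subcase~2.B the original chooser may value her chosen bundle highly and the other two at about $0.3\cdot\mathrm{MMS}$. If she were made to pick second, she would fall far below $(\tfrac{9}{10}-\varepsilon)\mathrm{MMS}$.
\end{itemize}
Membership in $\Pi_3$ alone cannot rule out such adoptions; only the certificate structure does.
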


\begin{lemma}\label{lem:copy-stage-guarantees}

The allocations constructed by the algorithm presented in \cref{thm:bobw-poly} uphold all ex-post fairness guarantees 
specified in the theorem. 

Specifically, the \emph{divider} (i.e., the adopting agent) receives at least a \((1 - \varepsilon)\)-fraction of her \(MMS\) value,  
the \emph{first-chooser} receives her top-valued bundle in each allocation,  
and the \emph{second-chooser} receives at least \(\tfrac{9}{10}-\varepsilon\) of her MMS value and is \emph{EFX}-satisfied or receives \(1-\varepsilon\) of her MMS value.
\end{lemma}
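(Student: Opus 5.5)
The plan is an induction over the stages of the algorithm. \cref{lem:stage-3-guarantees} already shows that every allocation existing at the end of Stage~3 satisfies the ex-post guarantees of \cref{thm:bobw-poly}. Any allocation that is not replaced in Stage~4 or Stage~5 therefore inherits these guarantees directly. So it suffices to treat an allocation pair $\mathcal{X}^k,\mathcal{Y}^k$ that agent $k$ (the copier) replaced by a permutation of an adopted partition $\mathcal{S}^k$. By \cref{obs:adoption-no-new}, $\mathcal{S}^k$ is, as an unordered partition, a member of $\Pi_3$: it is the underlying partition of some certificate $\mathcal{C}^r_k$ produced by Stage~3, or of one produced by a Stage~4 adoption, which itself lies in $\Pi_3$. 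Item~1 of that observation reduces Stage~5 to the Stage~4 analysis. The proof therefore amounts to verifying, for a single adoption, the guarantees for the three roles.

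\textbf{The divider (copier) $k$.} Adoption happens only if $\min_{Z\in\mathcal{S}^k}v_k(Z)$ strictly exceeds $v_k(X^k_k)$ or $v_k(Y^k_k)$. Agent $k$'s bundle in the new allocations is some bundle of $\mathcal{S}^k$, so her value is at least $\min_{Z\in\mathcal{S}^k}v_k(Z)$. I will lower-bound this by her value in the replaced allocations, where she was the divider. Taking the maximum over both certificates, and recalling that after Stage~1 her original partition $\mathcal{M}^k$ has worst bundle at least $(1-\varepsilon)MMS_k$, the new value is at least $(1-\varepsilon)MMS_k$. One subtlety: only one of the two old values may have been beaten. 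Even so, the minimum over $\mathcal{S}^k$ exceeds the smaller of the two old values, and both old values were at least the worst bundle of $\mathcal{M}^k$, since in Stage~2 the divider receives a bundle of $\mathcal{M}^k$.

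\textbf{The first-chooser.} She picks her favorite among three bundles, so by \cref{obs:top-is-EFX} she is EFX-satisfied and receives at least $PROP$, hence at least $MMS$.

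\textbf{The second-chooser $i$ (main obstacle).} She picks her preferred bundle among the two left after the first-chooser, so she gets at least her median bundle of $\mathcal{S}^k$, and she is EFX-dominating the bundle left for $k$. The difficulty is controlling her value relative to $MMS_i$ when $\mathcal{S}^k$ was built around other agents' valuations. My approach is to use that $\mathcal{S}^k$ is a certificate $\mathcal{C}^r_k$, and by construction every certificate is a partition in which $k$'s assigned bundle is EFX-dominating and the other two bundles were formed from some $\mathcal{MMS}$-type partition $\{A,B,C\}$ of divider $r$ via a two-way repartition. I then case-split on whether $i=r$ or $i$ was the subdivider or chooser in $\mathcal{C}^r_k$.

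If $i=r$, then $\mathcal{S}^k$ comes from $i$'s own $\mathcal{M}^i$ by repartitioning two bundles into pieces that are top-valued for the subdivider. I would argue that $i$'s second pick is at least her worst bundle of $\mathcal{M}^i$, giving $(1-\varepsilon)MMS_i$. Otherwise, $i$ held a bundle of $\mathcal{C}^r_k$ in the corresponding Stage-3 allocation. There her value was either at least $(1-\varepsilon)MMS_i$, or at least $(\tfrac{9}{10}-\varepsilon)MMS_i$ with EFX, by \cref{lem:stage-3-guarantees}. Choosing second among two remaining bundles gives her at least the smaller of the two non-$k$ bundles, and I will compare this with her old bundle using \cref{obs:repartition-larger}: in a Subcase~2.B partition, both repartitioned pieces are worth at least the leftover bundle to the subdivider and the chooser. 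This yields the value bound, and since she takes the better of the two remaining bundles and these dominate the third in her eyes, EFX follows by \cref{obs:top-is-EFX}-type reasoning restricted to the remaining pair. This case analysis, matching each certificate type to the role $i$ played, is where I expect most of the work.
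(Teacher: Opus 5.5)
Your plan follows the paper's proof closely:
\begin{itemize}
\item Stage~3 serves as the base case via \cref{lem:stage-3-guarantees};
\item each adoption is checked role by role (copier, first chooser, second chooser);
\item the second chooser is split by her role in the source partition;
\item Stage~5 is reduced to Stage~4 via \cref{obs:adoption-no-new}.
\end{itemize}
Your copier, first-chooser and $i=r$ arguments are fine.

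\textbf{The gap.} You assert that every certificate $\mathcal{C}^r_k$ consists of one intact bundle of $r$'s partition plus a two-way repartition of the other two. This is false for two kinds of certificate. Case-1 certificates ($\mathcal{C}^r_k=\mathcal{X}^r$), and the Subcase-2.A certificate that the algorithm sets equal to $\mathcal{X}^1$, are mere labelings of the divider's partition $\mathcal{M}^r$.

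For the Case-1 ones your second-chooser argument genuinely fails. Let $j$ be the agent other than $k,r$. In Case~1 she held her (possibly unique) top bundle of $\mathcal{M}^r$. In the new allocation where $r$ picks first, $r$ may take exactly that bundle. Then $j$ is left with bundles worth far less than $(\tfrac{9}{10}-\varepsilon)\mathrm{MMS}_j$, and with no EFX guarantee (e.g.\ if all of $j$'s value lies in many small items of that bundle). So the lemma holds only because such partitions are never adopted, and you must prove that.

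\textbf{The missing step.} This is how the paper's proof begins, and it relies on Stage~1. By Stage~1, $\mathcal{M}^k$ maximizes $k$'s worst bundle over all three FPTAS partitions. Before Stage~4, the copier holds bundles of $\mathcal{M}^k$ in both $\mathcal{X}^k$ and $\mathcal{Y}^k$, so
$\min_{Z\in\mathcal{M}^r}v_k(Z)\le\min_{Z\in\mathcal{M}^k}v_k(Z)\le\min\{v_k(X^k_k),v_k(Y^k_k)\}$,
and her values never drop below this afterwards. Hence the strict adoption test never fires on these partitions in either stage. So $k$ only adopts a partition in which she was the chooser in Subcase~2.A, Subcase~2.B, or a Stage-3 repair. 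Only there does your structural description hold.

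\textbf{The EFX step for the second chooser needs a different ingredient.} Suppose the first chooser takes one repartitioned piece $P_a$ and the second chooser gets the other piece $P_b$. She may strictly prefer $P_a$. Reasoning as in \cref{obs:top-is-EFX} on the remaining pair only gives envy-freeness toward the leftover bundle. What you need is that $P_b$ EFX-dominates $P_a$. For Subcase~2.A/2.B partitions this comes from the \textsc{Realloc} step inside MMS-EFX-Improved-RePartition (\cref{lem:reallocation-is-still-EFX}). For Stage-3 repairs it comes from the partition output by \cref{alg:two-agents-efx} being EFX under her valuation.

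\textbf{The value bound is also mis-stated.} ``The smaller of the two non-$k$ bundles'' only gives her the leftover $L$. The correct argument is that the first chooser removes at most one of the two pieces, so she gets at least $\min\{v_i(P_1),v_i(P_2)\}$. This is at least $(\tfrac{9}{10}-\varepsilon)\mathrm{MMS}_i$ by \cref{lem:val-910-mms}, or by the min-value guarantee of \cref{alg:two-agents-efx} for Stage-3 repairs. Comparing with her earlier bundle, as you propose, does not work for a Subcase-2.A certificate $\{P_1,P_2,L\}$. The repartitioning agent never held either piece there; she received the top bundle $A$ instead.
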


\begin{proof}
First, observe that {by \cref{lem:stage-3-guarantees}}, all allocations constructed at the end of Stage~3 satisfy all previously established ex-post fairness guarantees.
It therefore remains to verify that the allocations produced in Stages~4 and~5 also uphold these ex-post fairness guarantees.
\begin{enumerate}
\item \textbf{Stage 4: }
We first note that, due to Stage~1, agent $k$ will not choose to adopt any candidate partition which is equal to $\mathcal{M}^t$ for an agent $t \neq k$.

Consequently, agent~$k$ never adopts a candidate partition constructed under Case~1 (i.e., when both she and the other agent receive their top two bundles in such a partition), nor a candidate partition from Subcase~2.A in which she serves as the subdivider (where she receives her top bundle in such a partition). 
Therefore, she can only adopt a candidate partition that was created either when she acted as the chooser in Subcase~2.A, or when she acted a chooser in Subcase~2.B (The adopted partition must be a certificate created for agent $k$. By definition of the certificates - the certificate equals the partition in which the agent was the \textit{chooser}).
 Let the adopted partition be denoted by \(\mathcal{S} = \{S_1, S_2, S_3\}\).  
Let agent \(k\) be the \emph{adopter} (who, as we noted, was the \emph{chooser} in \(\mathcal{S}\)),  
agent \(j\) the original \emph{subdivider} in $\mathcal{S}$, and agent \(i\) the original \emph{divider} in $\mathcal{S}$.

The adopter's (agent \(k\)) value for each of the three bundles is at least  \((1 - \varepsilon) \cdot MMS\), 

since the minimum-valued bundle in $\mathcal{S}$ has value at least her \((1 - \varepsilon)MMS\) threshold.  
Therefore, she may receive any bundle in the partition.  

We recall that we construct two allocations:  
\(\mathcal{X}^k\), where agent \(i\) chooses first and agent \(j\) chooses second,  
and \(\mathcal{Y}^k\), where the roles are reversed. 
We therefore show that agent $j$ and $i's$  top two valued bundles in $S$  uphold all guarantees.

Agent \(j\), who acted as the \emph{subdivider}, 
values her top two bundles in \(\mathcal{S}\) at least as high as \(\tfrac{9}{10}MMS_j\) and {is EFX-satisfied with both of them}. 

The remaining agent \(i\), who served as the \emph{divider} in \(\mathcal{S}\), values at least two bundles in \(\mathcal{S}\) at least as high as \((1 - \varepsilon)MMS_i\):
\begin{enumerate}
    \item the bundle she receives in \(\mathcal{S}\) coincides with one of the bundles from her original division and therefore preserves her value guarantee.  
    \item the combined value of the two remaining bundles in \(\mathcal{S}\) is at least \(2 \cdot (1 - \varepsilon) \cdot MMS_i\), since they are a repartition of two bundles her MMS partition.   
    Therefore, at least one of them must have value at least \((1 - \varepsilon)MMS_i\).  
\end{enumerate}

Therefore, both agents \(i\) and \(j\) possess at least two bundles each that uphold the fairness guarantees.  
We construct two allocations:  
\(\mathcal{X}^k\), where agent \(i\) chooses first and agent \(j\) chooses second,  
and \(\mathcal{Y}^k\), where the roles are reversed.  
In both allocations, the first chooser selects her top-valued bundle, and the second chooser receives at least her second-highest bundle.

\item \textbf{Stage 5: } If an agent decides to perform an adoption in the second round, it is necessarily because she is envious of a candidate partition that was previously \emph{adopted}, by \Cref{obs:adoption-no-new}.  
In this case, the adopting agent is \((1 - \varepsilon)MMS\)-satisfied with any bundle in the partition being adopted.  
Since this partition itself resulted from a prior adoption, there already exists another agent who is also \((1 - \varepsilon)MMS\)-satisfied with all bundles.  
As established in the first round of adoption, the remaining agent possesses at least two bundles that satisfy the fairness guarantees.  

Therefore, each of the two non-adopting agents has at least two bundles that uphold all guarantees.  
We again define two derived allocations:  
\(\mathcal{X}^k\), where one of the non adoption agents chooses first and the other second,  
and \(\mathcal{Y}^k\), where the roles are reversed.

\end{enumerate}
\end{proof}

\begin{lemma}\label{lem:two-copy-prop} The distribution generated by the algorithm in  \cref{thm:bobw-poly} guarantees each agent her proportional share ex-ante.
\end{lemma}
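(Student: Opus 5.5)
We mirror the proof of \cref{thm:three-agents-main}. For each agent $t$, with $\mu$ uniform over the six allocations, it suffices to show that
\[
v_t(X^t_t)+v_t(Y^t_t)\ \ge\ 2m_t \quad\text{and}\quad \sum_{r\neq t}\bigl(v_t(X^r_t)+v_t(Y^r_t)\bigr)\ \ge\ 2\bigl(v_t(M)-m_t\bigr)
\]
for some common threshold $m_t$. Averaging then gives $\mathbb{E}[v_t]\ge \frac{2m_t+2(v_t(M)-m_t)}{6}=PROP_t$. The right choice is $m_t:=\mathrm{val}_t$, the final worst-case-value guarantee of agent $t$. Concretely, $m_t$ is the minimum, over the final partition underlying $\mathcal{X}^t,\mathcal{Y}^t$, of $v_t$. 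If $t$ never adopted, this is her Stage-1 partition $\mathcal{M}^t$. Otherwise it is $\min_{Z\in\mathcal{S}^t}v_t(Z)$ for the last adopted $\mathcal{S}^t$.

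The divider term is immediate. In $\mathcal{X}^t,\mathcal{Y}^t$ agent $t$ receives a bundle of her own (possibly adopted) partition, so each of $v_t(X^t_t)$ and $v_t(Y^t_t)$ is at least $m_t$. Stage~3 does not alter the divider's bundle $T_i$, so it does not affect this.

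The non-divider terms carry the real work. The key invariant is a version of \cref{lem:guarantee-value} with $MMS_t$ replaced by $m_t$. For any three-way partition $\mathcal{T}$ with $\min_{Z\in\mathcal{T}}v_t(Z)\le m_t$, the two top bundles of $\mathcal{T}$ sum to at least $v_t(M)-m_t$. By \cref{obs:adoption-no-new}, after Stage~5 every partition appearing among $t$'s certificates $\mathcal{C}^r_t$ satisfies this minimum condition; this is exactly what the adoption stages buy.

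It then remains to show, for each other divider $r$, that $v_t(X^r_t)+v_t(Y^r_t)\ge v_t(M)-m_t$. I would argue this case by case on how $\mathcal{X}^r,\mathcal{Y}^r$ were finally produced.

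\emph{Case~1 and Subcase~2.A.} Agent $t$ gets a top bundle of a partition with minimum at most $m_t$, twice. This holds because $\mathcal{M}^r$ has minimum at most $\min \mathcal{M}^t\le m_t$ by the Stage~1 selection, and the 2.A certificate is a certificate of $t$.

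\emph{Subcase~2.B.} I would redo the argument of \cref{lem:total-val-case-2} with $m_t$ in place of $MMS_t$. Inequalities (1)--(3) there use only \cref{obs:repartition-larger} and the maximality of the chosen repartition, both of which survive approximation. The single MMS-optimality step, that some bundle of the chooser partition has value at most $MMS$, becomes ``at most $m_t$'', which the certificate invariant supplies.

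\emph{Stage~3 repair.} Use the ex-ante computation in the proof of \cref{thm:alg3-ef-efx}. Over the two allocations, $t$ obtains at least $v_t(\overline{T}_r)=v_t(M)-v_t(T_r)$, and $v_t(T_r)\le m_t$ by the invariant applied to the certificate.

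\emph{Allocations adopted by $r$.} Agent $t$ is first-chooser once and second-chooser once in the permuted $\mathcal{S}^r$, so she gets her top two bundles of $\mathcal{S}^r$. Since $\mathcal{S}^r$ is one of $t$'s certificates, its minimum is at most $m_t$, and the invariant applies.

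The main obstacle is the interaction between adoptions and certificates. Adoption by $r$ in Stage~4 or~5 may replace $\mathcal{X}^r,\mathcal{Y}^r$ after $t$ has already fixed $m_t$. So I must check that $t$'s certificate from divider $r$ is updated to a permutation of the new $\mathcal{S}^r$, and that $t$ then gets a chance to adopt it. \cref{obs:adoption-no-new}(2) is meant to guarantee exactly this, via two rounds and a closed set $\Pi_3$. I would verify it carefully, including the point that $m_t$ only increases while each earlier inequality's threshold still dominates.
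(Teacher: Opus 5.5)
\textbf{The threshold is the wrong one.} Your decomposition is the paper's: a divider term of at least $2m_t$, each other divider contributing at least $v_t(M)-m_t$, and a case analysis over Case~1, 2.A, 2.B, Stage~3 and adoptions. The trouble is that you define $m_t$ concretely as the minimum of $v_t$ over the partition underlying $\mathcal{X}^t,\mathcal{Y}^t$ (e.g.\ $\min_t(\mathcal{M}^t)$ if she never adopts), and with that definition your central invariant is false. The adoption test in Stages~4--5 compares $\min_{Z\in\mathcal{S}^t}v_t(Z)$ with $v_t(X^t_t)$ and $v_t(Y^t_t)$, the values she actually receives. These can exceed her partition minimum: in Case~1 she gets the leftover bundle of $\mathcal{M}^t$, and in Case~2 she gets whichever of $B,C$ remains. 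So a certificate $\mathcal{C}^r_t$ whose minimum lies strictly between $\min_t(\mathcal{M}^t)$ and $\min\{v_t(X^t_t),v_t(Y^t_t)\}$ is never adopted. Then ``every certificate has minimum at most $m_t$'' fails, and with it the bound $v_t(X^r_t)+v_t(Y^r_t)\ge v_t(M)-m_t$. The fix is the paper's choice $\mathrm{val}_t:=\min\{v_t(X^t_t),v_t(Y^t_t)\}$. With it, the divider term is trivial and non-adoption says exactly $\min_t(\mathcal{C}^r_t)\le\mathrm{val}_t$. Stage~1 gives $\min_t(\mathcal{M}^r)\le\min_t(\mathcal{M}^t)\le\mathrm{val}_t$, and adoption only raises $\mathrm{val}_t$.

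\textbf{Two Stage~3--related steps are also mis-justified.} First, in unrepaired Subcase~2.B, inequality (3) of \cref{lem:total-val-case-2} does not follow from \cref{obs:repartition-larger} and the argmax over $\{B,C\}$; neither does the same-pair cut-and-choose bound. Both compare the agent's own split with the \emph{other} agent's split. In the exact proof this is justified by optimality of her own two-way MMS partition (``agent~3 could have created the partition $(Y^1_2,Y^1_3)$ instead''). That is precisely what fails under approximation (\cref{subsec:ptas-fail}), and it is why Stage~3 exists. The inequality you need, $\min_t(\text{own split})\ge\min_t(\text{other's split})$, is the negation of the Stage~3 trigger, and you should cite it as such. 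Second, in the repaired case, ``$v_t(T_r)\le m_t$ by the invariant'' does not follow. The new certificate is $\{A_1,A_2,T_r\}$, with $\{A_1,A_2\}$ the \emph{other} agent's output of the two-agent procedure, and $t$ may value one of $A_1,A_2$ below $v_t(T_r)$; the invariant then bounds only that smaller value. Instead, combine two facts:
\begin{itemize}
\item Ex-ante EF of the two-agent lottery gives a total of at least $v_t(M)-v_t(T_r)$.
\item By \cref{lem:min_value-lemma}, as second chooser she still gets at least the seed partition's minimum, which is at least $v_t(T_r)$. For the agent who split the seed this is \cref{obs:repartition-larger}. For the triggering agent it holds because her own split's minimum is already at least $\max\{v_t(B),v_t(C)\}$, and the seed's minimum is larger still.
\end{itemize}
Together these give her at least the top two bundles of her certificate, i.e.\ $v_t(M)-\min_t(\mathcal{C}^r_t)\ge v_t(M)-\mathrm{val}_t$.
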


\begin{proof}
Assume, for the sake of contradiction, that there exists an agent \( i \) who does not receive her proportional share ex-ante.  
Let 
\[
\mathrm{val}_i = \min\{v_i(X_i^i),v_i(Y_i^i)\}
\]
denote the value of agent \( i \)'s least-preferred bundle among her own allocations $\mathcal{X}^i$ and $ \mathcal{Y}^i$.  
We also denote for a partition $\mathcal{T}$: $$min_i(\mathcal{T})=\min_{X\in \mathcal{T}}v_i(X)$$
i.e - $min_i(\mathcal{T})$ is the value of the minimal-valued bundle according to agent $i$ in partition $\mathcal{T}$.

We note that $val_i$ is equal or larger than the following values:
\begin{enumerate}
    \item $min_i(\mathcal{M}^t)$ for $t \in \{1,2,3\}$ - as a result of Stage 1. 
    \item $min_i(\mathcal{C}_i^t)$ for $t \neq i $ as a result of Stage 4 + 5 (the adoption stages). 
\end{enumerate}

Let \(\mathcal{X}^j, \mathcal{Y}^j, \mathcal{X}^k, \mathcal{Y}^k\) denote the remaining allocations in the support, where \(j,k \neq i\),  
and we assume w.l.o.g that the allocations in which agent \(i\) acts as the \emph{chooser} are the \(\mathcal{X}\)-allocations.

If, for each \( l \in \{k, j\} \), it holds that
\[
v_i(X_i^l) + v_i(Y_i^l) \;\geq\; v_i(M) - \mathrm{val}_i,
\]
then agent \(i\) is ex-ante proportional.  
Indeed, across all six allocations, her total value is at least
\[
2 \cdot [v_i(M) - \mathrm{val}_i] + 2 \cdot \mathrm{val}_i \;=\; 2 \cdot v_i(M),
\]
and since the final allocation is drawn uniformly at random among these six, her expected value is
\[
\frac{1}{6} \cdot 2 \cdot v_i(M) \;=\; \frac{1}{3} v_i(M),
\]
which equals her proportional share.

Hence, if proportionality is violated, there must exist some \( l \in \{k, j\} \) such that
\begin{equation}
    \label{eq:1}
 \space v_i(X_i^l) + v_i(Y_i^l) \;<\; v_i(M) - \mathrm{val}_i.
\end{equation}
We now examine all possible cases for the construction of 
\(\mathcal{X}^l\) and \(\mathcal{Y}^l\) during the algorithm of \cref{thm:bobw-poly}, and show that each leads to $$v_i(X_i^l) + v_i(Y_i^l) \;\geq \; v_i(M) - \mathrm{val}_i$$  a contradiction of \cref{eq:1}.

\begin{enumerate}
    \item \textbf{Case~1:}  
    The allocation \(\mathcal{X}^l\) was constructed in Case~1.  
    Here, the chooser receives her favorite bundle from \(\mathcal{X}^l\), and we have 
    \(\mathcal{X}^l = \mathcal{Y}^l = \mathcal{M}^l \).  
    Thus, across \(\mathcal{X}^l\) and \(\mathcal{Y}^l\), she receives at least the combined value of her two highest-valued bundles in \(\mathcal{X}^l\).  
    As noted
    \[
    \mathrm{val}_i \geq min_i(\mathcal{M}^l),
    \]
    and therefore
    \[
    v_i(X_i^l) + v_i(Y_i^l) = 2\cdot v_i(X_i^l)\geq v_i(M) -min_i(\mathcal{M}^l) \geq v_i(M) - \mathrm{val}_i,
    \]

    \item \textbf{Subcase~2.A:}  
    In this case, ${X}_i^l$ is agent $i's$ top-valued bundle in \(\mathcal{C}_i^l\), and by construction she either receives the same bundle in ${Y}_i^l$, or a higher valued bundle (in the case where she receives her top valued bundle in $\mathcal{Y}^l$). 
    As previously noted, $val_i \geq min_i(\mathcal{C}_i^l)$, and therefore 
       \[
    v_i(X_i^l) + v_i(Y_i^l) \geq 2\cdot v_i(X_i^l)\geq v_i(M) -min_i(\mathcal{C}_i^l) \geq v_i(M) - \mathrm{val}_i,
    \]

    \item \textbf{Subcase~2.B:}  
    The chooser selects her top bundle from the subdivider’s repartition. We recall that we assume w.l.o.g that $i$ was the chooser in $\mathcal{X}^l$.
    \begin{enumerate}
        \item If the chooser and subdivider partitioned the \emph{same} two bundles in $\mathcal{X}^l$ and $\mathcal{Y}^l$, then as a result of Stage 3 we ensure that, across \(\mathcal{X}^l\) and \(\mathcal{Y}^l\),  
        agent \(i\) receives at least the value of her two top bundles in \(\mathcal{X}^l\).  We recall that $\mathcal{C}_i^l=\mathcal{X}^l$, and therefore \(\mathrm{val}_i \geq min_i(\mathcal{X}^l)\)  (as a result of \(\mathrm{val}_i \geq min_i(\mathcal{C}_i^l)\)). We therefore have:
        \[
        v_i(X_i^l) + v_i(Y_i^l) \geq v_i(M)- min_i(\mathcal{X}^l)\geq v_i(M) - \mathrm{val}_i.
        \]
        \item If they partitioned \emph{different} bundles, then by construction in Subcase~2.B the chooser (agent $i$) receives her top bundle in \(\mathcal{X}^l\). We recall that $\mathcal{C}_i^l=\mathcal{X}^l$.
        We next note that as a result of Stage 3 (agent $i$ is the subdivider in $\mathcal{Y}^l$): 
        $$min \{v_i(Y_i^l),v_i(Y_r^l)\} \geq min \{v_i(X_i^l),v_i(X_r^l)\}$$
        (If this is upheld before Stage 3, we do not run the two agent procedure. If it is not upheld - we run the two agent procedure, which guarantees it).
                We also note (as in previous sections) that since $X_l^l$ is either $B$ or $C$,
                we are guaranteed that
                 $$min \{v_i(Y_i^l),v_i(Y_r^l)\} \geq v_i(X_l^l)$$
        Combining the two inequalities above with the fact that $X_i^l$ is agent $i's$ top valued bundle in $\mathcal{X}^l$:        
        \[
        v_i(Y_i^l) \geq \max \{v_i(X_r^l), v_i(X_l^l)\}.
        \] 

        Thus,
        \[
        v_i(X_i^l) + v_i(Y_i^l) 
        \geq v_i(X_i^l) + \max\{v_i(X_r^l), v_i(X_l^l)\}
         =  v_i(M) - min_i(\mathcal{C}_i^l)
        \geq v_i(M) - \mathrm{val}_i.
        \]

    \end{enumerate}

    \item \textbf{Adoption Stages (Stage 4 + 5):}  
    The partition \(\mathcal{X}^l\) was constructed during one of the adoption stages (Stage 4 or Stage 5).
    \begin{enumerate}
        \item If agent $l$ adopted \(\mathcal{X}^l\) in the \emph{first} adoption stage (Stage 4), then 
        agent \(i\) could have adopted this partition in the \emph{second} stage (Stage 5), achieving the same valued minimal bundle as in \(\mathcal{X}^l\).  
        Hence,
        \[
        val_i \geq \min_{X \in \mathcal{X}^l} v_i(X).
        \]
        By construction (since \(\mathcal{X}^l\) is an adopted partition), agent \(i\) receives in \(\mathcal{X}^l\) and \(\mathcal{Y}^l\) the sum of her top two bundles in \(\mathcal{X}^l\), therefore
        \[
        v_i(X_i^l) + v_i(Y_i^l) 
        \geq v_i(M) - \min_{X \in \mathcal{X}^l} v_i(X)
        \geq v_i(M) - \mathrm{val}_i.
        \]
        \item If agent $l$ adopted \(\mathcal{X}^l\) during the \emph{second} adoption stage (Stage 5), it must have been adopted from {a previously adopted} partition (by \Cref{obs:adoption-no-new}).  
        Since agent \(i\) could have chosen to adopt this partition during Stage 5 as well, the same argument applies:
        \[
        val_i \geq \min_{X \in \mathcal{X}^l} v_i(X),
        \]
        implying again that
        \[
        v_i(X_i^l) + v_i(Y_i^l) \geq v_i(M) - \mathrm{val}_i.
        \]
    \end{enumerate}
\end{enumerate}

In all possible constructions, we obtain \( v_i(X_i^l) + v_i(Y_i^l) \geq v_i(M) - \mathrm{val}_i \), contradicting \cref{eq:1}.  
Thus, after completion of the algorithm, every agent receives her proportional share ex-ante.
\end{proof}

\section{Counterexamples: $(1-\varepsilon)$-approximate Algorithms May Fail BoBW Fairness} 

\subsection{Cut-and-Choose}\label{subsec:ptas-fail}

In this section, we demonstrate that simulating the classical 2-agent \emph{cut-and-choose} procedure using an $1-\varepsilon$ approximate algorithm to the maximin share (MMS) partitions (a PTAS or an FPTAS) may fail to satisfy both ex-ante proportionality and ex-post EFX. We further demonstrate that the variant of the algorithm that allows the agent that does not get her ex-ante proportional share to replace her partition with one guaranteeing a better worst-case value (and thus guaranteeing ex-ante proportionality), still has the problem that the resulting allocation may fail to satisfy EFX ex-post. 

To show this, consider a setting with two agents, and a set of seven indivisible goods $M = \{g_1, g_2, g_3, g_4, g_5, g_6, g_7\}$.
The two additive valuation functions $v_1,v_2$ are defined by the following item values: 
\[
\begin{array}{c|ccccccc}
 & g_1 & g_2 & g_3 & g_4 & g_5 & g_6 & g_7 \\
 \hline
 v_1(g) & 5 & 3 & 2 & 4 & 2 & 0.5 & 0.5 \\[2pt]
 v_2(g) & 2 & 1 & 1 & 0.5 & 0.5 & 0.5 & 0.5
\end{array}
\]
We note agent $1$'s MMS value is 8.5 (given for example by the partition $\{\{g_1,g_2,g_6\}, \{g_3,g_4, g_5,g_7\}\}$, and agent $2$'s MMS value is $3$ (given for example by the partition  $\{\{g_1,g_2\}, \{g_3, g_4, g_5, g_6, g_7\}\}$).
Assume that $\varepsilon = \frac{1}{5}$. A possible $(1-\varepsilon)$-MMS partition for agent~1 is:

\[
\mathcal{S}^1 =\{ S^1_1, S_2^1\} =\big\{ \{g_1, g_2, g_3\}, \{g_4, g_5, g_6, g_7\} \big\}
\]
$$v_1(S^1_1)=10, v_1(S^1_2)=7$$

Note that this partition is not EFX for agent~1: if agent~1 receives $S^1_2$, she strictly prefers $S^1_1$ even after the removal of a single good (e.g., $g_3$) from $S^1_1$.

For agent~2, the PTAS may produce the partition
\[
\mathcal{S}^2 =\{ S^2_1, S_2^2\} = \big\{\{g_1, g_2\}, \; \{g_3, g_4, g_5, g_6, g_7\} \big\},
\]
corresponding, in terms of agent~1’s valuations, to bundles of values $v_1(S^2_1) = 8$ and $v_1(S^2_2) = 9$. 
Under the randomized cut-and-choose procedure, agent~1 may receive her own small bundle $S^1_2$ (valued $7$) with probability $\tfrac{1}{2}$ and agent~2’s bundle $S^2_2$ (valued $9$) with probability $\tfrac{1}{2}$, yielding expected value $ 8  < \tfrac{v_1(M)}{2} = 8.5$.
Hence, the resulting allocation is not ex-ante proportional.
Moreover, since $S^1_2$ is not EFX, the overall procedure also fails to ensure ex-post EFX.
We also note that even if agent~1 were allowed to ``switch'' to agent~2’s partition, thereby guaranteeing herself ex-ante proportionality, she would still not be \emph{EFX}-satisfied when receiving bundle \(S^2_1\).

\subsection{\texorpdfstring{$(1-\varepsilon)$-Approximation Failure under Identical Valuations}{(1-epsilon)-Approximation Failure under Identical Valuations}}
\label{subsec:approx-ptas-fail}

In this section, we present an example demonstrating that even for three agents with identical valuations, allowing them to use different algorithms to compute \((1-\varepsilon)\)-approximate \(MMS\) partitions may cause the algorithm described in \cref{thm:bobw-poly-approx} to produce ex-ante envy. 

Consider a setting with three agents and five goods $M=(g_1,g_2,g_3,g_4, g_5)$, 
with common additive valuations with items of values

\[
\begin{array}{c|ccccc}
 & g_1 & g_2 & g_3 & g_4 & g_5 \\
 \hline
 v(g) & 4 & 2 & 6 & 5 & 1 
\end{array}
\]

An exact MMS partition in this case is 
\[
(\{g_1,g_2\}, \quad \{g_3\}, \quad \{g_4,g_5\}).
\]
Suppose that each agent computes her approximate MMS partition using a different {$(1-\varepsilon)$-approximation algorithm, with $\varepsilon = \frac{1}{4}$}.
Agents~1 and~2 happen to recover the exact MMS partition above,  
while agent~3 computes a slightly different partition:
\[
(\{g_4,g_2\}, \quad \{g_3\}, \quad \{g_1,g_5\}).
\]
Following the algorithm’s execution, agent~3 receives a value of~5 in the two allocations derived from her own (approximate) partition,  
and a value of~6 in the remaining allocations.  
Consequently, her ex-ante utility is strictly below her proportional share,  
demonstrating that using $(1-\varepsilon)$-approximate MMS partitions does not necessarily preserve ex-ante proportionality even under identical valuations (when different agents use different approximation MMS partitions).

\section{The Residual MMS (RMMS)}\label{sec:RMMS}

Recently, \citet{feige2025residualmaximinshare} introduced a new share based notion -- the \textit{residual maximin share}:
\begin{definition} [Residual Maximin Share (RMMS) {\cite[Definition 1]{feige2025residualmaximinshare}}]\label{def:RMMS}
The residual maximin share (RMMS) for valuation $v_i$ and number of agents $n$, denoted by $RMMS(v_i,n)$, is the highest value $t$ that is residual self-feasible. That is, for every $0 \leq k < n$, removing $k$ bundles each of value (under $v_i$)  strictly less than $t$, there is an $(n - k)$ partition of the set of remaining items, where each part has value (under $v_i$) at least $t$.
\end{definition}

When $n$ is clear from the context, we denote $RMMS_i=RMMS(v_i,n)$

An allocation is an \emph{RMMS allocation} if every agent $i$ receives a bundle of value at least $RMMS_i$. 
 An allocation is an \emph{$\alpha$-RMMS allocation} if every agent $i$ receives a bundle of value that is at least $\alpha\cdot RMMS_i$.

Moreover , as shown in \cite{feige2025residualmaximinshare}, for any $n$ and any additive valuation $v_i$ the RMMS is at most the MMS and at least the MXS. Therefore, we have 
$$PROP(v_i,n)\geq MMS(v_i,n)\geq RMMS(v_i,n) \geq MXS(v_i,n)$$

\subsection{\cref{alg:three-agent-construction} Guarantees each Agent her RMMS}

We prove the following lemma:
\begin{lemma}
    Each allocation produced by  \cref{alg:three-agent-construction} guarantees each agent her  RMMS.
\end{lemma}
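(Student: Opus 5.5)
The plan is to reduce to a single problematic agent and then use the residual definition with $k=1$. We rely on the chain $PROP(v_i,3)\ge MMS(v_i,3)\ge RMMS(v_i,3)$ from \cite{feige2025residualmaximinshare}. With this chain, any agent who receives her MMS share or her proportional share automatically receives her RMMS.

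First we go over the cases of \cref{alg:three-agent-construction}. The divider always receives a bundle of her $\mathcal{MMS-EFX}$ partition, so she gets at least $MMS_1\ge RMMS_1$. In Case~1 and Subcase~2.A, the analysis in the proof of \cref{lem:main-lemma} shows that every agent receives her MMS share. In Subcase~2.B the chooser receives at least her proportional share. So the only agent left to check is the subdivider $l$ in Subcase~2.B. She receives one bundle of $\mathcal{P}^l=\{P^l_1,P^l_2\}$, so it suffices to show $\min\{v_l(P^l_1),v_l(P^l_2)\}\ge t$, where $t:=RMMS_l$.

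For the subdivider we split on how $l$ values $B$ and $C$; recall that $A$ is her unique top bundle.

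(i) Suppose $v_l(B)\ge t$ and $v_l(C)\ge t$. Then $\min\{v_l(A),v_l(B)\}=v_l(B)\ge t$. \textbf{MMS-EFX-Improved-RePartition} (\cref{alg:partition}) never lowers the minimum bundle value of its input, so $\mathcal{P}^l_B$ has minimum at least $t$. Since $Z_l$ is chosen to maximize this minimum, $\mathcal{P}^l$ also has minimum at least $t$.

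(ii) Otherwise, without loss of generality $v_l(C)<t$. We apply \cref{def:RMMS} with $k=1$, removing the single bundle $C$, whose value is strictly less than $t$. The definition then gives a $2$-partition of $A\cup B$ in which both parts are worth at least $t$. Hence $MMS(v_l|_{A\cup B},2)\ge t$. \cref{alg:partition} starts from an exact MMS partition of $A\cup B$, and \realloc\ does not decrease the minimum (\cref{cor:preservs}). So $\mathcal{P}^l_B$ has minimum at least $t$, and again the maximizing choice of $Z_l$ gives the same for $\mathcal{P}^l$. The case $v_l(B)<t$ is symmetric, removing $B$ and repartitioning $A\cup C$.

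The main point needing care is step (ii). We must check that the residual condition is applied to a removed bundle whose value is strictly below $t$, which is exactly how the case split is set up. We also rely on the exact-MMS version of \cref{alg:partition}, so the argument is stated for the existential algorithm rather than the FPTAS variant.
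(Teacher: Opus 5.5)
Your proof is correct and follows essentially the same route as the paper. The divider and chooser are handled via $MMS\ge RMMS$ and $PROP\ge RMMS$. For the subdivider, you apply \cref{def:RMMS} with $k=1$ to a leftover bundle of value below $RMMS_l$, and otherwise the repartition is at least as good as that bundle.

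The only difference is cosmetic. The paper splits on whether $v_l(L_l)\ge RMMS_l$ and invokes \cref{obs:repartition-larger}, whereas you split on $v_l(B),v_l(C)$ and use the $\arg\max$ choice of $Z_l$ directly. Your version does not need to know which of $B,C$ was chosen, and it makes explicit the reliance on the exact MMS start of \cref{alg:partition}.
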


\begin{proof}
First, as previously shown, in each allocation the \textbf{chooser} and the \textbf{divider} receive their proportional share and their \(MMS\) share, respectively.  
Consequently, both also receive at least their \(RMMS\) value.  
It remains to establish the guarantee for the \textbf{subdivider}.  

In the worst case, the subdivider receives the minimum valued bundle in the partition she has constructed.  
By \cref{obs:repartition-larger}, this bundle has value at least 
as high as the third bundle in the allocation.  
If this third bundle already exceeds or is equal to her \(RMMS\) value, the claim follows immediately.  
Otherwise, the third bundle is strictly below her RMMS value. 
By the definition of \(RMMS\), her repartition of the two remaining bundles guarantees that both bundles in the new partition are equal or larger than her \(RMMS\) value.  
Thus, the subdivider also receives at least her \(RMMS\).

\end{proof}

{We note that when running the algorithm described in \cref{thm:bobw-poly-approx}, each agent achieves her value guarantees up to a factor of \((1-\varepsilon)\).  
Consequently, every agent receives at least \((1-\varepsilon)\cdot RMMS\).
}

We also note that the RMMS and \( \tfrac{9}{10}\mathrm{MMS} \) guarantees are \emph{independent}, in the sense that any one of them can be larger than the other (see \cref{sec:RMMS-910MMS}). Our construction ensures that each agent receives a value at least as large as the \textbf{maximum} of the two.

 \subsection{Comparing RMMS and $\frac{9}{10}$-MMS.}\label{sec:RMMS-910MMS} 
We present examples demonstrating that the RMMS can be either higher or lower than  \( \frac{9}{10} \) of the MMS, thereby showing that the notions are independent. 

First, observe that in the case of three agents and three items, the RMMS coincides with the MMS, so the RMMS can be strictly larger than $9/10$ of the MMS.

To establish 
that the RMMS can be strictly smaller than $9/10$ of the MMS, 
consider an instance with three agents \( N = \{1,2,3\} \) and a set of 11 items \( M = \{g_1, \ldots, g_{11}\} \).  
All agents have identical additive valuation functions \( v_i = v \) for all \( i \in N \), where
\[
\begin{array}{c|ccccccccccc}
 & g_1 & g_2 & g_3 & g_4 & g_5 & g_6 & g_7 & g_8 & g_9 & g_{10} & g_{11} \\
 \hline
 v(g) & 7 & 7 & 8 & 1 & 1 & 1 & 1 & 1 & 1 & 1 & 1 
\end{array}
\]
The maximin share of each agent is
\[
\mathrm{MMS}_i = 10,
\]
as witnessed by the partition
\[
P = \big\{\, \{g_1,g_4,g_5,g_6\},\ \{g_2,g_7,g_8,g_9\},\ \{g_3,g_{10},g_{11}\} \,\big\},
\]
in which every bundle has a value of 10.

Assume, for the sake of contradiction, that
\[
\mathrm{RMMS}_i \ge \tfrac{9}{10}\mathrm{MMS}_i = 9.
\]
By the definition of RMMS, removing one bundle of value \( 8 \) (for instance, the bundle consisting of all items valued \( 1 \)) should leave a set of items \( \{g_1,g_2,g_3\} \) that can be partitioned into two bundles, each with value at least \( 9 \).  
However, when removing items 4 through 11, we are left with items $g_1,g_2,g_3$ -- 
no partition of these three items into two bundles can yield a minimum bundle value of at least \( 9 \).  
Hence, such a partition does not exist, and therefore
\[
\mathrm{RMMS}_i < 9 =\tfrac{9}{10}\mathrm{MMS}_i.
\]

\section{IMMX Allocations for the EFX Counterexample Instances}
\label{immx-proof}

In this section, we analyze the two recently discovered 3-agent instances 
for which no EFX allocation exists: the instance of 
\citet{akrami2026counterexampleefxnge} with eight goods and monotone 
valuations, and the instance of 
\citet{mackenzie2026counterexamplesefxsubmodularsubadditive} with monotone 
submodular valuations. We establish two results for each instance. First, 
we show that neither instance admits an MMS allocation: for the submodular 
instance, we prove that no MMS allocation exists under \emph{any} cardinal 
valuation consistent with the ordinal rankings of the original example, 
while for the \citet{akrami2026counterexampleefxnge} instance we verify 
the non-existence of an MMS allocation by exhaustive search. Second, we 
show that both instances nevertheless admit an \IMMX allocation. Together, 
these results demonstrate that \IMMX can be satisfied even in instances 
exhibiting a simultaneous failure of both EFX and MMS, leaving open the 
possibility that \IMMX is feasible in general.

\begin{claim}
The \citet{mackenzie2026counterexamplesefxsubmodularsubadditive} construction admits no MMS allocation. More generally,
the same conclusion holds for every cardinal valuation profile
\((v_0,v_1,v_2)\) which is compatible with the rank ordering, in the sense that
for every agent \(i\) and all bundles \(S,T\subseteq M\),
\[
r_i(S)>r_i(T)
\quad\Longrightarrow\quad
v_i(S)>v_i(T).
\]
\end{claim}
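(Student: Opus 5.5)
The plan is to reduce the cardinal statement to a purely ordinal one, and then settle the ordinal version by a combinatorial check on the construction of \citet{mackenzie2026counterexamplesefxsubmodularsubadditive}. Write $r_i$ for agent $i$'s rank function over bundles, as given in the construction.

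The first step is to define an ordinal maximin threshold for each agent:
\[
\mu_i \;:=\; \max_{(Z_0,Z_1,Z_2)\in\Gamma_3}\ \min_{j} r_i(Z_j),
\]
and to fix a partition $P^{*}_i$ attaining it. The key lemma is the following. For every cardinal $v_i$ compatible with $r_i$ and every bundle $S$ with $r_i(S)<\mu_i$, we have $v_i(S)<\mathrm{MMS}(v_i,3)$.

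The proof of the lemma takes one line. Every part $Z$ of $P^{*}_i$ satisfies $r_i(Z)\ge\mu_i>r_i(S)$, so compatibility gives $v_i(Z)>v_i(S)$. Hence
\[
\mathrm{MMS}(v_i,3)\;\ge\;\min_{Z\in P^{*}_i}v_i(Z)\;>\;v_i(S).
\]
Compatibility is only assumed in the strict direction, and this argument uses only that direction. So ties in $r_i$ that $v_i$ breaks arbitrarily cause no difficulty.

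Consequently, under any compatible profile, an MMS allocation must give each agent $i$ a bundle in $\mathcal{U}_i:=\{S : r_i(S)\ge \mu_i\}$. It therefore suffices to show that no allocation $(X_0,X_1,X_2)$ has $X_i\in\mathcal{U}_i$ for all $i$. This ordinal statement covers the original instance itself, since the original valuation is trivially compatible with its own ranking.

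The second step is to compute $\mu_i$ and $\mathcal{U}_i$ explicitly from the construction. I expect these to be determined by a small list of minimal acceptable bundles, since the $\mathcal{U}_i$ are up-closed because the valuations are monotone. I would exploit the symmetry of the construction: the EFX counterexample is built with a cyclic structure over the agents and goods. This should make the three families $\mathcal{U}_0,\mathcal{U}_1,\mathcal{U}_2$ images of one another under a relabeling, so it suffices to describe one of them.

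The third step, and the main obstacle, is the disjointness argument. We must show that no three pairwise disjoint bundles $X_i\in\mathcal{U}_i$ exist. Because the $\mathcal{U}_i$ are up-closed, it is enough to test minimal elements.

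I would first try a counting or pigeonhole argument. Each minimal acceptable bundle must contain some ``large'' goods (or a certain number of goods from designated groups), and the three agents' demands together exceed the supply.

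If no clean counting argument emerges, I would fall back on a finite case split over which minimal bundle agent $0$ receives. Using the cyclic symmetry reduces the cases, and in each case I would check that the remaining goods cannot contain acceptable bundles for both other agents. Failing that, since the ordinal data is finite, an exhaustive enumeration of the $3^{m}$ allocations against the minimal acceptable sets gives a complete, if less illuminating, verification.
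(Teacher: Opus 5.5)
Your first step is exactly the paper's reduction. The paper exhibits the partition $\{0,7\},\{1,2,6\},\{3,4,5\}$, of ranks $6,7,7$ for agent $0$. It then uses only the strict direction of compatibility to conclude that every bundle of rank at most $5$ is worth strictly less than $\mathrm{MMS}_0$. Your lemma about the ordinal threshold $\mu_i$ is the same argument stated abstractly, and it is correct. One minor fix: up-closure of $\mathcal{U}_i$ comes from monotonicity of the rank function $r_i$, which holds by construction. It does not come from monotonicity of $v_i$, which the claim does not assume.

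The gap is that you never carry out steps two and three, and they contain all the content of the claim. As written, you have neither the value of $\mu_i$ nor the impossibility, only a menu of strategies. The paper settles both by counting:
\begin{itemize}
\item Rank $7$ is attained only by sets containing an exceptional triple (one good of $A\cup\{x\}$, one of $B$, one of $C$). So three pairwise disjoint rank-$7$ bundles would need $9>8$ goods. Together with the partition above, this gives $\mu_i=6$.
\item From the pair table and the triple rule, $r_0(S)\ge 6$ if and only if $S$ contains an $A$-good together with $y$, or contains an exceptional triple. Transporting by $\sigma$, agent $1$ needs $Cy$ or $\{A,B,\text{one of }C,x\}$, and agent $2$ needs $By$ or $\{C,A,\text{one of }B,x\}$.
\item Not all three bundles can contain exceptional triples. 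So exactly one agent holds $y$ and must meet her $y$-option, while the other two must contain exceptional triples.
\item In each of the three cases, all three bundles then need a good from the same two-element type: $A$ if agent $0$ holds $y$, $C$ for agent $1$, $B$ for agent $2$. This is impossible.
\end{itemize}
This is precisely the pigeonhole argument you anticipated, so no enumeration is needed. But until it (or an actual enumeration) is written down, your proposal is a correct plan rather than a proof.
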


\begin{proof}
Recall that there are eight items,
\[
M=\{0,1,\dots,7\},
\]
which are grouped into 5 types:
\[
A=\{0,3\},\qquad B=\{1,4\},\qquad C=\{2,5\},
\qquad x=6,\qquad y=7.
\]
Agent \(0\)'s ordinal valuation is given by a rank function
\[
r_0:2^M\to \{0,1,\dots,7\}.
\]
The other two rank functions are obtained by the cyclic relabeling
\[
\sigma=(0\,1\,2)(3\,4\,5),
\]
namely
\[
r_1(S)=r_0(\sigma(S)),
\qquad
r_2(S)=r_0(\sigma^2(S)).
\]

For completeness, we reproduce the rank function $r_0$ of
\citet{mackenzie2026counterexamplesefxsubmodularsubadditive}. Recall that
$M=\{0,1,\dots,7\}$, grouped into five types: $A=\{0,3\}$, $B=\{1,4\}$,
$C=\{2,5\}$, $x=6$, and $y=7$. The rank functions $r_1,r_2$ of the other
two agents are obtained via the cyclic relabeling
$\sigma=(0\,1\,2)(3\,4\,5)$, namely $r_1(S)=r_0(\sigma(S))$ and
$r_2(S)=r_0(\sigma^2(S))$. The rank $r_0$ is defined as follows:

\begin{itemize}
    \item For singletons, set
    \[
    r_0(\{g\}) = 1 \qquad \text{for every } g \in M.
    \]

    \item For $|S| = 2$, the rank depends only on the types of the two
    goods, as shown in \cref{tab:ms-pairs}. For example, any pair with one
    good from $A$ and one good from $B$ has rank $2$, whereas any pair with
    one good from $C$ and the good $x$ has rank $1$. Diagonal entries apply
    to pairs consisting of two distinct goods of the same type.

    \begin{table}[H]
    \centering
    \begin{tabular}{@{}cccccc@{}}
    \toprule
        & $A$ & $B$ & $C$ & $x$ & $y$ \\
    \midrule
    $A$ & $1$ & $2$ & $2$ & $4$ & $6$ \\
    $B$ & $2$ & $1$ & $5$ & $1$ & $3$ \\
    $C$ & $2$ & $5$ & $1$ & $1$ & $3$ \\
    $x$ & $4$ & $1$ & $1$ & $-$ & $1$ \\
    $y$ & $6$ & $3$ & $3$ & $1$ & $-$ \\
    \bottomrule
    \end{tabular}
    \caption{Ranks $r_0$ of sets with cardinality two
    \cite{mackenzie2026counterexamplesefxsubmodularsubadditive}.}
    \label{tab:ms-pairs}
    \end{table}

    \item For $|S| = 3$, we call $S$ \emph{exceptional} if it contains one
    good from $A \cup \{x\}$, one good from $B$, and one good from $C$.
    Exceptional triples are assigned rank $7$. Every other triple inherits
    the largest rank among its three internal pairs:
    \[
    r_0(S) =
    \begin{cases}
        7, & \text{if } S \text{ is exceptional},\\[2pt]
        \max_{\{g,h\} \subseteq S} r_0(\{g,h\}), & \text{otherwise}.
    \end{cases}
    \]
    The exceptional triples are listed in \cref{tab:ms-triples}.

    \begin{table}[H]
    \centering
    \begin{tabular}{@{}ll@{}}
    \toprule
    Type  & Triples \\
    \midrule
    $ABC$ & $012,\ 015,\ 024,\ 045,\ 123,\ 135,\ 234,\ 345$ \\
    $BCx$ & $126,\ 156,\ 246,\ 456$ \\
    \bottomrule
    \end{tabular}
    \caption{Exceptional triples; these have the highest possible rank
    of $7$ \cite{mackenzie2026counterexamplesefxsubmodularsubadditive}.}
    \label{tab:ms-triples}
    \end{table}

    \item For $|S| \geq 4$, set
    \[
    r_0(S) = \max_{\substack{T \subseteq S \\ |T| = 3}} r_0(T).
    \]
    Equivalently, a set of size at least four has rank $7$ exactly when it
    contains an exceptional triple; otherwise, it inherits the largest
    rank of an internal pair.
\end{itemize}

We first observe that, at the ordinal level, the MMS rank is \(6\). For agent
\(0\), consider the partition
\[
\{0,7\},\qquad \{1,2,6\},\qquad \{3,4,5\}.
\]
In type notation this is
\[
Ay,\qquad BCx,\qquad ABC.
\]
The ranks of these three bundles for agent \(0\) are
\[
6,\qquad 7,\qquad 7.
\]
Thus agent \(0\) can guarantee rank at least \(6\). On the other hand, a
rank-\(7\) bundle must contain an exceptional triple, and in particular must
contain at least three items. Hence three rank-\(7\) bundles would require at
least \(9\) items, while there are only \(8\). Therefore agent \(0\)'s ordinal
MMS rank is exactly \(6\). The same argument applies to agents \(1\) and \(2\)
by cyclic relabeling.

We now pass to cardinal valuations. Let
\[
\mu_i=\operatorname{MMS}_i(v_i)
\]
be agent \(i\)'s cardinal maximin share. For agent \(0\), the partition above
has ranks \(6,7,7\). By compatibility with the rank ordering, the two rank-\(7\)
bundles are strictly more valuable than the rank-\(6\) bundle \(\{0,7\}\).
Hence
\[
\mu_0 \geq v_0(\{0,7\}).
\]
Now take any bundle \(S\) with \(r_0(S)\leq 5\). Since \(r_0(\{0,7\})=6\),
compatibility gives
\[
v_0(S)<v_0(\{0,7\})\leq \mu_0.
\]
Therefore any bundle that gives agent \(0\) at least her MMS value must have
rank at least \(6\):
\[
v_0(S)\geq \mu_0
\quad\Longrightarrow\quad
r_0(S)\geq 6.
\]
By cyclic relabeling, the same conclusion holds for every agent \(i\in\{0,1,2\}\):
\[
v_i(S)\geq \mu_i
\quad\Longrightarrow\quad
r_i(S)\geq 6.
\]

Thus, if an MMS allocation existed, every agent would have to receive a bundle
of rank at least \(6\). We now show that this is impossible.

For agent \(0\), a bundle has rank at least \(6\) exactly if it either contains
an \(A\)-good together with \(y\), or contains an exceptional triple. Equivalently,
\[
r_0(S)\geq 6
\quad\Longleftrightarrow\quad
S\text{ contains }Ay
\quad\text{or}\quad
S\text{ contains }B,C,\text{ and one of }A,x.
\]
Applying the cyclic relabeling, the corresponding conditions for the three
agents are:
\[
\begin{array}{c|c|c}
\text{agent} & \text{\(y\)-option} & \text{exceptional option} \\ \hline
0 & Ay & B,C,\text{ and one of }A,x \\
1 & Cy & A,B,\text{ and one of }C,x \\
2 & By & C,A,\text{ and one of }B,x .
\end{array}
\]
Here \(A,B,C\) refer to the original item groups.

Suppose, for contradiction, that there is an MMS allocation
\[
X=(X_0,X_1,X_2).
\]
Then every agent must receive a bundle of rank at least \(6\), and therefore must
satisfy either the \(y\)-option or the exceptional option above.

Not all three agents can satisfy the exceptional option, since each exceptional
bundle requires at least three items, and three such bundles would require at
least \(9\) items. Therefore, at least one agent must satisfy the \(y\)-option.
Since there is only one item \(y\), exactly one agent satisfies the \(y\)-option.

We now consider the three cases. If agent \(0\) receives the \(y\)-option, then
\(X_0\) contains \(y\) and an item from \(A\). Agents \(1\) and \(2\) must then
use their exceptional options, and both of these exceptional options require
an \(A\)-good. Thus all three bundles require an \(A\)-good, but there are only
two \(A\)-goods, namely \(0\) and \(3\), a contradiction.

Similarly, if agent \(1\) receives the \(y\)-option, then all three bundles
require a \(C\)-good, although there are only two \(C\)-goods. If agent \(2\)
receives the \(y\)-option, then all three bundles require a \(B\)-good, although
there are only two \(B\)-goods. Both cases are impossible.

Therefore there is no allocation in which every agent receives a bundle of
rank at least \(6\). Since every MMS allocation would have to satisfy this
property, no MMS allocation exists.
\end{proof}

\begin{claim}
The \citet{mackenzie2026counterexamplesefxsubmodularsubadditive} construction admits an \IMMX allocation for every monotone
cardinal valuation profile compatible with the rank ordering.
\end{claim}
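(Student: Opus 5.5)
The plan is to exhibit one explicit allocation and check it purely at the ordinal level, so that the conclusion transfers to every compatible cardinal profile. Write $(X_0,X_1,X_2)$ for the allocation. The target shape is: two agents each receive a bundle of rank $7$ (an exceptional bundle under their own rank function), and the third agent, who cannot also get rank $7$ since that would need $9$ items, is shown to be EFX-satisfied.

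The value side for the two rank-$7$ agents goes as follows. Fix agent $i$ and an MMS partition for $v_i$. By the argument in the previous claim, its three bundles cannot all have rank $7$, so its minimum bundle $Q$ has rank at most $6$. By compatibility, every rank-$7$ bundle $S$ has $v_i(S) > v_i(Q) = \mu_i$. Hence any agent holding a rank-$7$ bundle is MMS-satisfied, for every compatible profile.

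For the remaining agent $i$, EFX must also hold for every compatible profile. The only usable sufficient condition is strict rank dominance:
\[
r_i(X_i) > r_i(X_k\setminus\{g\}) \quad \text{for every } k\neq i \text{ and } g\in X_k,
\]
or else $X_k\setminus\{g\}\subseteq X_i$, in which case monotonicity applies. Equal ranks give no cardinal information, so they must be avoided in the comparisons that matter.

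The construction step is a finite search, which I would organise by the table of exceptional options from the previous claim. Each rank-$7$ agent uses exactly three items, which is the cheapest choice and leaves the most for the EFX agent, and the EFX agent gets the remaining two items. The removals $X_k\setminus\{g\}$ are then pairs, whose ranks are read directly from Table~\ref{tab:ms-pairs} after relabeling by $\sigma$. The requirement is therefore that the EFX agent's pair strictly beats, in her own rank, every internal pair of the other two triples.

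By the cyclic symmetry $\sigma$, I may assume that agent $0$ is the EFX agent. Her strongest pairs are those containing $y$, namely $Ay$ (rank $6$) and $By$, $Cy$ (rank $3$), and the dangerous pairs she must not see inside the others' triples are $BC$ (rank $5$) and $Ax$ (rank $4$). So I would first try to give agent $0$ the pair $\{a,y\}$ with $a\in A$, of rank $6$, together with triples for agents $1$ and $2$ that avoid $Ay$.

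This first attempt collides with the counting obstruction from the previous claim: both other exceptional options need an $A$-good. The search then has to fall back to one of two alternatives.
\begin{itemize}
\item Agent $0$ holds a weaker pair, such as $Cy$ or $By$ of rank $3$, and both other triples avoid containing both a $B$ and a $C$ good and avoid $Ax$. This must be checked against the requirement that each triple is exceptional for its owner, for example agent $1$ needs $A$, $B$ and one of $C,x$.
\item The MMS-satisfied agents are given rank $6$ bundles instead of rank $7$. This needs an extra cardinal argument, because a rank-$6$ bundle only beats $\mu_i$ when it dominates the minimum of the MMS partition.
\end{itemize}

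The main obstacle is exactly this combinatorial tension. My first try is to enumerate the few ways to place $y$ and the three type-pairs into two exceptional triples plus a leftover, computing the rank of each leftover pair and each internal pair. If no leftover pair strictly dominates, the next step is to let one exceptional agent keep only a rank-$6$ $y$-bundle. I would then prove that this bundle equals, or weakly contains, a minimal bundle of an MMS partition such as $\{0,7\}$ for agent $0$, so that the MMS guarantee still holds cardinally. With that slack, the EFX agent can receive a larger leftover set whose rank strictly dominates all the relevant removals.
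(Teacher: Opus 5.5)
\textbf{There is a genuine gap.} You never exhibit an allocation, and the family you search in contains no valid one. Every variant you consider, both fallbacks included, keeps \emph{two} agents MMS-satisfied and one agent EFX-satisfied. An MMS-satisfied agent must hold a bundle of rank at least $6$ in her own ranking. So two agents hold either two exceptional bundles, or one exceptional bundle and one $y$-option bundle. Both configurations fail for every compatible profile.

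\emph{Two exceptional bundles.} Rotate so that agent $0$ is the EFX agent. Agents $1$ and $2$ then use both $A$-goods. Agent $0$ cannot hold both a $B$-good and a $C$-good, since otherwise agents $1$ and $2$ cannot both complete their exceptional options. Hence $r_0(X_0)\le 3$. But agent $1$'s bundle contains $A$, $B$ and one of $C,x$, so deleting one good leaves a set containing a $BC$ or an $Ax$ pair, of rank at least $4$. Compatibility then gives a strict EFX violation. This is also why your fallback (i) cannot work: every exceptional triple of agent $1$ contains $BC$ or $Ax$ in agent $0$'s ranking.

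\emph{One $y$-option and one exceptional bundle.} Rotate so that agent $0$ holds $\{a,7\}$ (possibly plus more). The exceptional agent must take the other $A$-good, so the EFX agent holds neither an original $A$-good nor $y$. Suppose agent $1$ is the exceptional one.
\begin{itemize}
\item Agent $2$ ranks her own bundle at most $4$.
\item If agent $1$'s bundle contains an original $C$-good, deleting its original $B$-good leaves an original $AC$ pair. Agent $2$ sees this as a $BC$ pair, of rank $5$.
\item If instead agent $1$ used $x$, agent $2$'s bundle has rank at most $2$. Deleting the $A$-good from agent $1$'s bundle leaves a set containing an original $Bx$ pair, which agent $2$ ranks $4$.
\end{itemize}
The case where agent $2$ is exceptional is symmetric. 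So fallback (ii), with one rank-$6$ and one rank-$7$ agent, fails as well.

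\textbf{The missing idea} is to make \emph{two} agents EFX-satisfied and only one MMS-satisfied. The paper first determines the MMS exactly. Any partition in which all three bundles have rank at least $6$ for agent $0$ consists of a pair $\{a,7\}$ with $a\in\{0,3\}$ and two disjoint exceptional triples. Hence $\mathrm{MMS}_0=\max\{v_0(\{0,7\}),v_0(\{3,7\})\}$. Note that $a$ must be this maximizer, which depends on the cardinal values; it is not a fixed $\{0,7\}$. The paper then takes $X_0=\{a,7\}$, $X_1=\{1,2,4,6\}$, $X_2=\{a',5\}$.
\begin{itemize}
\item Agent $1$ ranks $X_1$ at $4$, while every one-item removal from the two pairs is a singleton of rank $1$.
\item Agent $2$ sees $X_2$ as a $BC$ pair of rank $5$. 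The removals from $X_1$ have ranks $4,4,4,2$ for her, and removals from $X_0$ are singletons.
\end{itemize}
Strict rank dominance, which you correctly identified as the right tool, then gives EFX for both agents under every compatible profile.
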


\begin{proof}
Assume that the valuation profile \((v_0,v_1,v_2)\) is monotone and compatible
with the rank ordering.

We first compute agent \(0\)'s MMS value. As shown above, any partition giving
agent \(0\) three bundles of rank at least \(6\) must contain exactly one
\(Ay\)-bundle. Since the other two bundles must contain exceptional triples,
and there are only eight items in total, this \(Ay\)-bundle must have size
exactly two. Hence it is either
\[
\{0,7\}
\quad\text{or}\quad
\{3,7\}.
\]
Moreover, both possibilities are feasible:
\[
\{0,7\},\qquad \{1,2,6\},\qquad \{3,4,5\},
\]
and
\[
\{3,7\},\qquad \{1,2,6\},\qquad \{0,4,5\}
\]
both have ranks \(6,7,7\) for agent \(0\). Therefore
\[
\operatorname{MMS}_0
=
\max\{v_0(\{0,7\}),v_0(\{3,7\})\}.
\]

Let \(a\in\{0,3\}\) be chosen so that
\[
v_0(\{a,7\})
=
\max\{v_0(\{0,7\}),v_0(\{3,7\})\},
\]
and let \(a'\) be the other item in \(\{0,3\}\). Consider the allocation
\[
X_0=\{a,7\},
\qquad
X_1=\{1,2,4,6\},
\qquad
X_2=\{a',5\}.
\]

By the choice of \(a\), agent \(0\) receives her MMS value:
\[
v_0(X_0)=\operatorname{MMS}_0.
\]
It remains to show that agents \(1\) and \(2\) are EFX-satisfied.

For agent \(1\), the bundle
\[
X_1=\{1,2,4,6\}
\]
has rank
\[
r_1(X_1)=4.
\]
Both \(X_0\) and \(X_2\) have size two, so after deleting any one item from
either of them, only a singleton remains. Since every singleton has rank \(1\),
we have
\[
r_1(X_j\setminus\{g\})=1<4=r_1(X_1)
\]
for every \(j\in\{0,2\}\) and every \(g\in X_j\). By compatibility with the
rank ordering,
\[
v_1(X_1)>v_1(X_j\setminus\{g\})
\]
for every such \(j\) and \(g\). 
Therefore agent \(1\) is EFX-satisfied.

Since \(r_2(S)=r_0(\sigma^2(S))\), and since
\[
\sigma^2(5)=4\in B,
\]
the item \(5\) is a \(B\)-good for agent \(2\). Also, if
\(a'\in\{0,3\}\), then
\[
\sigma^2(a')\in\{2,5\}=C.
\]
Thus the bundle
\[
X_2=\{a',5\}
\]
is seen by agent \(2\) as a \(BC\)-type pair. Hence
\[
r_2(X_2)=5.
\]

Again, after deleting one item from \(X_0\), only a singleton remains, so
\[
r_2(X_0\setminus\{g\})=1<5=r_2(X_2)
\]
for every \(g\in X_0\).

It remains to compare \(X_2\) with deletions from \(X_1\). The possible
one-item deletions from \(X_1=\{1,2,4,6\}\) are
\[
\{2,4,6\},\qquad
\{1,4,6\},\qquad
\{1,2,6\},\qquad
\{1,2,4\}.
\]
Under agent \(2\)'s rank function, these bundles have ranks
\[
4,\qquad 4,\qquad 4,\qquad 2.
\]
Thus, for every \(g\in X_1\),
\[
r_2(X_1\setminus\{g\})\leq 4<5=r_2(X_2).
\]
By compatibility with the rank ordering,
\[
v_2(X_2)>v_2(X_1\setminus\{g\})
\qquad
\text{for every }g\in X_1.
\].

Hence agent \(0\) receives her MMS value, while agents \(1\) and \(2\) are
EFX-satisfied. The allocation is therefore \IMMX.
\end{proof}

\begin{claim} 
The \citet{akrami2026counterexampleefxnge} instance admits an \IMMX allocation. 
\end{claim}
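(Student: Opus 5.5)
We will prove the claim by exhibiting an explicit allocation and checking it, following the template of the previous claim for the submodular instance. We first recall the \citet{akrami2026counterexampleefxnge} instance: three agents with monotone valuations over eight goods. Since the instance is finite and (unlike the submodular one) specified by concrete valuation tables, the argument is a finite verification. It has two parts: computing each agent's MMS value, and exhibiting a candidate allocation.

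\textbf{Step 1: MMS values.} For each agent $i$, we compute $\mathrm{MMS}_i$ by enumerating the $3^8/3!$ (roughly) unordered partitions of the eight goods into three bundles. We record, for each agent, the family $\mathcal{G}_i$ of bundles $S$ with $v_i(S)\ge \mathrm{MMS}_i$. The earlier exhaustive search showing that no MMS allocation exists says exactly that no allocation gives every agent a bundle in her $\mathcal{G}_i$. Hence any \IMMX allocation must leave at least one agent outside $\mathcal{G}_i$, and that agent must be EFX-satisfied.

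\textbf{Step 2: choosing the allocation.} Mirroring the previous claim, we aim for an allocation in which one agent, say agent $a$, receives a bundle in $\mathcal{G}_a$ (ideally a small bundle attaining her MMS). The other two agents should each receive a bundle that EFX-dominates every other bundle under her own valuation. A convenient way to find one is to fix agent $a$'s MMS-attaining bundle $X_a$ and then split $M\setminus X_a$ between the remaining two agents $b,c$. For this split we use the two-agent existence of EFX for monotone valuations \cite{PlautRoughgarden2020}, e.g.\ cut-and-choose on $M\setminus X_a$ using a leximin++ partition of agent $b$. This makes $b$ and $c$ EFX-satisfied towards each other. It then remains to check that each of them also EFX-dominates $X_a$. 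If $X_a$ has size two, this check reduces to comparing each agent's bundle with singletons, which is typically easy, exactly as in the submodular case. If the first choice of $a$ fails, we try the other two choices of $a$, and the other MMS-attaining bundles of $a$.

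\textbf{Step 3: verification.} For the chosen allocation, we list agent $a$'s value against $\mathrm{MMS}_a$. For each of the other two agents, we list her values for the other bundles minus each single good and compare them with her own bundle. Every comparison is a finite check against the instance's tables (or by computer, consistent with how non-existence of MMS was established).

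The main obstacle is Step 2: guaranteeing that the agents who are not MMS-satisfied also EFX-dominate agent $a$'s bundle. If $X_a$ must contain three or more goods, removing one good may leave a pair that $b$ or $c$ values highly. The first thing to try is picking $a$ and $X_a$ of minimum cardinality among MMS-attaining bundles. Failing that, we search directly over all $3^8$ allocations for one satisfying the \IMMX condition, and present the witness found.
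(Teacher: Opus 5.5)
Your overall route (exhibit one allocation and check it against the instance's value table, with MMS values obtained by exhaustive search) is the paper's route. As written, though, the proposal does not prove the claim. It never names an allocation, never states the MMS values, and never checks a single inequality. For an existence claim about one fixed finite instance, the witness and its verification are the whole proof. Your Step~2 is a heuristic that may or may not succeed. The fallback, ``search all $3^8$ allocations and present the witness found,'' presupposes exactly what is to be shown.

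The paper supplies the missing content. The witness is $X_0=\{g_1,g_2\}$, $X_1=\{g_3,g_7\}$, $X_2=\{g_0,g_4,g_5,g_6\}$, and exhaustive search gives MMS values $\mu_0=77$, $\mu_1=75$, $\mu_2=73$. Then $v_0(X_0)=77=\mu_0$ and $v_2(X_2)=126\ge\mu_2$, so agents~0 and~2 are MMS-satisfied. Only agent~1, with $v_1(X_1)=62<\mu_1$, needs an EFX check. Against $X_0$ the deletions leave $v_1(\{g_1\})=2$ and $v_1(\{g_2\})=6$. Against $X_2$ the four one-item deletions are valued $32,59,47,57$. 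All of these are below $62$.

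Your Step~2 template also asks for more than IMMX requires. It demands that both agents other than $a$ be EFX-satisfied, whereas an agent who reaches her MMS needs no EFX check at all. In an instance with no EFX allocation this slack can matter. In the paper's witness two agents are MMS-satisfied and only one is checked for EFX. Since no EFX allocation exists and agent~1 is EFX-satisfied, at least one of agents~0 and~2 is not EFX-satisfied there. The right search target is therefore ``every agent below her MMS is EFX-satisfied,'' not ``one MMS agent plus two EFX agents.''

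Your instinct about small MMS bundles is sound. With $X_0$ a pair, the EFX comparisons against it reduce to singletons, exactly as you anticipated.
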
 

\begin{proof} 

The valuations of the \citet{akrami2026counterexampleefxnge} instance are specified explicitly over all $2^8$
bundles; we do not reproduce this 256-entry table here and refer the
reader to \cite{akrami2026counterexampleefxnge} for the full
specification. Below we quote only the bundle values needed to verify the
claim.

Consider the allocation \[ X_0=\{g_1,g_2\},\qquad X_1=\{g_3,g_7\},\qquad X_2=\{g_0,g_4,g_5,g_6\}. \] By exhaustive search over all partitions of the goods into three bundles, the MMS values in this instance are \[ \mu_0=77,\qquad \mu_1=75,\qquad \mu_2=73. \] In the allocation above, we have \[ v_0(X_0)=77=\mu_0 \] and \[ v_2(X_2)=126\geq \mu_2. \] Thus agents \(0\) and \(2\) receive their MMS values. It remains to check that agent \(1\) is EFX-satisfied. We have \[ v_1(X_1)=v_1(\{g_3,g_7\})=62. \] First, deleting any item from \(X_0=\{g_1,g_2\}\) leaves either \(\{g_1\}\) or \(\{g_2\}\), whose values for agent \(1\) are \[ v_1(\{g_1\})=2, \qquad v_1(\{g_2\})=6. \] Both are strictly below \(v_1(X_1)=62\). Next, deleting one item from \[ X_2=\{g_0,g_4,g_5,g_6\} \] leaves one of the following bundles: \[ \{g_4,g_5,g_6\},\quad \{g_0,g_5,g_6\},\quad \{g_0,g_4,g_6\},\quad \{g_0,g_4,g_5\}. \] Their values for agent \(1\) are respectively \[ 32,\qquad 59,\qquad 47,\qquad 57, \] all of which are strictly below \(62\). Therefore, agent \(1\) does not strongly envy either of the other two agents. Hence agents \(0\) and \(2\) are MMS-satisfied, while agent \(1\) is EFX-satisfied. The allocation is therefore \IMMX. 
\end{proof}

\section{IMMX Allocations Exist for Three Agents with Chores}
\label{sec:immx-chores}

In this section we show that the construction underlying our three-agent result (\Cref{thm:three-agents-main}) extends to the setting of \emph{chores}, establishing that for three agents with additive cost functions an IMMX allocation always exists. We find this existence result of independent interest: for chores, MMS allocations are known not to exist already for three agents \cite{ARSW17}, and the existence of EFX allocations remains open for three agents (and fails for four or more agents \cite{he2026efxadditivechoresnonexistence}). We focus here on the existence of a single (deterministic) IMMX allocation, and as we only require a single allocation, the construction takes a particularly simple form: the three roles of \emph{divider}, \emph{subdivider} and \emph{chooser} are fixed in advance (and may be assigned to the agents arbitrarily).

The construction below runs in polynomial time, with the sole exception of computing each agent's MMS value, which is NP-hard in general. For a fixed number of agents, however, this computation is exactly the problem of partitioning the chores into $n$ bundles so as to minimize the maximum bundle cost, i.e.\ makespan minimization on $n$ identical machines. This problem admits an FPTAS for any fixed number of machines \cite{Sahni76}, so each MMS value can be approximated within a factor of $(1+\varepsilon)$ in polynomial time. Substituting these approximate values for the exact MMS thresholds yields a polynomial-time procedure producing an allocation that is \EIMMXCHORES.  

\subsection{Model and Fairness Notions for Chores}

A \emph{chores instance} consists of a set $M$ of $m$ indivisible chores and a set $N$ of $n$ agents, where each agent $i$ has a normalized additive \emph{cost function} $c_i : 2^M \to \mathbb{R}_{\geq 0}$; that is, $c_i(\emptyset) = 0$, $c_i(g) \geq 0$ for every $g \in M$, and $c_i(S) = \sum_{g \in S} c_i(\{g\})$ for every $S \subseteq M$. An allocation is defined as for goods; each agent prefers bundles of \emph{lower} cost. The \emph{proportional share} of agent $i$ is $PROP(c_i, n) = \frac{c_i(M)}{n}$, and an allocation is \emph{proportional} if every agent $i$ incurs a cost of at most $PROP(c_i, n)$.

\begin{definition}[Maximin share for chores]
The maximin share (MMS) of agent $i$ with cost function $c_i$ over the set of chores $M$, and a given number of agents $n$, is defined to be
\[
MMS(c_i, n) := \min_{(Z_1, \ldots, Z_n) \in \Gamma_n} \; \max_{j \in [n]} c_i(Z_j),
\]
where $\Gamma_n$ is the set of all partitions of $M$ into $n$ bundles. An allocation $X$ is an \emph{MMS allocation} if $c_i(X_i) \leq MMS(c_i, n)$ for every agent $i$.

An allocation is an \emph{$\alpha$-MMS allocation} if every agent $i$ receives a bundle of value that is at at most $\alpha\cdot \text{MMS}(v_i,n)$.
\end{definition}

We note that for chores the direction of the comparison between the proportional share and the MMS is reversed relative to goods: since the maximum bundle cost in any partition is at least the average, for every additive cost function it holds that
\begin{equation}\label{eq:chores-prop-mms}
PROP(c_i, n) \leq MMS(c_i, n).
\end{equation}

We use the standard adaptation of EFX to chores, in which a chore is removed from the bundle of the \emph{envious} agent. As for goods, we use the stronger variant in which envy must be eliminated even when removing zero-cost chores.

\begin{definition}[EFX for chores]\label{def:chores-efx}
Agent $i$ with cost function $c_i$ is \emph{EFX-satisfied} by allocation $X = (X_1, \ldots, X_n)$ if for every other agent $k$ and every chore $g \in X_i$ it holds that
\[
c_i(X_i \setminus \{g\}) \leq c_i(X_k).
\]
The allocation $X$ is an \emph{EFX allocation} if every agent is EFX-satisfied by $X$.
We say that a bundle $X$ \emph{EFX-dominates} a bundle $Y$ for cost function $c$ if for every $g \in X$ it holds that $c(X \setminus \{g\}) \leq c(Y)$. Note that agent $i$ is EFX-satisfied by $X$ if and only if her bundle $X_i$ EFX-dominates every other bundle in $X$.
\end{definition}

\begin{observation}\label{obs:chores-min-implies-efx}
Given an allocation $(X_1, \ldots, X_n)$, if agent $i$ incurs the lowest cost among all bundles, i.e., $c_i(X_i) \leq \min_{k \in [n]} c_i(X_k)$, then agent $i$ is envy-free, and in particular EFX-satisfied (as removing a chore from her bundle can only decrease its cost).
\end{observation}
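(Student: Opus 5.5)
This follows directly from the definitions, so the plan is a short chain of inequalities. Fix an allocation $(X_1,\ldots,X_n)$ with $c_i(X_i)\le \min_{k\in[n]} c_i(X_k)$, fix any other agent $k$, and fix any chore $g\in X_i$. By \cref{def:chores-efx}, it suffices to show $c_i(X_i\setminus\{g\})\le c_i(X_k)$ for each such pair $(k,g)$.

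The first step is envy-freeness. From the hypothesis, $c_i(X_i)\le c_i(X_k)$ for every $k$, which is exactly the statement that agent $i$ does not prefer any other bundle to her own.

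The second step uses that chores carry nonnegative costs. By additivity,
\[
c_i(X_i\setminus\{g\}) \;=\; c_i(X_i)-c_i(\{g\}) \;\le\; c_i(X_i).
\]
Chaining this with the first step gives $c_i(X_i\setminus\{g\})\le c_i(X_i)\le c_i(X_k)$ for every $k\neq i$ and every $g\in X_i$. This is precisely EFX-satisfaction, or equivalently, $X_i$ EFX-dominates every other bundle.

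There is no real obstacle here. The one point to note is that the stronger EFX variant, which also removes zero-cost chores, is still covered: when $c_i(\{g\})=0$ the inequality $c_i(X_i\setminus\{g\})\le c_i(X_i)$ holds with equality. Likewise, if $X_i=\emptyset$ the EFX condition is vacuous.
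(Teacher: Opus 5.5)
Your proposal is correct and matches the paper's own justification: the hypothesis gives envy-freeness directly, and chaining it with monotonicity of the additive nonnegative cost, $c_i(X_i\setminus\{g\}) \le c_i(X_i) \le c_i(X_k)$, gives EFX-satisfaction. Your check that the strong variant with zero-cost chores is also covered is accurate.
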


\begin{definition}[IMMX for chores]
Given cost functions $(c_1, \ldots, c_n)$ over a set $M$ of indivisible chores, an allocation $X = (X_1, \ldots, X_n)$ is an \emph{``Individually MMS-satisfying or EFX-satisfying (IMMX)''} allocation if for every agent $i \in [n]$, either
\[
c_i(X_i) \leq MMS(c_i, n),
\]
or agent $i$ is EFX-satisfied by $X$.
\end{definition}

\subsection{An EFX Re-allocation Procedure for Chores}

Our construction requires a chores analog of the \textsc{Realloc} procedure (\Cref{alg:realloc}): a polynomial-time transformation of an arbitrary partition into one that is EFX with respect to a single cost function, without increasing the cost of the most costly bundle. The procedure itself is identical to \textsc{Realloc}, applied verbatim to the cost function $c$ (process the chores in non-increasing order of cost, and insert each into a currently minimum-cost bundle); only the analysis differs. For goods, the relevant invariant is that the minimum bundle value never decreases; for chores, we establish that, in addition, the \emph{maximum} bundle cost never increases.

\begin{lemma}\label{lem:chores-realloc}
Fix an additive cost function $c$, any number of agents $n$, and a partition $(B_1, \ldots, B_n)$ of a set $M$ of chores. Then there is a polynomial-time algorithm (\textsc{Chore-Realloc}) that transforms $(B_1, \ldots, B_n)$ into a partition $(B'_1, \ldots, B'_n)$ such that:
\begin{enumerate}
\item every bundle in $(B'_1, \ldots, B'_n)$ EFX-dominates every other bundle (with respect to $c$), and
\item $\max_{t \in [n]} c(B'_t) \leq \max_{t \in [n]} c(B_t)$ \; (and additionally $\min_{t \in [n]} c(B'_t) \geq \min_{t \in [n]} c(B_t)$).
\end{enumerate}
\end{lemma}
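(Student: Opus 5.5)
We analyze \textsc{Chore-Realloc}, which is \textsc{Realloc} run verbatim on $c$: process the chores $w_1,\dots,w_m$ in non-increasing order of cost, and in round $i$ remove $w_i$ from its current bundle and insert it into a currently minimum-cost bundle. The running time is clearly polynomial. The proof mirrors \Cref{lem:monotone-min} and \Cref{lem:efx-output}, with the EFX analysis now taken from the envious agent's side. The plan is to establish two per-round invariants and then one structural invariant.

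\emph{Step 1 (cost bounds, per round).} Fix round $i$, let $B_k$ be the bundle containing $w_i$, and let $B^\star$ be a minimum-cost bundle after $w_i$ is removed. If $B^\star=B_k$, nothing changes. Otherwise only $B_k$ and $B^\star$ change. The cost $c(B_k\setminus\{w_i\})$ is at most its old value, and it is at least $c(B^\star)$. Moreover $c(B^\star)$ was at least the old minimum, since every bundle other than $B_k$ was untouched and $B_k\setminus\{w_i\}$ is at least $c(B^\star)$ by choice; this is exactly the argument of \Cref{lem:monotone-min}, so the minimum never decreases. For the maximum, note that $c(B^\star)\le c(B_k\setminus\{w_i\})$ and hence
\[
c(B^\star\cup\{w_i\})\le c(B_k\setminus\{w_i\})+c(w_i)=c(B_k)\le \max_t c(B_t).
\]
Also $c(B_k\setminus\{w_i\})\le c(B_k)$, so the maximum never increases. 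Iterating over all rounds gives item 2.

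\emph{Step 2 (EFX for chores).} We must show that for every bundle $B'_j$ and every $g\in B'_j$, we have $c(B'_j\setminus\{g\})\le c(B'_t)$ for all $t$. Since the cost is additive, it suffices to check this for the minimum-cost chore of $B'_j$. I claim that at the end of the run, $B'_j$ equals exactly the set of chores inserted into $j$ during the run. This holds because every chore is processed exactly once and, after its round, is never removed again.

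Now let $g$ be the last chore inserted into $B'_j$. Because chores are processed in non-increasing order, $g$ has minimum cost among the chores in $B'_j$. At the moment $g$ was inserted, $B_j\setminus\{g\}$ had minimum cost among all bundles. After that round, bundle $j$ never gains chores, by the choice of $g$ as the last insertion, and it never loses chores, since everything in it has already been processed. So $c(B'_j\setminus\{g\})$ equals that past minimum, which by Step 1 is at most the final minimum, and hence at most $c(B'_t)$ for every $t$. Every other $g'\in B'_j$ has $c(g')\ge c(g)$, so $c(B'_j\setminus\{g'\})\le c(B'_j\setminus\{g\})$. A bundle $B'_j$ that received no insertions is empty, and then the condition holds vacuously.

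\emph{Main obstacle.} The delicate point is the claim in Step 2 that final bundles consist only of inserted chores, so that nothing unprocessed lingers after the last insertion. This is where the chores analysis differs from the goods one, since for chores the EFX removal is taken from the agent's own bundle; I would state this claim explicitly as a separate invariant. Item 1 then follows immediately from Step 2.
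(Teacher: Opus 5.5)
Your plan matches the paper's. You use the same two monotonicity invariants (the paper's \Cref{lem:chores-invariants}) and the same ``last-inserted chore'' argument for EFX.

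However, one step in Step~2 is false as written. You claim that after $g$'s insertion, bundle $j$ ``never loses chores, since everything in it has already been processed,'' and hence that $c(B'_j\setminus\{g\})$ \emph{equals} the past minimum. But when $g$ is inserted, bundle $j$ may still hold unprocessed chores from the initial partition. Since $g$ is the last insertion into $j$, those chores are later moved elsewhere.

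For a concrete instance, take $n=2$, chores $x,y,z$ with costs $3,2,1$, and initial partition $B_1=\{x,z\}$, $B_2=\{y\}$.
In round~1, $x$ is removed; bundle~1 (cost $1$) is then the cheapest, so $x$ returns to bundle~1.
In round~2, $y$ likewise returns to bundle~2.
In round~3, $z$ is removed and moved to bundle~2, since $2<3$.
The final partition is $(\{x\},\{y,z\})$, and the last chore inserted into bundle~1 is $x$. Yet $c(B'_1\setminus\{x\})=0$, while the minimum at the moment of $x$'s insertion was $c(\{z\})=1$. Your ``main obstacle'' paragraph shares this misconception: unprocessed chores can linger in a bundle after its last insertion.

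The repair is immediate, and it is exactly what the paper does: you only need the ``never gains'' half of your claim. Let $S$ be the content of bundle $j$ just before $g$'s insertion. Since $B'_j$ consists of the chores inserted into $j$, and $g$ is the last of them, $B'_j\setminus\{g\}\subseteq S$. Costs are nonnegative, so $c(B'_j\setminus\{g\})\le c(S)$. Right after that round every bundle costs at least $c(S)$, so by the minimum invariant $c(S)$ is at most the final minimum. Lingering unprocessed chores only make the left-hand side smaller, so they do no harm. With the equality replaced by this inequality, your proof is complete.
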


The procedure \textsc{Chore-Realloc} is the following: let $(w_1, \ldots, w_m)$ be a non-increasing ordering of the chores by cost ($c(\{w_1\}) \geq \cdots \geq c(\{w_m\})$). For $i = 1, \ldots, m$: remove $w_i$ from the bundle currently holding it, and insert it into a bundle of currently minimum cost (breaking ties arbitrarily). We first establish the two monotonicity invariants.

\begin{lemma}\label{lem:chores-invariants}
Throughout the execution of \textsc{Chore-Realloc}, the maximum bundle cost never increases, and the minimum bundle cost never decreases.
\end{lemma}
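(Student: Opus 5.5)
The plan is to argue round by round, mirroring the proof of \Cref{lem:monotone-min}. Fix an iteration $i$ of \textsc{Chore-Realloc}. Let $w_i$ be the chore processed and $B_k$ the bundle currently holding it. Let $B^\star$ be a minimum-cost bundle after $w_i$ is removed from $B_k$, i.e.\ a minimizer of $c$ over the bundles $B_1,\dots,B_{k-1},B_k\setminus\{w_i\},B_{k+1},\dots,B_n$. Only $B_k$ and $B^\star$ can change in this round, so only these two bundles need checking. If $B^\star=B_k$, the chore is reinserted where it was and the partition is unchanged, so both invariants hold trivially.

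Now suppose $B^\star\neq B_k$.

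\textbf{Minimum does not decrease.} Let $\mu$ be the minimum bundle cost before the round. After the move, $B^\star$ becomes $B^\star\cup\{w_i\}$, whose cost is at least $c(B^\star)\ge\mu$, since $B^\star$ was an unchanged bundle before the removal. The other modified bundle becomes $B_k\setminus\{w_i\}$, whose cost is at least $c(B^\star)$ by the choice of $B^\star$ as a minimizer, and hence at least $\mu$. This is the same argument as in \Cref{lem:monotone-min}; costs play the role of values, and the direction of the inequalities is unchanged.

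\textbf{Maximum does not increase.} Let $\Lambda$ be the maximum bundle cost before the round. The bundle $B_k$ only loses a chore, so $c(B_k\setminus\{w_i\})\le c(B_k)\le\Lambda$. The only potential increase is at $B^\star$, where I will show
\[
c(B^\star)+c(w_i)\;\le\; c(B_k)\;\le\;\Lambda .
\]
By minimality of $B^\star$ after the removal, $c(B^\star)\le c(B_k\setminus\{w_i\})=c(B_k)-c(w_i)$. Rearranging gives exactly $c(B^\star)+c(w_i)\le c(B_k)$, as required.

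So in every round, the maximum cost is non-increasing and the minimum cost is non-decreasing. Induction over the $m$ rounds then gives the lemma. I do not expect a real obstacle. The one point requiring care is that the minimizer $B^\star$ must be taken \emph{after} removing $w_i$ from $B_k$, which is how the procedure is defined; the inequality $c(B^\star)\le c(B_k)-c(w_i)$ depends on this.
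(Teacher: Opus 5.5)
Your proposal is correct and follows essentially the same argument as the paper. Both proceed by a per-iteration case split on whether the chore returns to its own bundle, bound the maximum via $c(B^\star)+c(w_i)\le c(B_k\setminus\{w_i\})+c(w_i)=c(B_k)$, and bound the minimum by noting that $B^\star$ is an untouched bundle whose cost lower-bounds both modified bundles, with the minimizer taken after removing $w_i$ exactly as you emphasize.
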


\begin{proof}
Consider the iteration processing chore $w$, in which $w$ is removed from bundle $B_k$ and inserted into a bundle $B_j$ of minimum cost after the removal (so $c(B_j) \leq c(B_k \setminus \{w\})$ and $c(B_j) \leq c(B_t)$ for every $t \neq k$). If $j = k$ then $w$ is returned to its bundle and nothing changes, so assume $j \neq k$. Only bundles $k$ and $j$ change.

\emph{Maximum.} Bundle $k$'s cost decreases. Bundle $j$'s new cost is
\[
c(B_j) + c(\{w\}) \leq c(B_k \setminus \{w\}) + c(\{w\}) = c(B_k),
\]
which is at most the maximum bundle cost before the iteration. Hence the maximum never increases.

\emph{Minimum.} After the iteration, bundle $k$ has cost $c(B_k \setminus \{w\}) \geq c(B_j)$, bundle $j$ has cost $c(B_j) + c(\{w\}) \geq c(B_j)$, and every other bundle is unchanged with cost at least $c(B_j)$. Since bundle $j$ was unchanged by the removal, $c(B_j)$ is at least the minimum bundle cost before the iteration. Hence the minimum never decreases.
\end{proof}

\begin{proof}[Proof of \Cref{lem:chores-realloc}]
Property 2 is immediate from \Cref{lem:chores-invariants}, and the procedure clearly runs in polynomial time. It remains to prove Property 1.

Assume for contradiction that the final partition $(B'_1, \ldots, B'_n)$ is not EFX, i.e., there exist bundles $B'_i, B'_j$ and a chore $g' \in B'_i$ such that
\[
c(B'_i \setminus \{g'\}) > c(B'_j).
\]
Every chore is inserted (possibly back into its own bundle) exactly once, at the iteration in which it is processed, and does not move thereafter. Hence the chores of $B'_i$ were inserted into bundle $i$ at their respective iterations, in non-increasing order of cost. Let $g$ be the member of $B'_i$ that was inserted last; then $g$ is a minimum-cost chore of $B'_i$, and therefore
\[
c(B'_i \setminus \{g\}) \geq c(B'_i \setminus \{g'\}) > c(B'_j).
\]
Consider the iteration $t$ in which $g$ was inserted into bundle $i$, and let $S$ denote the content of bundle $i$ at that moment, just before the insertion (after $g$ was removed from its previous location). By the insertion rule, bundle $i$ was of minimum cost at that moment: $c(S) \leq c(R)$ for every other bundle $R$ at iteration $t$. Since no member of $B'_i$ is inserted after iteration $t$, and chores never move except at their own iteration, we have $B'_i \setminus \{g\} \subseteq S$, and therefore
\[
c(B'_i \setminus \{g\}) \leq c(S).
\]
Immediately after iteration $t$, every bundle has cost at least $c(S)$: bundle $i$ has cost $c(S) + c(\{g\}) \geq c(S)$, and every other bundle has cost at least $c(S)$ by the minimality of $S$. By \Cref{lem:chores-invariants}, the minimum bundle cost never decreases in subsequent iterations, and therefore
\[
c(B'_j) \geq c(S) \geq c(B'_i \setminus \{g\}),
\]
a contradiction.
\end{proof}

In analogy to the goods setting, we define:

\begin{definition}[$MMS\!-\!EFX$ partition for chores]
A partition $\{P_1, \ldots, P_n\}$ of a set of chores into $n$ bundles is an \emph{$MMS\!-\!EFX$ partition for chores} for agent $i$ with cost function $c_i$ if every bundle $P \in \{P_1, \ldots, P_n\}$ satisfies $c_i(P) \leq MMS(c_i, n)$, and every bundle EFX-dominates every other bundle with respect to $c_i$.
\end{definition}

\begin{corollary}\label{cor:chores-mms-efx}
There exists a polynomial-time algorithm that, given an additive cost function $c$ over a set of chores, a number of agents $n$, and any $MMS(c, n)$ partition, outputs an $MMS\!-\!EFX$ partition for chores for the agent.
\end{corollary}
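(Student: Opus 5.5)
The corollary follows by applying \textsc{Chore-Realloc} (\Cref{lem:chores-realloc}) to the given MMS partition. This mirrors the goods case, \Cref{cor:poly_mms}, which applied \Cref{lem:new_realloc}.

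Concretely, let $(B_1,\ldots,B_n)$ be the given partition attaining $MMS(c,n)$, so that $\max_t c(B_t)=MMS(c,n)$. Run \textsc{Chore-Realloc} on it with cost function $c$ and call the output $(B'_1,\ldots,B'_n)$. By \Cref{lem:chores-realloc}:
\begin{itemize}
\item Property~2 gives $\max_t c(B'_t)\le \max_t c(B_t)=MMS(c,n)$, so every output bundle has cost at most the MMS value.
\item Property~1 gives that every bundle EFX-dominates every other bundle with respect to $c$.
\end{itemize}
These two facts are exactly the definition of an $MMS\!-\!EFX$ partition for chores. The output is still a partition of $M$ into $n$ bundles, because the procedure only moves chores between bundles.

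For the running time, the procedure sorts the chores once and then does $m$ iterations. Each iteration removes one chore and scans the $n$ bundle costs to find a minimum, so the total time is $O(m\log m + mn)$, which is polynomial.

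There is essentially no obstacle here. The only substantive point is the invariant that the maximum bundle cost never increases, and that is already established in \Cref{lem:chores-invariants}. The corollary itself only requires checking that the two properties of \Cref{lem:chores-realloc} match the two clauses of the definition.
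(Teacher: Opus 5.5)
Your proposal is correct and is the paper's proof: apply \textsc{Chore-Realloc} to the given MMS partition, use the fact that the maximum bundle cost does not increase to keep every bundle at cost at most $MMS(c,n)$, and use the pairwise EFX-domination property for the EFX clause. Your explicit running-time bound is a harmless addition that the paper leaves implicit.
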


\begin{proof}
In an MMS partition for chores, every bundle has cost at most $MMS(c, n)$. Applying \textsc{Chore-Realloc} (\Cref{lem:chores-realloc}) preserves this property (the maximum does not increase) and makes the bundles pairwise EFX-dominating.
\end{proof}

\subsection{The Construction and the Main Result}

\begin{theorem}\label{thm:chores-immx}
Consider settings with three agents with additive cost functions $(c_1, c_2, c_3)$ over a set $M$ of indivisible chores, with the three roles of divider, subdivider, and chooser assigned to the agents arbitrarily. There exists an allocation $X = (X_1, X_2, X_3)$ such that:
\begin{enumerate}
\item The divider incurs a cost of at most her MMS.
\item The chooser incurs a cost of at most her proportional share (and therefore, by \Cref{eq:chores-prop-mms}, of at most her MMS).
\item The subdivider either incurs a cost of at most her proportional share, or is EFX-satisfied.
\end{enumerate}
In particular, $X$ is IMMX: at most one agent (the subdivider) may incur a cost exceeding her MMS, and if she does, she is EFX-satisfied. 
\end{theorem}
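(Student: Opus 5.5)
The plan mirrors the divider/subdivider/chooser scheme of \cref{alg:three-agent-construction}, adapted to chores, where agents prefer cheaper bundles. Denote the divider by $d$, the subdivider by $j$ and the chooser by $k$.

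\textbf{Divider's partition.} The divider $d$ first computes an MMS partition of $M$ for $c_d$ with $n=3$. By \Cref{cor:chores-mms-efx} she turns it into an $MMS\!-\!EFX$ partition for chores, which I call $\{A,B,C\}$. Every bundle of it costs $d$ at most $MMS(c_d,3)$, and every bundle EFX-dominates the others under $c_d$. The divider will always receive one of $A,B,C$ unchanged, which secures item~1. (EFX for the divider is not needed in the statement.)

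\textbf{Case 1.} Suppose $j$ and $k$ can each be given a distinct bundle of $\{A,B,C\}$ that is a minimum-cost bundle for her. We give them these bundles and give $d$ the remaining one. By \Cref{obs:chores-min-implies-efx} both $j$ and $k$ are envy-free. Moreover, a cheapest bundle of a 3-partition costs at most the average, so each pays at most her proportional share. This settles items 2 and 3.

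\textbf{Case 2.} Otherwise $A$, say, is the unique cheapest bundle for both $j$ and $k$. Let $Z\in\{B,C\}$ be $j$'s cheaper bundle among $B,C$, and let $L$ be the other. Because $A$ is $j$'s unique cheapest bundle, $L$ is $j$'s most costly bundle: $c_j(L)\ge \max\{c_j(A),c_j(Z)\}$.

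\emph{Sub-case 2a: $c_k(A\cup Z)\le 2\,PROP(c_k,3)$.}
\begin{itemize}
\item The subdivider applies \textsc{Chore-Realloc} (\Cref{lem:chores-realloc}) to the two-bundle partition $\{A,Z\}$ under $c_j$. This yields $\{P_1,P_2\}$ with $P_1,P_2$ EFX-dominating each other for $c_j$ and $\max_t c_j(P_t)\le \max\{c_j(A),c_j(Z)\}\le c_j(L)$.
\item The chooser takes her cheaper half. It costs her at most $c_k(A\cup Z)/2\le PROP(c_k,3)$, giving item~2.
\item The subdivider takes the other half $P$, and $d$ takes $L$.
\item $P$ EFX-dominates the chooser's half by construction. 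It also EFX-dominates $L$, since $c_j(P\setminus\{g\})\le c_j(P)\le c_j(L)$. Hence $j$ is EFX-satisfied, giving item~3.
\end{itemize}

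\emph{Sub-case 2b: $c_k(A\cup Z)>2\,PROP(c_k,3)$.}
\begin{itemize}
\item Then $c_k(L)<PROP(c_k,3)$, so we give $L$ to the chooser, $A$ to the subdivider and $Z$ to the divider.
\item The chooser pays less than her proportional share, giving item~2.
\item The subdivider holds her unique cheapest bundle $A$, so she is envy-free and pays at most her proportional share, giving item~3.
\end{itemize}

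Finally, item~2 together with \Cref{eq:chores-prop-mms} shows the chooser pays at most her MMS. Hence the only agent who can exceed her MMS is the subdivider, and in that event she is EFX-satisfied, so $X$ is IMMX.

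\textbf{Main obstacle and how it is resolved.} The delicate step is making the subdivider EFX-satisfied toward the divider's leftover bundle $L$. Two design choices handle it. First, $j$ (not $k$) selects which bundle is merged with $A$, namely her cheaper one, so that $L$ is her most costly bundle. Second, the max-not-increasing invariant of \Cref{lem:chores-invariants} keeps both halves of $A\cup Z$ no costlier than $L$ for $j$. This choice of $Z$ may, however, be bad for the chooser. That tension is why the split into sub-cases 2a and 2b is needed: whenever merging $A$ with $Z$ is too costly for $k$, the leftover bundle $L$ is automatically cheap for $k$ and she takes it instead. I also need to double-check that \textsc{Chore-Realloc} applied to a two-bundle input really yields the stated max bound, so that no exact two-agent MMS computation is required.
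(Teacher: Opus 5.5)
Your proof is correct, and it takes a somewhat different route from the paper's in two places.

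\textbf{Case split.} The paper's Case I does not test for cheapest bundles. It allocates directly whenever there are distinct $S_2\neq S_3$ in $\{A,B,C\}$ with $c_l(S_l)\le PROP(c_l,3)$. As a result, in its Case II both $B$ and $C$ cost the chooser more than her proportional share. Hence $c_k(L)>PROP(c_k,3)$ holds automatically, and the chooser's cheaper half of $A\cup Z$ is below her proportional share whichever $Z$ the subdivider picks. No further split is needed.

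Your cheapest-bundle test leaves a larger Case~2, which is why you need sub-case~2b. Note that the allocation you produce in 2b ($A$ to $j$, $L$ to $k$) is exactly one the paper's Case I would output, since there $c_j(A)\le PROP(c_j,3)$ and $c_k(L)<PROP(c_k,3)$. Your split costs one extra sub-case but mirrors the goods algorithm more closely.

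\textbf{Repartition.} The paper computes an exact MMS partition of $A\cup Z$ for $n=2$ and then applies \textsc{Chore-Realloc}. However, \Cref{obs:chores-B1} uses only that the resulting maximum cost is at most $c_2(Z)$, the maximum of the trivial split $\{A,Z\}$. Your direct application of \textsc{Chore-Realloc} to $\{A,Z\}$ already gives this bound. \Cref{lem:chores-realloc} and \Cref{lem:chores-invariants} hold for any ground set and any $n$, so the double-check you flagged goes through.

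Your version therefore removes one NP-hard computation; only the divider's MMS partition remains. For the stated theorem, the paper's extra MMS step adds nothing beyond a possibly lower cost for the subdivider. Applying \textsc{Chore-Realloc} to the divider's partition is harmless but unnecessary.
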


As noted above, the only non-polynomial component of the construction in the proof of \Cref{thm:chores-immx}  is the computation of the exact MMS values. Approximating these values via the FPTAS for makespan minimization yields the following corollary:

\begin{corollary}\label{cor:chores-immx-fptas}
Fix any $\varepsilon > 0$. Consider settings with three agents with additive cost functions $(c_1, c_2, c_3)$ over a set $M$ of indivisible chores, with the three roles of divider, subdivider, and chooser assigned to the agents arbitrarily. There exists  a fully polynomial-time approximation scheme (FPTAS) that outputs an allocation $X = (X_1, X_2, X_3)$ such that:
\begin{enumerate}
\item The divider incurs a cost of at most $(1+\varepsilon)$ of her MMS.
\item The chooser incurs a cost of at most her proportional share (and therefore, by \Cref{eq:chores-prop-mms}, of at most her MMS).
\item The subdivider either incurs a cost of at most her proportional share, or is EFX-satisfied.
\end{enumerate}
In particular, $X$ is \EIMMXCHORES: at most one agent (the subdivider) may incur a cost exceeding $(1+\varepsilon)$ of her MMS, and if she does, she is EFX-satisfied. 
\end{corollary}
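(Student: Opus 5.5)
The plan is to follow the proof of \Cref{thm:chores-immx}, with one change: the divider's exact MMS partition is replaced by a $(1+\varepsilon)$-approximate one. The key point is that the only MMS threshold anywhere in the construction is the divider's. The chooser's and subdivider's guarantees are stated in terms of the proportional share, which is computable exactly, or in terms of EFX, which is checkable directly. Hence the approximation error should affect only item~1 of the statement.

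\textbf{Step 1 (divider).} Run the FPTAS for makespan minimization on three identical machines \cite{Sahni76} with cost $c_d$ of the divider $d$. This yields a partition whose maximum bundle cost is at most $(1+\varepsilon)MMS(c_d,3)$. Apply \textsc{Chore-Realloc}; by \Cref{lem:chores-realloc} the maximum does not increase and the bundles become pairwise EFX-dominating. Call the result $\{A,B,C\}$. Every bundle of this partition costs the divider at most $(1+\varepsilon)MMS_d$, so item~1 holds as long as the divider always receives one of $A,B,C$.

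\textbf{Step 2 (chooser $k$ and subdivider $j$).} The cases mirror the goods algorithm. In Case~1, agents $j$ and $k$ can each be given a different bundle from their set of cheapest bundles in $\{A,B,C\}$. A cheapest bundle costs at most $PROP$, and its holder is envy-free, hence EFX-satisfied by \Cref{obs:chores-min-implies-efx}. The divider takes the remaining bundle.

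In Case~2, both agents have the same unique cheapest bundle, say $A$. Let $L\in\{B,C\}$ be the bundle $j$ finds costlier, and let $D$ be the other one. Run \textsc{Chore-Realloc} on $\{A,D\}$ with $c_j$ to obtain $\{P_1,P_2\}$. By \Cref{lem:chores-realloc}, $P_1$ and $P_2$ EFX-dominate each other. Moreover
\[
c_j(P_t)\le \max\{c_j(A),c_j(D)\}\le c_j(L),
\]
so each $P_t$ also EFX-dominates $L$. The chooser then takes her cheapest bundle among $\{P_1,P_2,L\}$, which costs her at most $PROP_k$ because it is a minimum over a $3$-partition.
\begin{itemize}
\item If the chooser takes some $P_t$, the subdivider receives the other $P$ and is EFX-satisfied by the bounds above. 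The divider receives $L$.
\item If the chooser takes $L$, the subdivider receives $A$, her unique cheapest divider bundle, so she is within $PROP_j$ and envy-free. The divider receives $D$.
\end{itemize}
In every branch the divider holds one of $A,B,C$. The chooser pays at most $PROP_k\le MMS_k$ by \Cref{eq:chores-prop-mms}. The subdivider either pays at most $PROP_j$ or is EFX-satisfied.

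\textbf{Step 3 (running time and the main obstacle).} Apart from the single FPTAS call, every step is sorting, \textsc{Chore-Realloc}, or a constant number of cost comparisons, so the running time is $poly(|input|,1/\varepsilon)$. The step I expect to be most delicate is the subdivider's EFX guarantee against $L$ in Case~2. It rests on choosing $L$ as $j$'s costlier bundle among $B,C$, together with the max-non-increase property of \textsc{Chore-Realloc}. One must also confirm that the tie case where the chooser picks $L$ still leaves the divider with a divider bundle; $D$ handles this. Given this, IMMX in the $(1+\varepsilon)$ sense follows: only the subdivider can exceed $(1+\varepsilon)MMS$, and she does so only while EFX-satisfied.
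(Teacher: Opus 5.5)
Your proof is correct and uses the same construction as the paper: an approximate divider partition, the subdivider re-splitting $A$ with her cheaper remaining bundle, the chooser picking, and the divider keeping a bundle of her own partition. The paper's proof of the corollary is only the remark that the exact MMS computations are the sole non-polynomial step and can be replaced by the makespan FPTAS. Your write-up differs from that in two places.

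\textbf{Dropping the subdivider's MMS computation.} You skip the subdivider's two-way MMS partition and apply \textsc{Chore-Realloc} directly to $\{A,D\}$. Since \textsc{Chore-Realloc} never increases the maximum, this already gives $\max\{c_j(P_1),c_j(P_2)\}\le c_j(D)\le c_j(L)$. That is exactly what \Cref{obs:chores-B1} is used for. So the divider's partition is the only place an approximation enters, which is why only item~1 carries the factor $(1+\varepsilon)$. This is cleaner than a literal substitution in the paper's Step~3. With a $(1+\varepsilon)$-approximate two-way split there, the argument of \Cref{obs:chores-B1} only yields $\max\{c_2(P_1),c_2(P_2)\}\le(1+\varepsilon)c_2(Z)$. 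That need not be at most $c_2(L)$, so some guard, as in \Cref{alg:partition}, would be needed; your version sidesteps the issue.

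\textbf{The case split.} You split cases by unique cheapest bundles, as in \Cref{alg:three-agent-construction}, rather than by bundles of cost at most $PROP$. Your Case~2 is therefore larger than the paper's Case~II, and the chooser may prefer $L$. Your fallback handles this correctly: the chooser gets $L$, the subdivider gets $A$, and the divider gets $D$. The paper's threshold-based split forces $c_k(L)>PROP(c_k,3)$ in Case~II, so it needs no fallback.

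Applying \textsc{Chore-Realloc} to the divider's partition is harmless but unnecessary, since the divider's guarantee is purely cost-based.
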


The construction of \Cref{thm:chores-immx} is a simplified, single-allocation variant of \Cref{alg:three-agent-construction}, with the roles fixed in advance. Assume w.l.o.g.\ that agent 1 is the \emph{divider}, agent 2 is the \emph{subdivider}, and agent 3 is the \emph{chooser}. The construction proceeds as follows:

\begin{enumerate}
\item \textbf{The divider partitions.} Let $\{A, B, C\}$ be an MMS partition of $M$ into three bundles with respect to $c_1$ (a partition minimizing the maximum bundle cost).
\item \textbf{Distinct low-cost bundles.} If there exist two \emph{distinct} bundles $S_2 \neq S_3$ in $\{A, B, C\}$ such that $c_2(S_2) \leq PROP(c_2, 3)$ and $c_3(S_3) \leq PROP(c_3, 3)$, then allocate $S_2$ to the subdivider, $S_3$ to the chooser, and the remaining bundle to the divider, and terminate (\textbf{Case I}).
\item \textbf{The subdivider repartitions.} Otherwise (\textbf{Case II}), by \Cref{obs:chores-dichotomy} below, there is a single common bundle -- w.l.o.g.\ bundle $A$ -- that is the unique bundle of cost at most the proportional share for each of agents 2 and 3. The subdivider takes her two cheapest bundles -- namely $A$ together with $Z := \arg\min_{Z' \in \{B, C\}} c_2(Z')$ (ties broken arbitrarily) -- and computes an $MMS\!-\!EFX$ partition for chores $\{P_1, P_2\}$ of $A \cup Z$ into two bundles with respect to $c_2$ (an MMS partition of $A \cup Z$ for $n = 2$, followed by \textsc{Chore-Realloc}; see \Cref{cor:chores-mms-efx}). Let $L = M \setminus (A \cup Z)$ denote the remaining bundle of the original partition.
\item \textbf{The chooser chooses.} The chooser receives her minimum-cost bundle among $\{P_1, P_2\}$, the subdivider receives the other bundle of $\{P_1, P_2\}$, and the divider receives $L$.
\end{enumerate}

We first establish that the case distinction in Step 2 is exhaustive in the following sense:

\begin{observation}\label{obs:chores-dichotomy}
For $l \in \{2, 3\}$, let $U_l = \{S \in \{A, B, C\} : c_l(S) \leq PROP(c_l, 3)\}$. If Step 2 fails -- i.e., there is no choice of distinct bundles $S_2 \in U_2$ and $S_3 \in U_3$ -- then $U_2 = U_3 = \{A\}$ for some common bundle (denoted w.l.o.g.\ by $A$). In particular, in Case II it holds for each $l \in \{2, 3\}$ that
\[
c_l(A) \leq PROP(c_l, 3) \qquad \text{and} \qquad c_l(B), c_l(C) > PROP(c_l, 3).
\]
\end{observation}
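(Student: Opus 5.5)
The plan is to establish two facts: each $U_l$ is nonempty, and a failure of Step~2 forces both sets to be the same singleton. Both follow from a counting argument and a small Hall-type (bipartite matching) argument on a $2\times 3$ incidence structure.

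\emph{Nonemptiness.} Fix $l\in\{2,3\}$. The three bundles $A,B,C$ partition $M$, so by additivity
\[
c_l(A)+c_l(B)+c_l(C)=c_l(M)=3\cdot PROP(c_l,3).
\]
Hence the cheapest of the three bundles (under $c_l$) has cost at most the average, namely $PROP(c_l,3)$. This gives $U_l\neq\emptyset$ for each $l\in\{2,3\}$.

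\emph{Failure of Step~2 forces a common singleton.} Assume Step~2 fails, i.e., no distinct $S_2\in U_2$, $S_3\in U_3$ exist. We show $|U_2|=|U_3|=1$ and $U_2=U_3$.

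Suppose first that $|U_2|\ge 2$. Pick any $S_3\in U_3$, which exists by nonemptiness. Since $U_2$ has at least two elements, it contains some $S_2\neq S_3$. This pair contradicts the failure of Step~2. So $|U_2|=1$, and symmetrically $|U_3|=1$.

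Now write $U_2=\{S\}$ and $U_3=\{S'\}$. If $S\neq S'$, then $(S_2,S_3)=(S,S')$ is a valid choice in Step~2, again a contradiction. Therefore $U_2=U_3=\{S\}$, and we relabel this common bundle as $A$.

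\emph{Conclusion.} By the definition of $U_l$, $U_l=\{A\}$ means exactly that $c_l(A)\le PROP(c_l,3)$ while $c_l(B)>PROP(c_l,3)$ and $c_l(C)>PROP(c_l,3)$, for each $l\in\{2,3\}$. This is the stated consequence for Case~II.

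No step here is a real obstacle; the only point requiring care is the averaging argument that gives nonemptiness. The strict inequalities for $B$ and $C$ are immediate from the definition of $U_l$ and need no further argument.
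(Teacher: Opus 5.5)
Your proof is correct and follows the paper's argument step for step. Each $U_l$ is nonempty by averaging; $|U_2|\ge 2$ (or $|U_3|\ge 2$) yields distinct representatives; and two distinct singletons would themselves be a valid choice in Step~2.
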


\begin{proof}
First, $U_l \neq \emptyset$ for each $l$, as a minimum-cost bundle in any partition into three bundles has cost at most the average, which equals $PROP(c_l, 3)$. Now, if $|U_2| \geq 2$, pick any $S_3 \in U_3$ and then $S_2 \in U_2 \setminus \{S_3\}$, so Step 2 succeeds; the case $|U_3| \geq 2$ is symmetric. Hence if Step 2 fails, both $U_2$ and $U_3$ are singletons, and they must be the same singleton (otherwise their distinct elements are a valid choice).
\end{proof}

The following observation plays the role of \Cref{obs:repartition-larger} in the goods construction; we note that here it follows by a particularly short argument.

\begin{observation}\label{obs:chores-B1}
In Case II it holds that
\[
\max\{c_2(P_1), c_2(P_2)\} \;\leq\; c_2(Z) \;=\; \min\{c_2(B), c_2(C)\} \;\leq\; c_2(L).
\]
\end{observation}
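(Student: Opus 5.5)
The plan is to prove the three inequalities of \Cref{obs:chores-B1} separately, each from one property already in hand: the max-cost invariant of \textsc{Chore-Realloc}, the optimality of the two-bundle MMS partition, the choice of $Z$, and the fact that $L$ is the other member of $\{B,C\}$.

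\textbf{The middle equality.} The equality $c_2(Z)=\min\{c_2(B),c_2(C)\}$ holds by definition, since $Z$ is chosen as an $\arg\min$ of $c_2$ over $\{B,C\}$. For the last inequality, note that $A$ is distinct from both $B$ and $C$, so $A\cup Z$ contains exactly one of $B,C$. Hence $L=M\setminus(A\cup Z)$ is precisely the other bundle in $\{B,C\}$, and $c_2(L)\ge \min\{c_2(B),c_2(C)\}=c_2(Z)$.

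\textbf{The first inequality.} Here I need $\max\{c_2(P_1),c_2(P_2)\}\le c_2(Z)$. Let $\{Q_1,Q_2\}$ be the MMS partition of $A\cup Z$ for $n=2$ under $c_2$, i.e.\ one minimizing the maximum bundle cost. By \Cref{lem:chores-realloc} (Property~2), \textsc{Chore-Realloc} does not increase the maximum, so
\[
\max_t c_2(P_t)\le \max_t c_2(Q_t).
\]
By optimality of $\{Q_1,Q_2\}$, its maximum is at most that of the feasible partition $\{A,Z\}$ of $A\cup Z$, which is $\max\{c_2(A),c_2(Z)\}$. By \Cref{obs:chores-dichotomy}, in Case II we have
\[
c_2(A)\le PROP(c_2,3)<c_2(Z),
\]
so this maximum equals $c_2(Z)$. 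Chaining the three comparisons gives the first inequality.

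\textbf{Main obstacle.} There is little real difficulty. The only point needing care is justifying that $\max\{c_2(A),c_2(Z)\}=c_2(Z)$, which uses the Case II dichotomy rather than any property of the divider's partition. Even without that dichotomy, it would suffice to note the bound $\max\{c_2(P_1),c_2(P_2)\}\le \max\{c_2(A),c_2(Z)\}$ directly. In the chores setting this step is immediate, unlike the goods analog \Cref{obs:repartition-larger}, which required a contradiction argument comparing the two candidate repartitions.
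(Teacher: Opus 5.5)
Your proof is correct and follows the paper's argument essentially verbatim. The Case II dichotomy gives $c_2(A)\le PROP(c_2,3)<c_2(Z)$, so the trivial split $\{A,Z\}$ has maximum cost $c_2(Z)$; neither the two-bundle MMS partition nor \textsc{Chore-Realloc} can exceed that maximum; and the remaining (in)equalities follow from the choice of $Z$ and from $L$ being the other member of $\{B,C\}$. One caveat on your closing aside: without the dichotomy, the bound $\max\{c_2(P_1),c_2(P_2)\}\le\max\{c_2(A),c_2(Z)\}$ does not give the stated inequality, since you also need $c_2(A)\le c_2(Z)$, so the dichotomy is genuinely needed, not optional.
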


\begin{proof}
By \Cref{obs:chores-dichotomy}, $c_2(A) \leq PROP(c_2, 3) < c_2(Z)$, so the trivial partition $\{A, Z\}$ of $A \cup Z$ has maximum cost $\max\{c_2(A), c_2(Z)\} = c_2(Z)$. The MMS partition of $A \cup Z$ (for $n = 2$) minimizes the maximum cost, and \textsc{Chore-Realloc} does not increase it (\Cref{lem:chores-realloc}); hence $\max\{c_2(P_1), c_2(P_2)\} \leq c_2(Z)$. Finally, $c_2(Z) = \min\{c_2(B), c_2(C)\}$ by the choice of $Z$, and $L$ is the other bundle of $\{B, C\}$.
\end{proof}

\begin{proof}[Proof of \Cref{thm:chores-immx}]
We analyze the two cases of the construction.

\textbf{Case I.} The divider receives a bundle of her own MMS partition, and in an MMS partition for chores every bundle has cost at most $MMS(c_1, 3)$; hence Property 1 holds. The subdivider and the chooser receive bundles of cost at most $PROP(c_2, 3)$ and $PROP(c_3, 3)$, respectively; hence Properties 2 and 3 hold (and by \Cref{eq:chores-prop-mms}, all three agents incur a cost of at most their MMS).

\textbf{Case II.} The realized allocation is a permutation of the partition $\{P_1, P_2, L\}$.
\begin{itemize}
\item \emph{The divider.} She receives $L \in \{B, C\}$, a bundle of her own MMS partition, and therefore incurs a cost of at most $MMS(c_1, 3)$, establishing Property 1.
\item \emph{The chooser.} Her cost is $\min\{c_3(P_1), c_3(P_2)\}$, which is at most the average:
\[
\min\{c_3(P_1), c_3(P_2)\} \;\leq\; \frac{c_3(P_1) + c_3(P_2)}{2} \;=\; \frac{c_3(A \cup Z)}{2} \;=\; \frac{c_3(M) - c_3(L)}{2}.
\]
We emphasize that $Z$ was chosen according to the \emph{subdivider's} costs, so the chooser may well disagree about which of $\{B, C\}$ is the cheaper bundle; nevertheless, by \Cref{obs:chores-dichotomy} \emph{both} bundles of $\{B, C\}$ cost the chooser more than her proportional share, and in particular $c_3(L) > PROP(c_3, 3)$. Therefore
\[
\min\{c_3(P_1), c_3(P_2)\} \;<\; \frac{3 \cdot PROP(c_3, 3) - PROP(c_3, 3)}{2} \;=\; PROP(c_3, 3),
\]
establishing Property 2. Moreover, her bundle is of minimum cost in the realized partition: it costs no more than the other bundle of $\{P_1, P_2\}$ by her choice, and strictly less than $c_3(L)$ since it is below her proportional share while $c_3(L)$ is above it. By \Cref{obs:chores-min-implies-efx}, the chooser is also EFX-satisfied.
\item \emph{The subdivider.} She receives one of the two bundles of $\{P_1, P_2\}$; denote it by $X_2$ and the other bundle by $P$. We show she is EFX-satisfied, establishing Property 3 (via the EFX clause). First, $X_2$ EFX-dominates $P$ with respect to $c_2$, as $\{P_1, P_2\}$ is an $MMS\!-\!EFX$ partition for chores for the subdivider. Second, for every chore $g \in X_2$,
\[
c_2(X_2 \setminus \{g\}) \;\leq\; c_2(X_2) \;\leq\; \max\{c_2(P_1), c_2(P_2)\} \;\leq\; c_2(L),
\]
where the last inequality is \Cref{obs:chores-B1}; hence $X_2$ EFX-dominates $L$ as well. Therefore the subdivider is EFX-satisfied.
\end{itemize}
In both cases Properties 1--3 hold. Consequently, the only agent who may incur a cost exceeding her MMS is the subdivider, and only in Case II, in which she is EFX-satisfied; hence the allocation is IMMX. Finally, in Case II both the subdivider and the chooser are EFX-satisfied, as shown above. This completes the proof.
\end{proof}

\end{document}